\documentclass[a4paper]{article}

\usepackage[latin1]{inputenc}
\usepackage{amsmath}
\usepackage{amssymb}
\usepackage{amsthm}
\usepackage{latexsym}
\usepackage[dvips]{graphicx,color}
\usepackage{float}
\usepackage{array}
\usepackage{xcolor}
\usepackage{enumerate} 
\usepackage[shortlabels]{enumitem}
\usepackage[a4paper,margin=1.5in]{geometry}
\usepackage{authblk}
\usepackage{url}

\usepackage{amsthm}
\usepackage{float}
\usepackage{microtype}
\usepackage[shortlabels]{enumitem}
\graphicspath{{./figs/}}

\bibliographystyle{plainurl}

\title{A Finitary Analogue of the Downward L\"owenheim-Skolem Property}
\author{Abhisekh Sankaran}
\affil{Department of Computer Science and Engineering,\\ Indian
  Institute of Technology (IIT) Bombay, India\\ \texttt{abhisekh@cse.iitb.ac.in}}

\date{}


\newcommand{\tbf}[1]{\textbf{#1}}
\newcommand{\mc}[1]{\mathcal{#1}}
\newcommand{\und}[1]{\underline{#1}}
\newcommand{\mf}[1]{\mathfrak{#1}}


\newcommand{\fequiv}[1]{\ensuremath{\equiv_{#1, \fo}}}
\newcommand{\mequiv}[1]{\ensuremath{\equiv_{#1, \mso}}}
\newcommand{\lequiv}[1]{\ensuremath{\equiv_{#1, \mc{L}}}}

\newcommand{\lth}[2]{\ensuremath{\mathsf{Th}_{#1,
      \mc{L}}(#2)}}

\newcommand{\lef}{\ensuremath{\mc{L}\text{-}\mathsf{EF}}}
\newcommand{\fef}{\ensuremath{\fo\text{-}\mathsf{EF}}}
\newcommand{\mef}{\ensuremath{\mso\text{-}\mathsf{EF}}}

\newcommand{\fo}{\ensuremath{\text{FO}}}
\newcommand{\mso}{\ensuremath{\text{MSO}}}



\newcommand{\lt}{{{\L}o{\'s}-Tarski}}

\newcommand{\glt}[1]{\ensuremath{\mathsf{GLT}({#1})}}


\newcommand{\lebsp}[1]{\ensuremath{\mc{L}\text{-}\mathsf{EBSP}({#1})}}
\newcommand{\febsp}[1]{\ensuremath{\fo\text{-}\mathsf{EBSP}({#1})}}
\newcommand{\mebsp}[1]{\ensuremath{\mso\text{-}\mathsf{EBSP}({#1})}}
\newcommand{\lebspcond}{\ensuremath{\mc{L}\text{-}\mathsf{EBSP}\text{-}\mathsf{condition}}}
\newcommand{\febspcond}{\ensuremath{\fo\text{-}\mathsf{EBSP}\text{-}\mathsf{condition}}}

\newcommand{\reflebsp}{\ensuremath{\mc{L}\text{-}\mathsf{EBSP}}}
\newcommand{\reffebsp}{\ensuremath{\fo\text{-}\mathsf{EBSP}}}
\newcommand{\refmebsp}{\ensuremath{\mso\text{-}\mathsf{EBSP}}}

\newcommand{\lwitfn}[1]{\ensuremath{\theta_{(#1, \mc{L})}}}
\newcommand{\fwitfn}[1]{\ensuremath{\theta_{(#1, \fo)}}}
\newcommand{\mwitfn}[1]{\ensuremath{\theta_{(#1, \mso)}}}

\newcommand{\dls}{downward L\"owenheim-Skolem}
\newcommand{\dlsfull}{downward L\"owenheim-Skolem theorem}

\newcommand{\ldelta}[2]{\ensuremath{\Delta_{#1, \mc{L}, #2}}}


\newcommand{\cl}[1]{\ensuremath{\mathcal{#1}}}

\newcommand{\words}{\mathsf{Words}}

\newcommand{\eat}[1]{}

\newcommand{\tree}[1]{\ensuremath{\mathsf{#1}}}

\newcommand{\nesword}[1]{\ensuremath{\mathsf{#1}}}
\newcommand{\nestedwords}[1]{\mathsf{Nested}\text{-}\mathsf{words}(#1)}

\theoremstyle{plain}
\newtheorem{theorem}{Theorem}[section]

\newtheorem{prop}[theorem]{Proposition}
\newtheorem{corollary}[theorem]{Corollary}

\newtheorem{lemma}[theorem]{Lemma}

\theoremstyle{definition}
\newtheorem{definition}[theorem]{Definition}

\theoremstyle{remark}
\newtheorem*{discussion}{Discussion}

\parindent 0pt

\begin{document}

\maketitle

\begin{abstract}
We present a model-theoretic property of finite structures, that can
be seen to be a finitary analogue of the well-studied downward
L\"owenheim-Skolem property from classical model theory.  We call this
property as the \emph{$\mc{L}$-equivalent bounded substructure
  property}, denoted $\reflebsp$, where $\mc{L}$ is either $\fo$ or
$\mso$. Intuitively $\reflebsp$ states that a large finite structure
contains a small ``logically similar'' substructure, where logical
similarity means indistinguishability with respect to sentences of
$\mc{L}$ having a given quantifier nesting depth. It turns out that
this simply stated property is enjoyed by a variety of classes of
interest in computer science: examples include various classes of
posets, such as regular languages of words, trees (unordered, ordered
or ranked) and nested words, and various classes of graphs, such as
cographs, graph classes of bounded tree-depth, those of bounded
shrub-depth and $n$-partite cographs. Further, $\reflebsp$ remains
preserved in the classes generated from the above by operations that
are implementable using quantifier-free translation schemes.  We show
that for natural tree representations for structures that all the
aforementioned classes admit, the small and logically similar
substructure of a large structure can be computed in time linear in
the size of the representation, giving linear time fixed parameter
tractable (f.p.t.) algorithms for checking $\mc{L}$ definable
properties of the large structure. We conclude by presenting a
strengthening of $\reflebsp$, that asserts ``logical self-similarity
at all scales'' for a suitable notion of scale. We call this the
\emph{logical fractal} property and show that most of the classes
mentioned above are indeed, logical fractals.
\end{abstract}

\section{Introduction}

The downward L\"owenheim-Skolem theorem is one of the earliest results
of classical model theory. This theorem, first proved by L\"owenheim
in 1915~\cite{lowenheim}, states that if a first order (henceforth,
FO) theory over a countable vocabulary has an infinite model, then it
has a countable model. In the mid-1920s, Skolem came up with a more
general statement: any structure $\mf{A}$ over a countable vocabulary
has a countable ``\fo-similar'' substructure. Here, ``\fo-similarity''
of two given structures means that the structures agree on all
properties than can be expressed in $\fo$. This result of Skolem was
further generalized by Mal'tsev in 1936~\cite{maltsev}, to what is
considered as the modern statement of the {\dls} theorem: for any
infinite cardinal $\kappa$, any structure $\mf{A}$ over a countable
vocabulary has an elementary substructure (an \fo-similar substructure
having additional properties) that has size at most $\kappa$.  The
{\dlsfull} is one of the most important results of classical model
theory, and indeed as Lindstr\"om showed in 1969~\cite{lindstrom},
$\fo$ is the only logic (having certain well-defined and reasonable
closure properties) that satisfies this theorem, along with the
(countable) compactness theorem.

The {\dls} theorem is a statement intrinsically of infinite
structures, and hence does not make sense in the finite when taken as
is. While preservation and interpolation theorems from classical model
theory have been actively studied over finite
structures~\cite{gurevich-ajtai, ajtai-gurevich94, rosen,
  gradel-rosen, gurevich84, rosen-thesis, rosen-weinstein,
  gurevich-alechina, stolboushkin, dawar-hom, rossman-hom,
  dawar-pres-under-ext, nicole-lmcs-15}, there is very little study of
the {\dls} theorem (or adaptations of it) in the finite
(\cite{grohe-dls, vaananen-dls} seem to be the only studies of this
theorem in the contexts of finite and pseudo-finite structures
respectively). In this paper, we take a step towards addressing this
issue. Specifically, we formulate a finitary analogue of the
model-theoretic property contained in the {\dlsfull}, and show that
classes of finite structures satisfying this analogue indeed abound in
computer science.  We call this analogue the \emph{$\mc{L}$-equivalent
  bounded substructure property}, denoted $\lebsp{\cl{S}}$, where
$\mc{L}$ is one of the logics $\fo$ or $\mso$, and $\cl{S}$ is a class
of finite structures (Definition~\ref{definition:lebsp}). Intuitively,
this property states that over $\cl{S}$, for each $m$, every structure
$\mf{A}$ contains a small substructure $\mf{B}$ that is
``$\mc{L}\left[m\right]$-similar'' to $\mf{A}$, where
$\mc{L}\left[m\right]$ is the class of all sentences of $\mc{L}$ that
have quantifier nesting depth at most $m$. In other words, $\mf{B}$
and $\mf{A}$ agree on all properties that can be described in
$\mc{L}[m]$. The bound on the size of $\mf{B}$ is given by a
``witness'' function that depends only on $m$ (when $\mc{L}$ and
$\cl{S}$ are fixed). It is easily seen that $\lebsp{\cl{S}}$ has
strong resemblance to the model-theoretic property contained in the
{\dlsfull}, and can very well be seen as a finitary analogue of a
version of the {\dlsfull} that is ``intermediate'' between the
versions of this theorem by Skolem and Mal'tsev.

The motivation to define $\lebsp{\cl{S}}$ came from our investigations
over finite structures, of a generalization of the classical {\lt}
preservation theorem from model theory, that was proved
in~\cite{abhisekh-apal}. This generalization, called the
\emph{generalized {\lt} theorem at level $k$}, denoted $\glt{k}$,
gives a semantic characterization, over arbitrary structures, of
sentences in prenex normal form, whose quantifier prefixes are of the
form $\exists^k \forall^*$, i.e. a sequence of $k$ existential
quantifiers followed by zero or more universal quantifiers. The {\lt}
theorem is a special case of $\glt{k}$ when $k$ equals
0. Unfortunately, $\glt{k}$ fails over all finite structures for all
$k \ge 0$ (like most preservation theorems do~\cite{rosen}), and worse
still, also fails for all $k \ge 2$, over the special classes of
finite structures that are acyclic, of bounded degree, or of bounded
tree-width, which were identified by Atserias, Dawar and
Grohe~\cite{dawar-pres-under-ext} to satisfy the {\lt} theorem. This
motivated the search for new (and possibly abstract) structural
properties of classes of finite structures, that admit $\glt{k}$ for
each $k$. It is in this context that a version of $\lebsp{\cl{S}}$ was
first studied in~\cite{abhisekh-mfcs}. The present paper takes that
study much ahead. (Most of the results of this paper are contained in
the author's Ph.D. thesis~\cite{arxiv-self-thesis}.)

The contributions of this paper are as described below.

\vspace{2pt}
\emph{1. A variety of classes of interest in computer science satisfy
  $\reflebsp$}: Our property presents a unified framework, via logic,
for studying a variety of classes of finite structures that are of
interest in computer science. The classes that we consider are broadly
of two kinds: special kinds of labeled posets and special kinds of
graphs. For the case of labeled posets, we show $\reflebsp$ holds for
words, trees (of various kinds such as unordered, ordered, ranked, or
``partially'' ranked), and nested words over a finite alphabet, and
all regular subclasses of these
(Theorem~\ref{theorem:words-and-trees-and-nested-words-satisfy-lebsp}). For
each of these classes, we also show that $\reflebsp$ holds with
computable witness functions.  While words and trees have had a long
history of studies in the literature, nested words are much
recent~\cite{alur-madhu}, and have attracted a lot of attention as
they admit a seamless generalization of the theory of regular
languages and are also closely connected with visibly pushdown
languages. For the case of graphs, we show $\reflebsp$ holds for a
very general, and again very recently defined, class of graphs called
\emph{$n$-partite cographs}, and all hereditary subclasses of this
class (Theorem~\ref{theorem:n-partite-cographs-satisfy-lebsp}). This
class of graphs, introduced in~\cite{shrub-depth}, jointly generalizes
the classes of cographs (which includes several interesting graph
classes such as complete $r$-partite graphs, Turan graphs, cluster
graphs, threshold graphs, etc.), graph classes of bounded tree-depth
and those of bounded shrub-depth. Cographs have been well studied
since the '80s~\cite{cograph-1981-paper} and have been shown to admit
fast algorithms for many decision and optimization problems that are
hard in general. Graph classes of bounded tree-depth and bounded
shrub-depth are much more recently defined~\cite{tree-depth,
  shrub-depth} and have become particularly prominent in the context
of investigating fixed parameter tractable (f.p.t.) algorithms for
$\mso$ model checking, that have \emph{elementary dependence} on the
size of the $\mso$ sentence (which is the
parameter)~\cite{shrub-depth-FO-equals-MSO, shrub-depth}. This line of
work seeks to identify classes of structures for which Courcelle-style
\emph{algorithmic meta-theorems}~\cite{model-theoretic-methods} hold,
but with better dependence on the parameter than in the case of
Courcelle's theorem (which is unavoidably
non-elementary~\cite{frick-grohe}). A different and important line of
work shows that {\fo} and {\mso} are equal in their expressive powers
over graph classes of bounded tree-depth/shrub-depth
~\cite{shrub-depth-FO-equals-MSO,tree-depth-FO-equals-MSO}. Since each
of the graph classes mentioned above is a hereditary subclass of the
class of $n$-partite cographs for some $n$, each of these satisfies
$\reflebsp$, further with computable witness functions, and further
still, even elementary witness functions in many cases.

We give methods to construct new classes of structures satisfying
$\reflebsp$ from classes known to satisfy $\reflebsp$. Specifically,
we show that $\reflebsp$ remains preserved under a wide range of
operations on structures, that have been well-studied in the
literature: unary operations like complementation, transpose and the
line graph operation, binary ``sum-like'' operations~\cite{makowsky}
such as disjoint union and join, and binary ``product-like''
operations that include various kinds of products like the Cartesian,
tensor, lexicographic and strong products. All of these are examples
of operations that can be implemented using, what are called,
\emph{quantifier-free translation
  schemes}~\cite{makowsky,model-theoretic-methods}.  We show that
$\reffebsp$ is always closed under such operations, and $\refmebsp$ is
closed under such operations, provided that they are unary or sum-like
(Theorem~\ref{theorem:lebsp-closure-under-1-step-operations}). In both
cases, the computability/elementariness of witness functions is
preserved under the operations.

\vspace{2pt} \emph{2. Linear time f.p.t. algorithms for deciding
  $\mc{L}$ properties of structures}: For each of the classes
mentioned above (including those generated using the various
operations) and for natural representations of structures in these
classes, we give linear time f.p.t. algorithms for deciding properties
of structures, that can be defined in $\mc{L}$. The structures in the
above classes have natural \emph{tree representations} in which the
leaf nodes of the tree represent simple substructures and the internal
nodes represent operations that produce new structures upon being fed
with input structures. Given such a tree representation for a
structure, we perform appropriate ``prunings'' of, and ``graftings''
within, the tree, such that the resultant tree represents an
$\mc{L}\left[m\right]$-similar proper substructure of the original
structure.  Two key technical elements that are employed to perform
these prunings and graftings are the finiteness of the index of the
$\mc{L}\left[m\right]$-similarity relation (which is an equivalence
relation) and a Feferman-Vaught kind \emph{composition property} of
the operations used in the tree representations. The latter means that
the ``$\mc{L}\left[m\right]$-similarity class'' of the structure
produced by an operation is \emph{determined} by the multi-set of the
$\mc{L}\left[m\right]$-similarity classes of the structures that are
input to the operation, and further (in the case of operations having
arbitrary finite arity), determined only by a threshold number of
appearances of each $\mc{L}\left[m\right]$-similarity class in the
multi-set, with the threshold depending solely on $m$.  These
technical features enable \emph{generating} the ``composition
functions'' uniformly for any operation for any given $m$, and the
composition functions thus generated, in turn, enable doing the
compositions in time linear in the arity of the operation. Using
these, we get linear time f.p.t. algorithms that when given an
$\mc{L}$ sentence of quantifier nesting depth $m$ (the parameter) and
a tree $\tree{t}$ as inputs, perform the aforementioned prunings and
graftings in $\tree{t}$ iteratively to produce a small subtree that
represents a small $\mc{L}\left[m\right]$-similar substructure (the
``kernel'', in the f.p.t. parlance) of the structure represented by
$\tree{t}$.  The techniques mentioned above have been incorporated int
a single abstract result concerning tree representations
(Theorem~\ref{theorem:good-tree-rep-implies-lebsp-and-f.p.t.-algorithm}). Given
that this result gives unified explanations for the good computational
properties of many interesting classes, we believe it might be of
independent interest.

\vspace{2pt}
\emph{3. A strengthening of $\reflebsp$ and connections with
  fractals}: Fractals are classes of mathematical structures that
exhibit self-similarity at all scales. That is, every structure in the
class contains a similar (in some technical sense) substructure at
every scale of sizes less than the size of the structure. Well-known
examples of fractals in mathematics include the Mandelbrot set, the
Menger Sponge and the Koch snowflake.  Remarkably, fractals are not
limited to only mathematics, but in fact abound nearly everywhere in
nature. Tree branching, cloud structures, galaxy clustering, fern
shapes, and crystal growth patterns are some of a wide range of
natural phenomena that exhibit
self-similarity~\cite{fractal-2}.

In the light of fractals, we observe that the $\reflebsp$ property
indeed asserts ``logical self-similarity'' at ``small scales''. We
formulate a strengthening of the $\reflebsp$ property, that asserts
logical self-similarity at all scales, for a suitable notion of scale
(Definition~\ref{definition:logical-fractal}). We call this the
\emph{logical fractal} property, and call a class satisfying this
property as a \emph{logical fractal}. Remarkably, it turns out that
the aforementioned posets and graph classes, including those
constructed using many of the aforementioned operations, are all
logical fractals (Proposition~\ref{prop:greatness-and-fractals}).  The
classical {\dlsfull} indeed shows that the class of all infinite
structures satisfies an ``infinitary'' variant of the logical fractal
property. We believe these observations constitute the initial
investigations into a potentially rich theory of logical fractals.

The paper is organized as follows. In
Section~\ref{section:background}, we introduce notation and
terminology, and recall relevant notions from the literature used in
the paper. In Section~\ref{section:lebsp}, we define the $\reflebsp$
property and show that it holds for the class of ``partially'' ranked
trees, which are trees in which some subset of nodes are constrained
to have degrees given by a ranking function. We use this special class
as a setting to illustrate our techniques, that we lift to tree
representations of structures in
Section~\ref{section:abstract-tree-result}. In
Section~\ref{section:classes-satisfying-lebsp}, we give applications
of our abstract results to obtain the $\reflebsp$ property and linear
time f.p.t. algorithms for model checking $\mc{L}$ sentences, in
various concrete settings, specifically those of posets and graphs
mentioned earlier, and also classes that are constructed using various
well-studied operations. We present the notion of logical fractals in
Section~\ref{section:logical-fractal}, and conclude with open
questions in Section~\ref{section:conclusion}.

\section{Terminology and preliminaries}\label{section:background}

\tbf{1. $\mc{L}$ formulae}: We assume familiarity with standard
notation and notions of first order logic (FO) and monadic second
order logic (MSO)~\cite{libkin}. By $\mc{L}$, we mean either FO or
MSO.  We consider only finite vocabularies, represented by $\tau$ or
$\nu$, that contain only \emph{predicate symbols} (and no constant or
function symbols), unless explicitly stated otherwise. All predicate
symbols are assumed to have \emph{positive} arity. We denote by
$\mc{L}(\tau)$ the set of all $\mc{L}$ formulae over $\tau$ (and refer
to these simply as $\mc{L}$ formulae, when $\tau$ is clear from
context).  A sequence $(x_1, \ldots, x_k)$ of variables is written as
$\bar{x}$.  A formula $\varphi$ whose free variables are among
$\bar{x}$, is denoted as $\varphi(\bar{x})$.  Free variables are
always \emph{first order}. A formula with no free variables is called
a \emph{sentence}.  The \emph{rank} of an $\mc{L}$ formula is the
maximum number of quantifiers (first order as well as second order)
that appears along any path from the root to the leaf in the parse
tree of the formula.  Finally, \emph{a notion or result stated for
$\mc{L}$ means that the notion or result is stated for both FO and
MSO}.

\tbf{2. Structures}: Standard notions of $\tau$-structures (denoted
$\mf{A}, \mf{B}$ etc.; we refer to these simply as structures when
$\tau$ is clear from context), substructures (denoted $\mf{A}
\subseteq \mf{B}$) and extensions are used throughout the paper (see
\cite{libkin}). We assume all structures to be \emph{finite}. As
in~\cite{libkin}, by substructures, we always mean \emph{induced}
substructures.  Given a structure $\mathfrak{A}$, we use
$\mathsf{U}_\mathfrak{A}$ to denote the universe of $\mathfrak{A}$,
and $|\mf{A}|$ to denote its cardinality.  We denote by $\mf{A} \cong
\mf{B}$ that $\mf{A}$ is isomorphic to $\mf{B}$, and by $\mathfrak{A}
\hookrightarrow \mathfrak{B}$ that $\mf{A}$ is isomorphically
embeddable in $\mf{B}$.  For an $\mc{L}$ sentence $\varphi$, we denote
by $\mathfrak{A} \models \varphi$ that $\mf{A}$ is a model of
$\varphi$.  We denote classes of structures by $\cl{S}$ possibly with
subscripts, and assume these to be \emph{closed under isomorphisms}.

\tbf{3. The $\lequiv{m}$ relation}: ~Let $\mathbb{N}$ and
$\mathbb{N}_{+}$ denote the natural numbers including zero and
excluding zero respectively. Given $m \in \mathbb{N}$ and a
$\tau$-structure $\mf{A}$, denote by $\lth{m}{\mf{A}}$ the set of all
$\mc{L}(\tau)$ sentences of rank at most $m$, that are true in
$\mf{A}$.  Given a $\tau$-structure $\mathfrak{B}$, we say that
$\mf{A}$ and $\mf{B}$ are \emph{$\mc{L}[m]$-equivalent}, denoted
$\mathfrak{A} \lequiv{m} \mathfrak{B}$ if $\lth{m}{\mf{A}} =
\lth{m}{\mf{B}}$. Given a class $\cl{S}$ of structures and $m \in
\mathbb{N}$, we let $\ldelta{\cl{S}}{m}$ denote the set of all
equivalence classes of the $\lequiv{m}$ relation over $\cl{S}$.  We
denote by $\Lambda_{\cl{S}, \mc{L}}: \mathbb{N} \rightarrow
\mathbb{N}$ a fixed computable function with the property that
$\Lambda_{\cl{S}, \mc{L}}(m) \ge |\ldelta{\cl{S}}{m}|$. It is known
that $\Lambda_{\cl{S}, \mc{L}}$ always exists (see Proposition 7.5
in~\cite{libkin}).  The notion of $\lequiv{m}$ has a characterization
using \emph{Ehrenfeucht-Fr{\"a}iss{\'e}} ($\mathsf{EF}$) games for
$\mc{L}$.  We point the reader to Chapters 3 and 7 of~\cite{libkin}
for results concerning these games.

\tbf{4. Translation schemes}: ~We recall the notion of translation
schemes from the literature~\cite{makowsky} (known in the literature
by different names, like \emph{interpretations, transductions}, etc).
Let $\tau$ and $\nu$ be given vocabularies, and $t \ge 1$ be a natural
number. Let $\bar{x}_0$ be a fixed $t$-tuple of first order variables,
and for each relation $R \in \nu$ of arity $\#R$, let $\bar{x}_R$ be a
fixed $(t \times \#R)$-tuple of first order variables.  A \emph{$(t,
  \tau, \nu, \mc{L})$-translation scheme} $\Xi = (\xi, (\xi_R)_{R \in
  \nu})$ is a sequence of formulas of $\mc{L}(\tau)$ such that the
free variables of $\xi$ are among those in $\bar{x}_0$, and for $R \in
\nu$, the free variables of $\xi_R$ are among those in $\bar{x}_R$.
When $t, \nu$ and $\tau$ are clear from context, we call $\Xi$ simply
as a translation scheme. We call $t$ as the \emph{dimension} of $\Xi$.
One can associate with a $(t, \tau, \nu, \mc{L})$-translation scheme
$\Xi$, two partial maps: (i) $\Xi^*$ from $\tau$-structures to
$\nu$-structures (ii) $\Xi^\sharp$ from $\mc{L}(\nu)$ formulae to
$\mc{L}(\tau)$ formulae. See~\cite{makowsky} for the definitions of
these.  For the ease of readability, we abuse notation slightly and
use $\Xi$ to denote both $\Xi^*$ and $\Xi^\sharp$. 

\tbf{5. Fixed parameter tractability}: ~We say that the model checking
problem for $\mc{L}$ over a given class $\cl{S}$, denoted
$\mathsf{MC}(\mc{L}, \cl{S})$, is \emph{fixed parameter tractable}, in
short \emph{f.p.t.}, if there exists an algorithm $\mathsf{Alg}$ that
when given as input an $\mc{L}$ sentence $\varphi$ of rank $m$, and a
structure $\mf{A} \in \cl{S}$, decides if $\mf{A} \models \varphi$, in
time $f(m) \cdot |\mf{A}|^c$, where $f:\mathbb{N} \rightarrow
\mathbb{N}$ is some computable function and $c$ is a constant. In this
case, we say $\mathsf{Alg}$ is an \emph{f.p.t. algorithm} for
$\mathsf{MC}(\mc{L}, \cl{S})$. We say $\mathsf{Alg}$ is a \emph{linear
  time} f.p.t. algorithm for $\mathsf{MC}(\mc{L}, \cl{S})$ if it is
f.p.t. for $\mathsf{MC}(\mc{L}, \cl{S})$ and runs in time $f(k) \cdot
|\mf{A}|$ where as before, $f$ is a computable function.

\tbf{6. Miscellaneous}: ~The \emph{$k$-fold} exponential function
$\mathsf{exp}(n, k)$ is the function given inductively as:
$\mathsf{exp}(n, 0) = n$ and $\mathsf{exp}(n, l) = 2^{\mathsf{exp}(n,
  l-1)}$ for $0 \leq l \leq k$.  We call a function $f: \mathbb{N}
\rightarrow \mathbb{N}$ as \emph{elementary} if there exists $k$ such
that $f(n) = O(\mathsf{exp}(n, k))$, and call it \emph{non-elementary}
if it is not elementary.  Finally, we use standard abbreviations of
English phrases that commonly appear in mathematical
literature. Specifically, `w.l.o.g' stands for `without loss of
generality', `iff' stands for `if and only if', and `resp.'  stands
for `respectively'.

\section[The $\mc{L}$-Equivalent Bounded Substructure Property -- $\lebsp{\cl{S}}$]{The $\mc{L}$-Equivalent Bounded Substructure Property -- $\lebsp{\cl{S}}$}\label{section:lebsp}

\begin{definition}[$\lebsp{\cl{S}}$]\label{definition:lebsp}
Let $\cl{S}$ be a class of structures and $\mc{L}$ be either $\fo$ or
$\mso$.  We say that $\cl{S}$ satisfies the \emph{$\mc{L}$-equivalent
  bounded substructure property}, abbreviated \emph{$\lebsp{\cl{S}}$
  is true} (alternatively, \emph{$\lebsp{\cl{S}}$ holds}), if there
exists a monotonic function $\lwitfn{\cl{S}}: \mathbb{N} \rightarrow
\mathbb{N}$ such that for each $m \in \mathbb{N}$ and each structure
$\mf{A}$ of $\cl{S}$, there exists a structure $\mf{B}$ such that (i)
$\mf{B} \in \cl{S}$, (ii) $\mf{B} \subseteq \mf{A}$, (iii) $|\mf{B}|
\leq \lwitfn{\cl{S}}(m)$, and (iv) $\mf{B} \lequiv{m} \mf{A}$.  The
conjunction of these four conditions is denoted as
\emph{$\lebspcond(\cl{S}, \mf{A}, \mf{B}, m, \lwitfn{\cl{S}})$}.  We
call $\lwitfn{\cl{S}}$ a \emph{witness function} of $\lebsp{\cl{S}}$.
\end{definition}

We present below two simple examples of classes satisfying $\reflebsp$. 
\vspace{3pt}\begin{enumerate}[nosep]
\item
  Let $\cl{S}$ be the class of all $\tau$-structures, where all
  predicates in $\tau$ are unary. By a simple $\fef$ game argument, we
  see that $\febsp{\cl{S}}$ holds with $\fwitfn{\cl{S}}(m) = m \cdot
  2^{|\tau|}$. In more detail: given $\mf{A} \in \cl{S}$, associate
  exactly one of $2^{|\tau|}$ colors with each element $a$ of
  $\mf{A}$, where the colour gives the valuation of all predicates of
  $\tau$ for $a$ in $\mf{A}$.  Then consider $\mf{B} \subseteq \mf{A}$
  such that for each colour $c$, if $A_c = \{a \mid a \in
  \mathsf{U}_{\mf{A}}, ~a~\text{has colour}~c~\text{in}~\mf{A}\}$,
  then $A_c \subseteq \mathsf{U}_{\mf{B}}$ if $|A_c| < m$, else $|A_c
  \cap \mathsf{U}_{\mf{B}}| = m$. It is easy to see that
  $\febspcond(\cl{S}, \mf{A}, \mf{B}, m, \fwitfn{\cl{S}})$ holds.  By
  a similar $\mef$ game argument, one can show that $\mebsp{\cl{S}}$
  holds with a witness function given by $\mwitfn{\cl{S}}(m) = m\cdot
  2^{(|\tau| + m)}$.

\item Let $\cl{S}$ be the class of disjoint unions of undirected
  paths. It is known that for any $m$, any two paths of length $\ge p
  = 3^m$ are $\fo[m]$-equivalent. Let $\mf{A} = \bigsqcup_{n \ge 0}
  i_n\cdot P_n$ where $P_n$ denotes the path of length $n$, $i_n\cdot
  P_n$ denotes the disjoint union of $i_n$ copies of $P_n$, and
  $\bigsqcup$ denotes disjoint union. For $n < p$, let $j_n$ be such
  that $j_n = i_n$ if $i_n < m$ and $j_n = m$ if $i_n \ge m$. For $n =
  p$, let $j_n = h = \sum_{r \ge p} i_r$ if $h < m$, else $j_n =
  m$. One can then see using an $\fef$ game argument that if $\mc{B} =
  \bigsqcup_{n =1}^{n = p} j_n\cdot P_n$, then $\mf{B}$ satisfies
  $\febspcond(\cl{S}, \mf{A}, \mf{B}, m, \fwitfn{\cl{S}})$ where
  $\fwitfn{\cl{S}}(m) = \sum_{n = 0}^{n = p} m \cdot n$.
  
\end{enumerate}

\subsection{Partially ranked trees satisfy $\reflebsp$}\label{section:partially-ranked-trees}

\newcommand{\troot}[1]{\ensuremath{\mathsf{root}(#1)}}
\newcommand{\sigmaint}{\ensuremath{\Sigma_{\text{int}}}}
\newcommand{\sigmaunrank}{\ensuremath{\Sigma_{\text{unrank}}}}
\newcommand{\sigmarank}{\ensuremath{\Sigma_{\text{rank}}}}
\newcommand{\sigmairrepl}{\ensuremath{\Sigma_{\text{irrepl}}}}
\newcommand{\sigmaleaf}{\ensuremath{{\Sigma}_{\text{leaf}}}}
\newcommand{\str}[1]{\ensuremath{\mathsf{Str}(#1)}}
\newcommand{\stri}[2]{\ensuremath{\mathsf{Str}_{#1}(#2)}}
\newcommand{\rftrees}{\ensuremath{\mathsf{RF}\text{-}\mathsf{trees}}}
\newcommand{\mylabel}[1]{\ensuremath{\mathsf{#1}}}
\newcommand{\disjun}{\ensuremath{\dot\cup}}
\newcommand{\liequiv}[2]{\ensuremath{\equiv_{#1, #2}}}
\newcommand{\rfse}{\ensuremath{\mc{L}\text{-RFSE}}}
\newcommand{\mrfse}{\ensuremath{\mso\text{-RFSE}}}
\newcommand{\prankedtrees}{\ensuremath{\mathsf{Partially\text{-}ranked\text{-}trees}}}
\newcommand{\fsigmai}[2]{\ensuremath{f_{\sigma, #1, #2}}}
\newcommand{\fsigma}[1]{\ensuremath{f_{\sigma, #1}}}

In this subsection, we show that the class of ordered ``partially''
ranked trees satisfies $\reflebsp$ with computable witness functions,
as well as admits a linear time f.p.t. algorithm for model checking
$\mc{L}$ sentences. This setting illustrates our reasoning and
techniques that we lift in Section~\ref{section:abstract-tree-result}
to the more abstract setting of tree representations of
structures.

An \emph{unlabeled unordered tree} is a finite poset $P = (A, \leq)$
with a unique minimal element (called ``root''), such that for each $c
\in A$, the set $\{b \mid b \leq c\}$ is totally ordered by $\leq$.
Informally speaking, the Hasse diagram of $P$ is an inverted
(graph-theoretic) tree.  We call $A$ as the set of \emph{nodes} of
$P$.  We use the standard notions of leaf, internal node, ancestor,
descendent, parent, child, degree, height, and subtree in connection
with trees. (We clarify that by height, we mean the maximum distance
between the root and any leaf of the tree, as against the ``number of
levels'' in the tree.)  An \emph{unlabeled ordered} tree is a pair $O
= (P, \lesssim)$ where $P$ is an unlabeled unordered tree and
$\lesssim$ is a binary relation that imposes a linear order on the
children of any internal node of $P$. Unless explicitly stated
otherwise, we always consider our trees to be \emph{ordered}. It is
clear that the above mentioned notions in connection with unordered
trees can be adapted for ordered trees.  Given a countable alphabet
$\Sigma$, a \emph{tree over $\Sigma$}, also called a
\emph{$\Sigma$-tree}, or simply \emph{tree} when $\Sigma$ is clear
from context, is a pair $(O, \lambda)$ where $O$ is an unlabeled tree
and $\lambda: A \rightarrow \Sigma$ is a labeling function, where $A$
is the set of nodes of $O$. We denote $\Sigma$-trees by $\tree{s},
\tree{t}, \tree{x}, \tree{y}, \tree{u}, \tree{v}$ or $\tree{z}$,
possibly with numbers as subscripts.  Given a tree $\tree{t}$, we
denote the root of $\tree{t}$ as $\troot{\tree{t}}$. For a node $a$ of
$\tree{t}$, we denote the subtree of $\tree{t}$ rooted at $a$ as
$\tree{t}_{\ge a}$, and the subtree of $\tree{t}$ obtained by deleting
$\tree{t}_{\ge a}$ from $\tree{t}$, as $\tree{t} - \tree{t}_{\ge a}$.
Given a tree $\tree{s}$ and a non-root node $a$ of $\tree{t}$, the
\emph{replacement of $\tree{t}_{\ge a}$ with $\tree{s}$ in
  $\tree{t}$}, denoted $\tree{t} \left[ \tree{t}_{\ge a} \mapsto
  \tree{s} \right]$, is a tree defined as follows. Assume
w.l.o.g. that $\tree{s}$ and $\tree{t}$ have disjoint sets of
nodes. Let $c$ be the parent of $a$ in $\tree{t}$.  Then $\tree{t}
\left[ \tree{t}_{\ge a} \mapsto \tree{s} \right]$ is defined as the
tree obtained by deleting $\tree{t}_{\ge a}$ from $\tree{t}$ to get a
tree $\tree{t}'$, and inserting (the root of) $\tree{s}$ at the same
position among the children of $c$ in $\tree{t}'$, as the position of
$a$ among the children of $c$ in $\tree{t}$.  For $\tree{s}$ and
$\tree{t}$ as just mentioned, suppose the roots of both these trees
have the same label. Then the \emph{merge of $\tree{s}$ with
  $\tree{t}$}, denoted $\tree{t} \odot \tree{s}$, is defined as the
tree obtained by deleting $\troot{\tree{s}}$ from $\tree{s}$ and
concatenating the sequence of subtrees hanging at $\troot{\tree{s}}$
in $\tree{s}$, to the sequence of subtrees hanging at
$\troot{\tree{t}}$ in $\tree{t}$. Thus the children of
$\troot{\tree{s}}$ in $\tree{s}$ are the ``new'' children of
$\troot{\tree{t}}$, and appear ``after'' the ``old'' children of
$\troot{\tree{t}}$, and in the order they appear in $\tree{s}$.

Fix a finite alphabet $\Sigma$, and let $\sigmarank \subseteq
\Sigma$. Let $\rho: \sigmarank \rightarrow \mathbb{N}_{+}$ be a fixed
function. We say a $\Sigma$-tree $\tree{t} = (O, \lambda)$ is
\emph{partially ranked by $(\sigmarank, \rho)$} if for any node $a$ of
$\tree{t}$, if $\lambda(a) \in \sigmarank$, then the number of
children of $a$ in $\tree{t}$ is exactly $\rho(\lambda(a))$. Observe
that the case of $\sigmarank = \Sigma$ corresponds to the notion of
ranked trees that are well-studied in the literature~\cite{tata}. Let
$\prankedtrees(\Sigma, \sigmarank, \rho)$ be the class of all ordered
$\Sigma$-trees partially ranked by $(\sigmarank, \rho)$. The central
result of this section is now as stated below.

\vspace{5pt}
\begin{prop}\label{prop:result-for-partially-ranked-trees}
Given $\Sigma$, a subset $\sigmarank$ of $\Sigma$ and $\rho:
\sigmarank \rightarrow \mathbb{N}_+$, let $\cl{S}$ be the class
$\prankedtrees(\Sigma, \sigmarank, \rho)$. Then the following are
true:
\vspace{3pt}\begin{enumerate}[nosep]
\item $\lebsp{\cl{S}}$ holds with a computable witness
  function. Further, any witness function is necessarily
  non-elementary.
\item There is a linear time f.p.t. algorithm for $\mathsf{MC}(\mc{L},
  \cl{S})$.
\end{enumerate}
\end{prop}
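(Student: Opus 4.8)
The plan is to establish both parts of the proposition by a single ``prune-and-graft'' construction on the tree, driven by the finite index of the $\lequiv{m}$ relation together with a Feferman--Vaught composition property for the tree-building operations (subtree replacement $\tree{t}[\tree{t}_{\ge a}\mapsto\tree{s}]$ and merge $\tree{t}\odot\tree{s}$). First I would record the composition lemma: there is a computable function such that, for trees $\tree{t},\tree{s}$ and a non-root node $a$ of $\tree{t}$, the $\lequiv{m}$-class of $\tree{t}[\tree{t}_{\ge a}\mapsto\tree{s}]$ is determined by the $\lequiv{m}$-class of $\tree{s}$, the $\lequiv{m}$-class of the ``context'' $\tree{t}-\tree{t}_{\ge a}$ (with a marked hole), and the label of $a$'s parent; similarly for merge. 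This is the standard EF-game / Feferman--Vaught argument for MSO (hence also FO) over the disjoint-union-plus-root operation used to assemble ordered trees, and it goes through in the presence of the ranking constraint because replacing a subtree by an $\lequiv{m}$-equivalent one whose root carries the same label preserves membership in $\prankedtrees(\Sigma,\sigmarank,\rho)$ (the rank of a node depends only on its label).

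Next I would use this to prune. Call a subtree $\tree{t}_{\ge a}$ \emph{redundant} if along the path from the root to $a$ there are two distinct nodes $a_1$ (an ancestor of $a_2$) lying on that path with $\tree{t}_{\ge a_1}\lequiv{m}\tree{t}_{\ge a_2}$ and with $a_1,a_2$ having the same label; in that case the composition lemma lets us replace $\tree{t}_{\ge a_1}$ by $\tree{t}_{\ge a_2}$ (formally $\tree{t}[\tree{t}_{\ge a_1}\mapsto \tree{t}_{\ge a_2}]$), strictly decreasing $|\tree{t}|$ while preserving both the $\lequiv{m}$-class and membership in $\cl{S}$. Iterating, we reach a tree of height at most $|\Sigma|\cdot|\ldelta{\cl{S}}{m}|$ (bounded using $\Lambda_{\cl{S},\mc{L}}$), since no root-to-leaf path may repeat an $(\text{label},\lequiv{m}\text{-class})$ pair. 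To bound the branching I would argue analogously at each internal node: among the children subtrees, those sharing a given label and $\lequiv{m}$-class beyond a threshold $t(m)$ (a function coming from the arbitrary-arity version of the composition lemma — MSO/FO over sums is insensitive to the exact multiplicity of an $\lequiv{m}$-class once it exceeds a bound depending only on $m$) can be deleted down to $t(m)$ copies via iterated merges, again preserving the $\lequiv{m}$-class and membership in $\cl{S}$ (for unranked nodes; ranked nodes have fixed, bounded degree already, so no pruning is needed there). The two bounds combine to give a function $\lwitfn{\cl{S}}(m)$ bounding $|\mf{B}|$, computable because $\Lambda_{\cl{S},\mc{L}}$, the threshold $t(m)$, and $\rho$ are all computable; this proves part~1's existence claim, and one takes the monotone envelope to satisfy the monotonicity requirement.

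For part~2, I would run exactly this procedure algorithmically on the input tree $\tree{t}$ representing $\mf{A}$: bottom-up, compute the $\lequiv{m}$-class of every subtree using the composition functions (which, for a fixed $m$, can be generated once and evaluated in time linear in the arity of each operation, so the whole labeling costs $O(f(m)\cdot|\tree{t}|)$), then perform the prunings above; each deletion strictly shrinks the tree, and the final tree has size $\le\lwitfn{\cl{S}}(m)$, on which model checking of the input sentence $\varphi$ is done by brute force in time $g(m)$. Since $|\tree{t}|=\Theta(|\mf{A}|)$, this is a linear-time f.p.t.\ algorithm for $\mathsf{MC}(\mc{L},\cl{S})$. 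Finally, for the non-elementariness remark in part~1, I would exhibit, for each $k$, two ordered trees of height (or size) bounded by a $k$-fold exponential in $m$ that are $\lequiv{m}$-inequivalent but whose smallest $\lequiv{m}$-equivalent substructures must themselves be non-elementarily large — the standard lower-bound construction showing that $\fo[m]$ over trees/words already distinguishes a non-elementary number of classes (cf.\ \cite{frick-grohe}); since $\prankedtrees(\Sigma,\sigmarank,\rho)$ with $\sigmarank=\varnothing$ contains all ordered $\Sigma$-trees, this lower bound transfers.

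\medskip
\noindent\textbf{Expected main obstacle.} The delicate point is the arbitrary-arity composition property at internal nodes: one must show that the $\lequiv{m}$-class of a node's whole subtree depends only on the \emph{threshold-truncated multiset} of $\lequiv{m}$-classes of its children's subtrees (with threshold depending only on $m$, not on the degree), \emph{and} that this truncation is compatible with both the left-to-right order on children and — for ranked nodes — with the fixed-degree constraint. Getting a clean, uniformly computable bound $t(m)$ here, and checking that the pruning operations stay inside $\cl{S}$ in all three node types (ranked, unranked-internal, leaf), is where the real work lies; everything else is bookkeeping on top of standard EF-game composition arguments.
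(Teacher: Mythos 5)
The overall architecture you propose is the paper's: a composition lemma for the tree-building operations, a pigeonhole argument along root-to-leaf paths for height reduction, a pruning of children for degree reduction, a bottom-up computation of $\lequiv{m}$-classes for the linear-time algorithm, and non-elementariness via the index of $\lequiv{m}$. The genuine gap is in your degree-reduction step. You prune a high-degree node (with label in $\Sigma\setminus\sigmarank$) by keeping at most $t(m)$ children of each (label, $\lequiv{m}$-class) pair, justified by the claim that the $\lequiv{m}$-class of the node's subtree depends only on the threshold-truncated \emph{multiset} of its children's classes. For \emph{ordered} trees this principle is false, and no choice of threshold repairs it: the children carry the sibling order $\lesssim$, which rank-$2$ or rank-$3$ sentences already see. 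Concretely, a root labelled $\sigma\in\Sigma\setminus\sigmarank$ with two leaf children labelled $a$ then $b$, versus $b$ then $a$, gives identical multisets of children classes but $\lequiv{2}$-inequivalent trees ($\exists x\exists y\,(P_a(x)\wedge P_b(y)\wedge x\lesssim y)$ separates them); and deleting non-adjacent excess copies of a class changes the induced word of children classes, e.g.\ pruning $(ab)^N$ to $(ab)^t b^{N-t}$ creates two consecutive $b$-children, detectable at small rank. So the obstacle you flag at the end is not a technicality to be checked but the point where the proposed argument breaks. The paper's route avoids multisets entirely: writing $\tree{z}=\tree{x}_{1,k}\odot\tree{y}_{k+1,n}$, where $\tree{x}_{1,k}$ is the subtree retaining only the first $k$ children, one pigeonholes on the $\lequiv{m}$-classes of the prefixes $\tree{x}_{1,1},\ldots,\tree{x}_{1,n}$ (degree $>$ index forces a repetition) and deletes the \emph{contiguous} block of children between two equivalent prefixes; the merge-composition lemma (Corollary~\ref{corollary:tree-composition}) then preserves the class, membership in $\cl{S}$ is clear, and iterating bounds the degree by (a constant times) $\Lambda_{\cl{S},\mc{L}}(m)$. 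Your height reduction and the $\cl{S}$-membership bookkeeping are fine as stated.

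Two smaller points. For part~2 you assert the composition functions ``can be generated once,'' but uniform computability needs an argument: the paper enumerates rank-$m$ class-defining sentences and tests their satisfiability over $\cl{S}$, which is decidable precisely because part~1 already gives a computable small-model property; without this the preprocessing is non-uniform and the f.p.t.\ claim is incomplete. And your non-elementariness argument only treats $\sigmarank=\varnothing$, whereas the proposition is stated for arbitrary $(\sigmarank,\rho)$; the paper instead argues that an elementary witness function would make the index of $\lequiv{m}$ over $\cl{S}$ elementary (every structure being equivalent to one of elementarily bounded size), contradicting the non-elementary lower bound on this index over words, which is the cleaner way to phrase the transfer.
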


We prove the two parts of the above result separately. In the
remainder of this section, we fix $\mc{L}$, and also fix $\cl{S}$ to
be the class $\prankedtrees(\Sigma, \sigmarank, \rho)$. Given these
fixings, we denote $\ldelta{\cl{S}}{m}$ (the set of equivalence
classes of the $\lequiv{m}$ relation over $\cl{S}$) simply as
$\Delta_m$, and denote $\Lambda_{\cl{S}, \cl{L}}(m)$ (see point 3 in
Section~\ref{section:background} for the definition of
$\Lambda_{\cl{S}, \cl{L}}(m)$) simply as $\Lambda(m)$. All trees will
be assumed to be from $\cl{S}$.

Towards the proof of
Proposition~\ref{prop:result-for-partially-ranked-trees}, we first
present a Feferman-Vaught style $\mc{L}$-composition lemma for ordered
trees. Composition results of this kind were first studied by Feferman
and Vaught, and subsequently by many others (see~\cite{makowsky}).  To
state the composition lemma, we introduce some terminology. For a
finite alphabet $\Omega$, given ordered $\Omega$-trees $\tree{t},
\tree{s}$ having disjoint sets of nodes (w.l.o.g.) and a non-root node
$a$ of $\tree{t}$, the \emph{join of $\tree{s}$ to $\tree{t}$ to the
  right of $a$}, denoted $\tree{t} \cdot^{\rightarrow}_a \tree{s}$, is
defined as the tree obtained by making $\tree{s}$ as a new child
subtree of the parent of $a$ in $\tree{t}$, at the successor position
of the position of $a$ among the children of the parent of $a$ in
$\tree{t}$.  We can similarly define the \emph{join of $\tree{s}$ to
  $\tree{t}$ to the left of $a$}, denoted $\tree{t}
\cdot^{\leftarrow}_a \tree{s}$. Likewise, for $\tree{t}$ and
$\tree{s}$ as above, if $a$ is a leaf node of $\tree{t}$, we can
define the \emph{join of $\tree{s}$ to $\tree{t}$ below $a$}, denoted
$\tree{t} \cdot^{\uparrow}_a \tree{s}$, as the tree obtained upto
isomorphism by making the root of $\tree{s}$ as a child of $a$. The
$\mc{L}$ composition lemma for ordered trees can now be stated as
follows. The proof is similar to the proof of the known
$\mc{L}$-composition lemma for words.  We skip presenting the proof
here, but point the interested reader to
Appendix~\ref{section:appendix:composition-lemma-for-partially-ranked-trees}
for the detailed proof.

\begin{lemma}[Composition lemma for ordered trees]\label{lemma:mso-composition-lemma-for-ordered-trees}
For a finite alphabet $\Omega$, let ${\tree{t}}_i, \tree{s}_i$ be
non-empty ordered $\Omega$-trees, and let $a_i$ be a non-root node of
$\tree{t}_i$, for each $i \in \{1, 2\}$. Let $m \ge 2$ and suppose
that $({\tree{t}}_1, a_1) \lequiv{m} ({\tree{t}}_2, a_2)$ and
${\tree{s}}_1
\lequiv{m} {\tree{s}}_2$. Then each of the following hold.
\begin{enumerate}[nosep]
\item
$(({\tree{t}}_1 \cdot^{\rightarrow}_{a_1} {\tree{s}}_1), a_1) \lequiv{m}
(({\tree{t}}_2 \cdot^{\rightarrow}_{a_2} {\tree{s}}_2), a_2)$\label{lemma:mso-composition-lemma-for-ordered-trees:part-1}
\item
$(({\tree{t}}_1 \cdot^{\leftarrow}_{a_1} {\tree{s}}_1), a_1) \lequiv{m}
(({\tree{t}}_2 \cdot^{\leftarrow}_{a_2} {\tree{s}}_2), a_2)$
\item
$(({\tree{t}}_1 \cdot^{\uparrow}_{a_1} {\tree{s}}_1), a_1) \lequiv{m}
  (({\tree{t}}_2 \cdot^{\uparrow}_{a_2} {\tree{s}}_2), a_2)$ if $a_1,
  a_2$ are leaf nodes of $\tree{t}_1, \tree{t}_2$ resp.
\end{enumerate}
\end{lemma}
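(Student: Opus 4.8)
The natural route is via Ehrenfeucht--Fra\"iss\'e games for $\mc{L}$. The plan is to show that from winning strategies for Duplicator in the $m$-round game witnessing $({\tree{t}}_1, a_1) \lequiv{m} ({\tree{t}}_2, a_2)$ and in the $m$-round game witnessing ${\tree{s}}_1 \lequiv{m} {\tree{s}}_2$, one can assemble a winning strategy for Duplicator in the $m$-round game on the two composite structures (together with the distinguished nodes $a_1, a_2$). I would treat part~1 in detail and note that parts~2 and~3 are entirely analogous (left-join is the mirror image; below-join is the same construction once one checks that being a leaf of $\tree{t}_i$ is the only structural fact about $a_i$ used). The key observation that makes the composition legitimate is that each of the operations $\cdot^{\rightarrow}_{a}$, $\cdot^{\leftarrow}_{a}$, $\cdot^{\uparrow}_{a}$ attaches $\tree{s}_i$ to $\tree{t}_i$ in a way that is \emph{locally determined} by the position of $a_i$: the new tree is, as an ordered labelled poset, the disjoint union of (a copy of) $\tree{t}_i$ and (a copy of) $\tree{s}_i$, with exactly one new comparability/order fact linking the root of $\tree{s}_i$ into the sibling sequence or the child sequence at $a_i$. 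Because $a_i$ is named by a constant (it is the distinguished element carried through the equivalence), the ``glue'' is first-order definable from $a_i$, which is what lets the split-and-recombine strategy go through.

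The main steps, in order. First, set up the game on $({\tree{t}}_1 \cdot^{\rightarrow}_{a_1} {\tree{s}}_1, a_1)$ versus $({\tree{t}}_2 \cdot^{\rightarrow}_{a_2} {\tree{s}}_2, a_2)$ and describe how Duplicator partitions her bookkeeping into a ``$\tree{t}$-component'' and an ``$\tree{s}$-component''. Second, specify the responder: when Spoiler picks an element (for $\mso$, an element or a set -- split a set into its $\tree{t}$-part and $\tree{s}$-part and answer componentwise) inside the copy of $\tree{t}_i$, Duplicator consults her $\tree{t}$-strategy (which already knows $a_i$ is a named point); when Spoiler picks inside the copy of $\tree{s}_i$, she consults her $\tree{s}$-strategy. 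Third, after $m$ rounds, verify that the induced map is a partial isomorphism of the composite structures: within each component this is immediate from the component strategies, and the only cross-component facts to check are (i) order/ancestor relations between a $\tree{t}$-element and an $\tree{s}$-element and (ii) the position of the attachment relative to $a_i$ -- both of which are pinned down because the $\tree{t}$-strategy preserves all relations to the named element $a_i$, and the attachment point of $\tree{s}_i$ is, by definition of $\cdot^{\rightarrow}_{a_i}$, exactly the successor sibling slot of $a_i$. A minor bookkeeping point is the hypothesis $m \geq 2$: one needs at least enough rounds that the named element $a_i$ and, if relevant in the $\mso$ game, its parent, are available for Duplicator to locate the seam; I would remark on exactly where this is used.

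The hard part is the cross-component partial-isomorphism check in step three -- in particular, confirming that no $\mc{L}$-relation of the composite structure links the two components in a way not already controlled by the two component strategies. The construction is engineered so that the \emph{only} such relation is the order/comparability relation, and for ordered trees this reduces to: (a) the ancestor relation between an $\tree{s}_i$-node and a $\tree{t}_i$-node, which is governed by whether the $\tree{t}_i$-node is an ancestor of the parent of $a_i$ (a fact about the named element, hence preserved), and (b) the sibling-order relation $\lesssim$ across the seam, which is likewise determined by position relative to $a_i$. Once this is isolated, the argument is routine, and as the statement already notes, it mirrors the established composition lemma for words almost verbatim; I would therefore present the games setup and the seam analysis carefully and defer the purely mechanical verifications to the appendix, exactly as the paper does.
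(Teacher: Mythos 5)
Your route is the same as the paper's: compose the two Duplicator strategies componentwise (splitting an $\mso$ set move into its $\tree{t}$-part and $\tree{s}$-part), then check that the resulting map is a partial isomorphism, the only non-routine point being the order relations across the seam. But your seam analysis omits the one idea the verification actually hinges on, and your account of the hypothesis $m \ge 2$ is not where it is used. The cross-seam relations are not determined by the $\tree{t}$-side's position relative to $a_i$ alone: in $\tree{t}_i \cdot^{\rightarrow}_{a_i} \tree{s}_i$, every ancestor of $a_i$'s parent becomes an ancestor of \emph{every} $\tree{s}_i$-node, while the sibling order $\lesssim$ links $\tree{t}_i$-nodes only to $\troot{\tree{s}_i}$, so whether the picked $\tree{s}$-node is the root matters. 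Concretely, if Spoiler picks $\troot{\tree{s}_1}$ and the $\tree{s}$-side strategy answers with a non-root node $f$ of $\tree{s}_2$ (which a generic winning strategy witnessing $\tree{s}_1 \lequiv{m} \tree{s}_2$ may well do), the composed play already violates the partial isomorphism with respect to the distinguished constants: $a_1 \lesssim \troot{\tree{s}_1}$ holds in the first composite tree, but $a_2 \lesssim f$ fails in the second. The missing ingredient is a normalization of the $\tree{s}$-strategy so that it maps root to root (and non-roots to non-roots); this is exactly what the paper assumes, and it is precisely here that $m \ge 2$ is invoked. Your stated reason for $m \ge 2$ (``so that $a_i$ and its parent are available to locate the seam'') is beside the point: $a_i$ is a distinguished element of the structures and costs no rounds. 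With the root-preserving normalization added, the rest of your sketch (componentwise responses, within-component facts immediate, cross-seam facts reduced to relations with $a_i$ and root/non-root status) goes through essentially as in the paper, including the analogous treatment of the left-join and below-join cases.
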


A useful corollary of this lemma is as below.
\begin{corollary}\label{corollary:tree-composition}
  The following are true for $m \ge 3$.
  \vspace{3pt}\begin{enumerate}[nosep]
    \item Given trees $\tree{s}, \tree{t}$ and a non-root node $a$ of
      $\tree{t}$, let $\tree{z} = \tree{t}[\tree{t}_{\ge a} \mapsto
      \tree{s}]$. If $\tree{s} \lequiv{m} \tree{t}_{\ge a}$, then
      $\tree{z} \lequiv{m} \tree{t}$.\label{corollary:tree-composition:1}
    \item Let $\tree{s}_1, \tree{s}_2, \tree{t}$ be given trees such
      that the labels of their roots are the same, and belong to
      $\Sigma \setminus \sigmarank$.  Suppose $\tree{z}_i = \tree{s}_i
      \odot \tree{t}$ for $i \in \{1, 2\}$. If $\tree{s}_1 \lequiv{m}
      \tree{s}_2$, then $\tree{z}_1 \lequiv{m}
      \tree{z}_2$.\label{corollary:tree-composition:2}
    \item Let $\tree{s}_1, \tree{s}_2$ be given trees such that the
      labels of their roots are the same, and belong to $\Sigma
      \setminus \sigmarank$. For $i \in \{1, 2\}$, given $\tree{t}_i$,
      let $\tree{z}_i$ be the tree obtained from $\tree{s}_i$ by
      adding $\tree{t}_i$ as the (new) ``last'' child subtree of the
      root of $\tree{s}_i$. If $\tree{s}_1 \lequiv{m} \tree{s}_2$ and
      $\tree{t}_1 \lequiv{m} \tree{t}_2$, then $\tree{z}_1 \lequiv{m}
      \tree{z}_2$.\label{corollary:tree-composition:3}
  \end{enumerate}
\end{corollary}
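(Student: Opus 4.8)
The plan is to derive all three parts from the composition lemma (Lemma~\ref{lemma:mso-composition-lemma-for-ordered-trees}) by re-expressing each operation in terms of the one-step joins $\cdot^{\rightarrow}_a$, $\cdot^{\leftarrow}_a$, $\cdot^{\uparrow}_a$, filling the gap with a direct EF-game argument where no single join suffices. For part~\ref{corollary:tree-composition:1}, put $\tree{u} := \tree{t} - \tree{t}_{\ge a}$. If $a$ has an immediate left sibling $b$ in $\tree{t}$, then $b$ is still a node of $\tree{u}$ and, unwinding the definitions, $\tree{t} = \tree{u} \cdot^{\rightarrow}_b \tree{t}_{\ge a}$ while $\tree{z} = \tree{u} \cdot^{\rightarrow}_b \tree{s}$; symmetrically one uses an immediate right sibling with $\cdot^{\leftarrow}$, and the parent $c$ of $a$ --- which becomes a leaf of $\tree{u}$ --- with $\cdot^{\uparrow}$ when $a$ is the unique child of a non-root node $c$. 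In each such case I apply Lemma~\ref{lemma:mso-composition-lemma-for-ordered-trees} with both host trees equal to $\tree{u}$, both marked nodes equal to that common node, and the two attached subtrees being $\tree{t}_{\ge a}$ and $\tree{s}$, so that the hypothesis of the form $(\tree{u},\cdot) \lequiv{m} (\tree{u},\cdot)$ holds trivially while $\tree{s} \lequiv{m} \tree{t}_{\ge a}$ is the assumption; forgetting the marked node in the conclusion gives $\tree{z} \lequiv{m} \tree{t}$. The sole remaining configuration --- $a$ being the unique child of $\troot{\tree{t}}$ --- is handled directly: $\tree{t}$ and $\tree{z}$ then arise from $\tree{t}_{\ge a}$ and $\tree{s}$ respectively by attaching a common new root, and Duplicator wins the $m$-round game by mapping the two roots to one another (always consistent, since each is the unique $\le$-minimum, carries the common root label, and is $\le$-below every node) and playing the winning strategy for $\tree{t}_{\ge a} \lequiv{m} \tree{s}$ on all other nodes.

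For parts~\ref{corollary:tree-composition:2} and~\ref{corollary:tree-composition:3}, which are not single joins, I would argue by a composition-style EF-game argument directly, of the same flavour as the proof of Lemma~\ref{lemma:mso-composition-lemma-for-ordered-trees}; a reduction that marks ``the $\lesssim$-last child of $\troot{\tree{s}_i}$'' is to be avoided, since that node is only first-order definable and transferring a marking through $\tree{s}_1 \lequiv{m} \tree{s}_2$ would cost quantifier rank. In part~\ref{corollary:tree-composition:2}, $\tree{z}_1$ and $\tree{z}_2$ share a common sub-forest --- the subtrees hanging at $\troot{\tree{t}}$ --- and otherwise consist of disjoint copies of $\tree{s}_1$ and $\tree{s}_2$ whose roots are the roots of $\tree{z}_1$ and $\tree{z}_2$. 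Duplicator plays the identity on the shared sub-forest, maps the two roots to one another (free, for the reason above), and follows a winning strategy for $\tree{s}_1 \lequiv{m} \tree{s}_2$ on the remaining $\tree{s}$-nodes; the only cross relations needing checking are $\le$ from the common root and $\lesssim$ between a top-level $\tree{s}$-subtree and a top-level $\tree{t}$-subtree, the latter always pointing from the $\tree{s}$-side to the $\tree{t}$-side. Part~\ref{corollary:tree-composition:3} is the same with $\tree{t}_1,\tree{t}_2$ also varying: Duplicator combines a winning strategy for $\tree{s}_1 \lequiv{m} \tree{s}_2$ with one for $\tree{t}_1 \lequiv{m} \tree{t}_2$, using that $\troot{\tree{t}_i}$ is exactly the $\lesssim$-last child of $\troot{\tree{s}_i}$ in $\tree{z}_i$; equivalently, $\tree{z}_i = \tree{s}_i \odot \tree{v}_i$ where $\tree{v}_i$ has a root labelled as $\troot{\tree{s}_i}$ with $\tree{t}_i$ as its single subtree (so $\tree{v}_1 \lequiv{m} \tree{v}_2$ by the corner case of part~\ref{corollary:tree-composition:1}), which reduces part~\ref{corollary:tree-composition:3} to the mild extension of part~\ref{corollary:tree-composition:2} in which the merged-in tree is allowed to vary $\lequiv{m}$-equivalently too.

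I expect the main obstacle to be the quantifier-rank bookkeeping in parts~\ref{corollary:tree-composition:2} and~\ref{corollary:tree-composition:3}: since the conclusion is asserted at the same rank $m$, one cannot route through ``the last child of the root is definable'' but must verify inside the game that the chosen sub-strategies already respect the few structural predicates needed to glue them along the common root --- being the $\le$-minimum, and being a child of the $\le$-minimum --- which, owing to the tree shape, can be detected within a bounded number of rounds and is exactly what the hypothesis $m \ge 3$ buys. Degenerate cases ($\troot{\tree{s}_i}$ a leaf, so that $\tree{s}_i$ is a single node) are dispatched immediately, since $\tree{s}_1 \lequiv{m} \tree{s}_2$ already fixes single-node-ness for $m \ge 2$. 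Beyond Lemma~\ref{lemma:mso-composition-lemma-for-ordered-trees} and its proof technique no new idea is needed; the work is the case analysis above.
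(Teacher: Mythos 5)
Your treatment of part~\ref{corollary:tree-composition:1} is essentially the paper's own proof: the same three-way case split (predecessor sibling, successor sibling, sole child), each case discharged by Lemma~\ref{lemma:mso-composition-lemma-for-ordered-trees} applied with the host tree $\tree{t}-\tree{t}_{\ge a}$ compared against itself. You are in fact more careful than the paper in the configuration where $a$ is the unique child of $\troot{\tree{t}}$, where the join node would be the root of the host tree and the lemma, as stated, does not apply; your direct root-to-root game argument covers this (for $\mso$, remember also to mirror the root's membership in set moves, which is harmless since the root's atomic relations to all other nodes are fixed).

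The gap is in parts~\ref{corollary:tree-composition:2} and~\ref{corollary:tree-composition:3}, at exactly the step you flag as the main obstacle and then assert rather than prove. You claim the component winning strategies for $\tree{s}_1 \lequiv{m} \tree{s}_2$ and $\tree{t}_1 \lequiv{m} \tree{t}_2$ ``already respect'' being the root and being a child of the root, because these properties are detectable in a bounded number of rounds and $m \ge 3$. Bounded-round detectability forces Duplicator to respect such a property only while that many rounds remain; in the final round or two of the component game a mismatch is fully compatible with winning that component game, yet in the merged tree it is immediately fatal, because as soon as the root of the appended part has been placed (Spoiler can do this in round one), ``child of the root'' becomes atomically visible through the cross relation $x \lesssim y$. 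So what you need is genuinely stronger than $\tree{s}_1 \lequiv{m} \tree{s}_2$, and it does not come for free: for unary-labelled paths $\tree{s}_1, \tree{s}_2$ with $7$ and $8$ nodes one has $\tree{s}_1 \foequiv{3} \tree{s}_2$, yet the trees pointed at the (unique, hence last) child of the root are not $\fo[3]$-equivalent, since above that child sit chains of $5$ and $6$ nodes, which rank~$3$ separates after a rank-preserving relativization. This shows that any route which fixes the last child (or the set of children) of the root inside the given rank-$m$ strategies cannot be justified by the definability/detectability argument alone --- note this is also precisely the step the paper's own proof of part~\ref{corollary:tree-composition:3} takes (``whereby $(\tree{s}_1,a_1) \lequiv{m} (\tree{s}_2,a_2)$''), so you have correctly located the crux, but your sketch does not close it. What the detectability argument honestly yields is the corollary with a constant loss of rank (hypotheses at rank $m+2$, conclusion at rank $m$), which is enough for the way the corollary is used later (one simply works with the finer equivalence throughout); establishing it at the same rank $m$, as stated, would require a finer analysis of how the gluing relations can be exploited than either your proposal or the paper supplies.
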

\begin{proof}
  (\ref{corollary:tree-composition:1}): Let $\tree{v} = \tree{t} -
  \tree{t}_{\ge a}$. There are 3 possibilities:
  \begin{enumerate}[nosep]
    \item[(i)] The node $a$ has a ``predecessor'' sibling in
      $\tree{t}$, call it $b$. Then $\tree{t} = \tree{v}
      \cdot^{\rightarrow}_{b} \tree{t}_{\ge a}$. Then since
      $\tree{t}_{\ge a} \lequiv{m} \tree{s}$, we have by
      Lemma~\ref{lemma:mso-composition-lemma-for-ordered-trees}, that
      $\tree{z} \lequiv{m} \tree{t}$ since $\tree{z} = (\tree{v}
      \cdot^{\rightarrow}_{b} \tree{s})$.
    \item[(ii)] The node $a$ has a ``successor'' sibling in $\tree{t}$,
      call it $b$. Then $\tree{t} = \tree{v} \cdot^{\leftarrow}_{b}
      \tree{t}_{\ge a}$. Again since $\tree{t}_{\ge a} \lequiv{m}
      \tree{s}$, we have by
      Lemma~\ref{lemma:mso-composition-lemma-for-ordered-trees}, that
      $\tree{z} \lequiv{m} \tree{t}$ since $\tree{z} = (\tree{v}
      \cdot^{\leftarrow}_{b} \tree{s})$.
    \item[(iii)] The node $a$ is the sole child of its parent $b$ in
      $\tree{t}$. Then $\tree{t} = \tree{v} \cdot^{\uparrow}_{b}
      \tree{t}_{\ge a}$. Then again by
      Lemma~\ref{lemma:mso-composition-lemma-for-ordered-trees}, we
      have that $\tree{z} \lequiv{m} \tree{t}$ since $\tree{z} =
      (\tree{v} \cdot^{\uparrow}_{b} \tree{s})$.
  \end{enumerate}
  
  (\ref{corollary:tree-composition:2}): We prove this part assuming
  part \ref{corollary:tree-composition:3}. For $i \in \{1, 2\}$, let
  $a_i$ be the last child of the root of $\tree{s}_i$ (under the
  linear order on the children of the root). Let $b_1, \ldots, b_n$ be
  the children (and in that order) of the root of $\tree{t}$. Let
  $\tree{u}_j = \tree{t}_{\ge b_j}$ for $j \in \{1, \ldots, n\}$. For
  $i \in \{1, 2\}$, let $\tree{x}^i_1 = \tree{s}_i
  \cdot^{\rightarrow}_{a_i} \tree{u}_1$ and $\tree{x}^i_{j+1} =
  \tree{x}^i_j \cdot^{\rightarrow}_{b_j} \tree{u}_{j+1}$ for $j \in
  \{1, \ldots, n-1\}$. Since $\tree{s}_1 \lequiv{m} \tree{s}_2$, we
  have by part \ref{corollary:tree-composition:3} of this lemma, that
  $\tree{x}^1_1 \lequiv{m} \tree{x}^2_1$. Whereby, $\tree{x}^1_j
  \lequiv{m} \tree{x}^2_j$ for $j \in \{1, \ldots, n\}$. Since
  $\tree{x}^i_n = \tree{z}_i$ for $i \in \{1, 2\}$, we have
  $\tree{z}_1 \lequiv{m} \tree{z}_2$.

  (\ref{corollary:tree-composition:3}): For $i \in \{1, 2\}$, let
  $a_i$ be the last child of the root of $\tree{s}_i$ (under the
  linear order on the children of the root). It is easy to verify
  given that $\tree{s}_1 \lequiv{m} \tree{s}_2$ and $m \ge 3$, that
  there exists a winning strategy for the duplicator in the $m$ round
  $\lef$ game between $\tree{s}_1$ and $\tree{s}_2$ such that in any
  round, if the spoiler chooses $a_1$ from $\tree{s}_1$ (resp. $a_2$
  from $\tree{s}_2$), then the duplicator chooses $a_2$ from
  $\tree{s}_2$ (resp. $a_1$ from $\tree{s}_1$) according to the
  winning strategy. Whereby, $(\tree{s}_1, a_1) \lequiv{m}
  (\tree{s}_2, a_2)$. Then by
  Lemma~\ref{lemma:mso-composition-lemma-for-ordered-trees},
  $\tree{z}_1 = (\tree{s}_1 \cdot^{\rightarrow}_{a_1} \tree{t}_1)
  \lequiv{m} (\tree{s}_2 \cdot^{\rightarrow}_{a_2} \tree{t}_2) =
  \tree{z}_2$.

\end{proof}

We use the above results to obtain a ``functional'' form of a
composition lemma for partially ranked trees, as given by the lemma
below. This lemma plays a crucial role in the proof of
Proposition~\ref{prop:result-for-partially-ranked-trees}. Recall that
$\cl{S} = \prankedtrees(\Sigma, \sigmarank, \rho)$.

\begin{lemma}[Composition lemma for partially ranked trees]\label{lemma:composition-lemma-for-partially-ranked-trees}
For each $\sigma \in \Sigma$ and $m \ge 3$, there exists a function
$\fsigma{m}: (\Delta_m)^{\rho(\sigma)} \rightarrow \Delta_m$ if
$\sigma \in \sigmarank$, and functions $\fsigmai{m}{i}: (\Delta_m)^i
\rightarrow \Delta_m$ for $i \in \{1, 2\}$ if $\sigma \in \Sigma
\setminus \sigmarank$, with the following properties: Let $\tree{t} =
(O, \lambda) \in \cl{S}$ and $a$ be an internal node of $\tree{t}$
such that $\lambda(a) = \sigma$, and the children of $a$ in $\tree{t}$
are $b_1, \ldots, b_n$. Let $\delta_i$ be the $\lequiv{m}$ class of
$\tree{t}_{\ge b_i}$ for $i \in \{1, \ldots, n\}$, and let $\delta$ be
the $\lequiv{m}$ class of $\tree{t}_{\ge a}$.
\vspace{3pt}\begin{enumerate}[nosep]
  \item If $\sigma \in \sigmarank$ (whereby $n = \rho(\sigma)$), then
    $\delta = \fsigma{m}(\delta_1, \ldots, \delta_n)$.
  \item If $\sigma \in \Sigma \setminus \sigmarank$, then $\delta$ is
    given as follows: For $k \in \{1, \ldots, n-1\}$, let $\chi_{k+1}
    = \fsigmai{m}{2}(\chi_k, \delta_{k+1})$ where $\chi_1 =
    \fsigmai{m}{1}(\delta_1)$. Then $\delta = \chi_n$.
\end{enumerate}
\end{lemma}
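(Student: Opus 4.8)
The plan is to read both parts of the lemma off Corollary~\ref{corollary:tree-composition} after fixing, once and for all and for each $\delta \in \Delta_m$, a canonical representative tree $\tree{r}_\delta \in \cl{S}$ lying in $\delta$. Every function produced will simply output the $\lequiv{m}$-class of a tree assembled out of these canonical representatives according to the shape forced by the root label $\sigma$; the substance of the lemma is then exactly the claim, supplied by the Corollary, that this assembled tree is $\lequiv{m}$-equivalent to $\tree{t}_{\ge a}$, so that the $\lequiv{m}$-class of $\tree{t}_{\ge a}$ depends only on $(\sigma, m, \delta_1, \dots, \delta_n)$ in the stated manner. Throughout I use that $m \ge 3$, so all three parts of Corollary~\ref{corollary:tree-composition} are available.

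\textit{Ranked case} ($\sigma \in \sigmarank$, so $n = \rho(\sigma)$). For $(\delta_1, \dots, \delta_n) \in (\Delta_m)^n$ let $\tree{w}(\delta_1, \dots, \delta_n) \in \cl{S}$ be the tree, unique up to isomorphism, whose root is labeled $\sigma$ and whose child subtrees in order are $\tree{r}_{\delta_1}, \dots, \tree{r}_{\delta_n}$; this is a legal $\cl{S}$-tree precisely because its root has exactly $\rho(\sigma)$ children and each $\tree{r}_{\delta_i}$ is partially ranked. Define $f_{\sigma, m}(\delta_1, \dots, \delta_n)$ to be the $\lequiv{m}$-class of $\tree{w}(\delta_1, \dots, \delta_n)$ (this is a total, well-defined function on $(\Delta_m)^{\rho(\sigma)}$). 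To verify the property, start from $\tree{t}_{\ge a}$ and, for $i = 1, \dots, n$ in turn, replace the subtree hanging at the $i$-th child $b_i$ by $\tree{r}_{\delta_i}$; at each step the replaced subtree is $\lequiv{m}$-equivalent to its replacement, the ambient tree remains in $\cl{S}$ (the node $a$ keeps exactly $\rho(\sigma)$ children), and replacements at distinct children are independent, so Corollary~\ref{corollary:tree-composition}, part~\ref{corollary:tree-composition:1} preserves the $\lequiv{m}$-class at each step. After the $n$ replacements the tree obtained is exactly $\tree{w}(\delta_1, \dots, \delta_n)$, whence $\delta = f_{\sigma, m}(\delta_1, \dots, \delta_n)$.

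\textit{Unranked case} ($\sigma \in \Sigma \setminus \sigmarank$). Here the children are peeled off one at a time from the left. Let $\tree{t}_k$ be the tree with root labeled $\sigma$ and child subtrees $\tree{t}_{\ge b_1}, \dots, \tree{t}_{\ge b_k}$ in order, so $\tree{t}_1$ has a single child subtree, $\tree{t}_{k+1}$ arises from $\tree{t}_k$ by appending $\tree{t}_{\ge b_{k+1}}$ as a new last child subtree of the root, and $\tree{t}_n \cong \tree{t}_{\ge a}$. Define $f_{\sigma, m, 1}(\delta_1)$ to be the $\lequiv{m}$-class of the tree consisting of a root labeled $\sigma$ with single child subtree $\tree{r}_{\delta_1}$ (always an $\cl{S}$-tree), and $f_{\sigma, m, 2}(\chi, \delta')$ to be the $\lequiv{m}$-class of the tree obtained from $\tree{r}_\chi$ by appending $\tree{r}_{\delta'}$ as a new last child subtree of its root, with the convention that $f_{\sigma, m, 2}$ is set to an arbitrary fixed element of $\Delta_m$ on those $\chi$ that are not $\lequiv{m}$-classes of trees rooted at $\sigma$ (such $\chi$ never arise below). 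Since $m \ge 2$, the assertion ``the unique minimal element is labeled $\sigma$'' is an $\fo$ sentence of rank at most $2$, so every tree $\lequiv{m}$-equivalent to $\tree{t}_k$ has root label $\sigma \in \Sigma \setminus \sigmarank$; hence Corollary~\ref{corollary:tree-composition}, part~\ref{corollary:tree-composition:3} applies at each stage with two base trees sharing the required root label, and gives by induction on $k$ that $\chi_1 = f_{\sigma, m, 1}(\delta_1)$ is the $\lequiv{m}$-class of $\tree{t}_1$ and $\chi_{k+1} = f_{\sigma, m, 2}(\chi_k, \delta_{k+1})$ is the $\lequiv{m}$-class of $\tree{t}_{k+1}$. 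Taking $k = n-1$ yields $\delta = \chi_n$.

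There is no genuine obstacle here; the only point demanding care is the bookkeeping of membership in $\cl{S}$ at each intermediate step --- which is exactly where the case split ``$\sigma \in \sigmarank$ forces $\rho(\sigma)$ children, $\sigma \notin \sigmarank$ imposes no constraint'' is used --- together with the observation that $\lequiv{m}$ pins down the root label for $m \ge 2$, which is what permits invoking Corollary~\ref{corollary:tree-composition}, part~\ref{corollary:tree-composition:3} verbatim in the unranked case.
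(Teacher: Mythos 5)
Your proposal is correct and follows essentially the same route as the paper: define the composition functions by assembling representative trees of the given classes under a root labeled $\sigma$, and verify via Corollary~\ref{corollary:tree-composition} (part~\ref{corollary:tree-composition:1} for the ranked case, part~\ref{corollary:tree-composition:3} iterated for the unranked case). The only cosmetic difference is that where the paper explicitly searches for a representative of the first argument whose root is labeled $\sigma$ (with a default otherwise), you use globally fixed canonical representatives and justify that they automatically carry the right root label via the correct observation that the root label is an $\lequiv{m}$-invariant for $m \ge 2$.
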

\begin{proof}
  We define functions $\fsigma{m}$ and $\fsigmai{m}{i}$ as follows:

  \begin{enumerate}[nosep]
    \item $\fsigma{m}$: Let $\delta_i \in \Delta_m$ for $i \in \{1,
      \ldots, n\}$ be given, where $n = \rho(\sigma)$ and $\sigma \in
      \sigmarank$. If any of the $\delta_i$'s is not realized in
      $\cl{S}$ (i.e. there is no tree in $\cl{S}$ whose $\lequiv{m}$
      class is $\delta_i$), then define $\fsigma{m}(\delta_1, \ldots,
      \delta_n) = \delta_{\text{default}}$ where
      $\delta_{\text{default}}$ is some fixed element of $\Delta_m$.

      Else, let $\tree{t}_i \in \cl{S}$ be a tree such that the
      $\lequiv{m}$ class of $\tree{t}_i$ is $\delta_i$ for $1 \leq i
      \leq n$. Let $\tree{s}_{\delta_1, \ldots, \delta_n}$ be the tree
      obtained by making $\tree{t}_1, \ldots, \tree{t}_n$ as the child
      subtrees (and in that sequence) of a new root node labeled with
      $\sigma$. Let $\delta$ be the $\lequiv{m}$ class of
      $\tree{s}_{\delta_1, \ldots, \delta_n}$. Define
      $\fsigma{m}(\delta_1, \ldots, \delta_n) = \delta$.
    \item $\fsigmai{m}{i}$: The case when $i = 1$ can be done similarly
      as above. We consider the case of $i = 2$. Let $\delta_1,
      \delta_2 \in \Delta_m$. Note that $\sigma \in \Sigma \setminus
      \sigmarank$. For $i \in \{1, 2\}$, find trees $\tree{t}_i$ such
      that the $\lequiv{m}$ class of $\tree{t}_i$ is $\delta_i$ and
      further such that the root of $\tree{t}_1$ is labeled with
      $\sigma$. If either $\tree{t}_1$ or $\tree{t}_2$ is not found,
      then define $\fsigmai{m}{2}(\delta_1, \delta_2) =
      \delta_{\text{default}}$. Else, let $\tree{v}_{\delta_1,
        \delta_2}$ be the tree obtained adding $\tree{t}_2$ as the
      (new) ``last'' child subtree of the root of $\tree{t}_1$. Let
      $\delta$ be the $\lequiv{m}$ class of $\tree{v}_{\delta_1,
        \delta_2}$. Define $\fsigmai{m}{2}(\delta_1, \delta_2) =
      \delta$.
  \end{enumerate}

   We claim that $\fsigma{m}$ and $\fsigmai{m}{i}$ indeed satisfy the
   properties mentioned in the statement of this lemma. Let $\tree{t}
   = (O, \lambda) \in \cl{S}$ and $a$ be an internal node of
   $\tree{t}$ such that $\lambda(a) = \sigma$, and the children of $a$
   in $\tree{t}$ are $b_1, \ldots, b_n$. Let $\delta_i$ be the
   $\lequiv{m}$ class of $\tree{t}_{\ge b_i}$ for $i \in \{1, \ldots,
   n\}$, and let $\delta$ be the $\lequiv{m}$ class of $\tree{t}_{\ge
     a}$.
   \begin{itemize}[nosep]
     \item $\fsigma{m}$: Since $\tree{t}_{\ge b_i}$ has $\lequiv{m}$
       class $\delta_i$ for $i \in \{1, \ldots, n\}$, we see that the
       tree $\tree{z} = \tree{s}_{\delta_1, \ldots, \delta_n}$, as
       referred to earlier, exists. Let $d_1, \ldots, d_n$ be the
       children of the root of $\tree{z}$; then for $i \in \{1,
       \ldots, n\}$, the $\lequiv{m}$ class of $\tree{z}_{\ge d_i}$ is
       $\delta_i$, and hence $\tree{z}_{\ge d_i} \lequiv{m}
       \tree{t}_{\ge b_i}$. Since $\tree{t}_{\ge a} =
       \tree{z}\left[\tree{z}_{\ge d_1} \mapsto \tree{t}_{\ge
           b_1}\right] \cdots \left[\tree{z}_{\ge d_n} \mapsto
         \tree{t}_{\ge b_n}\right]$, we see by
       Corollary~\ref{corollary:tree-composition}(\ref{corollary:tree-composition:1})
       that $\tree{t}_{\ge a} \lequiv{m} \tree{z}$, whereby the
       $\lequiv{m}$ class of $\tree{t}_{\ge a}$ equals the
       $\lequiv{m}$ class of $\tree{z}$. The latter in turn is the
       same as $\fsigma{m}(\delta_1, \ldots, \delta_n)$ by
       construction.

     \item $\fsigmai{m}{i}$: The reasoning for $i = 1$ is just as done
       above for $\fsigma{m}$. We hence consider the case of $i = 2$.
       We illustrate our reasoning for the example of $n = 3$. The
       reasoning for general $n$ can be done likewise. Let $\tree{u} =
       \tree{t}_{\ge a}$; the root of $\tree{u}$ has 3 children $b_1,
       b_2, b_3$ such that the $\lequiv{m}$ class of $b_i$ is
       $\delta_i$ for $i \in \{1, 2, 3\}$. Consider the subtrees
       $\tree{x}$ and $\tree{y}$ of $\tree{u}$ defined as $\tree{x} =
       \tree{u} - \tree{u}_{\ge b_3}$ and $\tree{y} = \tree{x} -
       \tree{x}_{\ge b_2}$. Let $\delta_4$ and $\delta_5$ be resp. the
       $\lequiv{m}$ classes of $\tree{x}$ and $\tree{y}$. Now consider
       the trees $\tree{v}_{\delta_5, \delta_2}$ and
       $\tree{v}_{\delta_4, \delta_3}$ which are guaranteed to exist
       ($\tree{v}_{\delta_5, \delta_2}$ exists since $\tree{y}$
       \emph{is} a tree whose root is labeled with $\sigma$ and whose
       $\lequiv{m}$ class is $\delta_5$, while $\tree{u}_{\ge b_2}$
       \emph{is} a tree whose $\lequiv{m}$ class is $\delta_2$). Since
       $\sigma \in \Sigma \setminus \sigmarank$, we have by
       Corollary~\ref{corollary:tree-composition}(\ref{corollary:tree-composition:3}),
       that $\tree{x} \lequiv{m} \tree{v}_{\delta_5, \delta_2}$ and
       $\tree{u} \lequiv{m} \tree{v}_{\delta_4, \delta_3}$. Whereby,
       the $\lequiv{m}$ class of $\tree{x}$ is $\delta_4 =
       \fsigmai{m}{2}(\delta_5, \delta_2)$ and that of $\tree{u}$ is
       $\delta = \fsigmai{m}{2}(\delta_4, \delta_3)$. Observe that
       $\delta_5$ is indeed $\fsigmai{m}{1}(\delta_1)$.

   \end{itemize}
   
\end{proof}

\tbf{Proof of part (1) of
  Proposition~\ref{prop:result-for-partially-ranked-trees}}: The proof
of this part has at its core, the following ``reduction'' lemma that
shows that the degree and height of a tree can always be reduced to
under a threshold, preserving $\mc{L}[m]$ equivalence.

  \begin{lemma}\label{lemma:degree-and-height-reduction-for-trees}
    There exist computable functions $\eta_1, \eta_2: \mathbb{N}
    \rightarrow \mathbb{N}$ such that for each $\tree{t} \in \cl{S}$
    and $m \in \mathbb{N}$, the following hold:
    \vspace{3pt}\begin{enumerate}[nosep]
    \item \emph{(Degree reduction)} There exists a subtree
      $\tree{s}_1$ of $\tree{t}$ in $\cl{S}$, of degree $\leq
      \eta_1(m)$, such that (i) the roots of $\tree{s}_1$ and $\tree{t}$
      are the same, and (ii) $\tree{s}_1 \lequiv{m}
      \tree{t}$.\label{lemma:degree-reduction-for-trees}
    \item \emph{(Height reduction)} There exists a subtree
      $\tree{s}_2$ of $\tree{t}$ in $\cl{S}$, of height $\leq
      \eta_2(m)$, such that (i) the roots of $\tree{s}_2$ and $\tree{t}$
      are the same, and (ii) $\tree{s}_2 \lequiv{m}
      \tree{t}$.\label{lemma:height-reduction-for-trees}
    \end{enumerate}
  \end{lemma}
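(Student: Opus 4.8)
The plan is to read off $\eta_1$ and $\eta_2$ from the fixed computable bound $\Lambda(\cdot)$ on $|\Delta_m|$ together with the constant $R = \max\{\rho(\sigma)\mid \sigma\in\sigmarank\}$ (set $R=0$ if $\sigmarank=\emptyset$): I will take $\eta_1(m)=\max(R,\Lambda(m))$ and $\eta_2(m)=\Lambda(m)$, so computability is immediate. Both reductions run on the same idea: whenever a long list of $\lequiv{m}$-classes appears — as the children-subtrees of an internal node, or as the subtrees hanging along a root-to-leaf path — pigeonhole (using that $\Delta_m$ is finite of size $\le\Lambda(m)$) forces a repeat, and the composition machinery of Corollary~\ref{corollary:tree-composition} and Lemma~\ref{lemma:composition-lemma-for-partially-ranked-trees} lets me excise the stretch between the two repeated classes without changing the $\lequiv{m}$-class of the whole tree. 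I will assume $m\ge 3$ throughout (so those results apply); the finitely many cases $m\le 2$ are dealt with by running the argument with $m$ replaced by $3$ and invoking the fact that $\lequiv{3}$ refines $\lequiv{m}$, after enlarging $\eta_1,\eta_2$ at those points.

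For part~(1), I argue by induction on $|\tree{t}|$. If $\tree{t}$ is a single node there is nothing to do. Otherwise let $b_1,\dots,b_n$ be the children of $\troot{\tree{t}}$; by the induction hypothesis each $\tree{t}_{\ge b_i}$ has a subtree $\tree{r}_i\in\cl{S}$ of degree $\le\eta_1(m)$, with the same root $b_i$ and $\tree{r}_i\lequiv{m}\tree{t}_{\ge b_i}$. Replacing each $\tree{t}_{\ge b_i}$ by $\tree{r}_i$ and applying Corollary~\ref{corollary:tree-composition}(\ref{corollary:tree-composition:1}) iteratively yields a tree $\tree{t}'\lequiv{m}\tree{t}$, still in $\cl{S}$, with the same root, in which every node other than the root has degree $\le\eta_1(m)$. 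If $\lambda(\troot{\tree{t}})\in\sigmarank$ the root has $n=\rho(\lambda(\troot{\tree{t}}))\le R\le\eta_1(m)$ children and we are done. If $\lambda(\troot{\tree{t}})=\sigma\in\Sigma\setminus\sigmarank$, set $\delta_i=[\tree{r}_i]=[\tree{t}_{\ge b_i}]$ and form the sequence $\chi_1=\fsigmai{m}{1}(\delta_1)$, $\chi_{k+1}=\fsigmai{m}{2}(\chi_k,\delta_{k+1})$ of Lemma~\ref{lemma:composition-lemma-for-partially-ranked-trees}(2), so $\chi_n$ is the $\lequiv{m}$-class of $\tree{t}'$. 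If $n>\Lambda(m)\ge|\Delta_m|$, pick $1\le p<q\le n$ with $\chi_p=\chi_q$; deleting the child-subtrees $\tree{r}_{p+1},\dots,\tree{r}_q$ of the root and recomputing the $\chi$-sequence shows (because $\chi_p=\chi_q$) that the class is still $\chi_n$, so the new tree is $\lequiv{m}\tree{t}'$, lies in $\cl{S}$ since the root is unranked, and has strictly fewer children. Iterating terminates with a root of degree $\le\Lambda(m)\le\eta_1(m)$, giving the desired $\tree{s}_1$.

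For part~(2) it suffices to show that whenever $\mathrm{height}(\tree{t})>\Lambda(m)$ there is a strictly smaller tree $\tree{t}'\in\cl{S}$ with the same root and $\tree{t}'\lequiv{m}\tree{t}$; since trees are finite, iterating produces an $\tree{s}_2$ of height $\le\Lambda(m)=\eta_2(m)$. Pick a root-to-leaf path $a_0=\troot{\tree{t}},a_1,\dots,a_k$ with $k=\mathrm{height}(\tree{t})>\Lambda(m)\ge|\Delta_m|$. Among the $k$ classes $[\tree{t}_{\ge a_1}],\dots,[\tree{t}_{\ge a_k}]$ two coincide, say $[\tree{t}_{\ge a_i}]=[\tree{t}_{\ge a_j}]$ with $1\le i<j\le k$; put $\tree{t}'=\tree{t}\left[\tree{t}_{\ge a_i}\mapsto\tree{t}_{\ge a_j}\right]$. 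Since $a_i$ is a non-root node and $\tree{t}_{\ge a_j}\lequiv{m}\tree{t}_{\ge a_i}$, Corollary~\ref{corollary:tree-composition}(\ref{corollary:tree-composition:1}) gives $\tree{t}'\lequiv{m}\tree{t}$; moreover $\tree{t}'$ has the same root as $\tree{t}$ (as $i\ge 1$), is strictly smaller (since $\tree{t}_{\ge a_j}\subsetneq\tree{t}_{\ge a_i}$), and belongs to $\cl{S}$ because the only node whose child-list changes is the parent of $a_i$, whose number of children is unchanged, so its ranking constraint (if any) is preserved.

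The main point needing care is staying inside $\cl{S}$ while cutting: a node labelled by some $\sigma\in\sigmarank$ must keep exactly $\rho(\sigma)$ children, so every excision must leave ranked nodes' child-counts untouched. This is why, in the degree step, children are pruned only at unranked nodes (ranked ones having degree $\le R$ automatically), and why, in the height step, one ``pulls up'' a whole subtree $\tree{t}_{\ge a_j}$ into the slot of $\tree{t}_{\ge a_i}$ rather than deleting material outright. A secondary subtlety is the ``same root'' requirement of both clauses, which forbids contracting the topmost segment of a path and is handled by demanding $i\ge 1$ in the pigeonhole; the $m\le 2$ boundary and the computability of $\eta_1,\eta_2$ are then routine once $\Lambda$ is fixed.
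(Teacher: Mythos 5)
Your proof is correct and takes essentially the same route as the paper's: pigeonhole on the boundedly many $\lequiv{m}$-classes attached to the children of an over-large unranked node (resp.\ to the rooted subtrees along a too-long root-to-leaf path), excise the stretch between two repeats using the composition machinery, recurse on the strictly smaller subtree, and handle $m<3$ and ranked nodes exactly as the paper does. The only differences are cosmetic: you organize the degree reduction as a structural induction and pigeonhole on the partial composition values $\chi_k$ of Lemma~\ref{lemma:composition-lemma-for-partially-ranked-trees} rather than on the $\lequiv{m}$-classes of the prefix trees merged via $\odot$ as in Corollary~\ref{corollary:tree-composition}, and your bounds $\eta_1,\eta_2$ are slightly tighter than the paper's, neither of which affects correctness.
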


  \begin{proof}[Proof sketch]
    For a finite subset $X$ of $\mathbb{N}$, let $\text{max}(X)$
    denote the maximum element of $X$.

    \vspace{3pt}(\ref{lemma:degree-reduction-for-trees}): For $n \ge
    3$, define $\eta_1(n) = \text{max}(\{\rho(\sigma) \mid \sigma \in
    \sigmarank\} \cup \{3\}) \times \Lambda(n)$. For $n < 3$, define
    $\eta_1(n) = \eta_1(3)$. We prove this part for $m \ge 3$; then it
    follows that this part is also true for $m < 3$ (by taking
    $\tree{s}_1$ for the $m = 3$ case as $\tree{s}_1$ for the $m < 3$
    case).

    Given $m \ge 3$, let $p = \eta_1(m)$. If $\tree{t}$ has degree
    $\leq p$, then putting $\tree{s}_1 = \tree{t}$, we are done.
    Else, some node $a$ of $\tree{t}$ has degree $n > p$. Clearly then
    $\lambda(a) \notin \sigmarank$. Let $\tree{z} = \tree{t}_{\ge a}$
    and let $a_1, \ldots, a_n$ be the (ascending) sequence of children
    of $\troot{\tree{z}}$ in $\tree{z}$. For $1 \leq j \leq n$, let
    $\tree{x}_{1, j}$, resp. $\tree{y}_{j+1, n}$, be the subtree of
    $\tree{z}$ obtained from $\tree{z}$ by deleting the subtrees
    rooted at $a_{j+1}, \ldots, a_n$, resp.  deleting the subtrees
    rooted at $a_1, a_2, \ldots, a_{j}$. Then $\tree{z} = \tree{x}_{1,
      n} = \tree{x}_{1, j} \odot \tree{y}_{j+1, n}$ for $1 \leq j <
    n$.  Let $g: \{1, \ldots, n\} \rightarrow \Delta_m$ be such that
    $g(j)$ is the $\lequiv{m}$ class of $\tree{x}_{1, j}$. Since $n >
    p$, there exist $j, k \in \{1, \ldots, n\}$ such that $j < k$ and
    $g(j) = g(k)$, i.e.  $\tree{x}_{1, j} \lequiv{m} \tree{x}_{1, k}$.
    If $k < n$, then let $\tree{z}_1 = \tree{x}_{1, j} \odot
    \tree{y}_{k+1, n}$, else let $\tree{z}_1 = \tree{x}_{1, j}$. Then
    by Corollary~\ref{corollary:tree-composition}, $\tree{z}_1
    \lequiv{m} \tree{z}$.  Let $\tree{t}_1$ be the subtree of
    $\tree{t}$ in $\cl{S}$ given by $\tree{t}_1 = \tree{t}
    \left[\tree{z} \mapsto \tree{z}_1 \right]$. By
    Corollary~\ref{corollary:tree-composition} again, $\tree{t}_1
    \lequiv{m} \tree{t}$. Observe that $\tree{t}_1$ has strictly
    lesser size than $\tree{t}$. Recursing on $\tree{t}_1$, we are
    eventually done.

    \vspace{3pt}(\ref{lemma:height-reduction-for-trees}): For $n \ge
    3$, define $\eta_2(n) = \Lambda(n) + 1$. For $n < 3$, define
    $\eta_2(n) = \eta_2(3)$. As before, it suffices to prove this part
    for $m \ge 3$.

    Given $m \ge 3$, let $p = \eta_2(m)$.  If $\tree{t}$ has height
    $\leq p$, then putting $\tree{s}_2 = \tree{t}$, we are done. Else,
    there is a path from the root of $\tree{t}$ to some leaf of
    $\tree{t}$, whose length is $> p$. Let $A$ be the set of nodes
    appearing along this path. Let $h : A \rightarrow \Delta_m$ be
    such that for each $a \in A$, $h(a)$ is the $\lequiv{m}$ class of
    $\tree{t}_{\ge a}$. Since $|A| > p$, there exist distinct nodes
    $a, b \in A$ such that $a$ is an ancestor of $b$ in $\tree{t}$, $a
    \neq \troot{\tree{t}}$, and $h(a) = h(b)$. Let $\tree{t}_2 =
    \tree{t}\left[\tree{t}_{\ge a} \mapsto \tree{t}_{\ge b} \right]$;
    then $\tree{t}_2$ is a subtree of $\tree{t}$ in $\cl{S}$. Since
    $h(a) = h(b)$, $\tree{t}_{\ge a} \lequiv{m} \tree{t}_{\ge b}$. By
    Corollary~\ref{corollary:tree-composition}, we get $\tree{t}_2
    \lequiv{m} \tree{t}$.  Note that $\tree{t}_2$ has strictly lesser
    size than $\tree{t}$. Recursing on $\tree{t}_2$, we are eventually
    done.
  \end{proof}

  \begin{proof}[Proof of  Proposition~\ref{prop:result-for-partially-ranked-trees}(1)]
    Let $\tree{t} \in \cl{S}$ and $m \in \mathbb{N}$ be given. By
    Lemma~\ref{lemma:degree-and-height-reduction-for-trees}, there
    exists a subtree $\tree{s}$ of $\tree{t}$ in $\cl{S}$, of degree
    $\leq \eta_1(m)$ and height $\leq \eta_2(m)$, and hence of size
    $\leq \eta_1(m)^{(\eta_2(m)+ 1)}$, such that $\tree{s} \lequiv{m}
    \tree{t}$. Then $\lebspcond(\cl{S}, \tree{t}, \tree{s}, m,
    \lwitfn{\cl{S}})$ is true where $\lwitfn{\cl{S}}(m) =
    \eta_1(m)^{(\eta_2(m)+ 1)}$. Since $\tree{t} \in \cl{S}$ and $m
    \in \mathbb{N}$ are arbitrary, it follows that $\lebsp{\cl{S}}$ is
    true.

    As for the non-elementariness of witness functions for
    $\lebsp{\cl{S}}$, observe that if there exists an elementary
    witness function $\theta$ for $\lebsp{\cl{S}}$, then every tree
    $\tree{t}$ in $\cl{S}$ is $\mc{L}[m]$-equivalent to a tree
    $\tree{s}$ in $\cl{S}$ such that $|\tree{s}| \leq
    \theta(m)$. Whereby the index of the $\lequiv{m}$ relation over
    $\cl{S}$ is bounded by the number of trees in $\cl{S}$ whose size
    is $\leq \theta(m)$. Clearly then, this number, and hence the
    index, is bounded by an elementary function of $m$ if $\theta$ is
    elementary. However, even over words, we know that the index of
    the $\lequiv{m}$ relation is non-elementary~\cite{frick-grohe}.
  \end{proof}
  
  \newcommand{\enumcompfun}{\ensuremath{\mathsf{Generate\text{-}functions}}}
\newcommand{\gsigmai}[2]{\ensuremath{g_{\sigma, #1, #2}}}
\newcommand{\gsigma}[1]{\ensuremath{g_{\sigma, #1}}}

\newcommand{\reducedeg}{\ensuremath{\mathsf{Reduce}\text{-}\mathsf{degree}}}
\newcommand{\redheight}{\ensuremath{\mathsf{Reduce}\text{-}\mathsf{height}}}
\newcommand{\rainbow}{\ensuremath{\mathsf{Rainbow\text{-}subtree}}}
\newcommand{\lowest}{\ensuremath{\mathsf{Lowest\text{-}subtree}}}

\newcommand{\colour}{\ensuremath{\mathsf{Colour}}}
\newcommand{\degred}{\ensuremath{\mathsf{Reduce\text{-}degree\text{-}of\text{-}node}}}
\newcommand{\listofmclasses}{\ensuremath{\mathsf{\mc{L}[m]\text{-}\mathsf{classes}}}}

\vspace{3pt}\tbf{Proof of part (2) of
  Proposition~\ref{prop:result-for-partially-ranked-trees}}: The
following result contains the core argument for the proof of this part
of Proposition~\ref{prop:result-for-partially-ranked-trees}. The first
part of Lemma~\ref{lemma:core-lemma-for-fpt-over-trees} gives an
algorithm to generate the ``composition'' functions of
Lemma~\ref{corollary:tree-composition}, uniformly for $m \ge 3$. This
algorithm is in turn used in the second part of
Lemma~\ref{lemma:core-lemma-for-fpt-over-trees} to get a ``linear
time'' version of
Lemma~\ref{lemma:degree-and-height-reduction-for-trees}.

\begin{lemma}\label{lemma:core-lemma-for-fpt-over-trees}
There exist computable functions $\eta_3, \eta_4, \eta_5: \mathbb{N}
\rightarrow \mathbb{N}$ and algorithms\linebreak $\enumcompfun(m)$,
$\reducedeg(\tree{t}, m)$ and $\redheight(\tree{t}, m)$ such that for
$m \ge 3$,

\begin{enumerate}[nosep]
\item \vspace{2pt}$\enumcompfun(m)$ generates in time $\eta_3(m)$, the
  functions $\fsigma{m}$ if $\sigma \in \sigmarank$ and
  $\fsigmai{m}{i}$ for $i \in \{1, 2\}$ if $\sigma \in \Sigma
  \setminus \sigmarank$, that satisfy the properties mentioned in
  Lemma~\ref{lemma:composition-lemma-for-partially-ranked-trees}.
  \label{lemma:core-lemma-generate-functions}

\item For $\tree{t} \in \cl{S}$, $\reducedeg(\tree{t}, m)$ computes
  the subtree $\tree{s}_1$ of $\tree{t}$ as given by
  Lemma~\ref{lemma:degree-and-height-reduction-for-trees}, in time
  $\eta_4(m) \cdot |\tree{t}|$. Likewise, $\redheight(\tree{t}, m)$
  computes the subtree $\tree{s}_2$ of $\tree{t}$ as given by
  Lemma~\ref{lemma:degree-and-height-reduction-for-trees}, in time
  $\eta_5(m) \cdot |\tree{t}|$.
  \label{lemma:core-lemma-reduce-degree-and-height}
\end{enumerate}

\end{lemma}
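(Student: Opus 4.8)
The plan is to turn the existence arguments of Lemma~\ref{lemma:composition-lemma-for-partially-ranked-trees} and Lemma~\ref{lemma:degree-and-height-reduction-for-trees} into algorithms, and then to \emph{schedule} the repeated local modifications of Lemma~\ref{lemma:degree-and-height-reduction-for-trees} carefully so that the total running time is linear in $|\tree{t}|$ rather than quadratic. Throughout I work in the standard RAM model and represent $\lequiv{m}$-classes by integers in $\{1,\dots,|\Delta_m|\}$, so that the composition tables and the auxiliary arrays below admit $O(1)$-time access.

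For part~\ref{lemma:core-lemma-generate-functions}, the key point is that by Proposition~\ref{prop:result-for-partially-ranked-trees}(1) --- more precisely, by iterating the two reductions of Lemma~\ref{lemma:degree-and-height-reduction-for-trees}, which both leave the root unchanged --- every tree of $\cl{S}$ is $\lequiv{m}$-equivalent to a tree of $\cl{S}$ of size at most $N(m):=\eta_1(m)^{\eta_2(m)+1}$ with the same root label. Hence $\enumcompfun(m)$ can: enumerate all (finitely many) trees of $\cl{S}$ with at most $N(m)$ nodes; compute the partition of this finite set into $\lequiv{m}$-classes by deciding, for each pair of trees, the $m$-round \lef\ game (a computation bounded by a computable function of $m$, since the trees have size $\le N(m)$); this yields exactly $\Delta_m$ together with, for each class $\delta$ and each $\sigma\in\Sigma$, a chosen representative of $\delta$ and (whenever one exists) a chosen representative of $\delta$ whose root is labeled $\sigma$. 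The tables for $\fsigma{m}$ and $\fsigmai{m}{i}$ are then filled in exactly as in the proof of Lemma~\ref{lemma:composition-lemma-for-partially-ranked-trees}: for each input tuple of classes we build, from the chosen representatives, the corresponding small tree of $\cl{S}$, compute its $\lequiv{m}$-class by another \lef\ game computation, and record it --- outputting $\delta_{\text{default}}$ in the $\fsigmai{m}{2}(\delta_1,\delta_2)$ case exactly when no tree of class $\delta_1$ has root $\sigma$, a condition that by the ``same root label'' remark is decided correctly by looking only at trees of size $\le N(m)$. Correctness (that these tables satisfy the properties of Lemma~\ref{lemma:composition-lemma-for-partially-ranked-trees}) follows from Corollary~\ref{corollary:tree-composition} exactly as there; and since every quantity involved is bounded by a computable function of $m$, the procedure runs in time $\eta_3(m)$ for a computable $\eta_3$.

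For part~\ref{lemma:core-lemma-reduce-degree-and-height}, both algorithms first call $\enumcompfun(m)$ and then make one bottom-up pass over $\tree{t}$ computing, for every node $a$, the $\lequiv{m}$-class of $\tree{t}_{\ge a}$: a single application of $\fsigma{m}$ if $\lambda(a)\in\sigmarank$, and the left-to-right fold through $\fsigmai{m}{1},\fsigmai{m}{2}$ of Lemma~\ref{lemma:composition-lemma-for-partially-ranked-trees} if $\lambda(a)\notin\sigmarank$; this costs $O(|\tree{t}|)$ table lookups. The crucial observation that keeps everything below linear is that every modification performed replaces a subtree by an $\lequiv{m}$-equivalent one, so by Corollary~\ref{corollary:tree-composition}(\ref{corollary:tree-composition:1}) the classes computed in this pass remain valid for all surviving nodes throughout. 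Then $\reducedeg$ repeatedly picks a node $a$ of degree $>\eta_1(m)$ (so $\lambda(a)\notin\sigmarank$), scans its children $b_1,b_2,\dots$ from the left while maintaining the running prefix class $g(1)=\fsigmai{m}{1}(\delta_1)$ and $g(j{+}1)=\fsigmai{m}{2}(g(j),\delta_{j+1})$ (with $\delta_i$ the precomputed class of $\tree{t}_{\ge b_i}$), and stops at the first repeat $g(j)=g(k)$; since there are at most $|\Delta_m|\le\Lambda(m)\le\eta_1(m)<\deg(a)$ possible values, this repeat occurs within the first $\Lambda(m)+1$ children, whereupon $b_{j+1},\dots,b_k$ are spliced out in $O(1)$ pointer operations, justified by Corollary~\ref{corollary:tree-composition}(\ref{corollary:tree-composition:1},\ref{corollary:tree-composition:2}). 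Each such step costs $O(\Lambda(m))$ and deletes at least one child of $a$, so the work at $a$ is $O(\Lambda(m)\cdot\deg_{\tree{t}}(a))$ and the total is $O(\Lambda(m)\cdot|\tree{t}|)=\eta_4(m)\cdot|\tree{t}|$. For $\redheight$ we run one DFS from the root maintaining an array $\mathsf{seen}$ over $\Delta_m$ giving, for each class, the nearest-to-root non-root node of that class on the current root-to-node path: on reaching a non-root node $v$ with $h(v)=\delta$, if $\mathsf{seen}[\delta]$ is empty we register $v$ and recurse into its children, while if $\mathsf{seen}[\delta]=u$ we perform the contraction $\tree{t}\bigl[\tree{t}_{\ge u}\mapsto\tree{t}_{\ge v}\bigr]$ in $O(1)$ pointer surgery (valid by Corollary~\ref{corollary:tree-composition}(\ref{corollary:tree-composition:1}) since $h(u)=h(v)$), undo the $\mathsf{seen}$-registrations of $u$ and of the nodes strictly between $u$ and $v$, and resume the DFS at $v$ in its new position; since $v$'s subtree has not yet been visited, the only re-work is the $O(1)$ re-visit of $v$ together with the $O(\ell)$ undo, where $\ell$ is the length of the contracted path, and as that contraction deletes at least $\ell$ nodes permanently these undos sum to $O(|\tree{t}|)$. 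When the DFS terminates no $\lequiv{m}$-class recurs among the non-root nodes of any root-to-leaf path, hence the height is at most $|\Delta_m|\le\Lambda(m)<\eta_2(m)$; the total time is $O(\Lambda(m)\cdot|\tree{t}|)=\eta_5(m)\cdot|\tree{t}|$.

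I expect the main obstacle to be exactly the linear-time requirement in part~\ref{lemma:core-lemma-reduce-degree-and-height}: a naive iteration of the single-step reductions of Lemma~\ref{lemma:degree-and-height-reduction-for-trees} is only quadratic, and the two amortization arguments above are the technical heart --- bounding the degree-reduction work at a node to $O(\Lambda(m))$ per deleted child by terminating each prefix scan at the first class repeat, and charging the path-undo cost of each height contraction to the nodes it permanently deletes. A secondary point requiring care is that the subtree-class labels computed in the preliminary bottom-up pass remain valid under all subsequent modifications, which is what lets the algorithms avoid ever recomputing them; this is immediate from the class-preserving nature of the operations and Corollary~\ref{corollary:tree-composition}(\ref{corollary:tree-composition:1}).
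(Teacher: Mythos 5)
Your proposal is correct, and its overall architecture is the one the paper uses: generate the composition functions of Lemma~\ref{lemma:composition-lemma-for-partially-ranked-trees} by brute-force search over objects of size bounded by a computable function of $m$, do one bottom-up pass computing the $\lequiv{m}$-class of every rooted subtree via those functions, and then perform degree and height reductions driven by repeats of classes, with the key invariant (which you state explicitly, and which the paper isolates as a ``colour preservation'' property) that all splices and replacements swap subtrees for $\lequiv{m}$-equivalent ones, so the precomputed classes stay valid by Corollary~\ref{corollary:tree-composition}(\ref{corollary:tree-composition:1}).

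You deviate from the paper in two implementation choices, both legitimate. First, in $\enumcompfun(m)$ the paper enumerates $\mc{L}[m]$ sentences capturing the $\lequiv{m}$-classes (Libkin-style) and filters them using decidability of $\mc{L}$-$\mathsf{SAT}$ over $\cl{S}$ (which it derives from Proposition~\ref{prop:result-for-partially-ranked-trees}(1)), finding representative models by search; you instead enumerate all trees of $\cl{S}$ of size at most $\eta_1(m)^{\eta_2(m)+1}$ and partition them by deciding $\lef$ games pairwise, using the observation that the reductions of Lemma~\ref{lemma:degree-and-height-reduction-for-trees} preserve the root label, so representatives with a prescribed root label can be found among small trees whenever they exist at all. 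Your route avoids the sentence-enumeration and SAT machinery at the price of making that root-label-preserving small-model observation explicit; the paper's route is closer to what it later needs in the abstract setting. Second, for $\redheight$ the paper precomputes the table $\lowest$ (lowest node of each class in each rooted subtree) in one bottom-up pass and then runs the recursive $\rainbow$ procedure, which jumps directly to $\lowest(a,\colour(a))$; you instead do a single top-down DFS with a per-class ``seen'' array over the current root path, contracting on the first repeat and paying for the re-registration work by charging it to the nodes the contraction permanently deletes. The paper's table makes the linear bound immediate but costs an extra $O(\Lambda(m)\cdot|\tree{t}|)$ preprocessing pass; your version needs the amortized accounting you give (including the observation that a surviving node is the lower endpoint of at most one contraction), which is the one place where your write-up carries a genuine proof obligation the paper's does not, and your argument for it is sound. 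Your $\reducedeg$ (stop at the first prefix-class repeat, splice, recharge) matches the paper's analysis $O(\Lambda(m)\cdot\deg(a))$ per node and the same final degree bound, so part~\ref{lemma:core-lemma-reduce-degree-and-height} goes through as claimed.
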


Using this lemma, part (2) of
Proposition~\ref{prop:result-for-partially-ranked-trees} can be proved
as follows.

\begin{proof}[Proof of Proposition~\ref{prop:result-for-partially-ranked-trees}(2)]
We describe a simple algorithm $\mathsf{Evaluate}(\tree{t}, \varphi)$
that when given a tree $\tree{t} \in \cl{S}$ and an $\mc{L}$ sentence
$\varphi$ of rank $m$, as inputs, decides if $\tree{t} \models
\varphi$ in time $f(m) \cdot |\tree{t}|$ for some computable function
$f:\mathbb{N} \rightarrow \mathbb{N}$.

\vspace{3pt}
\underline{$\mathsf{Evaluate}(\tree{t}, \varphi)$:}

\begin{enumerate}
\item Let $m_1 = \text{max}\{m, 3\}$.
\item Compute a subtree $\tree{s}$ of $\tree{t}$ in $\cl{S}$ by
  invoking $\redheight(\reducedeg(\tree{t}, m_1), m_1)$.
\item Evaluate $\varphi$ on $\tree{s}$.
\item If $\tree{s} \models \varphi$, return $\mathsf{True}$, else
  return $\mathsf{False}$.
\end{enumerate}

Analysis:

\begin{itemize}
  \item Correctness: For functions $\eta_1, \eta_2$ as mentioned in
    Lemma~\ref{lemma:degree-and-height-reduction-for-trees}, the
    subtree $\tree{s}$ in the algorithm above is such that $|\tree{s}|
    \leq \eta_1(m_1)^{(\eta_2(m_1) + 1)}$ and $\tree{s} \lequiv{m_1}
    \tree{t}$ -- this follows from
    Lemma~\ref{lemma:core-lemma-for-fpt-over-trees}(\ref{lemma:core-lemma-reduce-degree-and-height}). Since
    $m_1 \ge m$, we have $\tree{s} \lequiv{m} \tree{t}$; then
    $\tree{t} \models \varphi$ iff $\tree{s} \models \varphi$, proving
    that the above algorithm is indeed correct.

  \item Running time: By
    Lemma~\ref{lemma:core-lemma-for-fpt-over-trees}(\ref{lemma:core-lemma-reduce-degree-and-height}),
    the time taken for computing $\tree{s}$ is at most $\eta_4(m_1)
    \cdot |\tree{t}| + \eta_5(m_1) \cdot |\tree{t}|$. The time taken
    to evaluate $\varphi$ on $\tree{s}$ is $\eta_6(m_1)$ for some
    computable function $\eta_6:\mathbb{N} \rightarrow
    \mathbb{N}$. Then the total running time of
    $\mathsf{Evaluate}(\tree{t}, \varphi)$ is at most $f(m) \cdot
    |\tree{t}|$, where $f(m) = \eta_4(m_1) + \eta_5(m_1) +
    \eta_6(m_1)$ and $m_1 = \text{max}\{m, 3\}$.
\end{itemize}
\end{proof}

We now provide a proof sketch for
Lemma~\ref{lemma:core-lemma-for-fpt-over-trees} to complete this
section.

\begin{proof}[Proof sketch for Lemma~\ref{lemma:core-lemma-for-fpt-over-trees}] (Part \ref{lemma:core-lemma-generate-functions}): 
  For the algorithm, we observe that the $\mc{L}$-$\mathsf{SAT}$
  problem is decidable over $\cl{S}$ -- since $\lebsp{\cl{S}}$ holds
  with a computable witness function (by
  Proposition~\ref{prop:result-for-partially-ranked-trees}(1)), if an
  $\mc{L}$ sentence has a model in $\cl{S}$, it also has a model of
  size bounded by a computable function of its rank.

  \vspace{3pt}\und{$\enumcompfun(m)$}:
  
  \vspace{2pt}
  \begin{enumerate}
  \item Create a list $\listofmclasses$ of the $\lequiv{m}$ classes
    over $\cl{S}$. This is done as follows:
    \begin{enumerate}
      \item Given the inductive definition of $\mc{L}[m]$, there is an
        algorithm $\mc{P}(m)$ which enumerates $\mc{L}[m]$ sentences
        $\varphi_1, \varphi_2, \ldots, \varphi_n$ such that every
        sentence $\varphi_i$ captures some equivalence class of the
        $\lequiv{m}$ relation over all finite structures, and
        conversely, every equivalence class of the $\lequiv{m}$
        relation over all finite structures, is captured by some
        $\varphi_i$. First invoke $\mc{P}(m)$ to get the $\varphi_i$s.
      \item For each $i \in \{1, \ldots, n\}$, if $\varphi_i$ is
        satisfiable over $\cl{S}$ (whereby it represents some
        equivalence class of the $\lequiv{m}$ relation over $\cl{S})$,
        then put it in $\listofmclasses$, else discard it.  (We
        interchangeably regard $\listofmclasses$ as a list of
        $\mc{L}[m]$ sentences or a list of $\lequiv{m}$ classes.)
    \end{enumerate}
  \item For $\sigma \in \sigmarank$ and $d = \rho(\sigma)$, generate
    $\gsigma{m}: (\listofmclasses)^d \rightarrow \listofmclasses$ as
    follows. Given $\xi_i \in \listofmclasses$ for $i \in \{1, \ldots,
    d\}$, find models $\tree{s}_i$ for $\xi_i$ in $\cl{S}$. Let
    $\tree{s}$ be the tree obtained by making $\tree{s}_1, \ldots,
    \tree{s}_n$ as the child subtrees (and in that sequence) of a new
    root node labeled with $\sigma$. Find out $\xi \in
    \listofmclasses$ of which $\tree{s}$ is a model. Then define
    $\gsigma{m}(\xi_1, \ldots, \xi_d) = \xi$. Generate
    $\gsigmai{m}{1}: \listofmclasses \rightarrow \listofmclasses$
    similarly.
  \item For $\sigma \in \Sigma \setminus \sigmarank$, generate
    $\gsigmai{m}{2}: (\listofmclasses)^2 \rightarrow \listofmclasses$
    as follows. For $\xi_1, \xi_2 \in \listofmclasses$, find models
    $\tree{s}_1$ and $\tree{s}_2$ resp. in $\cl{S}$. such that the
    root of $\tree{s}_1$ is labeled with $\sigma$ (this condition on
    the root can be captured by an FO sentence). If no $\tree{s}_1$ is
    found, then define $\gsigmai{m}{2}(\xi_1, \xi_2) =
    \xi_{\text{default}}$ where the latter is some fixed element of
    $\listofmclasses$. Else, let $\tree{v}_{\xi_1, \xi_2}$ be the tree
    obtained adding $\tree{s}_2$ as the (new) ``last'' child subtree
    of the root of $\tree{s}_1$. Find out $\xi \in \listofmclasses$ of
    which $\tree{v}_{\xi_1, \xi_2}$ is a model. Define
    $\gsigmai{m}{2}(\xi_1, \xi_2) = \xi$.
  \end{enumerate}

  It is clear that there exists a computable function $\eta_3:
  \mathbb{N} \rightarrow \mathbb{N}$ such that the running time of
  $\enumcompfun(m)$ is at most $\eta_3(m)$.  We now claim that
  $\gsigma{m}$ and $\gsigmai{m}{i}$ generated by $\enumcompfun(m)$
  indeed satisfy the composition properties of
  Lemma~\ref{lemma:composition-lemma-for-partially-ranked-trees},
  whereby they can be indeed taken as $\fsigma{m}$ and
  $\fsigmai{m}{i}$ appearing in the latter lemma. That $\gsigma{m}$
  and $\gsigmai{m}{1}$ satisfy the composition properties is easy to
  see using Corollary~\ref{corollary:tree-composition}. To reason for
  $\gsigmai{m}{2}$, consider a tree $\tree{t}$ whose root is labeled
  with $\sigma$, and which has say $3$ children $a_1, \ldots, a_3$
  (and in that sequence) such that the $\lequiv{m}$ class of
  $\tree{t}_{\ge a_i}$ is $\delta_i$ for $1 \leq i \leq 3$. Consider
  the subtrees $\tree{x}$ and $\tree{y}$ of $\tree{t}$ defined as
  $\tree{x} = \tree{t} - \tree{t}_{\ge a_3}$ and $\tree{y} = \tree{x}
  - \tree{x}_{\ge a_2}$. Let $\delta_4$ and $\delta_5$ be resp. the
  $\lequiv{m}$ classes of $\tree{x}$ and $\tree{y}$. Now consider the
  trees $\tree{v}_{\delta_5, \delta_2}$ and $\tree{v}_{\delta_4,
    \delta_3}$ which are guaranteed to be found (since indeed
  $\tree{x}$ and $\tree{y}$ \emph{are} trees each of whose roots is
  labeled with $\sigma$). By
  Corollary~\ref{corollary:tree-composition}, $\tree{x} \lequiv{m}
  \tree{v}_{\delta_5, \delta_2}$ and $\tree{t} \lequiv{m}
  \tree{v}_{\delta_4, \delta_3}$. Whereby, the $\lequiv{m}$ class of
  $\tree{x}$ is $\delta_4 = \gsigmai{m}{2}(\delta_5, \delta_2)$ and
  that of $\tree{t}$ is $\delta = \gsigmai{m}{2}(\delta_4,
  \delta_3)$. Observe that $\delta_5$ is indeed
  $\gsigmai{m}{1}(\delta_1)$.

\vspace{5pt} (Part \ref{lemma:core-lemma-reduce-degree-and-height}):
\und{$\reducedeg(\tree{t}, m)$:}
  \begin{enumerate}
    \item Call $\enumcompfun(m)$ that returns the ``composition''
      functions $\fsigma{m}$ and $\fsigmai{m}{i}$, and also gives the
      list $\listofmclasses$ as described above.
    \item Using the composition functions, construct bottom-up in
      $\tree{t}$, the function $\colour: \text{Nodes}(\tree{t})
      \rightarrow \listofmclasses$ such that for each node $a$ of
      $\tree{t}$, $\colour(a)$ is the $\lequiv{m}$ class of
      $\tree{t}_{\ge a}$.
    \item For $\eta_1$ as given by
      Lemma~\ref{lemma:degree-and-height-reduction-for-trees}, if the
      degree of $\tree{t}$ is $\leq \eta_1(m)$, then return
      $\tree{t}$.
    \item Else, let $a$ be a node of $\tree{t}$ of degree $n >
      \eta_1(m)$. Let $\tree{x} = \tree{t}_{\ge a}$.
    \item For each $\delta \in \listofmclasses$, do the following:
      \begin{enumerate}
        \item Let $a_1, \ldots, a_n$ be the children of $a$ in
          $\tree{x}$. For $k \in \{1, \ldots, n\}$, let $\tree{x}_{1,
          k}$ be the subtree of $\tree{x}$ obtained by deleting the
          subtrees rooted at $a_{k+1}, \ldots, a_n$. Let $g:\{1,
          \ldots, n\} \rightarrow \listofmclasses$ be such that $g(i)$
          is the $\lequiv{m}$ class of $\tree{x}_{1, k}$.
          \item If $\delta$ appears in the range of $g$, then let $i,
            j$ be resp. the least and greatest indices in $\{1,
            \ldots, n\}$ such that $g(i) = g(j) = \delta$. Let
            $\tree{y}$ be the subtree of $\tree{x}$ obtained by
            deleting the subtrees rooted at $a_{i+1}, \ldots,
            a_j$. Set $\tree{x}:= \tree{y}$.
      \end{enumerate}
      \item Set $\tree{t}:= \tree{t}[\tree{t}_{\ge a} \mapsto
        \tree{x}]$ and go to step 3.
  \end{enumerate}

    Reasoning similarly as in the proof of
    Lemma~\ref{lemma:degree-and-height-reduction-for-trees}(\ref{lemma:degree-reduction-for-trees}),
    we can verify that $\reducedeg(\tree{t}, m)$ indeed returns the
    desired subtree $\tree{s}_1$ of $\tree{t}$. The time taken to
    compute $\colour$ is linear in $|\tree{t}|$, while that for
    computing $g$ is linear in the degree of $a$, whereby the time
    taken to reduce the degree of a node $a$ in any iteration of the
    loop, is $O(\Lambda(m) \cdot \text{degree}(a))$. Then, the total
    time taken by $\reducedeg(\tree{t}, m)$ is $O(\alpha(m) +
    \Lambda(m) \cdot |\tree{t}|)$ for some computable function
    $\alpha: \mathbb{N} \rightarrow \mathbb{N}$.

 \vspace{3pt} \und{$\redheight(\tree{t}, m)$:}
  \begin{enumerate}
    \item Generate $\listofmclasses$ and the function $\colour$ as in
      the previous part.
    \item Construct bottom up in $\tree{t}$, the function $\lowest:
      \text{Nodes}(\tree{t}) \times \listofmclasses \rightarrow
      \text{Nodes}(\tree{t})$ such that for any node $a$ of $\tree{t}$
      and $\delta \in \listofmclasses$, $\lowest(a, \delta)$ gives a
      lowest (i.e. closest to a leaf) node $b$ in $\tree{t}_{\ge a}$
      such that $\colour(b) = \delta$. In other words, $b$ is the only
      node in $\tree{t}_{\ge b}$ such that $\colour(b) = \delta$.
    \item Let $a_1, \ldots, a_n$ be the children of
      $\troot{\tree{t}}$.  Let $\tree{x}_i =$ $ \rainbow(\tree{t}_{\ge
      a_i})$ for $i \in \{1, \ldots, n\}$, where $\rainbow(\tree{x})$
      is described below.
    \item Return $\tree{t}[\tree{t}_{\ge a_1} \mapsto
      \tree{x}_1]\ldots[\tree{t}_{\ge a_n} \mapsto \tree{x}_n]$.
        
  \end{enumerate}

  \und{$\rainbow(\tree{x})$:}
  \begin{enumerate}
    \item Let $a = \troot{\tree{x}}$.
    \item If $b = \lowest(a, \colour(a)) \neq a$, then return
      $\rainbow(\tree{x}_{\ge b})$.
    \item Else, let $b_1, \ldots, b_n$ be the children of
      $\troot{\tree{x}}$. For $i \in \{1, \ldots, n\}$, let
      $\tree{y}_i = \rainbow(\tree{x}_{\ge b_i})$.
    \item Return $\tree{x}[\tree{x}_{\ge b_1} \mapsto
      \tree{y}_1]\ldots[\tree{x}_{\ge b_n} \mapsto \tree{y}_n]$.
   \end{enumerate}

   Using similar reasoning as in the proof of
   Lemma~\ref{lemma:degree-and-height-reduction-for-trees}(\ref{lemma:height-reduction-for-trees}),
   we can verify that algorithm $\rainbow(\tree{x})$, that takes a
   subtree $\tree{x}$ of $\tree{t}$ as input, outputs a subtree
   $\tree{y}$ of $\tree{x}$ such that (i) $\tree{y} \lequiv{m}
   \tree{x}$ and (ii) no path from the root to the leaf of $\tree{y}$
   contains two distinct nodes $a$ and $b$ such that $\colour(a) =
   \colour(b)$. Further, $\rainbow(\tree{x})$ also satisfies the
   following ``colour preservation'' property. Let for a subtree
   $\tree{s}$ of $\tree{t}$, obtained from $\tree{t}$ by removal of
   rooted subtrees and replacements with rooted subtrees,
   $\mc{Q}(\tree{s})$ be a predicate that denotes that the function
   $\mathsf{Colour}$ computed for $\tree{t}$, when restricted to the
   nodes of $\tree{s}$, is such that for any node $a$ of $\tree{s}$,
   $\mathsf{Colour}(a)$ gives the $\lequiv{m}$ class of $\tree{s}_{\ge
     a}$. Then the ``colour preservation'' property says that if the
   input $\tree{x}$ to $\rainbow$ satisfies $\mc{Q}(\cdot)$, then so
   does the output $\tree{y}$ of $\rainbow$.

   From the preceding features of $\rainbow$, we see that the height
   of the output $\tree{y}$ of $\rainbow(\tree{x})$ is at most
   $\Lambda(m)$. The number of ``top level'' recursive calls made by
   $\rainbow(\tree{x})$ is linear in the degree of $\troot{\tree{x}}$,
   whereby the total time taken by $\rainbow(\tree{x})$ is linear in
   $|\tree{x}|$. The time taken to compute
   $\mathsf{Lowest}\text{-}\mathsf{subtree}$ is easily seen to be
   $O(\Lambda(m) \cdot |\tree{t}|)$.  Then the time taken by
   $\redheight(\tree{t}, m)$ is $O(\eta_3(m) + \Lambda(m) \cdot
   |\tree{t}|)$. One can verify that $\redheight(\tree{t}, m)$ indeed
   returns the desired subtree $\tree{s}_2$ of $\tree{t}$.
\end{proof}

\section{Lifting to tree representations}\label{section:abstract-tree-result}

We now consider the more abstract setting of tree representations of
structures, in which the internal nodes are labeled with operations
coming from a finite set and the leaf nodes represent structures from
a given class of structures. We show that under suitable assumptions
on the tree representations (that a variety of classes of structures
satisfy as seen in the forthcoming sections), we can lift the
techniques seen in the previous section to show the $\reflebsp$
property for classes of structures that admit the aforesaid
representations.

Fix finite alphabets $\sigmaint$ and $\sigmaleaf$ (where the two
alphabets are allowed to be overlapping). Let $\sigmarank \subseteq
\sigmaint$. Let $\rho: \sigmaint \rightarrow \mathbb{N}_{+}$ be a
fixed function. We say a class $\mc{T}$ of $(\sigmaint \cup
\sigmaleaf)$-trees is \emph{representation-feasible for $(\sigmarank,
  \rho)$} if $\mc{T}$ is closed under (label-preserving) isomorphisms,
and for all trees $\tree{t} = (O, \lambda) \in \mc{T}$ and nodes $a$
of $\tree{t}$, the following conditions hold:
\vspace{1pt}
\begin{enumerate}[nosep]
\item Labeling condition: If $a$ is a leaf node, resp. internal node,
  then the label $\lambda(a)$ belongs to $\sigmaleaf$,
  resp. $\sigmaint$.
\item Ranking by $\rho$: If $a$ is an internal node and $\lambda(a)$
  is in $\sigmarank$, then the number of children of $a$ in $\tree{t}$
  is exactly $\rho(\lambda(a))$.
\item Closure under rooted subtrees: The subtree $\tree{t}_{\ge a}$ is
  in $\mc{T}$.
\item Closure under removal of rooted subtrees respecting
  $\sigmarank$: If $a$ is an internal node, $b$ is a child of $a$ in
  $\tree{t}$ and $\lambda(a) \notin \sigmarank$, then the subtree
  $(\tree{t} - \tree{t}_{\ge b})$ is in $\mc{T}$.
\item Closure under replacements with rooted subtrees: If $a$ is an
  internal node, then for every descendent $b$ of $a$ in $\tree{t}$,
  the subtree $\tree{t} \left[ \tree{t}_{\ge a} \mapsto \tree{t}_{\ge
      b} \right]$ is in $\mc{T}$.
\end{enumerate}

We say $\mc{T}$ is \emph{representation-feasible} if there exist
alphabets $\sigmaleaf, \sigmaint$ and $\sigmarank$ and function $\rho:
\sigmaint \rightarrow \mathbb{N}_{+}$ such that $\mc{T}$ is a class of
$(\sigmaint \cup \sigmaleaf)$-trees that is representation feasible
for $(\sigmarank, \rho)$.  Given such a class $\mc{T}$ of trees and a
class $\cl{S}$ of structures, let $\mathsf{Str}: \mc{T} \rightarrow
\cl{S}$ be a map that associates with each tree in $\mc{T}$, a
structure in $\cl{S}$. We call $\mathsf{Str}$ a \emph{representation
  map}. For a tree $\tree{t} \in \mc{T}$, if $\mf{A} =
\str{\tree{t}}$, then we say $\tree{t}$ is a \emph{tree
  representation} of $\mf{A}$ under $\mathsf{Str}$.  For the purposes
of our result, we consider ``good'' maps that would allow tree
reductions of the kind seen in the previous section. We formally
define these below:
\begin{definition}\label{definition:L-good-map}
Given a class $\cl{S}$ of structures and a representation-feasible
class $\mc{T}$ of trees, a representation map $\mathsf{Str}: \mc{T}
\rightarrow \cl{S}$ is said to be \emph{$\mc{L}$-good for $\cl{S}$} if
it has the following properties:
\begin{enumerate}[nosep]
\item Isomorphism preservation: $\mathsf{Str}$ maps isomorphic
  (labeled) trees to isomorphic structures.
\item Surjectivity: Each structure in $\cl{S}$ has an isomorphic
  structure in the range of $\mathsf{Str}$.
\item Monotonicity: Let $\tree{t} \in \mc{T}$ be a tree of size $\ge
  2$, and $a$ be a node of $\tree{t}$.
  \begin{enumerate}[nosep]
  \item \label{A.1} If $\tree{s} = \tree{t}_{\ge a}$, then
    $\str{\tree{s}} \hookrightarrow \str{\tree{t}}$
  \item \label{A.2} If $b$ is a child of $a$ in $\tree{t}$,
    $\lambda(a) \notin \sigmarank$ and $\tree{z} = (\tree{t} -
    \tree{t}_{\ge b})$, then $\str{\tree{z}} \hookrightarrow
    \str{\tree{t}}$.
  \item \label{A.3} If $b$ is a descendent of $a$ in $\tree{t}$ and
    $\tree{z} = \tree{t} \left[ \tree{t}_{\ge a} \mapsto \tree{t}_{\ge
      b} \right]$, then $\str{\tree{z}} \hookrightarrow
    \str{\tree{t}}$.
  \end{enumerate}
\item \label{B} Composition: There exists $m_0 \in \mathbb{N}$ such
  that for every $m \ge m_0$ and for every $\sigma \in \sigmaint$,
  there exists a function $\fsigma{m}:
  (\ldelta{\cl{S}}{m})^{\rho(\sigma)} \rightarrow \ldelta{\cl{S}}{m}$
  if $\sigma \in \sigmarank$, and functions $\fsigmai{m}{i}:
  (\ldelta{\cl{S}}{m})^i \rightarrow \ldelta{\cl{S}}{m}$ for $i \in
  \{1, \ldots, \rho(\sigma)\}$ if $\sigma \in \sigmaint \setminus
  \sigmarank$, with the following properties: Let $\tree{t} = (O,
  \lambda) \in \mc{T}$ and $a$ be an internal node of $\tree{t}$ such
  that $\lambda(a) = \sigma$ and the children of $a$ in $\tree{t}$ are
  $b_1, \ldots, b_n$. Let $\delta_i$ be the $\lequiv{m}$ class of
  $\mathsf{Str}(\tree{t}_{\ge b_i})$ for $i \in \{1, \ldots, n\}$, and
  let $\delta$ be the $\lequiv{m}$ class of $\str{\tree{t}_{\ge a}}$.
  \begin{itemize}[nosep]
  \item If $\sigma \in \sigmarank$ (whereby $n = \rho(\sigma)$), then
    $\delta = \fsigma{m}(\delta_1, \ldots, \delta_n)$.
  \item If $\sigma \in \sigmaint \setminus \sigmarank$, then $\delta$
    is given as follows: Let $d = \rho(\sigma)$ and $n = r + q \cdot
    (d - 1)$ where $1 \leq r < d$. Let $I = \{r + j \cdot (d - 1) \mid
    0 \leq j \leq q\}$ and for $k \in I, k \neq n$, let $\chi_{k + (d
      - 1)} = \fsigmai{m}{d}(\chi_k, \delta_{k+1}, \ldots, \delta_{k +
      (d -1)})$ where $\chi_r = \fsigmai{m}{r}(\delta_1, \ldots,
    \delta_r)$. Then $\delta = \chi_n$.
    
  \end{itemize}
\end{enumerate}
\end{definition}

We say $\cl{S}$ \emph{admits an $\mc{L}$-good tree representation} if
there exists some representation map $\mathsf{Str}$ that is
$\mc{L}$-good for $\cl{S}$. We say an $\mc{L}$-good tree
representation $\mathsf{Str}: \mc{T} \rightarrow \cl{S}$ is
\emph{effective (resp. elementary)} if (i) $\mc{T}$ is recursive and
(ii) there is an algorithm that, given $\tree{t} \in \mc{T}$ as input,
computes $\str{\tree{t}}$ (resp. computes $\str{\tree{t}}$ in time
which is bounded by an elementary function of $|\tree{t}|$).  We now
present the central result of this section, which is a lifting of
Proposition~\ref{prop:result-for-partially-ranked-trees} to tree
representations. The proof involves an abstraction of all the ideas
presented in proof of
Proposition~\ref{prop:result-for-partially-ranked-trees}.

\begin{theorem}\label{theorem:good-tree-rep-implies-lebsp-and-f.p.t.-algorithm}
Let $\cl{S}$ be a class of structures that admits an $\mc{L}$-good
tree representation $\mathsf{Str}: \mc{T} \rightarrow \cl{S}$. Then
the following are true:
\vspace{3pt}\begin{enumerate}[nosep]
\item $\lebsp{\cl{S}}$
  holds.\label{theorem:good-tree-rep-implies-lebsp-and-f.p.t.-algorithm:point-1}
\item If \,$\mathsf{Str}$ is effective, then there exists a computable
  witness function for $\lebsp{\cl{S}}$. Further, there exists a
  linear time f.p.t. algorithm for $\mathsf{MC}(\mc{L}, \cl{S})$ that
  decides, for every $\mc{L}$ sentence $\varphi$ (the parameter), if a
  given structure $\mf{A}$ in $\cl{S}$ satisfies $\varphi$, provided
  that a tree representation of $\mf{A}$ under $\mathsf{Str}$ is
  given.
  \label{theorem:good-tree-rep-implies-lebsp-and-f.p.t.-algorithm:point-2}
\item If \,$\mathsf{Str}$ is elementary, then there exists an elementary
  witness function for $\lebsp{\cl{S}}$ iff the index of the
  $\lequiv{m}$ relation over $\cl{S}$ has an elementary dependence on
  $m$.
  \label{theorem:good-tree-rep-implies-lebsp-and-f.p.t.-algorithm:point-3}
\end{enumerate}
\end{theorem}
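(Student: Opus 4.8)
The plan is to mimic, at the abstract level, the argument used for partially ranked trees (Proposition~\ref{prop:result-for-partially-ranked-trees}), with the $\mc{L}$-good map $\mathsf{Str}$ standing in for the identity on trees and the composition clause of Definition~\ref{definition:L-good-map} playing the role of Lemma~\ref{lemma:composition-lemma-for-partially-ranked-trees}. For part~\ref{theorem:good-tree-rep-implies-lebsp-and-f.p.t.-algorithm:point-1}, fix $m$ and a structure $\mf{A}\in\cl{S}$; by surjectivity pick $\tree{t}\in\mc{T}$ with $\str{\tree{t}}\cong\mf{A}$, and set $m'=\max(m,m_0)$ where $m_0$ is the constant from the composition clause. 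The crucial observation is that, using the functions $\fsigma{m'}$ and $\fsigmai{m'}{i}$, one computes bottom-up a colouring $\colour:\text{Nodes}(\tree{t})\to\ldelta{\cl{S}}{m'}$ with $\colour(a)$ equal to the $\lequiv{m'}$ class of $\str{\tree{t}_{\ge a}}$; this is exactly where the composition clause is needed. Then I prove an abstract analogue of Lemma~\ref{lemma:degree-and-height-reduction-for-trees}: if some internal node $a$ has degree exceeding a threshold $\eta_1(m')$ (depending on $\rho$ and $\Lambda_{\cl{S},\mc{L}}(m')$), then by pigeonhole two of the ``left-truncations'' of the children-sequence at $a$ have the same $\lequiv{m'}$ class, and splicing out the intervening block — which is legal by the ``closure under removal/replacements'' conditions on $\mc{T}$ — yields a smaller tree $\tree{t}'\in\mc{T}$; monotonicity (\ref{A.2},\ref{A.3}) gives $\str{\tree{t}'}\hookrightarrow\str{\tree{t}}$, and the composition clause together with an inductive argument on the colouring gives $\str{\tree{t}'}\lequiv{m'}\str{\tree{t}}$. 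Similarly, if some root-to-leaf path has length exceeding $\eta_2(m')=\Lambda_{\cl{S},\mc{L}}(m')+1$, two nodes $a$ (an ancestor, $a\neq\troot{\tree{t}}$) and $b$ on it have the same colour, so $\tree{t}[\tree{t}_{\ge a}\mapsto\tree{t}_{\ge b}]\in\mc{T}$ is smaller, embeds into $\str{\tree{t}}$ and is $\lequiv{m'}$-equivalent. Iterating these two reductions (each strictly decreases $|\tree{t}|$) terminates with $\tree{t}^\star\in\mc{T}$ of degree $\le\eta_1(m')$ and height $\le\eta_2(m')$, hence $|\tree{t}^\star|\le\eta_1(m')^{\eta_2(m')+1}$; then $\mf{B}=\str{\tree{t}^\star}$ satisfies $\mf{B}\in\cl{S}$, $\mf{B}\hookrightarrow\mf{A}$ (composing the embeddings; note $\hookrightarrow$ is transitive and, since substructures are induced, an embedding gives an isomorphic induced substructure of $\mf{A}$), $|\mf{B}|\le\lwitfn{\cl{S}}(m):=\eta_1(\max(m,m_0))^{\eta_2(\max(m,m_0))+1}$, and $\mf{B}\lequiv{m'}\mf{A}$, whence $\mf{B}\lequiv{m}\mf{A}$. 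Since $\Lambda_{\cl{S},\mc{L}}$ and $\rho$ give a genuine (not necessarily computable) bound, $\eta_1,\eta_2,\lwitfn{\cl{S}}$ exist and are monotonic, establishing $\lebsp{\cl{S}}$.

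For part~\ref{theorem:good-tree-rep-implies-lebsp-and-f.p.t.-algorithm:point-2}, the point is to make everything effective. If $\mathsf{Str}$ is effective, then $\mc{L}$-$\mathsf{SAT}$ over $\cl{S}$ is decidable (by part~\ref{theorem:good-tree-rep-implies-lebsp-and-f.p.t.-algorithm:point-1} it has the bounded-model property with bound $\lwitfn{\cl{S}}(m)$, and one enumerates trees in the recursive set $\mc{T}$, computes their structures, and checks sizes and satisfaction), so $\Lambda_{\cl{S},\mc{L}}$ may be taken computable and the functions $\eta_1,\eta_2,\lwitfn{\cl{S}}$ become computable. Exactly as in Lemma~\ref{lemma:core-lemma-for-fpt-over-trees}, one builds an algorithm $\enumcompfun(m)$ generating the lists $\listofmclasses$ of $\lequiv{m}$-classes realized in $\cl{S}$ (via the enumeration $\mc{P}(m)$ of $\mc{L}[m]$-sentences and $\mc{L}$-$\mathsf{SAT}$ over $\cl{S}$) and the composition functions $\fsigma{m},\fsigmai{m}{i}$ (by exhibiting witness trees in $\mc{T}$ and applying $\mathsf{Str}$); note the arity of $\fsigmai{m}{d}$ is now $d=\rho(\sigma)$, so the bottom-up colouring on a node with $n$ children still costs $O(\Lambda_{\cl{S},\mc{L}}(m)\cdot n)$ time after fixing $m$, keeping the whole colouring linear in $|\tree{t}|$. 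The degree- and height-reduction procedures then run in $f(m)\cdot|\tree{t}|$ time just as $\reducedeg$ and $\redheight$ do, and the model-checking algorithm is: given $\varphi$ of rank $m$ and a tree representation $\tree{t}$ of $\mf{A}$, set $m_1=\max(m,m_0,3)$, compute the reduced subtree $\tree{t}^\star$, compute $\str{\tree{t}^\star}$ (bounded size, hence constant time in $m$), and evaluate $\varphi$ there; correctness is $\str{\tree{t}^\star}\lequiv{m_1}\mf{A}$ hence $\lequiv{m}$.

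For part~\ref{theorem:good-tree-rep-implies-lebsp-and-f.p.t.-algorithm:point-3}, the forward direction is the easy half already used in Proposition~\ref{prop:result-for-partially-ranked-trees}(1): if $\theta$ is an elementary witness function for $\lebsp{\cl{S}}$, then every $\lequiv{m}$-class realized in $\cl{S}$ contains a structure of size $\le\theta(m)$, so the index is at most the number of structures (up to isomorphism, over the fixed finite vocabulary) of size $\le\theta(m)$, which is elementary in $\theta(m)$, hence elementary in $m$. For the converse, suppose the index $|\ldelta{\cl{S}}{m}|$ is bounded by an elementary function $g(m)$; then $\Lambda_{\cl{S},\mc{L}}$ can be taken to be $g$, so $\eta_1(m)=\max(\{\rho(\sigma)\}\cup\{3\})\cdot g(m)$ and $\eta_2(m)=g(m)+1$ are elementary, and therefore the bound $\eta_1(\max(m,m_0))^{\eta_2(\max(m,m_0))+1}$ on the size of the reduced tree $\tree{t}^\star$ is elementary in $m$. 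Since $\mathsf{Str}$ is elementary, $|\str{\tree{t}^\star}|$ is bounded by an elementary function of $|\tree{t}^\star|$, hence elementary in $m$; taking this as $\lwitfn{\cl{S}}(m)$ gives an elementary witness function.

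The main obstacle, and the only genuinely delicate point, is the inductive verification that the local composition clause of Definition~\ref{definition:L-good-map} — which controls only the $\lequiv{m}$ class of $\str{\tree{t}_{\ge a}}$ in terms of the children's classes — actually propagates correctly under the \emph{global} operations of removing a block of children or performing a subtree replacement, and in particular that after such a surgery the colouring $\colour$ can be \emph{updated} so that it still records $\lequiv{m}$ classes of the \emph{new} structures (the ``colour preservation'' property in the proof of Lemma~\ref{lemma:core-lemma-for-fpt-over-trees}). Here the non-rank condition $\lambda(a)\notin\sigmarank$ is essential: it allows one to re-express a truncated children-sequence via iterated applications of the binary-type associativity encoded in the second bullet of the composition clause, so that equality of the $\lequiv{m}$ class of a left-truncation is preserved when the suffix is re-attached; this is the abstract analogue of Corollary~\ref{corollary:tree-composition}, and its proof follows the same pattern of rewriting $\odot$/join operations as compositions of $\fsigmai{m}{d}$'s.
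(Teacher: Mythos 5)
Your overall route is the same as the paper's: lift the partially-ranked-trees argument by computing a bottom-up colouring with the composition functions from Definition~\ref{definition:L-good-map}, prove abstract degree- and height-reduction lemmas by pigeonhole on $\Lambda_{\cl{S},\mc{L}}$, use closure of $\mc{T}$ and monotonicity of $\mathsf{Str}$ to keep the surgered trees in $\mc{T}$ with embedded, $\lequiv{m}$-equivalent images, generate the composition functions effectively via the small-model property and $\mc{L}$-$\mathsf{SAT}$ decidability, and split part (3) into the easy counting direction and the elementary-$\eta_1,\eta_2$ direction. That all matches the paper's proof (Lemmas~\ref{lemma:abstract-tree-lemma} and \ref{lemma:core-lemma-for-fpt-over-trees-abstract}).

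There is, however, one genuine flaw in your parts (1)--(2): you set $\lwitfn{\cl{S}}(m) := \eta_1(\max(m,m_0))^{\eta_2(\max(m,m_0))+1}$ and claim $|\mf{B}| \leq \lwitfn{\cl{S}}(m)$, i.e.\ you bound the \emph{structure} $\str{\tree{t}^\star}$ by the bound on the \emph{tree} $\tree{t}^\star$. Nothing in Definition~\ref{definition:L-good-map} bounds $|\str{\tree{t}}|$ by $|\tree{t}|$: a single leaf may represent a large structure, and for $\fo$-good maps built from product-like operations (as in Theorem~\ref{theorem:lebsp-closure-under-1-step-operations}) the image can grow much faster than the tree. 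The correct definition, as in the paper, is $\lwitfn{\cl{S}}(m) = \max\{|\str{\tree{z}}| : \tree{z}\in\mc{T},\, |\tree{z}|\leq p\}$ with $p=\eta_1(m)^{\eta_2(m)+1}$; this maximum is well defined because the label alphabet is finite and $\mathsf{Str}$ is isomorphism-preserving, so only finitely many isomorphism types of trees of size at most $p$ occur. This repair also matters for the logic of part (2): with your definition, computability of the witness function would hold vacuously once $\eta_1,\eta_2$ are computable, whereas the real point is that effectiveness of $\mathsf{Str}$ is needed to compute the above maximum (by enumerating trees of size at most $p$, testing membership in the recursive class $\mc{T}$, and applying $\mathsf{Str}$). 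Your part (3) in fact implicitly makes the needed correction — you bound $|\str{\tree{t}^\star}|$ by an elementary function of $|\tree{t}^\star|$ using elementariness of $\mathsf{Str}$ — so the fix is consistent with the rest of your argument; the remaining sketched steps (the abstract analogue of Corollary~\ref{corollary:tree-composition}, the block removal at indices congruent modulo $\rho(\sigma)-1$ so the suffix reattaches, and the colour-preservation bookkeeping) are exactly what the paper carries out.
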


The rest of this section is entirely devoted to proving the above result.

\vspace{5pt}We prove
Theorem~\ref{theorem:good-tree-rep-implies-lebsp-and-f.p.t.-algorithm}
analogous to
Proposition~\ref{prop:result-for-partially-ranked-trees}. Specifically,
we show the following two results which resp. are abstract versions of
Lemma~\ref{lemma:degree-and-height-reduction-for-trees} and
Lemma~\ref{lemma:core-lemma-for-fpt-over-trees}.

\begin{lemma}\label{lemma:abstract-tree-lemma}
For a class $\cl{S}$ of structures, and a representation-feasible
class $\mc{T}$ of trees, let $\mathsf{Str}: \mc{T} \rightarrow \cl{S}$
be a representation map that is $\mc{L}$-good for $\cl{S}$. Then there
exist computable functions $\eta_1, \eta_2: \mathbb{N} \rightarrow
\mathbb{N}$ such that for each $\tree{t} \in \mc{T}$ and $m \in
\mathbb{N}$, we have the following:
\begin{enumerate}[nosep]
\item \emph{(Degree reduction)} There exists a subtree $\tree{s}_1$ of
  $\tree{t}$ in $\mc{T}$, of degree $\leq \eta_1(m)$, such that (i)
  the roots of $\tree{s}_1$ and $\tree{t}$ are the same, (ii)
  $\str{\tree{s}_1} \hookrightarrow \str{\tree{t}}$, and (iii)
  $\str{\tree{s}_1} \lequiv{m} \str{\tree{t}}$.
  \label{lemma:abstract-tree-lemma-degree-reduction}
\item \emph{(Height reduction)} There exists a subtree $\tree{s}_2$ of
  $\tree{t}$ in $\mc{T}$, of height $\leq \eta_2(m)$, such that (i)
  the roots of $\tree{s}_2$ and $\tree{t}$ are the same, (ii)
  $\str{\tree{s}_2} \hookrightarrow \str{\tree{t}}$, and (ii)
  $\str{\tree{s}_2} \lequiv{m} \str{\tree{t}}$.
\label{lemma:abstract-tree-lemma-height-reduction}
\end{enumerate}
Above, it additionally holds that if the index of the $\lequiv{m}$
relation over $\cl{S}$ is an elementary function of $m$, then each of
$\eta_1$ and $\eta_2$ is elementary as well.
\end{lemma}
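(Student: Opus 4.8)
The plan is to abstract the two reduction arguments from Lemma~\ref{lemma:degree-and-height-reduction-for-trees}, replacing the explicit $\mc{L}$-composition results for partially ranked trees (Corollary~\ref{corollary:tree-composition}) with the \emph{Composition} clause (item~\ref{B}) of $\mc{L}$-goodness together with the three \emph{Monotonicity} clauses~\ref{A.1}--\ref{A.3} and the five closure properties of a representation-feasible class. Throughout, write $\Lambda(m)$ for $\Lambda_{\cl{S},\mc{L}}(m) \ge |\ldelta{\cl{S}}{m}|$, and recall that by item~\ref{B} there is an $m_0$ beyond which the composition functions $\fsigma{m}, \fsigmai{m}{i}$ exist; as in the concrete case we prove the statement for $m \ge \max\{m_0, 3\}$ and then absorb the finitely many small $m$ by noting $\eta_1, \eta_2$ can be taken constant (equal to their value at that threshold) below it, since taking the reduced tree for the threshold value also works for smaller $m$.

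For \emph{degree reduction}: given $\tree{t} \in \mc{T}$, if every node has degree $\le \eta_1(m)$ we are done, where I will set $\eta_1(m) = \max(\{\rho(\sigma) \mid \sigma \in \sigmaint\} \cup \{3\}) \times \Lambda(m)$. Otherwise pick an internal node $a$ of degree $n > \eta_1(m)$; by the ranking condition $\lambda(a) = \sigma \in \sigmaint \setminus \sigmarank$, so $\rho(\sigma) = d$ with $d \le \Lambda(m) \cdot (\text{something})/\Lambda(m)$ and crucially $n/\Lambda(m) > d$. Let $b_1,\ldots,b_n$ be the children of $a$ and $\delta_i$ the $\lequiv{m}$ class of $\str{\tree{t}_{\ge b_i}}$. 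Using the composition rule for $\sigma \notin \sigmarank$ in item~\ref{B}, the class $\delta$ of $\str{\tree{t}_{\ge a}}$ is obtained by folding the sequence $(\delta_1,\ldots,\delta_n)$ from the left via $\fsigmai{m}{r}$ then repeated $\fsigmai{m}{d}$. Tracking the intermediate values $\chi_r, \chi_{r+(d-1)}, \chi_{r+2(d-1)}, \ldots$ of this fold, there are more than $\Lambda(m)$ of them (since $n > \Lambda(m) \cdot d$ forces $q+1 > \Lambda(m)$ steps, roughly), so two of them coincide; between those two indices one can delete a contiguous block of children of $a$ — this deletion is legitimate because the class closure~\ref{A.2}/removal-closure properties let us peel children of a non-$\sigmarank$ node, and repeated application of clause~\ref{A.2} or an analogue of the $\odot$-composition in Corollary~\ref{corollary:tree-composition}\eqref{corollary:tree-composition:2} keeps us inside $\mc{T}$ and fixes the class $\delta$. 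Replacing $\tree{t}_{\ge a}$ by the shortened tree inside $\tree{t}$ uses clause~\ref{A.3} (for the embedding) and the replacement-closure property (to stay in $\mc{T}$), and the composition functions propagating upward unchanged show the $\lequiv{m}$ class of $\str{\tree{t}}$ is preserved; the new tree is strictly smaller, so recursion terminates.

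For \emph{height reduction}: set $\eta_2(m) = \Lambda(m) + 1$. If $\tree{t}$ has height $> \eta_2(m)$, take a root-to-leaf path with more than $\Lambda(m)+1$ nodes; among the $\lequiv{m}$ classes of the structures $\str{\tree{t}_{\ge a}}$ for $a$ on this path (excluding possibly the root to keep the root fixed) two must agree, say at an ancestor $a$ and descendant $b$ with $a \ne \troot{\tree{t}}$. Then $\tree{t}[\tree{t}_{\ge a} \mapsto \tree{t}_{\ge b}]$ lies in $\mc{T}$ by replacement-closure, its structure embeds into $\str{\tree{t}}$ by clause~\ref{A.3}, and since $\str{\tree{t}_{\ge a}} \lequiv{m} \str{\tree{t}_{\ge b}}$, the Composition clause (used inductively up the path from $a$ to the root) gives $\str{\tree{t}[\tree{t}_{\ge a}\mapsto\tree{t}_{\ge b}]} \lequiv{m} \str{\tree{t}}$; the tree shrinks, so recursion terminates. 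Finally, $\eta_1$ and $\eta_2$ are computable because $\Lambda_{\cl{S},\mc{L}}$ is a fixed computable function and $\rho$ has finite domain; and if the index of $\lequiv{m}$ over $\cl{S}$ is elementary in $m$ we may replace $\Lambda(m)$ by that index, making $\eta_1, \eta_2$ elementary. The main obstacle I anticipate is the degree-reduction step: unlike the concrete $\odot$-composition for partially ranked trees, here the fold in item~\ref{B} is done in blocks of size $d-1$ rather than one child at a time, so I must be careful that (i) the pigeonhole genuinely produces two equal \emph{block-boundary} intermediate classes $\chi_k$, which requires the degree bound $\eta_1(m)$ to account for the factor $d$, and (ii) deleting a block of children of $a$ that starts and ends at matching block boundaries actually corresponds to a valid sequence of representation-feasible removals and yields a tree whose $\str{\cdot}$ still embeds into $\str{\tree{t}}$ and has the same $\lequiv{m}$ class — this is where the interplay between the block structure of the composition rule and the one-child-at-a-time removal-closure property needs the most care.
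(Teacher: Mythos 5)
Your proposal is correct and takes essentially the same route as the paper's proof: degree reduction by pigeonhole on the block-boundary intermediate classes of the composition fold (the paper packages exactly this as a helper composition lemma for $\odot$ where the right part's root has a multiple of $\rho(\sigma)-1$ children), and height reduction by pigeonhole along a root-to-leaf path with replacement-closure and the composition property propagated upward. One small correction: in the last step of degree reduction the shortened tree is obtained by deleting child subtrees of the non-$\sigmarank$ node $a$ within $\tree{t}$ itself, so membership in $\mc{T}$ and the embedding $\str{\tree{s}_1} \hookrightarrow \str{\tree{t}}$ follow from repeated use of removal-closure and clause~\ref{A.2}, not from clause~\ref{A.3} and replacement-closure (which do not literally apply, since the shortened subtree is not of the form $\tree{t}_{\ge b}$ for a descendant $b$ of $a$).
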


\begin{lemma}\label{lemma:core-lemma-for-fpt-over-trees-abstract}
For a class $\cl{S}$ of structures, and a representation-feasible
class $\mc{T}$ of trees, let $\mathsf{Str}: \mc{T} \rightarrow \cl{S}$
be a representation map that is $\mc{L}$-good for $\cl{S}$ and
effective. Let $m_0$ witness the composition property of
$\mathsf{Str}$, as mentioned in Definition~\ref{definition:L-good-map}.
There exist computable functions $\eta_3, \eta_4, \eta_5: \mathbb{N}
\rightarrow \mathbb{N}$ and algorithms $\enumcompfun(m)$,
$\reducedeg(\tree{t}, m)$ and $\redheight(\tree{t}, m)$ such that for
$m \ge m_0$,

\begin{enumerate}[nosep]
\item $\enumcompfun(m)$ generates in time $\eta_3(m)$, the functions
  $\fsigma{m}$ if $\sigma \in \sigmarank$ and $\fsigmai{m}{i}$ for $i
  \in \{1, \ldots, \rho(\sigma)\}$ if $\sigma \in \sigmaint \setminus
  \sigmarank$, that satisfy the properties mentioned in
  Definition~\ref{definition:L-good-map}.\label{lemma:core-lemma-for-fpt-enum}

\item For $\tree{t} \in \cl{T}$, $\reducedeg(\tree{t}, m)$ computes
  the subtree $\tree{s}_1$ of $\tree{t}$ as given by
  Lemma~\ref{lemma:abstract-tree-lemma}, in time $\eta_4(m) \cdot
  |\tree{t}|$. Likewise, $\redheight(\tree{t}, m)$ computes the
  subtree $\tree{s}_2$ of $\tree{t}$ as given by
  Lemma~\ref{lemma:degree-and-height-reduction-for-trees}, in time
  $\eta_5(m) \cdot |\tree{t}|$.\label{lemma:core-lemma-for-fpt-reduce}
\end{enumerate}

\end{lemma}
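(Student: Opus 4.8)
The plan is to transcribe the proof of Lemma~\ref{lemma:core-lemma-for-fpt-over-trees} into the abstract setting, replacing every use of the concrete composition corollary (Corollary~\ref{corollary:tree-composition}) by the Composition clause of Definition~\ref{definition:L-good-map}, every use of the concrete merge/join operations by the closure properties of a representation-feasible class, and binary composition over child sequences by the arity-$\rho(\sigma)$ fold built into that clause. Everything is claimed for $m \ge m_0$, where $m_0$ witnesses the Composition clause. I will repeatedly use the following consequence of Lemma~\ref{lemma:abstract-tree-lemma}: since $\mathsf{Str}$ is surjective up to isomorphism, every $\lequiv{m}$ class realized in $\cl{S}$ is the class of $\str{\tree{s}}$ for some $\tree{s} \in \mc{T}$ with $|\tree{s}| \le N(m) := \eta_1(m)^{\eta_2(m)+1}$; in particular $N$ is a computable witness function for $\lebsp{\cl{S}}$, and, $\mc{T}$ being recursive and $\str{\cdot}$ computable (effectiveness), the $\mc{L}$-satisfiability problem over $\cl{S}$ is decidable.

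\textbf{Part~\ref{lemma:core-lemma-for-fpt-enum}.} To build $\listofmclasses$ together with a table giving the $\lequiv{m}$ class of $\str{\cdot}$ on every single-node tree, enumerate all trees of $\mc{T}$ of size $\le N(m)$, compute their $\mathsf{Str}$-images, read off their $\lequiv{m}$ classes, and intersect with the standard enumeration $\mc{P}(m)$ of $\mc{L}[m]$ sentences. To build the composition functions, enumerate all trees of $\mc{T}$ of size $\le B(m)$ for a computable bound $B$, and for each internal node $a$ of each such tree record --- using closure under removal respecting $\sigmarank$ to evaluate the running classes where needed --- the base and increment entries prescribed by the fold of the Composition clause, i.e.\ the tuples consisting of $\lambda(a)$ and the $\lequiv{m}$ classes of the (images of the) relevant child subtrees (of length at most $\rho(\lambda(a))$) mapping to the $\lequiv{m}$ class of $\str{\tree{t}_{\ge a}}$; entries never encountered are set to a fixed default class. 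The Composition clause guarantees both that these tables are well defined (distinct trees realizing the same configuration give the same output) and that the resulting maps $\fsigma{m},\fsigmai{m}{i}$ satisfy its requirements, \emph{provided} $B(m)$ is chosen so large that every realizable configuration is realized by a tree of size $\le B(m)$; see the last paragraph. As all of this is a finite search whose sizes are computable in $m$, the running time is bounded by a computable $\eta_3(m)$.

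\textbf{Part~\ref{lemma:core-lemma-for-fpt-reduce}.} These algorithms are the constructive readings of the proof of Lemma~\ref{lemma:abstract-tree-lemma}. Both call $\enumcompfun(m)$ and then compute $\colour$ bottom-up on $\tree{t}$, with $\colour(a)$ the $\lequiv{m}$ class of $\str{\tree{t}_{\ge a}}$ --- leaves via the single-node table, ranked internal nodes by one application of $\fsigma{m}$, unranked internal nodes by running the fold with the $\fsigmai{m}{i}$ along the child sequence --- in time linear in $|\tree{t}|$ up to an $m$-dependent factor. $\reducedeg$ then repeatedly selects a node $a$ of degree exceeding $\eta_1(m)$ (necessarily $\lambda(a) \notin \sigmarank$); for each $\delta \in \listofmclasses$ it rescans the children of $a$, tracks the running fold value, finds the least and greatest ``checkpoint'' positions (those in the correct residue class modulo $\rho(\lambda(a))-1$) whose value is $\delta$, and removes the child subtrees strictly between them by closure under removal respecting $\sigmarank$; because the splice merges equal fold values, $\colour$ is unchanged at $a$ --- hence at all ancestors, by the Composition clause --- so no recomputation is needed, and after one sweep the degree of $a$ is at most $\eta_1(m)$. $\redheight$ additionally computes the table $\lowest(a,\delta)$ and the recursive procedure $\rainbow$ from the proof of Lemma~\ref{lemma:core-lemma-for-fpt-over-trees}, which contracts any root-to-leaf repetition of a colour by closure under replacements, leaving height $\le \Lambda_{\cl{S},\mc{L}}(m)$; the ``colour-preservation'' invariant carries over verbatim. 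Correctness of the returned subtrees is exactly that of Lemma~\ref{lemma:abstract-tree-lemma}, with the Composition clause certifying that each splice or contraction preserves the $\lequiv{m}$ class of the represented structure and the Monotonicity clause giving the embedding into $\str{\tree{t}}$. The time bounds are as in the concrete case: $\colour$ and $\lowest$ cost $O(\Lambda_{\cl{S},\mc{L}}(m)\cdot|\tree{t}|)$, reducing the degree of a node $a$ costs $O(\Lambda_{\cl{S},\mc{L}}(m)\cdot\deg(a))$, $\rainbow$ is linear, and the one-time cost $\eta_3(m)$ is absorbed, giving $\eta_4(m)\cdot|\tree{t}|$ and $\eta_5(m)\cdot|\tree{t}|$ for computable $\eta_4,\eta_5$.

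\textbf{The main obstacle.} The one step with no counterpart in the proof of Lemma~\ref{lemma:core-lemma-for-fpt-over-trees} is the bounded-realizability claim behind $B(m)$: in the partially ranked setting one freely builds a tree with a fresh $\sigma$-root over chosen child subtrees, whereas here $\mc{T}$ has only the restricted closures of a representation-feasible class, so arbitrary configurations cannot be assembled directly. The fix I have in mind: if a configuration is realized by $\tree{t} \in \mc{T}$ at a node $a$ labelled $\sigma$, pass to $\tree{t}_{\ge a} \in \mc{T}$ (closure under rooted subtrees) and shrink each child subtree using the degree/height reductions of Lemma~\ref{lemma:abstract-tree-lemma} performed \emph{inside} that subtree --- the closure operations invoked there are local, hence still valid in the ambient tree, and each preserves the $\lequiv{m}$ class of the corresponding $\mathsf{Str}$-image --- and, when $\sigma \notin \sigmarank$, additionally run a partial degree reduction at $a$ restricted to the children not singled out by the configuration, which preserves the running fold value. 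The resulting tree realizes the same configuration and has size at most $1 + (\eta_1(m) + \max_{\sigma}\rho(\sigma))\cdot N(m)$; taking $B(m)$ to be this quantity closes the gap, and the rest of the argument is a routine transcription.
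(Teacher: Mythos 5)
Your proposal is correct and follows essentially the same route as the paper's proof: Part 1 by enumerating the trees of $\mc{T}$ up to a computable size bound, recording the configurations they realize, and appealing to the Composition clause of Definition~\ref{definition:L-good-map} for well-definedness of the generated functions, and Part 2 by the bottom-up colouring, checkpointed degree reduction, and $\lowest$/$\rainbow$ height reduction transcribed from Lemma~\ref{lemma:core-lemma-for-fpt-over-trees}, with correctness and linearity argued as in Lemma~\ref{lemma:abstract-tree-lemma}. The ``main obstacle'' you isolate, bounded realizability of configurations, is exactly the paper's Lemma~\ref{lemma:small-tree-given-input-equivalence-classes}, and your fix (shrinking the child subtrees in place via Lemma~\ref{lemma:abstract-tree-lemma}, the operations being local removals/replacements that keep the tree in $\mc{T}$, with a separate treatment of the prefix at an unranked node) is in substance the paper's proof of that lemma, which shrinks the prefix tree $\tree{z}$ wholesale rather than by a partial degree reduction at $a$.
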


\begin{proof}[Proof of Theorem~\ref{theorem:good-tree-rep-implies-lebsp-and-f.p.t.-algorithm}]
  (1): Let $\mf{A} \in \cl{S}$. Let $\tree{t}$ be such that
  $\str{\tree{t}} = \mf{A}$. By Lemma~\ref{lemma:abstract-tree-lemma},
  there exists a subtree $\tree{s}$ of $\tree{t}$ in $\mc{T}$, of
  degree $\leq \eta_1(m)$ and height $\leq \eta_2(m)$, and hence of
  size $\leq p = \eta_1(m)^{(\eta_2(m) + 1)}$, such that (i)
  $\str{\tree{s}} \hookrightarrow \str{\tree{t}}$ and (ii)
  $\str{\tree{s}} \lequiv{m} \str{\tree{t}}$. Define
  $\lwitfn{\cl{S}}(m) = \text{max}\{|\mf{C}| \mid \mf{C} \in \cl{S},
  \text{~there exists}~\tree{z}~\text{in}~\mc{T}~\text{such
    that}~\str{\tree{z}} = \mf{C}~\text{and}~|\tree{z}| \leq p\}$. It
  is then easy to see taking $\mf{B}$ to be the isomorphic copy of
  $\str{\tree{s}}$, that is a substructure of $\mf{A}$, that
  $\lebspcond(\cl{S}, \mf{A}, \mf{B}, m, \lwitfn{\cl{S}})$ holds.

  \vspace{2pt}(2): It is clear that if $\mathsf{Str}$ is effective,
  then $\lwitfn{\cl{S}}$ defined above is computable too. For the
  f.p.t. part, let $\mc{A}$ be the following algorithm. Let
  $\mf{A} \in \cl{S}$ be given as input to $\mc{A}$, in the form of
  the tree representation $\tree{t}$ of $\mf{A}$ under
  $\mathsf{Str}$. Let $\varphi$ be an input $\mc{L}$ sentence. Then
  $\mc{A}$ determines the rank $m$ of $\varphi$, computes $m_1
  = \text{max}\{m, m_0\}$, and calls $\redheight(\reducedeg(\tree{t},
  m_1), m_1)$. By
  Lemma~\ref{lemma:core-lemma-for-fpt-over-trees-abstract}, the
  aforesaid call returns, in time $(\eta_4(m_1) + \eta_5(m_1)) \cdot
  |\tree{t}|$, a tree $\tree{s}$ in $\mc{T}$, of degree
  $\leq \eta_1(m_1)$ and height $\leq \eta_2(m_1)$, and hence of size
  $\leq \eta_1(m_1)^{(\eta_2(m_1) + 1)}$, such that
  $\str{\tree{s}} \lequiv{m_1} \str{\tree{t}} = \mf{A}$. Since
  $m_1 \ge m$, we have $\str{\tree{s}} \lequiv{m} \mf{A}$. Checking if
  $\mf{A} \models \varphi$ is then equivalent to checking if
  $\str{\tree{s}} \models \varphi$, and the latter can be done in time
  $g(m_1)$ for some computable function
  $g: \mathbb{N} \rightarrow \mathbb{N}$ of $m_1$, since the size of
  $\tree{s}$ is bounded by a computable function of $m_1$.  It follows
  that $\mc{A}$ is f.p.t. for $\mathsf{MC}(\mc{L}, \cl{S})$.

  \vspace{2pt}(3): It is easy to see that if there exists an
  elementary witness function $\lwitfn{\cl{S}}$ for $\lebsp{\cl{S}}$,
  then every structure $\mf{A}$ in $\cl{S}$ is $\mc{L}[m]$-equivalent
  to a structure $\mf{B}$ in $\cl{S}$ such that
  $|\mf{B}| \leq \lwitfn{\cl{S}}(m)$. Whereby the index of the
  $\lequiv{m}$ relation over $\cl{S}$ is bounded by the number of
  structures in $\cl{S}$ whose size (of the universe) is
  $\leq \lwitfn{\cl{S}}(m)$. Clearly then, this number, and hence the
  index, is bounded by an elementary function of $m$, if
  $\lwitfn{\cl{S}}$ is elementary.

  Suppose the index of the $\lequiv{m}$ relation over $\cl{S}$ is an
  elementary function of $m$. Then by
  Lemma~\ref{lemma:abstract-tree-lemma}, $\eta_1$ and $\eta_2$ are
  elementary too. Whereby if $\mathsf{Str}$ is also elementary, then
  $\lwitfn{\cl{S}}$ as defined in part (1) above, is also elementary.
\end{proof}

We now prove Lemma~\ref{lemma:abstract-tree-lemma} and
Lemma~\ref{lemma:core-lemma-for-fpt-over-trees-abstract}.  We recall
from Section~\ref{section:background} that for a class $\cl{S}$ of
structures, $\ldelta{\cl{S}}{m}$ denotes the set of all equivalence
classes of the $\lequiv{m}$ relation restricted to the structures in
$\cl{S}$, and $\Lambda_{\cl{S}, \mc{L}}: \mathbb{N} \rightarrow
\mathbb{N}$ is a fixed computable function with the property that
$\Lambda_{\cl{S}, \mc{L}}(m) \ge |\ldelta{\cl{S}}{m}|$.

\subsection{Proof of Lemma~\ref{lemma:abstract-tree-lemma}}
The following facts are easy to verify given that $\mathsf{Str}$
satisfies the composition properties of
Definition~\ref{definition:L-good-map}. The proofs of these use
similar ideas as in the proof of
Corollary~\ref{corollary:tree-composition}, and are hence
skipped. Below, $m_0$ witnesses the composition properties of
$\mathsf{Str}$ as given by Definition~\ref{definition:L-good-map}.

\begin{lemma}\label{lemma:helper-height-reduction} 
  Let $\tree{s}, \tree{t} \in \mc{T}$ and let $a$ be a node of
  $\tree{t}$. Suppose $\tree{z} = \tree{t}[\tree{t}_{\ge
  a} \mapsto \tree{s}] \in \mc{T}$. Then for $m \ge m_0$, if
  $\str{\tree{s}} \lequiv{m} \str{\tree{t}_{\ge a}}$, then
  $\str{\tree{z}} \lequiv{m} \str{\tree{t}}$.
\end{lemma}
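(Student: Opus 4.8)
The plan is to mirror the proof of Corollary~\ref{corollary:tree-composition}(\ref{corollary:tree-composition:1}), with the composition functions $\fsigma{m}$ and $\fsigmai{m}{i}$ of Definition~\ref{definition:L-good-map}(\ref{B}) playing the role that the concrete tree-composition results played there. The conceptual point is that the composition property says precisely that the $\lequiv{m}$ class of the structure sitting at an internal node of a tree in $\mc{T}$ is a function of the ordered tuple of $\lequiv{m}$ classes of the structures sitting at its children; hence replacing one child subtree by an $\lequiv{m}$-indistinguishable one leaves that class unchanged, and this invariance propagates up the tree to the root.

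First I would fix the path $a = a_\ell, a_{\ell-1}, \ldots, a_1, a_0 = \troot{\tree{t}}$ from $a$ up to the root of $\tree{t}$, where $a_j$ is the parent of $a_{j+1}$. None of $a_0, \ldots, a_{\ell-1}$ lies in $\tree{t}_{\ge a}$, so each of these survives as a node of $\tree{z}$; set $\tree{z}^{(j)} := \tree{z}_{\ge a_j}$ for $0 \le j \le \ell - 1$ and $\tree{z}^{(\ell)} := \tree{s}$. By the ``closure under rooted subtrees'' clause of representation-feasibility, each $\tree{z}^{(j)} \in \mc{T}$, and unwinding the replacement operation one sees $\tree{z}^{(j)} = \tree{t}_{\ge a_j}[\tree{t}_{\ge a} \mapsto \tree{s}]$ for $j < \ell$, while $\tree{z}^{(\ell)} = \tree{s}$ matches $\tree{t}_{\ge a_\ell} = \tree{t}_{\ge a}$.

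The main step is a downward induction on $j$ establishing $\str{\tree{z}^{(j)}} \lequiv{m} \str{\tree{t}_{\ge a_j}}$; the case $j = 0$ is the statement of the lemma. The base case $j = \ell$ is the hypothesis $\str{\tree{s}} \lequiv{m} \str{\tree{t}_{\ge a}}$. For the inductive step, assume the claim at $j+1$. The node $a_{j+1}$ is a child of $a_j$; let $b_1, \ldots, b_n$ enumerate the children of $a_j$ in $\tree{t}$ with $a_{j+1} = b_i$, and let $\sigma = \lambda(a_j) \in \sigmaint$. In $\tree{z}^{(j)}$ the root has children $b_1, \ldots, b_n$ with only the $i$-th child subtree changed, from $\tree{t}_{\ge b_i}$ to $\tree{z}^{(j+1)}$; by the induction hypothesis $\str{\tree{z}^{(j+1)}} \lequiv{m} \str{\tree{t}_{\ge b_i}}$, so the tuple of $\lequiv{m}$ classes of the children structures is the same for $\tree{z}^{(j)}$ as for $\tree{t}_{\ge a_j}$. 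Now apply Definition~\ref{definition:L-good-map}(\ref{B}), available since $m \ge m_0$, at $a_j$ in $\tree{t}$ and at the root of $\tree{z}^{(j)}$: if $\sigma \in \sigmarank$, the two $\lequiv{m}$ classes both equal $\fsigma{m}$ applied to that common tuple; if $\sigma \in \sigmaint \setminus \sigmarank$, they both equal the element $\chi_n$ obtained from the same iterated application of the $\fsigmai{m}{i}$'s to that common tuple. In either case $\str{\tree{z}^{(j)}} \lequiv{m} \str{\tree{t}_{\ge a_j}}$, closing the induction.

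This is essentially bookkeeping; the two points that need a little care are (i) verifying that each intermediate tree $\tree{z}^{(j)}$ genuinely lies in $\mc{T}$, which is exactly what ``closure under rooted subtrees'' provides, and (ii) noting that the iterated recipe defining $\delta$ for unranked $\sigma$ is, by construction, a genuine function of the ordered tuple $(\delta_1, \ldots, \delta_n)$, so that equal input tuples force equal output classes. I do not expect any real obstacle: the substantive content is entirely contained in Definition~\ref{definition:L-good-map}(\ref{B}), and the argument is a routine propagation up the path.
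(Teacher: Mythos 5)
Your proof is correct, and it is precisely the argument the paper intends but omits (the paper only remarks that the proof uses the same ideas as Corollary~\ref{corollary:tree-composition}): in the abstract setting the role of the concrete composition lemma is played by the composition property of Definition~\ref{definition:L-good-map}, and your upward induction along the path from $a$ to the root, using closure under rooted subtrees to keep each $\tree{z}_{\ge a_j}$ in $\mc{T}$ and the fact that the iterated $\fsigmai{m}{i}$-recipe is a function of the ordered tuple of children's classes, is exactly the intended propagation argument.
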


\begin{lemma}\label{lemma:helper-degree-reduction}
Let $\tree{s}_1, \tree{s}_2, \tree{t} \in \mc{T}$ be such that the
label of the root of each of these trees is $\sigma \in \sigmaint
\setminus \sigmarank$.  Suppose $\tree{z}_i = \tree{s}_i \odot
\tree{t}$ is such that $\tree{z}_i \in \mc{T}$ for $i \in \{1,
2\}$. Suppose further that the number of children of the root of
$\tree{t}$ is a multiple of $(\rho(\sigma) - 1)$. Then for $m \ge
m_0$, if $\str{\tree{s}_1} \lequiv{m} \str{\tree{s}_2}$, then
$\str{\tree{z}_1} \lequiv{m} \str{\tree{z}_2}$.
\end{lemma}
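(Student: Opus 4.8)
The plan is to reduce the claim to the composition property of Definition~\ref{definition:L-good-map}, applied at the roots of $\tree{s}_1,\tree{s}_2$ and of $\tree{z}_1,\tree{z}_2$ (all of which carry the label $\sigma$ and are internal, since $\sigma \in \sigmaint$). Write $d = \rho(\sigma)$. First I would restate the composition property for $\sigma \in \sigmaint \setminus \sigmarank$ in ``fold'' form: for any $\tree{u} \in \mc{T}$ and internal node $a$ of $\tree{u}$ labeled $\sigma$, whose children carry subtree-structures of $\lequiv{m}$-classes $\delta_1, \dots, \delta_n$ (in order), the $\lequiv{m}$-class of $\str{\tree{u}_{\ge a}}$ is computed by applying $\fsigmai{m}{r}$ to $(\delta_1, \dots, \delta_r)$, where $r \in \{1, \dots, d-1\}$ is the representative of $n \bmod (d-1)$, and then repeatedly applying $\fsigmai{m}{d}$ to the running accumulator together with the next $d-1$ classes until all are consumed; denote the output $F^m_\sigma(\delta_1, \dots, \delta_n)$. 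This is valid for $m \ge m_0$, which is exactly the range in which the lemma is asserted.

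The heart of the argument is a \emph{prefix property} of this fold, which I would establish next: if $\ell$ is a multiple of $d-1$, then for any class sequences $\bar\delta$ of length $n'$ and $\bar\gamma$ of length $\ell$, the value $F^m_\sigma(\bar\delta\,\bar\gamma)$ on the concatenation depends on $\bar\delta$ only through $F^m_\sigma(\bar\delta)$. The reason is that, when $\ell \equiv 0 \pmod{d-1}$, the number $n'+\ell$ has the same residue modulo $d-1$ as $n'$, so the first-block length $r$ is the same for the fold on $\bar\delta\,\bar\gamma$ as for the fold on $\bar\delta$; consequently $n'$ is itself a block boundary of the longer fold (here one also checks $r \le n'$ using $n' \ge 1$, i.e.\ that an internal node has at least one child), so the accumulator of the fold on $\bar\delta\,\bar\gamma$ after its first $n'$ entries is literally $F^m_\sigma(\bar\delta)$, and the remaining applications of $\fsigmai{m}{d}$ depend only on that accumulator and on $\bar\gamma$. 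With the prefix property in place I would finish as follows. By the definition of $\odot$, $\tree{z}_i = \tree{s}_i \odot \tree{t}$ has root $\troot{\tree{s}_i}$ (labeled $\sigma$, hence internal) whose children are those of $\troot{\tree{s}_i}$ followed by those of $\troot{\tree{t}}$, and $\tree{z}_i \in \mc{T}$ by hypothesis. Let $\bar\delta^{(i)}$ list the $\lequiv{m}$-classes of the subtree-structures hanging at $\troot{\tree{s}_i}$, and let $\bar\gamma$ list those hanging at $\troot{\tree{t}}$ — the same sequence for $i=1$ and $i=2$ (the subtrees from $\tree{t}$ are isomorphic copies, so $\mathsf{Str}$ sends them to isomorphic, hence $\lequiv{m}$-equivalent, structures), of length equal to the number of children of $\troot{\tree{t}}$, which is a multiple of $d-1$ by hypothesis. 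Then the $\lequiv{m}$-class of $\str{\tree{z}_i}$ is $F^m_\sigma(\bar\delta^{(i)}\,\bar\gamma)$, which by the prefix property is a fixed function of $F^m_\sigma(\bar\delta^{(i)})$ and $\bar\gamma$; and $F^m_\sigma(\bar\delta^{(i)})$ is exactly the $\lequiv{m}$-class of $\str{\tree{s}_i}$, by the composition property at $\troot{\tree{s}_i}$ together with $(\tree{s}_i)_{\ge \troot{\tree{s}_i}} = \tree{s}_i$. Since by hypothesis $\str{\tree{s}_1} \lequiv{m} \str{\tree{s}_2}$, these classes coincide; as $\bar\gamma$ is common to both, we conclude $\str{\tree{z}_1} \lequiv{m} \str{\tree{z}_2}$.

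The step I expect to be the main obstacle is the prefix property, which is precisely where the divisibility hypothesis ``the number of children of the root of $\tree{t}$ is a multiple of $\rho(\sigma)-1$'' is used: without it, the residue $r$ for $\bar\delta^{(i)}\,\bar\gamma$ could differ from that for $\bar\delta^{(i)}$, the index $n'$ would no longer land on a block boundary of the combined fold, and the accumulator after processing $\bar\delta^{(i)}$ would not in general equal $F^m_\sigma(\bar\delta^{(i)})$, so the factorisation would break. Everything else is bookkeeping that parallels the proof of Corollary~\ref{corollary:tree-composition}(\ref{corollary:tree-composition:2}) from the concrete setting of Section~\ref{section:partially-ranked-trees}, carried out now through $\mathsf{Str}$ and the abstract composition functions instead of through Lemma~\ref{lemma:mso-composition-lemma-for-ordered-trees}.
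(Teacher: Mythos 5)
Your proof is correct and takes exactly the route the paper intends: the paper skips this proof, citing the composition property of Definition~\ref{definition:L-good-map} and the ideas of Corollary~\ref{corollary:tree-composition}, and your ``fold'' reformulation of that property together with the prefix property is precisely the omitted verification. In particular, you correctly locate the role of the hypothesis that the number of children of $\troot{\tree{t}}$ is a multiple of $\rho(\sigma)-1$: it keeps the first-block length $r$ unchanged, so the accumulator after processing the children of $\troot{\tree{s}_i}$ is exactly the $\lequiv{m}$ class of $\str{\tree{s}_i}$, from which the conclusion follows.
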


\begin{proof}[Proof of Lemma~\ref{lemma:abstract-tree-lemma}]

  (Part~\ref{lemma:abstract-tree-lemma-degree-reduction}): Let $m_0
  \in \mathbb{N}$ be a witness to the composition property of
  $\mathsf{Str}$, as mentioned in
  Definition~\ref{definition:L-good-map}.  Define $\eta_1:\mathbb{N}
  \rightarrow \mathbb{N}$ as follows: for $l \in \mathbb{N}$,
  $\eta_1(l) = \text{max}\{\rho(\sigma) \mid \sigma \in \sigmaint \}
  \times \Lambda_{\cl{S}, \mc{L}}(\text{max}\{l, m_0\})$.  Then
  $\eta_1$ is computable.

  Given $m \in \mathbb{N}$, let $p = \eta_1(m)$.  If $\tree{t}$ has
  degree $\leq p$, then putting $\tree{s}_1 = \tree{t}$ we are
  done. Else, some node of $\tree{t}$, say $a$, has degree $n >
  p$. Let $\sigma$ be the label of $a$; clearly $\sigma \in \sigmaint
  \setminus \sigmarank$. Let $\tree{z} = \tree{t}_{\ge a}$; then
  $\tree{z} \in \mc{T}$. Let $a_1, \ldots, a_n$ be the (ascending)
  sequence of children of $\troot{\tree{z}}$ in $\tree{z}$. For $d =
  \rho(\sigma)$, let $n = r + q \cdot (d - 1)$ for $1 \leq r < d$ and
  $q > 1$.

  For $k \in I = \{r + l \cdot (d - 1) \mid 0 \leq l \leq q\}$, let
  $\tree{x}_{1, k}$, resp. $\tree{y}_{k+1, n}$, be the subtree of
  $\tree{z}$ obtained from $\tree{z}$ by deleting the subtrees rooted
  at $a_{k+1}, \ldots, a_n$, resp.  deleting the subtrees rooted at
  $a_1, a_2, \ldots, a_k$. Then $\tree{z} = \tree{x}_{1, n}
  = \tree{x}_{1, k} \odot \tree{y}_{k+1, n}$ for all $k \in I$.  Let
  $m_1 = \text{max}\{m_0, m\}$. Define $g:
  I \rightarrow \ldelta{\cl{S}}{m_1}$ such that $g(k)$ is the
  $\lequiv{m_1}$ class of $\str{\tree{x}_{1, k}}$ for $k \in I$.

  Since $n > p$, there exist $i, j \in I$ such that $i < j$ and $g(i)
  = g(j)$, i.e. $\str{\tree{x}_{1, i}} \lequiv{m_1} \str{\tree{x}_{1,
      j}}$.  If $\tree{z}_1 = \tree{x}_{1, i} \odot \tree{y}_{j+1,
    n}$, then since $\mc{T}$ is closed under removal of rooted
  subtrees respecting $\sigmarank$, we have $\tree{z}_1 \in
  \mc{T}$. Observe that $\tree{z} = \tree{x}_{1, j} \odot
  \tree{y}_{j+1, n}$. Then by
  Lemma~\ref{lemma:helper-degree-reduction} and the monotonicity
  properties of $\mathsf{Str}$ as mentioned in
  Definition~\ref{definition:L-good-map}, we have $\str{\tree{z}_1}
  \hookrightarrow \str{\tree{z}}$ and $\str{\tree{z}_1} \lequiv{m}
  \str{\tree{z}}$. Then by Lemma~\ref{lemma:helper-height-reduction}
  and the monotonicity properties of $\mathsf{Str}$, we see that if
  $\tree{t}_1 = \tree{t} \left[\tree{z} \mapsto \tree{z}_1 \right]$,
  then $\tree{t}_1 \in \mc{T}$, $\str{\tree{t}_1} \hookrightarrow
  \str{\tree{t}}$ and $\str{\tree{t}_1} \lequiv{m_1} \str{\tree{t}}$.
  Observe that $\tree{t}_1$ has strictly lesser size than $\tree{t}$
  (since $\tree{z}_1$ has strictly lesser size than $\tree{z}$), and
  that the roots of $\tree{t}_1$ and $\tree{t}$ are the
  same. Recursing on $\tree{t}_1$, we eventually get a subtree
  $\tree{s}_1$ of $\tree{t}$ in $\mc{T}$, of degree at most $p$, such
  that (i) the roots of $\tree{s}_1$ and $\tree{t}$ are the same, (ii)
  $\str{\tree{s}_1} \hookrightarrow \str{\tree{t}}$, and (iii)
  $\str{\tree{s}_1} \lequiv{m_1} \str{\tree{t}}$. Since $m_1 =
  \text{max}\{m_0, m\} \ge m$, we have $\str{\tree{s}_1} \lequiv{m}
  \str{\tree{t}}$.

  \vspace{4pt}(Part~\ref{lemma:abstract-tree-lemma-height-reduction}):
  As in the previous part, let $m_0 \in \mathbb{N}$ be a witness to
  the composition property of $\mathsf{Str}$, as mentioned in
  Definition~\ref{definition:L-good-map}.  Define
  $\eta_2:\mathbb{N} \rightarrow \mathbb{N}$ as follows: for
  $l \in \mathbb{N}$, $\eta_2(l) = 1
  + \Lambda_{\cl{S}, \mc{L}}(\text{max}\{l, m_0\})$.  Then $\eta_2$ is
  computable.

  Given $m \in \mathbb{N}$, let $p = \eta_2(m)$.  If $\tree{t}$ has
  height $\leq p$, then putting $\tree{s}_2 = \tree{t}$ we are
  done. Else, there is a path from the root of $\tree{t}$ to some leaf
  of $\tree{t}$, whose length is $> p$. Let $A$ be the set of nodes
  appearing along this path. Let $m_2 = \text{max}(m_0, m)$.  Consider
  the function $h : A \rightarrow \ldelta{\cl{S}}{m_2}$ such that for
  each $a \in A$, $h(a) = \delta$ where $\delta$ is the $\lequiv{m_2}$
  class of $\str{\tree{t}_{\ge a}}$. Since $|A| > p$, there exist
  distinct nodes $a, b \in A$ such that $a$ is an ancestor of $b$ in
  $\tree{t}$ and $h(a) = h(b)$ and $a$ is not the root of
  $\tree{t}$. Let $\tree{t}_2 = \tree{t}\left[\tree{t}_{\ge a} \mapsto
    \tree{t}_{\ge b} \right]$.  Since $\mc{T}$ is closed under rooted
  subtrees and under replacements with rooted subtrees, we have that
  $\tree{t}_2$ is a subtree of $\tree{t}$ in $\mc{T}$.  By the
  monotonicity properties mentioned in
  Definition~\ref{definition:L-good-map} that $\mathsf{Str}$
  satisfies, $\str{\tree{t}_2} \hookrightarrow \str{\tree{t}}$. Also
  since $h(a) = h(b)$, we have $\str{\tree{t}_{\ge b}} \lequiv{m_2}
  \str{\tree{t}_{\ge a}}$, whereby using
  Lemma~\ref{lemma:helper-height-reduction}, we get that
  $\str{\tree{t}_2} \lequiv{m_2} \str{\tree{t}}$. Observe that
  $\tree{t}_2$ has strictly less size than $\tree{t}$, and that the
  roots of $\tree{t}_2$ and $\tree{t}$ are the same.  Recursing on
  $\tree{t}_2$, we eventually get a subtree $\tree{s}_2$ of
  $\tree{t}$, of height at most $p$, such that (i) the roots of
  $\tree{s}_2$ and $\tree{t}$ are the same, (ii) $\str{\tree{s}_2}
  \hookrightarrow \str{\tree{t}}$, and (iii) $\str{\tree{s}_2}
  \lequiv{m_2} \str{\tree{t}}$. Since $m_2 = \text{max}\{m_0, m\} \ge
  m$, we have $\str{\tree{s}_2} \lequiv{m} \str{\tree{t}}$.\\

  \vspace{3pt}It is clear from the definitions of $\eta_1$ and
  $\eta_2$ above, that if the index of the $\lequiv{m}$ relation over
  $\cl{S}$ is an elementary function of $m$, then so are $\eta_1$ and
  $\eta_2$.
\end{proof}

\subsection{Proof of Lemma~\ref{lemma:core-lemma-for-fpt-over-trees-abstract}}

We now give the proofs of
Lemma~\ref{lemma:core-lemma-for-fpt-over-trees-abstract}(\ref{lemma:core-lemma-for-fpt-enum})
and
Lemma~\ref{lemma:core-lemma-for-fpt-over-trees-abstract}(\ref{lemma:core-lemma-for-fpt-reduce})
in Section~\ref{subsubsection:generate-functions} and
Section~\ref{subsubsection:computational-reductions} respectively.

\subsubsection{Proof of  Lemma~\ref{lemma:core-lemma-for-fpt-over-trees-abstract}(\ref{lemma:core-lemma-for-fpt-enum})}\label{subsubsection:generate-functions}

Before we present the proof, we need some auxiliary lemmas that we
describe below.

Let $\mathsf{All}$ denote the class of all finite structures.
\begin{lemma}[Enumerability of the equivalence classes of  $\ldelta{\mathsf{All}}{m}$]\label{lemma:equivalence-class-enumeration}
There exists a computable function
$h: \mathbb{N} \rightarrow \mathbb{N}$ and a procedure $\mc{P}$ such
that $\mc{P}$ takes as input a natural number $m$ and enumerates
$\mc{L}[m]$ sentences $\varphi_1, \varphi_2, \ldots, \varphi_n$ for $n
= h(m)$ with the property that $\varphi_i$ captures some equivalence
class $\delta$ of $\ldelta{\mathsf{All}}{m}$ (i.e. the class of finite
models of $\varphi_i$ is exactly $\delta$) for each
$i \in \{1, \ldots, n\}$ and conversely, for every equivalence class
$\delta$ of $\ldelta{\mathsf{All}}{m}$, there exists some $i \in \{1,
\ldots, n\}$ such that $\varphi_i$ captures $\delta$ .
\end{lemma}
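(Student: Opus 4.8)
The plan is to invoke the classical machinery of \emph{Hintikka formulas} (equivalently, rank-$m$ \emph{types}), organised so that both the bound $h$ and the procedure $\mc{P}$ are manifestly effective; the construction itself is standard (see~\cite{libkin}), so most of the work is bookkeeping on an inductive definition.

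First I would fix the finite relational vocabulary $\tau$ and define, by induction on $m$ and for every $k \in \mathbb{N}$, a finite set $\mathsf{Hin}(m,k)$ of \emph{$(m,k)$-Hintikka formulas} with free variables among $\bar x = (x_1, \ldots, x_k)$. For $\mc{L} = \fo$: $\mathsf{Hin}(0,k)$ is the (finite, brute-force listable) set of maximal satisfiable conjunctions of atomic and negated atomic $\tau$-formulas over $\bar x$; and for each nonempty $\Phi \subseteq \mathsf{Hin}(m, k+1)$ one places into $\mathsf{Hin}(m+1,k)$ the formula
\[
  \bigwedge_{\psi \in \Phi} \exists x_{k+1}\, \psi \;\wedge\; \forall x_{k+1} \bigvee_{\psi \in \Phi} \psi ,
\]
which keeps $\mathsf{Hin}(m+1,k)$ finite and computable given that $\mathsf{Hin}(m, k+1)$ is. For $\mc{L} = \mso$ I would run the same recursion, additionally branching on whether the quantifier peeled off at each step is first order, as above, or a monadic quantifier $\exists X_{k+1}$ (treated by expanding $\tau$ with a fresh unary symbol and recursing); since the paper's rank counts first- and second-order quantifiers alike, each $\mathsf{Hin}(m,k)$ remains finite and computable, and each of its formulas has rank exactly $m$. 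Setting $h(m) = |\mathsf{Hin}(m,0)|$ and letting $\mc{P}(m)$ print $\mathsf{Hin}(m,0) = \{\varphi_1, \ldots, \varphi_{h(m)}\}$ then gives a computable $h$ and a halting $\mc{P}$ that outputs $\mc{L}[m]$ sentences.

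Next I would establish, by routine induction on $m$ using the $\lef$ game characterisation of $\lequiv{m}$, the two semantic facts that pin down this list: (a) every $\tau$-structure $\mf{A}$ together with a $k$-tuple $\bar a$ of its elements satisfies exactly one member of $\mathsf{Hin}(m,k)$, its \emph{rank-$m$ type}; and (b) $(\mf{A},\bar a) \lequiv{m} (\mf{B},\bar b)$ iff they have the same rank-$m$ type --- the ``only if'' direction of (b) being immediate and the ``if'' direction following from the classical normal-form fact that every $\mc{L}(\tau)$ formula of rank $\le m$ with free variables among $\bar x$ is logically equivalent to a disjunction of $(m,k)$-Hintikka formulas, together with the pairwise mutual exclusivity of distinct Hintikka formulas. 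Specialising (a) and (b) to $k = 0$: for any class $\delta \in \ldelta{\mathsf{All}}{m}$, pick a (finite) $\mf{A} \in \delta$; its rank-$m$ type $\varphi_i$ has $\delta$ as exactly its set of finite models, which gives the ``conversely'' half of the lemma; and symmetrically, whenever a listed $\varphi_i$ has a finite model, its set of finite models is precisely one class of $\ldelta{\mathsf{All}}{m}$, which gives the forward half for every finitely satisfiable $\varphi_i$.

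The only genuinely delicate point --- and the place where I expect the statement needs care --- is that bounded quantifier rank does not force a rank-$m$ type to be realisable in the finite (e.g.\ the rank-$3$ type asserting that a binary relation is a total, injective function with a source is realised only by infinite structures), and, by Trakhtenbrot's theorem, the finitely satisfiable members of $\mathsf{Hin}(m,0)$ cannot be sieved out by any algorithm uniform in $m$. Consequently the conclusion is best read as: $\mc{P}$ emits all $h(m)$ rank-$m$ Hintikka sentences, each of which \emph{either} captures a class of $\ldelta{\mathsf{All}}{m}$ \emph{or} has no finite model, and every class of $\ldelta{\mathsf{All}}{m}$ is captured by one of them. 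This is exactly the form in which the lemma is used downstream: in $\enumcompfun$ (and its abstract analogue) the list produced by $\mc{P}$ is afterwards filtered by satisfiability \emph{over the class $\cl{S}$ under consideration}, which is decidable precisely because $\lebsp{\cl{S}}$ holds with a computable witness function, i.e.\ has the bounded-model property. Everything else is unwinding the inductive definitions of the previous paragraphs.
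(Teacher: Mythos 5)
Your proposal is correct and follows essentially the same route as the paper: the paper's proof is precisely an appeal to the rank-$m$ type (Hintikka formula) machinery of Libkin's Lemma 3.13 and Proposition 7.5, which you reconstruct explicitly, including the extension to $\mso$ by counting second-order quantifiers in the rank. Your ``delicate point'' is also well taken: since finite satisfiability is undecidable, no procedure uniform in $m$ can discard the rank-$m$ type sentences having no finite models, so the lemma must indeed be read in the weakened form you state --- every class of $\ldelta{\mathsf{All}}{m}$ is captured by some enumerated $\varphi_i$, and each $\varphi_i$ either captures a class or has no finite model --- and this is exactly the form the paper uses, since in $\enumcompfun(m)$ the enumerated sentences are subsequently filtered by satisfiability over $\cl{S}$, which is decidable because $\lebsp{\cl{S}}$ holds with a computable witness function.
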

\begin{proof}
Follows from the inductive definition of $\mc{L}[m]$, and the proofs
of Lemma 3.13 and Proposition 7.5 in~\cite{libkin}.
\end{proof}

Let $\mc{L}$-$\mathsf{SAT}$ denote the problem of checking if a given
$\mc{L}$ sentence is satisfiable.

\begin{lemma}\label{lemma:decidability-of-L-SAT}
If $\lebsp{\cl{S}}$ is true with a computable witness function, then
 $\mc{L}$-$\mathsf{SAT}$ is decidable over $\cl{S}$.
\end{lemma}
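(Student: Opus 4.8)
The plan is to reduce satisfiability over $\cl{S}$ to a finite, effective search, with the computable witness function supplying the bound. Write $\theta = \lwitfn{\cl{S}}$ for the computable witness function of $\lebsp{\cl{S}}$. Given an input $\mc{L}$ sentence $\varphi$, first compute its rank $m = \rank{\varphi}$ and then the number $N = \theta(m)$. The key observation is that $\varphi$ has a model in $\cl{S}$ if and only if it has a model in $\cl{S}$ of size at most $N$. The nontrivial direction is left-to-right: if $\mf{A} \in \cl{S}$ and $\mf{A} \models \varphi$, then by Definition~\ref{definition:lebsp} there is $\mf{B}$ with $\mf{B} \in \cl{S}$, $\mf{B} \subseteq \mf{A}$, $|\mf{B}| \le \theta(m) = N$, and $\mf{B} \lequiv{m} \mf{A}$; since $\varphi$ has rank at most $m$, $\mf{B} \lequiv{m} \mf{A}$ yields $\mf{B} \models \varphi$. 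The converse is immediate.

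Next I would make the bounded search effective. Since $\theta$ is computable, $N$ is computable from $\varphi$. Over the finite vocabulary $\tau$ there are, up to isomorphism, only finitely many finite $\tau$-structures of size at most $N$, and this finite list can be generated effectively (enumerate all interpretations of each relation symbol on each universe $\{1,\dots,n\}$ with $n \le N$). For each structure $\mf{C}$ in the list, decide whether $\mf{C} \in \cl{S}$ and, if so, whether $\mf{C} \models \varphi$; the latter is decidable because model checking a fixed $\mc{L}$ sentence against an explicitly given finite structure is decidable. The algorithm reports ``$\varphi$ satisfiable over $\cl{S}$'' precisely when some $\mf{C}$ in the list passes both tests, and correctness is exactly the equivalence of the previous paragraph.

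The one point that needs care — and the main obstacle — is the membership test ``$\mf{C} \in \cl{S}$'': $\lebsp{\cl{S}}$ together with computability of $\theta$ does not on its own make $\cl{S}$ recursive, so decidability of membership must be available in the setting. In every place where this lemma is used it is: for partially ranked trees membership is plainly decidable, and in the situation of Lemma~\ref{lemma:core-lemma-for-fpt-over-trees-abstract} the representation map $\mathsf{Str}$ is effective, so $\mc{T}$ is recursive and $\str{\tree{t}}$ is computable. In the latter case I would phrase the search over the recursive presentation rather than over structures directly: effectively list the (finitely many) trees $\tree{t} \in \mc{T}$ whose image $\str{\tree{t}}$ has size at most $N$, compute each $\str{\tree{t}}$, and model-check $\varphi$ against it. By surjectivity of $\mathsf{Str}$ this still examines, up to isomorphism, every structure of $\cl{S}$ of size at most $N$, and it sidesteps deciding membership in $\cl{S}$ directly. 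Either way the search is finite and effective, which gives decidability of $\mc{L}$-$\mathsf{SAT}$ over $\cl{S}$.
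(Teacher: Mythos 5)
Your core argument is the same as the paper's: the paper's proof of Lemma~\ref{lemma:decidability-of-L-SAT} is precisely the two-line observation that a computable witness function gives $\mc{L}$ the ``computable small model property'' over $\cl{S}$ -- a rank-$m$ sentence is satisfiable over $\cl{S}$ iff it has a model in $\cl{S}$ of size at most $\lwitfn{\cl{S}}(m)$ -- after which decidability ``follows'' by a bounded search, exactly as in your first two paragraphs.

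The caveat you raise in your last paragraph is genuine, and the paper leaves it implicit: the bounded search must decide, for each candidate structure of size at most $N$, whether it lies in $\cl{S}$, and $\lebsp{\cl{S}}$ with a computable witness does not make $\cl{S}$ recursive (e.g., take all structures over one unary predicate holding of every element, whose size is even or lies in a fixed non-recursive set of odd numbers: $\reffebsp$ holds with witness $m \mapsto m+2$, yet $\fo$-$\mathsf{SAT}$ over this class is undecidable). So the lemma is really only applied where an effective presentation of $\cl{S}$ is available, as you say. One detail of your proposed repair does not work as stated, however: you cannot in general ``effectively list the (finitely many) trees $\tree{t} \in \mc{T}$ whose image $\str{\tree{t}}$ has size at most $N$'', since nothing in representation-feasibility prevents infinitely many trees (for instance, with long chains of internal nodes having a single child) from mapping to structures of size at most $N$, and identifying them would require evaluating $\mathsf{Str}$ on unboundedly many trees. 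The object to bound is the tree itself: by Lemma~\ref{lemma:abstract-tree-lemma}, every structure of $\cl{S}$ is $\lequiv{m}$-equivalent to $\str{\tree{s}}$ for some $\tree{s} \in \mc{T}$ of degree at most $\eta_1(m)$ and height at most $\eta_2(m)$, hence of size at most $\eta_1(m)^{(\eta_2(m)+1)}$ -- this is exactly how the witness function is defined in the proof of Theorem~\ref{theorem:good-tree-rep-implies-lebsp-and-f.p.t.-algorithm}(1) -- so one enumerates the finitely many trees of $\mc{T}$ up to that size ($\mc{T}$ being recursive over a finite alphabet), computes their images under the effective map $\mathsf{Str}$, and model-checks $\varphi$ on each. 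With that adjustment your argument is the effective search the paper intends, and it makes explicit a hypothesis the paper's statement of the lemma glosses over.
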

\begin{proof}
Since for any structure in $\cl{S}$ and $m \in \mathbb{N}$, there is
an $\mc{L}[m]$-equivalent substructure of size bounded by a
computable function of $m$, it follows that $\mc{L}$ possesses the
``computable'' small model property over $\cl{S}$. The decidability of
$\mc{L}$-$\mathsf{SAT}$ over $\cl{S}$ then follows.
\end{proof}

Let as usual, $m_0$ witness the composition properties of
$\mathsf{Str}$ as mentioned in Definition~\ref{definition:L-good-map}.

\begin{lemma}\label{lemma:small-tree-given-input-equivalence-classes}
There exists a computable function $\eta: \mathbb{N} \rightarrow
\mathbb{N}$ with the following property: Let $\tree{t} \in \mc{T}$ of
size $\ge 2$ and $a_1, \ldots, a_n$ be the children of
$\troot{\tree{t}}$. For each $m \ge m_0$, there exists a subtree
$\tree{s}$ of $\tree{t}$ in $\mc{T}$ such that
\begin{enumerate}[nosep]
\item the roots of $\tree{s}$ and $\tree{t}$ are the same
\item the size of $\tree{s}$ is at most $\eta(m)$
\item
  \begin{enumerate}
    \item If $\sigma \in \sigmarank$ (whereby $n = \rho(\sigma)$) or
      $n < \rho(\sigma)$, then the root of $\tree{s}$ has exactly $n$
      children $b_1, \ldots, b_n$ satisfying $\str{\tree{s}_{\ge b_i}}
      \lequiv{m} \str{\tree{t}_{\ge a_i}}$ for each $i \in \{1,
      \ldots, n\}$.
    \item Else, $\tree{s} = \tree{x} \odot \tree{y}$ where
      \begin{itemize}
      \item $\tree{x}$ is such that $\str{\tree{x}} \lequiv{m}
        \str{\tree{z}}$ and $\tree{z}$ is the tree obtained from
        $\tree{t}$ by removing the subtrees rooted at $a_{n-d+2},
        \ldots, a_n$.
      \item $\tree{y}$ is such that the root of $\tree{y}$ has exactly
        $d -1$ children $b_{n - d + 2}, \ldots, b_n$ for $d =
        \rho(\sigma)$, satisfying $\str{\tree{s}_{\ge b_i}} \lequiv{m}
        \str{\tree{t}_{\ge a_i}}$ for each $i \in \{n-d+2, \ldots,
        n\}$.
      \end{itemize}
  \end{enumerate}
\end{enumerate}
\end{lemma}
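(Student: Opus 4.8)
The plan is to shrink each subtree hanging below $\troot{\tree{t}}$ to a small $\lequiv{m}$-equivalent one using Lemma~\ref{lemma:abstract-tree-lemma}, and then to reassemble these small pieces into a tree of the shape demanded by condition~3. First I record the reduction tool. Combining the two parts of Lemma~\ref{lemma:abstract-tree-lemma} (apply degree reduction first, then height reduction, the latter replacing a rooted subtree by one of its own rooted subtrees and hence not increasing the degree) yields a computable function $q(m) = \eta_1(m)^{\eta_2(m)+1}$ such that for $m \ge m_0$, every $\tree{v} \in \mc{T}$ has a subtree $\tree{v}' \in \mc{T}$ with $\troot{\tree{v}'} = \troot{\tree{v}}$, $|\tree{v}'| \le q(m)$, and $\str{\tree{v}'} \lequiv{m} \str{\tree{v}}$. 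Moreover, an inspection of the proof of Lemma~\ref{lemma:abstract-tree-lemma} shows that the passage from $\tree{v}$ to $\tree{v}'$ is effected by a finite sequence of (a) removals of a rooted subtree hanging at a node whose label lies outside $\sigmarank$ and (b) replacements of a rooted subtree $\tree{v}_{\ge a}$ by a descendant rooted subtree $\tree{v}_{\ge b}$. The key consequence is: if $\tree{w} \in \mc{T}$ and $a$ is a node of $\tree{w}$ with $\tree{w}_{\ge a} = \tree{v}$, then performing that same sequence of operations inside $\tree{w}$ stays in $\mc{T}$ at every step (by the closure conditions in the definition of representation-feasibility), so $\tree{w}\left[\tree{w}_{\ge a} \mapsto \tree{v}'\right] \in \mc{T}$. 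Set $N = \max\{\rho(\sigma) \mid \sigma \in \sigmaint\}$ and $\eta(m) = 1 + N\cdot q(m)$, a computable function.

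Now fix $\tree{t} = (O,\lambda) \in \mc{T}$ of size $\ge 2$, with $\sigma = \lambda(\troot{\tree{t}})$, children $a_1,\dots,a_n$ of its root, and $m \ge m_0$. In \emph{case~(a)} ($\sigma \in \sigmarank$, so $n = \rho(\sigma)$, or $n < \rho(\sigma)$) we have $n \le N$. For each $i$, $\tree{t}_{\ge a_i} \in \mc{T}$ by closure under rooted subtrees, so the tool gives $\tree{u}_i' \in \mc{T}$ with the same root, $|\tree{u}_i'| \le q(m)$, and $\str{\tree{u}_i'} \lequiv{m} \str{\tree{t}_{\ge a_i}}$. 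Put $\tree{s} = \tree{t}\left[\tree{t}_{\ge a_1}\mapsto\tree{u}_1'\right]\cdots\left[\tree{t}_{\ge a_n}\mapsto\tree{u}_n'\right]$; applying the key consequence successively (each replacement leaves the remaining $\tree{t}_{\ge a_j}$ untouched) gives $\tree{s}\in\mc{T}$, a subtree of $\tree{t}$ with $\troot{\tree{s}} = \troot{\tree{t}}$, whose root has exactly $n$ children $b_1,\dots,b_n$ with $\tree{s}_{\ge b_i} = \tree{u}_i'$, so $\str{\tree{s}_{\ge b_i}} \lequiv{m} \str{\tree{t}_{\ge a_i}}$; and $|\tree{s}| = 1 + \sum_i |\tree{u}_i'| \le 1 + Nq(m) = \eta(m)$. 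This establishes item~3(a).

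In \emph{case~(b)} ($\sigma \in \sigmaint\setminus\sigmarank$ and $n \ge \rho(\sigma) =: d$), since $\sigma \notin \sigmarank$, iterating closure under removal of rooted subtrees respecting $\sigmarank$ shows that the tree $\tree{z}$ obtained from $\tree{t}$ by deleting $\tree{t}_{\ge a_{n-d+2}},\dots,\tree{t}_{\ge a_n}$ and the tree $\tree{t}'$ obtained from $\tree{t}$ by deleting $\tree{t}_{\ge a_1},\dots,\tree{t}_{\ge a_{n-d+1}}$ both lie in $\mc{T}$; both have root labelled $\sigma$, and $\tree{t} = \tree{z}\odot\tree{t}'$. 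The tool applied to $\tree{z}$ gives $\tree{x}\in\mc{T}$ with root labelled $\sigma$, $|\tree{x}|\le q(m)$, and $\str{\tree{x}}\lequiv{m}\str{\tree{z}}$; and for $i\in\{n-d+2,\dots,n\}$ reduce $\tree{t}_{\ge a_i}$ to $\tree{u}_i'$ as in case~(a) and set $\tree{y} = \tree{t}'\left[\tree{t}_{\ge a_{n-d+2}}\mapsto\tree{u}_{n-d+2}'\right]\cdots\left[\tree{t}_{\ge a_n}\mapsto\tree{u}_n'\right]\in\mc{T}$, whose root carries $\sigma$ and has exactly $d-1$ children $b_{n-d+2},\dots,b_n$ with $\tree{y}_{\ge b_i} = \tree{u}_i'$. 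Put $\tree{s} = \tree{x}\odot\tree{y}$, which is defined since both roots carry $\sigma$. To see $\tree{s}\in\mc{T}$ is a subtree of $\tree{t}$, start from $\tree{t} = \tree{z}\odot\tree{t}'$ and replay, inside the ``$\tree{z}$ part'', the operations turning $\tree{z}$ into $\tree{x}$ — these touch only $\troot{\tree{t}}$ and descendants of $a_1,\dots,a_{n-d+1}$, hence leave $a_{n-d+2},\dots,a_n$ and their subtrees intact — and then replace each $\tree{t}_{\ge a_i}$ ($i\ge n-d+2$) by $\tree{u}_i'$; every step stays in $\mc{T}$ and the outcome is exactly $\tree{x}\odot\tree{y}$. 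Then $\troot{\tree{s}} = \troot{\tree{t}}$, $\tree{s}_{\ge b_i} = \tree{u}_i'$ gives $\str{\tree{s}_{\ge b_i}}\lequiv{m}\str{\tree{t}_{\ge a_i}}$, and, using $|\tree{y}| \le 1 + (d-1)q(m)$, $|\tree{s}| = |\tree{x}| + |\tree{y}| - 1 \le dq(m) \le Nq(m) \le \eta(m)$. This establishes item~3(b).

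The routine part is the size bookkeeping and the repeated appeal to the closure conditions to keep intermediate trees in $\mc{T}$. The one genuinely delicate step is case~(b): the target tree $\tree{s} = \tree{x}\odot\tree{y}$ is assembled with the merge operation $\odot$, which is \emph{not} among the primitive closure operations guaranteed for representation-feasible classes; the way around this is to write $\tree{t}$ itself in merged form $\tree{z}\odot\tree{t}'$ and realize $\tree{s}$ as the result of primitive operations (rooted-subtree removals and replacements) carried out within $\tree{t}$, so that closure of $\mc{T}$ under those operations applies at each step. A related point needing care is that the reduction of $\tree{z}$ must not disturb the last $d-1$ child subtrees of $\troot{\tree{t}}$, which is ensured because every operation in Lemma~\ref{lemma:abstract-tree-lemma}'s reduction acts only on the root and on descendants of the nodes being reduced.
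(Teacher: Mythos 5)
Your proposal is correct and follows essentially the same route as the paper: shrink the relevant pieces via the degree- and height-reduction of Lemma~\ref{lemma:abstract-tree-lemma}, note that these reductions are effected by rooted-subtree removals (respecting $\sigmarank$) and replacements so they can be replayed inside $\tree{t}$ while staying in $\mc{T}$, and in the unranked case decompose $\tree{t} = \tree{z} \odot \tree{t}'$ and assemble $\tree{s} = \tree{x} \odot \tree{y}$, with the same bound $\eta(m) = 1 + \big(\max_{\sigma}\rho(\sigma)\big)\cdot \eta_1(m)^{\eta_2(m)+1}$. Your explicit replay argument for why $\tree{x}\odot\tree{y}$ is a subtree of $\tree{t}$ in $\mc{T}$ just spells out what the paper leaves as ``we check that $\tree{s}$ is indeed as desired.''
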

\begin{proof}
Let $k = \text{max}\{\rho(\sigma) \mid \sigma \in \sigmaint\}$.
Define $\eta(m) = 1 + k \times (\eta_1(m))^{\eta_2(m)+1}$ where
$\eta_1, \eta_2$ are as given by
Lemma~\ref{lemma:abstract-tree-lemma}.
   
Consider the case when $\sigma \in \sigmarank$ (whereby $n =
\rho(\sigma)$) or $n < \rho(\sigma)$. Consider the subtrees
$\tree{x}_i = \tree{t}_{\ge a_i}$ for $i \in \{1, \ldots, n\}$; each
of these belongs to $\mc{T}$ since $\mc{T}$ is
representation-feasible. By parts
(\ref{lemma:abstract-tree-lemma-degree-reduction}) and
(\ref{lemma:abstract-tree-lemma-height-reduction}) of
Lemma~\ref{lemma:abstract-tree-lemma}, it follows that for each $i \in
\{1, \ldots, n\}$, there exists a subtree $\tree{y}_i$ of
$\tree{x}_i$, of degree $\leq \eta_1(m)$ and height $\leq \eta_2(m)$,
and hence of size $\leq (\eta_1(m))^{\eta_2(m)+1} $, such that
$\str{\tree{y}_i} \lequiv{m} \str{\tree{x}_i}$. We observe from the
proofs of parts (\ref{lemma:abstract-tree-lemma-degree-reduction}) and
(\ref{lemma:abstract-tree-lemma-height-reduction}) of
Lemma~\ref{lemma:abstract-tree-lemma}, that $\tree{y}_i$ is obtained
from $\tree{x}_i$ by removal of rooted subtrees in $\tree{x}_i$
respecting $\sigmarank$, and by replacements with rooted subtrees in
$\tree{x}_i$. Whereby, since $\mc{T}$ is representation-feasible, we
tree $\tree{s} = \tree{t}[\tree{x}_1 \mapsto \tree{y}_1][\tree{x}_2
  \mapsto \tree{y}_2] \ldots [\tree{x}_n \mapsto \tree{y}_n]$ obtained
by replacing $\tree{x}_i$ in $\tree{t}$ with $\tree{y}_i$, is indeed a
subtree of $\tree{t}$ in $\mc{T}$, having the properties as mentioned
in the statement of this lemma. Observe that the size of $\tree{s}$ is
at most $1 + n \times (\eta_1(m))^{\eta_2(m)+1}$.

Consider now the case when $\sigma \in \sigmaint \setminus \sigmarank$
and $n \ge \rho(\sigma)$. Let $\tree{t} = \tree{z} \odot \tree{v}$
where $\tree{z}$, resp. $\tree{v}$, is the subtree of $\tree{t}$
obtained by deleting the subtrees rooted at $a_{n - d+ 2}, \ldots,
a_n$, resp. $a_1, \ldots, a_{n - d +1 }$. By using the reasoning
above, there exists a subtree $\tree{y}$ of $\tree{v}$ in $\mc{T}$
such that (i) the roots of $\tree{y}$ and $\tree{v}$ are the same (and
hence $\troot{\tree{y}}$ is labeled with $\sigma$) (ii) the size of
$\tree{y}$ is at most $1 + (d - 1) \times (\eta_1(m))^{\eta_2(m)+1}$
and (iii) the root of $\tree{y}$ has $d-1$ children $b_{n-d+2},
\ldots, b_n$ such that $\str{\tree{s}_{\ge b_i}} \lequiv{m}
\str{\tree{t}_{\ge a_i}}$ for each $i \in \{n-d+2, \ldots, n\}$. Now
consider $\tree{z}$. Again, by Lemma~\ref{lemma:abstract-tree-lemma},
it follows that there exists a subtree $\tree{x}$ of $\tree{z}$ in
$\mc{T}$, of size $\leq (\eta_1(m))^{\eta_2(m)+1} $, such that
$\str{\tree{x}} \lequiv{m} \str{\tree{z}}$ and the roots of $\tree{x}$
and $\tree{z}$ are the same (and hence the label of $\troot{\tree{x}}$
is $\sigma$). Let $\tree{s} = \tree{x} \odot \tree{y}$; then the size
of $\tree{s}$ is at most $(\eta_1(m))^{\eta_2(m)+1} + (d - 1) \times
(\eta_1(m))^{\eta_2(m)+1}$ which in turn is at most $\eta(m)$. We
check that $\tree{s}$ is indeed as desired.
\end{proof}

\begin{proof}[Proof of Lemma~\ref{lemma:core-lemma-for-fpt-over-trees-abstract}(\ref{lemma:core-lemma-for-fpt-enum})]
  
  The procedure $\enumcompfun(m)$ operates in three stages that we
  describe below.

  \vspace{2pt} \und{Stage I:} In this stage, $\enumcompfun(m)$ creates
  a list $\listofmclasses$ of $\mc{L}[m]$ sentences such that every
  sentence of $\listofmclasses$ captures over $\cl{S}$, some
  equivalence class of $\ldelta{\cl{S}}{m}$, and conversely, every
  equivalence classes of $\ldelta{\cl{S}}{m}$ is captured over
  $\cl{S}$, by some sentence of $\listofmclasses$.  This is done as
  follows.

  Let $\eta$ and $\mc{P}$ be as given by
  Lemma~\ref{lemma:equivalence-class-enumeration}.  For each
  $\mc{L}[m]$ sentence $\varphi_i$ for $i \in \{1, \ldots, \eta(m)\}$
  that $\mc{P}$ enumerates, $\enumcompfun(m)$ first checks if
  $\varphi_i$ is satisfiable over $\cl{S}$ (in other words, whether
  $\varphi_i$ indeed represents an equivalence class of
  $\ldelta{\cl{S}}{m}$). This is decidable because $\lebsp{\cl{S}}$
  holds with a computable witness function (using
  part \ref{theorem:good-tree-rep-implies-lebsp-and-f.p.t.-algorithm:point-1}
  and the first part of
  part \ref{theorem:good-tree-rep-implies-lebsp-and-f.p.t.-algorithm:point-2}
  of
  Theorem~\ref{theorem:good-tree-rep-implies-lebsp-and-f.p.t.-algorithm},
  and the assumption that $\mathsf{Str}$ is effective), whereby
  $\mc{L}$-$\mathsf{SAT}$ is decidable over $\cl{S}$ by
  Lemma~\ref{lemma:decidability-of-L-SAT}. If $\varphi_i$ is
  satisfiable, then $\varphi_i$ is put into $\listofmclasses$, else it
  is discarded.

  It follows that at the end of this process, $\listofmclasses$ gets
  created as desired.

  (Since $\listofmclasses$ is a list of sentences that represent
  equivalence classes, we shall henceforth treat $\listofmclasses$
  interchangeably as a list of sentences or a list of equivalence
  classes, depending on what is easier to understand in a given
  context.)

  \vspace{2pt}
  \und{Stage II}: In this stage, the following trees from $\mc{T}$ are
  generated by $\enumcompfun(m)$, if they exist in $\mc{T}$:
  \begin{enumerate}
    \item $\tree{s}_{\sigma, \delta_1, \ldots, \delta_{\rho(\sigma)}}$
      for $\sigma \in \sigmarank$ and $\delta_i \in \listofmclasses$
      for $i \in \{1, \ldots, \rho(\sigma)\}$
    \item $\tree{u}_{\sigma, \delta_1, \ldots, \delta_i}$ for $\sigma
      \in \sigmaint \setminus \sigmarank$, $\delta_j \in
      \listofmclasses$ for $j \in \{1, \ldots, i\}$ and $i \in \{1,
      \ldots, \rho(\sigma) - 1\}$
    \item $\tree{v}_{\sigma, \delta_1, \ldots, \delta_{\rho(\sigma)}}$
      for $\sigma \in \sigmaint \setminus \sigmarank$ and $\delta_i
      \in \listofmclasses$ for $i \in \{1, \ldots, \rho(\sigma)\}$
  \end{enumerate}
  with the following properties:
  \begin{enumerate}
    \item The tree $\tree{z} = \tree{s}_{\sigma, \delta_1, \ldots,
      \delta_{\rho(\sigma)}}$ satisfies the following: (i) the label
      of the root of $\tree{z}$ is $\sigma$, (ii) the root of
      $\tree{z}$ has exactly $\rho(\sigma)$ children $b_1, \ldots,
      b_{\rho(\sigma)}$, and (iii) $\lequiv{m}$ class of
      $\str{\tree{z}_{\ge b_i}}$ is $\delta_i$ for $i \in \{1, \ldots,
      \rho(\sigma)\}$.
    \item The tree $\tree{z} = \tree{u}_{\sigma, \delta_1, \ldots,
      \delta_i}$ satisfies the following: (i) the label of the root of
      $\tree{z}$ is $\sigma$, (ii) the root of $\tree{z}$ has exactly
      $i$ children $b_1, \ldots, b_i$, and (iii) $\lequiv{m}$ class of
      $\str{\tree{z}_{\ge b_j}}$ is $\delta_j$ for $j \in \{1, \ldots,
      i\}$.
    \item The tree $\tree{z} = \tree{v}_{\sigma, \delta_1, \ldots,
      \delta_{\rho(\sigma)}}$ satisfies the following: (i) the label
      of the root of $\tree{z}$ is $\sigma$, (ii) $\tree{z} = \tree{x}
      \odot \tree{y}$ where the root of $\tree{y}$ has exactly $d - 1$
      children $b_2, \ldots, b_{d}$ for $d = \rho(\sigma)$, and (iii)
      the $\lequiv{m}$ class of $\tree{x}$ is $\delta_1$, while the
      $\lequiv{m}$ class of $\str{\tree{y}_{\ge b_j}}$ is $\delta_j$
      for $j \in \{2, \ldots, d\}$.
  \end{enumerate}

  This is done as follows. We show this for the cases of
  $\tree{s}_{\sigma, \delta_1, \ldots, \delta_{\rho(\sigma)}}$ and
  $\tree{v}_{\sigma, \delta_1, \ldots, \delta_{\rho(\sigma)}}$; the
  case of $\tree{u}_{\sigma, \delta_1, \ldots, \delta_i}$ can be done
  similarly. First, using $\eta$ as given by
  Lemma~\ref{lemma:small-tree-given-input-equivalence-classes},
  $\enumcompfun(m)$ computes $p = \eta(m)$. Since the trees in
  $\mc{T}$ are over the finite alphabet $\sigmaint \cup \sigmaleaf$
  and since $\mc{T}$ is recursive, $\enumcompfun(m)$ enumerates out
  those trees in $\mc{T}$, whose roots are labeled with $\sigma$, and
  whose size is $\leq p$. For a tree $\tree{t}$ enumerated thus by
  $\enumcompfun(m)$, let $b_1, \ldots, b_n$ be the children of the
  root of $\tree{t}$.

  \begin{enumerate}
     \item the case of
     $\tree{s}_{\sigma, \delta_1, \ldots, \delta_{\rho(\sigma)}}$:
     Here $\sigma \in \sigmarank$. Then $\enumcompfun(m)$ checks if $n
     = \rho(\sigma)$. If not, then it discards $\tree{t}$. Else,
     $\enumcompfun(m)$ computes $\mf{A}_i = \str{\tree{t}_{\ge b_i}}$
     for $i \in \{1, \ldots, \rho(\sigma)\}$. Observe that since
     $\mc{T}$ is closed under rooted subtrees and since $\mathsf{Str}$
     is computable, $\mf{A}_i$ can be computed too. Finally,
     $\enumcompfun(m)$ checks if the $\lequiv{m}$ class of $\mf{A}_i$
     is $\delta_i$ -- this is done by checking if the formula
     $\varphi$ representing $\delta_i$ in $\listofmclasses$, is true
     in $\mf{A}_i$. (Checking if an $\mc{L}$ sentence is true in a
     finite structure is decidable.) If the tree $\tree{t}$ above
     passes this last check, then $\enumcompfun(m)$ stores $\tree{t}$
     as $\tree{s}_{\sigma, \delta_1, \ldots, \delta_{\rho(\sigma)}}$.
     If none of the trees enumerated by $\enumcompfun(m)$ pass the
     last check, then $\enumcompfun(m)$ stores $\mathsf{null}$ for
     $\sigma, \delta_1, \ldots, \delta_{\rho(\sigma)}$.

     \item the case of
     $\tree{v}_{\sigma, \delta_1, \ldots, \delta_{\rho(\sigma)}}$:
     Here $\sigma \in \sigmaint \setminus \sigmarank$. Let $d
     = \rho(\sigma)$ and let $\tree{t} = \tree{x} \odot \tree{y}$
     where $\tree{x}$, resp. $\tree{y}$, is the subtree of $\tree{t}$
     obtained by deleting the subtrees rooted at $b_{n - d +
     2}, \ldots, b_n$, resp. $b_1, \ldots, b_{n - d +1}$. Then
     $\enumcompfun(m)$ computes $\mf{A}_1 = \str{\tree{x}}$ and
     $\mf{A}_i = \str{\tree{t}_{\ge b_{n - d + i}}}$ for
     $i \in \{2, \ldots, d\}$.  Observe once again that $\mf{A}_i$ can
     be computed for each $i \in \{1, \ldots, d\}$. Finally,
     $\enumcompfun(m)$ checks if the $\lequiv{m}$ class of $\mf{A}_i$
     is $\delta_i$. If the tree $\tree{t}$ passes this last check,
     then $\enumcompfun(m)$ stores $\tree{t}$ as
     $\tree{v}_{\sigma, \delta_1, \ldots, \delta_{\rho(\sigma)}}$. Again
     if none of the trees enumerated by $\enumcompfun(m)$ pass the
     last check, then $\enumcompfun(m)$ stores $\mathsf{null}$ for
     $\sigma, \delta_1, \ldots, \delta_{\rho(\sigma)}$.  \end{enumerate}

   In the above cases, it is clear by
   Lemma~\ref{lemma:small-tree-given-input-equivalence-classes}, that
   if $\enumcompfun(m)$ stores $\mathsf{null}$ for $\sigma, \delta_1, \ldots,
   \delta_{\rho(\sigma)}$, then there is no tree in $\mc{T}$ that can
   be taken as $\tree{s}_{\sigma, \delta_1, \ldots,
     \delta_{\rho(\sigma)}}$, resp.  $\tree{v}_{\sigma, \delta_1,
     \ldots, \delta_{\rho(\sigma)}}$.

  \vspace{2pt} \und{Stage III}: In this stage, the trees identified in
  the previous stage are used to define functions $g_{\sigma, m}$ if
  $\sigma \in \sigmarank$ and $g_{\sigma, m, i}$ if
  $\sigma \in \sigmaint \setminus \sigmarank$, that satisfy the
  composition properties mentioned in
  Definition~\ref{definition:L-good-map}, whereby these resp. can
  indeed be considered as the functions $f_{\sigma, m}$ and
  $f_{\sigma, m,  i}$ as mentioned in
  Definition~\ref{definition:L-good-map}. We show how to define
  $g_{\sigma, m}$ for $\sigma \in \sigmarank$ using
  $\tree{s}_{\sigma, \delta_1, \ldots, \delta_{\rho(\sigma)}}$ (if
  identified); analogously, for
  $\sigma \in \sigmaint \setminus \sigmarank$, the function
  $g_{m, \sigma, \rho(\sigma)}$ is defined using
  $\tree{v}_{\sigma, \delta_1, \ldots, \delta_{\rho(\sigma)}}$ and
  function $g_{\sigma, m, i}$ is defined using
  $\tree{u}_{\sigma, \delta_1, \ldots, \delta_i}$ for
  $i \in \{1, \ldots, \rho(\sigma) - 1\}$.

  Let $\sigma \in \sigmarank$ and $\delta_1, \ldots,
  \delta_{\rho(\sigma)} \in \listofmclasses$.
  \begin{itemize}

   \item If no tree $\tree{z}$ of the form
      $\tree{s}_{\sigma, \delta_1, \ldots, \delta_{\rho(\sigma)}}$ is
      identified in the previous stage \linebreak
      (i.e. $\enumcompfun(m)$ stores $\mathsf{null}$ for
      $\sigma, \delta_1, \ldots, \delta_{\rho(\sigma)}$), then
      define\linebreak $g_{\sigma,
      m}(\delta_1, \ldots, \delta_{\rho(\sigma)})
      = \delta_{\text{default}}$ where $\delta_{\text{default}}$ is
      some fixed chosen element of $\listofmclasses$.

   \item Else, let
      $\tree{z}
      = \tree{s}_{\sigma, \delta_1, \ldots, \delta_{\rho(\sigma)}}$. Identify
      $\varphi \in \listofmclasses$ such that
      $\str{\tree{z}} \models \varphi$. Let $\delta$ be the
      equivalence class represented by $\varphi$. Then define
      $g_{ \sigma, m}(\delta_1, \ldots, \delta_{\rho(\sigma)})
      = \delta$.

  \end{itemize}
  
  Observe that since $\mathsf{Str}$ is assumed to be computable and
  since model checking an $\mc{L}$ sentence on a finite structure is
  decidable, $g_{\sigma, m}$ indeed gets generated after a finite
  amount of time.  Analogously, the functions $g_{\sigma, m, i}$ also
  get generated after a finite amount of time. It is easily seen from
  the above description of $\enumcompfun(m)$, that for some computable
  function $\eta_3: \mathbb{N} \rightarrow \mathbb{N}$, the total time
  taken by $\enumcompfun(m)$ is bounded by $\eta_3(m)$.

  \vspace{3pt} We finally show that $g_{\sigma, m}$ and $g_{\sigma, m, i}$
  constructed above indeed satisfy the composition properties of
  Definition~\ref{definition:L-good-map}.

  Let $\tree{t} = (O, \lambda) \in \mc{T}$ and $a$ be an internal node
  of $\tree{t}$ such that $\lambda(a) = \sigma$ and the children of
  $a$ in $\tree{t}$ are $b_1, \ldots, b_n$. Let $\delta_i$ be the
  $\lequiv{m}$ class of $\mathsf{Str}(\tree{t}_{\ge b_i})$ for
  $i \in \{1, \ldots, n\}$.

  \begin{enumerate}

  \item Suppose $\sigma \in \sigmarank$, whereby $n
  = \rho(\sigma)$. Then $\tree{t}_{\ge a} \in \mc{T}$ since $\mc{T}$
  is representation-feasible. Consider the tree
  $\tree{s}_{\sigma, \delta_1, \ldots, \delta_{\rho(\sigma)}}$ that
  then is guaranteed to be generated by $\enumcompfun(m)$ in Stage II
  because of
  Lemma~\ref{lemma:small-tree-given-input-equivalence-classes}. By the
  composition property as mentioned in
  Definition~\ref{definition:L-good-map}, it follows that
  $\str{\tree{t}_{\ge
  a}} \lequiv{m} \str{\tree{s}_{\sigma, \delta_1, \ldots, \delta_{\rho(\sigma)}}}$,
  i.e., the $\lequiv{m}$ classes of $\str{\tree{t}_{\ge a}}$ and
  $\str{\tree{s}_{\sigma, \delta_1, \ldots, \delta_{\rho(\sigma)}}}$
  are the same. Indeed, then the $\lequiv{m}$ class of
  $\str{\tree{t}_{\ge a}}$ is $g_{\sigma,
  m}(\delta_1, \ldots, \delta_n)$, because the $\lequiv{m}$ class of
  $\str{\tree{s}_{\sigma, \delta_1, \ldots, \delta_{\rho(\sigma)}}}$
  is $g_{\sigma, m}(\delta_1, \ldots, \delta_n)$ by construction.

  \item Suppose $\sigma \in \sigmaint \setminus \sigmarank$ and $n
  < \rho(\sigma)$. By similar reasoning as above, the tree
  $\tree{u}_{\sigma, \delta_1, \ldots, \delta_n}$ (that is guaranteed
  to be generated by $\enumcompfun(m)$) is such that
  $\str{\tree{t}_{\ge
  a}} \lequiv{m} \str{\tree{u}_{\sigma, \delta_1, \ldots, \delta_n}}$. Whereby,
  the $\lequiv{m}$ class of $\str{\tree{t}_{\ge a}}$ is indeed
  $g_{m, \sigma, n}(\delta_1, \ldots, \delta_n)$.

  \item Suppose $\sigma \in \sigmaint \setminus \sigmarank$ and
  $n \ge \rho(\sigma)$. Let $d = \rho(\sigma)$ and $n = r + q \cdot (d
  - 1)$ where $1 \leq r < d$ and $q > 0$. Consider the trees
  $\tree{z}_{1, k}$ obtained from $\tree{t}_{\ge a}$ by deleting the
  subtrees of $\tree{t}_{\ge a}$ rooted at $b_{k+1}, \ldots, b_n$, for
  $k \in I = \{r + j \cdot (d - 1) \mid 0 \leq j \leq q\}$ (whereby,
  $\tree{t}_{\ge a} = \tree{z}_{1, n}$).  Since $\mc{T}$ is
  representation-feasible, $\tree{z}_{1, k} \in \mc{T}$ for each
  $k$. Let $\chi_k$ be the $\lequiv{m}$ class of $\str{\tree{z}_{1,
  k}}$.  Using
  Lemma~\ref{lemma:small-tree-given-input-equivalence-classes}, it is
  guaranteed that in Stage II, $\enumcompfun(m)$ produces the trees
  $\tree{u}_{\sigma, \delta_1, \ldots, \delta_r}$ and
  $\tree{v}_{\sigma, \chi_k, \delta_{k+1}, \ldots, \delta_{k + (d
  -1)}}$ for each $k \in I$.

    By the composition property of
    Definition~\ref{definition:L-good-map}, we see that for $k \in
    I \setminus \{n\}$, we have

    \begin{center}
    \begin{tabular}{ccl}
    $\str{\tree{z}_{1, r}} $ & $ \lequiv{m}$ &$\str{\tree{u}_{\sigma, \delta_1,
        \ldots, \delta_r}}$\\
    $\str{\tree{z}_{1, k + (d-1)}}$ & $\lequiv{m}$ & $\str{\tree{v}_{\sigma,
        \chi_k, \delta_{k+1}, \ldots, \delta_{k + (d - 1)}}}$ 
    \end{tabular}
    \end{center}

    Whereby, from the very constructions of $g_{\sigma, m, i}$ for
    $i \in \{1, \ldots, \rho(\sigma)\}$, we get for $k \in
    I \setminus \{n\}$, that

    \begin{center}
    \begin{tabular}{lclcl}
    $\lequiv{m}$ class of $\str{\tree{z}_{1, r}}$ & = & $\chi_r$ & = &
      $g_{m, \sigma, r}(\delta_1, \ldots, \delta_r)$\\
      
    $\lequiv{m}$ class of $\str{\tree{z}_{1, k + (d-1)}}$ & = &
      $\chi_{k + (d -1)}$ & = & $g_{m, \sigma, d}(\chi_k, \delta_{k+1},
      \ldots, \delta_{k + (d -1)})$
    \end{tabular}
    \end{center}

    Putting $k = r + (q -1) \cdot (d- 1)$ above, we see that $\chi_n$,
    which is the $\lequiv{m}$ class of $\str{\tree{z}_{1, n}}
    (= \str{\tree{t}_{\ge a}})$, is indeed given by $g_{\sigma, m,
    d}(g_{\sigma, m, d}(\ldots g_{\sigma, m, d}(g_{\sigma, m,
    d}(\widehat{\delta}, \delta_{r+1}, $ $ \ldots, \delta_{r+
    (d-1)}), \delta_{r+d}, \ldots, \delta_{r+2 \cdot (d-1)}) \ldots), \delta_{n
    - d + 2}, \ldots, \delta_n)$, where $\widehat{\delta} = g_{\sigma,
    m, r}(\delta_1, \ldots, \delta_r)$.

    \end{enumerate}
\end{proof}

\newcommand{\compdegred}{\ensuremath{\mathsf{Complete}\text{-}\mathsf{degree}\text{-}\mathsf{reduction}}}
\newcommand{\listofmoneclasses}{\ensuremath{\mathsf{\mc{L}[m_1]\text{-}\mathsf{classes}}}}

\subsubsection{Proof of Lemma~\ref{lemma:core-lemma-for-fpt-over-trees-abstract}(\ref{lemma:core-lemma-for-fpt-reduce})}\label{subsubsection:computational-reductions}

\begin{proof}
  (1) \und{$\reducedeg(\tree{t}, m)$}:\\

  Suppose $\tree{t} \in \mc{T}$ and $m \in \mathbb{N}$ are given as
  inputs. Let $m_0$ be a witness to the composition property of
  $\mathsf{Str}$, as mentioned in
  Definition~\ref{definition:L-good-map}, and let $m_1 =
  \text{max}\{m_0, m\}$.  The algorithm $\reducedeg(\tree{t}, m)$
  functions in various stages as described below.

  \vspace{3pt}Stage I: $\reducedeg(\tree{t}, m)$ first invokes the
  procedure $\enumcompfun(m_1)$. The latter procedure produces the
  following:
  \begin{enumerate}
  \item $\listofmoneclasses$ which is a list of $\mc{L}[m_1]$
    sentences that represent all and exactly the equivalence classes
    of the $\lequiv{m_1}$ relation over $\cl{S}$.
  \item the ``composition'' functions $f_{\sigma, m_1}$ and
    $f_{\sigma, m_1, i}$ which satisfy the composition properties
    mentioned in Definition~\ref{definition:L-good-map}.
  \end{enumerate}

  The time taken to complete this step is at most $\eta_3(m_1)$, where
  $\eta_3$ is as given by
  Lemma~\ref{lemma:core-lemma-for-fpt-over-trees-abstract}(\ref{lemma:core-lemma-for-fpt-enum}).

  \vspace{3pt} Stage II: $\reducedeg(\tree{t}, m)$ now constructs
  bottom-up in $\tree{t}$, the function $\colour: \tree{t} \rightarrow
  \ldelta{\cl{S}}{m}$ such that $\colour(a)$ is the $\lequiv{m_1}$
  class of $\str{\tree{t}_{\ge a}}$. This is done inductively as
  follows:
  \begin{itemize}
   \item Base case: We first compute $\colour(e)$ for each leaf node
     $e$ of $\tree{t}$. This can be done in constant time as explained
     below.

     Since $\sigmaleaf$ is finite and since $\mathsf{Str}$ is
     isomorphism preserving (see
     Definition~\ref{definition:L-good-map}), there is a finite
     function $\mathsf{leaf\text{-}structures}: \sigmaleaf \rightarrow
     \cl{S}$ such that for any leaf node $e$ of $\tree{t}$, if its
     label is $\sigma$, then $\str{\tree{t}_{\ge e}} \cong
     \mathsf{leaf\text{-}structures}(\sigma)$. Further, since the
     range of $\mathsf{leaf\text{-}structures}$ is finite, there
     exists a finite function $\mathsf{leaf\text{-}colour}:
     \text{Range}(\mathsf{leaf\text{-}structures}) \rightarrow
     \ldelta{\cl{S}}{m_1}$ such that for each $\mf{A}$ in the range of
     $\mathsf{leaf\text{-}structures}$, we have
     $\mathsf{leaf\text{-}colour}(\mf{A})$ is the $\lequiv{m_1}$ class
     of $\mf{A}$. Whereby, given a leaf node $e$, we have $\colour(e)
     =
     \mathsf{leaf\text{-}colour}(\mathsf{leaf\text{-}structure}(\sigma))$,
     where $\sigma$ is the label of $e$.

    \item Induction step: Assume that for an internal node $a$, if
      $b_1, \ldots, b_n$ are the children of $a$ in $\tree{t}$, then
      $\colour(b_i)$ has been computed, for $i \in \{1, \ldots,
      n\}$. Let $\sigma$ be the label of $a$ in $\tree{t}$. We have
      two cases here to compute $\colour(a)$:
      \begin{itemize}
        \item $\sigma \in \sigmarank$: Then by the composition
          property of $\mathsf{Str}$, we have that $\colour(a) =
          f_{\sigma, m_1}(\colour(b_1), \ldots, \colour(b_n))$. Since
          $f_{\sigma, m_1}$ is a finite function, $\colour(a)$ can be
          computed in constant time.
        \item $\sigma \in \sigmaint \setminus \sigmarank$: Let $n = r
          + q \cdot (d -1)$ where $d = \rho(\sigma)$ and $1 \leq r <
          d$.  Let $\xi_0 = f_{\sigma, m_1, r}(\colour(b_1), \ldots,
          \colour(b_r))$, and $\xi_{i+1} = f_{\sigma, m_1, d}(\xi_i,
          \colour(b_{r + i \cdot (d-1) + 1}),$ $ \ldots,$ $
          \colour(b_{r + (i + 1) \cdot (d - 1)}))$ for $i \in \{0,
          \ldots, q-1\}$.  Then by the composition property of
          $\mathsf{Str}$, we have that $\colour(a) = \xi_1$ if $n <
          d$, else $\colour(a) = \xi_q$. Observe that the $\xi_i$s can
          be computed in constant time, whereby the time taken to
          compute $\colour(a)$ is linear in the degree of $a$ in
          $\tree{t}$.
      \end{itemize}
   \end{itemize}

   At the end of the above process, $\colour$ gets constructed. The
   time taken for this construction is linear in the sum of the
   degrees of the nodes of $\tree{t}$, and hence linear in
   $|\tree{t}|$.

   \vspace{3pt} Stage III: $\reducedeg(\tree{t}, m)$ finally invokes
   $\compdegred(\tree{t})$ below that reduces the degrees of the nodes
   of $\tree{t}$ to under a threshold. The output of\linebreak
   $\compdegred(\tree{t})$ is the output of $\reducedeg(\tree{t}, m)$.
   The former in turn uses the degree reduction procedure
   $\degred(\tree{u}, a)$ which takes in a tree $\tree{u}$ of $\mc{T}$
   and a node $a$ of $\tree{u}$, and produces a subtree $\tree{v}$ of
   $\tree{u}$ in $\mc{T}$, containing $a$, such that (i) the degree of
   $a$ in $\tree{v}$ is at most $p$, (ii) the roots of $\tree{v}$ and
   $\tree{u}$ are the same, (iii) $\str{\tree{v}} \hookrightarrow
   \str{\tree{u}}$ and (iv) $\str{\tree{v}} \lequiv{m_1}
   \str{\tree{u}}$.

   \vspace{4pt}\und{\compdegred(\tree{t}):}
   \begin{enumerate}
     \item Initialize $\tree{z} := \tree{t}$.
     \item For $a$ ranging over the nodes of $\tree{t}$, set
       $\tree{z}$ := $\degred(\tree{z}, a)$.
     \item Return $\tree{z}$.
   \end{enumerate}

   It is clear that $\compdegred(\tree{t})$, and hence
   $\reducedeg(\tree{t}, m)$, outputs the desired subtree $\tree{s}_1$
   as required by the statement of
   Lemma~\ref{lemma:abstract-tree-lemma} (observe that $\mc{L}[m_1]$
   eqivalence implies $\mc{L}[m]$ equivalence).  We now describe
   $\degred$ below and show that the time taken by $\degred(\tree{u},
   a)$ is linear in $\Lambda_{\cl{S}, \mc{L}}(m_1) ~\times$ \text{(the
     degree of $a$ in $\tree{u}$)}. Whereby the time taken by
   $\compdegred(\tree{t})$ is linear in $\Lambda_{\cl{S}, \mc{L}}(m_1)
   \times |\tree{t}|$. It follows then that there exists a computable
   function $\eta_4: \mathbb{N} \rightarrow \mathbb{N}$ such that the
   time taken by $\reducedeg(\tree{t}, m)$ to compute $\tree{s}_1$ is
   indeed at most $\eta_4(m) \times |\tree{t}|$.

   We now complete this part of the proof by describing $\degred$ and
   showing its running time to be as mentioned above.

   \vspace{4pt}\und{$\degred(\tree{u}, a)$}:

   \vspace{2pt}\begin{enumerate}[nosep]
     \item Let $\sigma$ be the label of $a$ in $\tree{u}$. If $a$ is a
       leaf node or if $\sigma \in \sigmarank$, then return
       $\tree{u}$.
     \item Else, for each $\delta \in \listofmoneclasses$, do the
       following:
       \begin{enumerate}[nosep]
         \item[(i)] Let $\tree{x} = \tree{u}_{\ge a}$. Let $a$ have $n$
           children in $\tree{x}$, call these $a_1, \ldots, a_n$ (in
           ascending order). Let $n = r + q \cdot (d - 1)$ where $1
           \leq r < d$ and $q > 0$. Let $I = \{r + l \cdot (d - 1)
           \mid 0 \leq l \leq q\}$.
         \item[(ii)] Let $\tree{x}_{1, k}$ denote the subtree of
           $\tree{x}$ obtained by deleting the subtrees rooted at
           $a_{k+1}, \ldots, a_n$. Construct the function $g: I
           \rightarrow \ldelta{\cl{S}}{m_1}$ such that $g(k)$ is the
           $\lequiv{m_1}$ class of $\str{\tree{x}_{1, k}}$ for $k \in
           I$.
         \item[(iii)] If $\delta$ is in the range of $g$, then let $i,
           j$ be resp. the least and greatest indices such that $g(i)
           = g(j) = \delta$. Let $\tree{y}$ be the subtree of
           $\tree{x}$ obtained by deleting the subtrees rooted at
           $a_{i+1}, \ldots, a_j$. Set $\tree{x}:= \tree{y}$.
       \end{enumerate}
     \item Let $\tree{v} = \tree{u}[\tree{u}_{\ge a} \mapsto
       \tree{x}]$. Return $\tree{v}$.
    \end{enumerate}

    \vspace{4pt}Reasoning similarly as in the proof of
    Lemma~\ref{lemma:abstract-tree-lemma}(\ref{lemma:abstract-tree-lemma-degree-reduction}),
    we can verify that \linebreak$\degred(\tree{u}, a)$ indeed works
    correctly. Given that we have already computed the function
    $\colour$ in Stage II, the time taken to compute $g$ is linear in
    the degree of $a$, whereby the time taken to reduce the degree of
    node $a$ in any iteration of the loop, is linear in the degree of
    $a$. Then, the total time taken by $\degred(\tree{u}, a)$ is then
    linear in $\Lambda_{\cl{S}, \mc{L}}(m_1) \times \text{(the degree
      of $a$ in $\tree{u}$)}$.

    (Important note: Observe the function $\colour$ restricted to the
    nodes of the output $\tree{v}$ of $\degred(\tree{u}, a)$ is such
    that for any node $a$ of $\tree{v}$, the $\lequiv{m_1}$ class of
    $\str{\tree{v}_{\ge a}}$ is indeed $\colour(a)$.)

\vspace{5pt} (2) \und{$\redheight(\tree{t}, m)$:}

\vspace{3pt} Just like $\reducedeg(\tree{t}, m)$, the algorithm
$\redheight(\tree{t}, m)$ also functions in various stages as
described below. Let as before, $m_1 = \text{max}(m_0, m)$.

\vspace{3pt} Step I: We generate $\listofmoneclasses$ and the function
$\colour$ as done in $\reducedeg(\tree{t}, m)$.

Step II: We construct a 2 dimensional array $\lowest[i][j]$ where $i$
ranges over the nodes of $\tree{t}$ and $j$ ranges over
$\listofmoneclasses$, such that $\lowest[i][j]$ stores a pointer to a
lowest (i.e. closest to a leaf) node $a$ in the subtree of $\tree{t}$
rooted at $i$, such that the $\lequiv{m_1}$ class of
$\str{\tree{t}_{\ge a}}$ is $j$. In other words, $a$ is such that the
$\lequiv{m_1}$ class of $\str{\tree{t}_{\ge a}}$ is $j$, and there is
no node $b \neq a$ in $\tree{t}_{\ge a}$ such that the $\lequiv{m_1}$
class of $\str{\tree{t}_{\ge b}}$ is $j$.

We construct $\lowest$ bottom-up in $\tree{t}$ as described below.
\begin{itemize}
  \item Base case: For a leaf node $e$, since the $\lequiv{m_1}$ class
    $\delta_e$ of $\str{\tree{t}_{\ge e}}$ has already been computed
    in Step I, we simply set $\lowest[e][\delta_e]$ to store a pointer
    to $e$, and for all $\delta \in \listofmoneclasses$ such that $\delta
    \neq \delta_e$, set $\lowest[e][\delta] = \mathsf{null}$.

    The time taken to complete this step is linear in the
    number of leaf nodes of $\tree{t}$.
    
  \item Induction: Assume that for an internal node $a$ under
    consideration, for all its children $b$ in $\tree{t}$, the value
    of $\lowest[b][\delta]$ has been computed for all
    $\delta \in \listofmoneclasses$. Let $\delta_a$ be the
    $\lequiv{m_1}$ class of $\str{\tree{t}_{\ge a}}$ (that has already
    been computed as $\colour(a)$ in Step I). For
    $\delta \in \listofmoneclasses$, check if for some child $b$ of
    $a$, the value of $\lowest[b][\delta]$ is not $\mathsf{null}$. If
    there is such a child $b$, set $\lowest[a][\delta]
    = \lowest[b][\delta]$. If there is no such child, then if $\delta
    = \delta_a$, then set $\lowest[a][\delta]$ to store a pointer to
    $a$, else set $\lowest[a][\delta] = \mathsf{null}$.

    Observe that $\lowest[a][\delta]$ indeed stores a pointer to a
    lowest node $c$ in $\tree{t}_{\ge a}$ such that the $\lequiv{m_1}$
    class of $\str{\tree{t}_{\ge c}}$ is $\delta$. Also observe that
    the time taken to complete this step is linear in the degree of
    $a$ in $\tree{t}$.
\end{itemize}

It is clear that eventually $\lowest$ gets constructed in time linear
in $|\tree{t}|$.

\vspace{3pt} Step III: We now describe an algorithm
$\rainbow(\tree{x})$ that takes a subtree $\tree{x}$ of $\tree{t}$ in
$\mc{T}$ as input and outputs a subtree $\tree{y}$ of $\tree{x}$ in
$\mc{T}$ such that
\begin{enumerate}
\item $\str{\tree{x}} \hookrightarrow \str{\tree{y}}$
\item $\str{\tree{x}} \lequiv{m_1} \str{\tree{y}}$
\item in no path from the root to the leaf of $\tree{y}$ is it the
  case that there exist two distinct nodes $a$ and $b$ such that the
  $\lequiv{m_1}$ classes of $\str{\tree{y}_{\ge a}}$ and
  $\str{\tree{y}_{\ge b}}$ are the same. Whereby, the height of
  $\tree{x}$ is at most $\Lambda_{\cl{S}, \mc{L}}(m_1)$.
\item
   The input $\tree{x}$ and output $\tree{y}$ both satisfy the
   following ``colour preservation'' property $\mc{Q}(\cdot)$: for a
   subtree $\tree{s}$ of $\tree{t}$, obtained from $\tree{t}$ by
   removal of rooted subtrees and replacements with rooted subtrees,
   $\mc{Q}(\tree{s})$ is a predicate denoting that the function
   $\mathsf{Colour}$ computed for $\tree{t}$, when restricted to the
   nodes of $\tree{s}$, is such that for any node $a$ of $\tree{s}$,
   $\mathsf{Colour}(a)$ gives the $\lequiv{m}$ class of $\tree{s}_{\ge
   a}$.

\end{enumerate}

\und{$\rainbow(\tree{x})$:}
\begin{enumerate}
\item Let $\delta$ be the $\lequiv{m_1}$ class of $\troot{\tree{x}}$ (by the properties mentioned above,  $\delta = \colour(\troot{\tree{x}})$).
\item Let $a = \troot{\tree{x}}$.
\item If $\lowest[a][\delta]$ stores a pointer to some node $b$ other
  than $a$, then return $\rainbow(\tree{x}_{\ge b})$.
\item Else, let $b_1, \ldots, b_n$ be the children of $a$ in
  $\tree{x}$.
\item For $i \in \{1, \ldots, n\}$, let $\tree{y}_i =
  \rainbow(\tree{x}_{\ge b_i})$.
\item Let $\tree{z} = \tree{x}[\tree{x}_{\ge b_1} \mapsto
  \tree{y}_1]\ldots[\tree{x}_{\ge b_n} \mapsto \tree{y}_n]$ be the
  subtree of $\tree{x}$ obtained by replacing $\tree{x}_{\ge b_i}$
  with $\tree{y}_i$ for $i \in \{1, \ldots, n\}$.
\item Return $\tree{z}$.
\end{enumerate}

It is easy to check using the fact that $\mc{T}$ is closed under
replacements with rooted subtrees and
Lemma~\ref{lemma:helper-height-reduction} that $\rainbow(\tree{x})$ is
indeed correct. We also see that the number of ``top level'' recursive
calls made by $\rainbow(\tree{x})$ is linear in the degree of
$\troot{\tree{x}}$, whereby the total time taken by
$\rainbow(\tree{x})$ is indeed linear in $|\tree{x}|$.

\vspace{3pt}
Having defined $\rainbow(\tree{x})$, we describe Step III which
consists of executing the following substeps.

\begin{enumerate}
\item Let $b_1, \ldots, b_n$ be the children of $\troot{\tree{t}}$ in
  $\tree{t}$.
\item For $i \in \{1, \ldots, n\}$, let $\tree{u}_i =
  \rainbow(\tree{t}_{\ge b_i})$.
\item Let $\tree{v} = \tree{t}[\tree{t}_{\ge b_1} \mapsto
  \tree{u}_1]\ldots[\tree{t}_{\ge b_n} \mapsto \tree{u}_n]$ be the
  subtree of $\tree{t}$ obtained by replacing $\tree{t}_{\ge b_i}$
  with $\tree{u}_i$ for $i \in \{1, \ldots, n\}$. Return $\tree{v}$.
\end{enumerate}

Reasoning similarly as in the proof of
 Lemma~\ref{lemma:abstract-tree-lemma}(\ref{lemma:abstract-tree-lemma-height-reduction}),
 we observe that $\tree{v}$ above is indeed the desired subtree
 $\tree{s}_2$ of $\tree{t}$.  It is easy to see from the descriptions
 above that the time taken by $\redheight(\tree{t}, m)$ to compute
 $\tree{s}_2$ is at most $\eta_5(m) \times |\tree{t}|$, for some
 computable function $\eta_5: \mathbb{N} \rightarrow \mathbb{N}$.
\end{proof}

\section{Applications to various concrete settings}\label{section:classes-satisfying-lebsp} 

\newcommand{\unorderedtrees}{\mathsf{Unordered}\text{-}\mathsf{trees}}
\newcommand{\orderedtrees}{\mathsf{Ordered}\text{-}\mathsf{trees}}
\newcommand{\orderedrankedtrees}{\mathsf{Ordered}\text{-}\mathsf{ranked}\text{-}\mathsf{trees}}

\subsection{Regular languages of words, trees (unordered, ordered, ranked
  or partially ranked) and nested words}

Let $\Sigma$ be a finite alphabet. The notion of unordered, ordered,
ranked and partially ranked $\Sigma$-trees was already introduced in
Section~\ref{section:partially-ranked-trees}.  A $\Sigma$-tree whose
underlying poset is a linear order is called a \emph{$\Sigma$-word}.
A \emph{nested word over $\Sigma$} is a pair $(w, \leadsto)$ where $w$
is a ``usual'' $\Sigma$-word and $\leadsto$ is a binary relation
representing a ``nested matching''. Formally, if $(A, \leq)$ is the
linear order underlying $w$, then $\leadsto$ satisfies the following:
(i) for $i, j \in A$, if $i \leadsto j$, then $i \leq j$ and $i \neq
j$ (ii) for $i \in A$, there is at most one $j \in A$ such that $i
\leadsto j$ and at most one $l \in A$ such that $l \leadsto i$, and
(iii) for $i_1, i_2, j_1, j_2 \in A$, if $i_1 \leadsto j_1$ and $i_2
\leadsto j_2$, then it is not the case that $i_1 < i_2 \leq j_1 <
j_2$. (Nested words here correspond to nested words
of~\cite{alur-madhu}, that have no pending calls or pending returns.)
\begin{figure}[H]
\begin{minipage}{0.45\textwidth}
For e.g., $\nesword{w} = (abaabba, \{(2, 6), (4, 5)\})$ is a nested
word over $\{a, b\}$.  A nested $\Sigma$-word has a natural
representation using a representation-feasible tree over $\sigmaint
\cup \sigmaleaf$, where $\sigmaleaf = \Sigma \cup (\Sigma \times
\Sigma)$, and $\sigmaint = \sigmaleaf \cup \{\circ\}$. The figure
alongside shows the tree $\tree{t}$ for $\nesword{w}$.  Conversely,
every representation-feasible tree over $(\sigmaint \cup \sigmaleaf)$
represents a nested $\Sigma$-word. 
\end{minipage}\hfill
\begin{minipage}{0.5\textwidth}

\flushright
\includegraphics[width=7cm,height=3.2cm]{../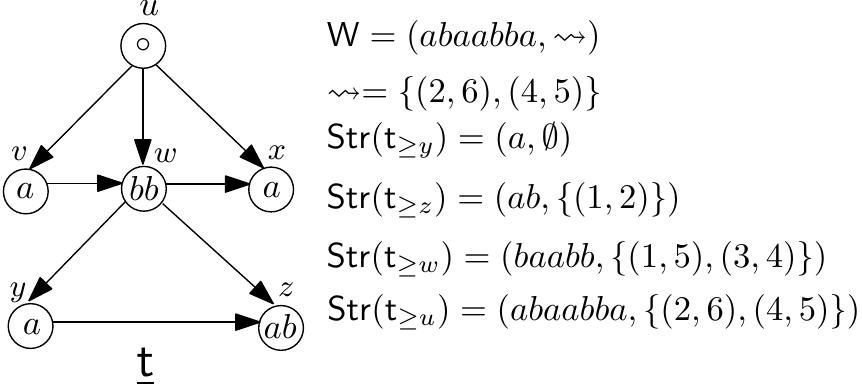}
\caption{\label{nested-word-tree-rep}Nested word as a tree}

\end{minipage} \hfill
\end{figure}

The notion of \emph{regular languages} of words, trees and nested
words is well studied. Since this notion corresponds to MSO
definability~\cite{tata,alur-madhu}, we say a class of words, trees
(of any of the aforesaid kinds) or nested words is regular if it
is the class of models of an MSO sentence.

\begin{theorem}\label{theorem:words-and-trees-and-nested-words-satisfy-lebsp}
Given finite alphabets $\Sigma, \Omega$ such that $\Omega \subseteq
\Sigma$, and a function $\rho: \Omega \rightarrow \mathbb{N}$, let
${\words}(\Sigma)$, ${\unorderedtrees}(\Sigma)$,
${\orderedtrees}(\Sigma)$, $\prankedtrees(\Sigma, \Omega, \rho)$ and
$\nestedwords{\Sigma}$ denote resp. the classes of all $\Sigma$-words,
all unordered $\Sigma$-trees, all ordered $\Sigma$-trees, all ordered
$\Sigma$-trees partially ranked by $(\Omega, \rho)$, and all nested
$\Sigma$-words.  Let $\cl{S}$ be a regular subclass of any of these
classes.  Then $\lebsp{\cl{S}}$ holds with a computable witness
function. Further, any witness function for $\lebsp{\cl{S}}$ is
necessarily non-elementary.
\end{theorem}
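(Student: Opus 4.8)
The plan is to obtain the positive assertion from the abstract framework of Section~\ref{section:abstract-tree-result} (Theorem~\ref{theorem:good-tree-rep-implies-lebsp-and-f.p.t.-algorithm}) together with Proposition~\ref{prop:result-for-partially-ranked-trees}, and to pass from the ``full'' classes to their regular subclasses by exploiting that a regular subclass is cut out by an $\mso$ sentence. Once this is done, the non-elementarity of every witness function follows by the counting argument already used for Proposition~\ref{prop:result-for-partially-ranked-trees}(1).

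\textbf{The full classes.} The classes $\orderedtrees(\Sigma)$ and $\prankedtrees(\Sigma,\Omega,\rho)$ are covered directly by Proposition~\ref{prop:result-for-partially-ranked-trees}, since $\orderedtrees(\Sigma)=\prankedtrees(\Sigma,\emptyset,\emptyset)$. For each of $\words(\Sigma)$, $\unorderedtrees(\Sigma)$ and $\nestedwords{\Sigma}$ I would exhibit an effective $\mc{L}$-good tree representation and invoke Theorem~\ref{theorem:good-tree-rep-implies-lebsp-and-f.p.t.-algorithm}. For $\words(\Sigma)$, take $\sigmaleaf=\Sigma$ and $\sigmaint=\sigmarank=\Sigma$ with $\rho(\sigma)=1$, so that a representation-feasible tree is a labelled path, and let $\mathsf{Str}$ send it to the associated $\Sigma$-word; the composition clause of Definition~\ref{definition:L-good-map} is exactly the classical Feferman--Vaught lemma for word concatenation. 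For $\unorderedtrees(\Sigma)$, let $\mc{T}$ be the representation-feasible class of ordered $\Sigma$-trees (with $\sigmarank=\emptyset$) and let $\mathsf{Str}$ forget the sibling order; this map is surjective, isomorphism-preserving and monotone, and its composition clause follows from the composition lemma for unordered trees applied to the ``add a child subtree to the root'' operation (the left-to-right fold demanded by Definition~\ref{definition:L-good-map} being legitimate because, for unordered trees, the $\lequiv{m}$ class of the result depends only on the $\lequiv{m}$ class of the tree so far and of the new child). For $\nestedwords{\Sigma}$, use the representation sketched around Figure~\ref{nested-word-tree-rep} (nodes $\circ$ concatenating their children, nodes in $\Sigma\times\Sigma$ wrapping a single child), whose composition clause is derived from Lemma~\ref{lemma:mso-composition-lemma-for-ordered-trees} applied to these concatenation and wrapping operations. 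In every case $\mc{T}$ is recursive and $\mathsf{Str}$ is plainly computable, hence the representation is effective, and Theorem~\ref{theorem:good-tree-rep-implies-lebsp-and-f.p.t.-algorithm}(\ref{theorem:good-tree-rep-implies-lebsp-and-f.p.t.-algorithm:point-2}) yields $\lebsp{}$ with a computable witness function, for $\mc{L}$ equal to $\fo$ and to $\mso$.

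\textbf{Regular subclasses.} Let $\cl{S}$ be a regular subclass of one of the full classes $\cl{S}_0$ above, say $\cl{S}=\{\mf{A}\in\cl{S}_0:\mf{A}\models\psi\}$ for an $\mso$ sentence $\psi$ of rank $r$. Regardless of whether the $\mc{L}$ of interest is $\fo$ or $\mso$, I would invoke the $\mso$-version for $\cl{S}_0$: by the previous step, $\mebsp{\cl{S}_0}$ holds with a computable monotone witness function $\theta_0$. Given $\mf{A}\in\cl{S}$ and $m\in\mathbb{N}$, set $m'=\text{max}\{m,r\}$ and apply $\mebsp{\cl{S}_0}$ to $\mf{A}$ at parameter $m'$, obtaining an induced substructure $\mf{B}$ of $\mf{A}$ with $\mf{B}\in\cl{S}_0$, $|\mf{B}|\le\theta_0(m')$ and $\mf{B}\msoequiv{m'}\mf{A}$. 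Since $m'\ge r$ and $\mf{A}\models\psi$, also $\mf{B}\models\psi$, so $\mf{B}\in\cl{S}$; and since $m'\ge m$ and $\msoequiv{m'}$ refines $\lequiv{m}$ (whether $\mc{L}$ is $\fo$ or $\mso$), also $\mf{B}\lequiv{m}\mf{A}$. Hence $\lebspcond(\cl{S},\mf{A},\mf{B},m,\theta)$ holds with the computable monotone function $\theta(m)=\theta_0(\text{max}\{m,r\})$, which establishes $\lebsp{\cl{S}}$ with a computable witness function.

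\textbf{Non-elementarity, and the main obstacle.} For the final assertion I would repeat the argument of Proposition~\ref{prop:result-for-partially-ranked-trees}(1): an elementary witness function $\theta$ for $\lebsp{\cl{S}}$ would make every structure of $\cl{S}$ be $\lequiv{m}$-equivalent to one of size at most $\theta(m)$, so the index of $\lequiv{m}$ over $\cl{S}$ would be bounded by an elementary function of $m$, whereas this index is non-elementary already over words~\cite{frick-grohe}. The only genuinely technical work is the verification of the composition clause (clause~\ref{B} of Definition~\ref{definition:L-good-map}) for the unordered-tree and nested-word representations; both are routine adaptations of Lemma~\ref{lemma:mso-composition-lemma-for-ordered-trees} and the word composition lemma, and I expect the nested-word case --- where one must check that concatenation and wrapping interact with $\lequiv{m}$ in exactly the left-fold pattern demanded by Definition~\ref{definition:L-good-map} --- to be the main, if modest, obstacle.
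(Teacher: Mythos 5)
Your proposal is correct and takes essentially the same route as the paper: exhibit effective $\mc{L}$-good tree representations for the full classes and invoke Theorem~\ref{theorem:good-tree-rep-implies-lebsp-and-f.p.t.-algorithm} (with Proposition~\ref{prop:result-for-partially-ranked-trees} covering the ordered and partially ranked cases), pass to regular subclasses by applying the $\mso$-version at rank $\max\{m,r\}$ so the defining sentence is preserved, and get non-elementarity by the counting argument against the non-elementary index over words. The only cosmetic difference is that the paper proves the nested-word composition facts as a standalone lemma (Lemma~\ref{lemma:nested-words-composition-lemma}, a direct EF-strategy composition on nested words under insert/concatenation) rather than as an adaptation of the ordered-tree lemma, and it handles words implicitly; the ``obstacle'' you flag is exactly that routine verification.
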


To present the proof idea for the above result, we need two
composition lemmas, one for unordered trees and the other for nested
words, just as we needed the composition lemma for the proof of
Proposition~\ref{prop:result-for-partially-ranked-trees}.

Towards the composition lemma for unordered trees, we introduce
terminology akin to that introduced for ordered trees in
Section~\ref{section:partially-ranked-trees}. Given unordered trees
$\tree{t}$ and $\tree{s}$, and a node $a$ of $\tree{t}$, define the
\emph{join of $\tree{s}$ to $\tree{t}$ at $a$}, denoted $\tree{t}
\cdot_a \tree{s}$, as follows: Let $\tree{s}'$ be an isomorphic copy
of $\tree{s}$ whose set of nodes is disjoint with the set of nodes of
$\tree{t}$. Then $\tree{t} \cdot_a \tree{s}$ is defined upto
isomorphism as the tree obtained by making $\tree{s}'$ as a new child
subtree of $a$ in $\tree{t}$. The composition lemma for unordered
trees is now as stated below. The proof is similar to that of
Lemma~\ref{lemma:mso-composition-lemma-for-ordered-trees}, and is
hence skipped.

\begin{lemma}[Composition lemma for unordered trees]\label{lemma:mso-composition-lemma-for-unordered-trees}
For a finite alphabet $\Omega$, let ${\tree{t}}_i, \tree{s}_i$ be
non-empty unordered $\Omega$-trees, and let $a_i$ be a node of
$\tree{t}_i$, for each $i \in \{1, 2\}$. For $m \in \mathbb{N}$,
suppose that $({\tree{t}}_1, a_1) \lequiv{m} ({\tree{t}}_2, a_2)$ and
${\tree{s}}_1 \lequiv{m} {\tree{s}}_2$. Then
$(({\tree{t}}_1 \cdot_{a_1} {\tree{s}}_1), a_1) \lequiv{m}
(({\tree{t}}_2 \cdot_{a_2} {\tree{s}}_2), a_2)$.
\end{lemma}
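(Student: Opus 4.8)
The plan is to argue via Ehrenfeucht--Fra\"iss\'e games for $\mc{L}$, exactly as in the (skipped) proof of Lemma~\ref{lemma:mso-composition-lemma-for-ordered-trees}. First I would isolate the one structural fact about the join operation that is needed. Up to isomorphism, the universe of $\tree{t}_i \cdot_{a_i} \tree{s}_i$ is the disjoint union of the universe of $\tree{t}_i$ with a fresh isomorphic copy $\tree{s}_i'$ of $\tree{s}_i$; every unary label predicate is inherited from the two parts; the order $\le$ restricted to the $\tree{t}_i$-part, resp.\ the $\tree{s}_i'$-part, is the order of $\tree{t}_i$, resp.\ $\tree{s}_i'$; and the only comparabilities crossing the two parts are of the form $y \le x$ with $y$ a node of $\tree{t}_i$ and $x$ a node of $\tree{s}_i'$, such a pair being comparable precisely when $y \le a_i$ in $\tree{t}_i$ (no node of $\tree{s}_i'$ lies $\le$ a node of $\tree{t}_i$). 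Thus the ``cross-type'' of a pair with one element on each side is determined solely by whether the $\tree{t}_i$-side element lies $\le a_i$.

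Next I would pass to the $\lef$ game characterisation. From $(\tree{t}_1,a_1)\lequiv{m}(\tree{t}_2,a_2)$ the duplicator has a winning strategy in the $m$ round $\lef$ game between $\tree{t}_1$ and $\tree{t}_2$ in which $a_1,a_2$ are carried as distinguished elements, so that every maintained position extends $a_1\mapsto a_2$; from $\tree{s}_1\lequiv{m}\tree{s}_2$ it has a winning strategy in the $m$ round $\lef$ game between $\tree{s}_1$ and $\tree{s}_2$, which I transport along the chosen isomorphisms $\tree{s}_i\cong\tree{s}_i'$. The duplicator then plays the two games in parallel inside $\tree{t}_i\cdot_{a_i}\tree{s}_i$: on a point move of the spoiler in the $\tree{t}_i$-part it replies by the $\tree{t}$-strategy within that part, on a point move in the $\tree{s}_i'$-part it replies by the $\tree{s}'$-strategy within that part, and on a set move (the $\mso$ case) it splits the chosen set along the disjoint union, replies to each piece with the corresponding strategy, and returns the union. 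After $m$ rounds the overall map is the disjoint union of the partial isomorphism delivered by the $\tree{t}$-strategy (which contains $a_1\mapsto a_2$) with the one delivered by the $\tree{s}'$-strategy.

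Finally I would verify that this union is a partial isomorphism of $\tree{t}_1\cdot_{a_1}\tree{s}_1$ onto $\tree{t}_2\cdot_{a_2}\tree{s}_2$ sending $a_1$ to $a_2$. Preservation of labels and of $\le$ within each side, and of set-membership of pebbles in the $\mso$ case, is immediate from the two component strategies; preservation of the crossing comparabilities is where the structural fact is used. For a $\tree{t}_1$-pebble $p$ and an $\tree{s}_1'$-pebble $r$ one has $p\le r$ in $\tree{t}_1\cdot_{a_1}\tree{s}_1$ iff $p\le a_1$, and since the $\tree{t}$-strategy keeps $a_1\mapsto a_2$ and its position is a partial isomorphism, this holds iff the image $q$ of $p$ satisfies $q\le a_2$, iff $q$ is $\le$ the image of $r$ in $\tree{t}_2\cdot_{a_2}\tree{s}_2$; and on neither side is an $\tree{s}'$-pebble $\le$ a $\tree{t}$-pebble. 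Hence the duplicator wins the $m$ round $\lef$ game, i.e.\ $((\tree{t}_1\cdot_{a_1}\tree{s}_1),a_1)\lequiv{m}((\tree{t}_2\cdot_{a_2}\tree{s}_2),a_2)$. The only mild subtlety --- the ``main obstacle'', such as it is --- is the uniform treatment of the distinguished element $a_i$ together with $\mso$ set moves: one must know the $\tree{t}$-game strategy can be chosen to always keep $a_1\mapsto a_2$ (which is precisely the content of $(\tree{t}_1,a_1)\lequiv{m}(\tree{t}_2,a_2)$), and that splitting a monadic set across the disjoint union introduces no cross-dependence beyond the single relation $\le$ already analysed. Everything else is the routine parallel-composition bookkeeping of Feferman--Vaught arguments, and, unlike the ordered-tree lemma, no lower bound on $m$ is needed since there is no sibling order to reconstruct.
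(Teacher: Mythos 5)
Your proposal is correct and follows essentially the same route as the paper, which omits this proof precisely because it is the same parallel strategy-composition $\lef$-game argument used for the ordered-tree composition lemma (Lemma~\ref{lemma:mso-composition-lemma-for-ordered-trees}), with the cross-comparabilities between the $\tree{t}_i$-part and the grafted copy of $\tree{s}_i$ determined solely by whether the $\tree{t}_i$-side node lies below $a_i$. Your closing observation is also accurate: the $m \ge 2$ hypothesis of the ordered case (needed there to keep the root of $\tree{s}_i$ matched to the root of $\tree{s}_{3-i}$ because of the sibling order $\lesssim$) is not required here, which is why the lemma holds for all $m \in \mathbb{N}$.
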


Towards the composition lemma for nested words, we first define the
notion of \emph{insert of a nested word $\nesword{v}$ in a nested word
$\nesword{u}$ at a given position $e$ of $\nesword{u}$}.

\begin{definition}[Insert]
Let $\nesword{u} =
(A_{\nesword{u}}, \leq_{\nesword{u}}, \lambda_{\nesword{u}}, \leadsto_{\nesword{u}})$
and $\nesword{v} =
(A_{\nesword{v}}, \leq_{\nesword{v}}, \lambda_{\nesword{v}}, \leadsto_{\nesword{v}})$
be given nested $\Sigma$-words, and let $e$ be a position in
$\nesword{u}$. The \emph{insert of $\nesword{v}$ in $\nesword{u}$ at
$e$}, denoted $\nesword{u}\uparrow_{e}\nesword{v}$, is a nested
$\Sigma$-word defined as below.
\begin{enumerate}[nosep]
\item 
If $\nesword{u}$ and $\nesword{v}$ have disjoint sets of positions,
then $\nesword{u}\uparrow_{e}\nesword{v} =
(A, \leq, \lambda, \leadsto)$ where
\begin{itemize}[nosep]
\item $A = A_{\nesword{u}} \sqcup A_{\nesword{v}}$
\item $\leq \,=\, \leq_{\nesword{u}} \cup \leq_{\nesword{v}}$ $\cup
  \,\{(i, j) \mid i \in A_{\nesword{u}}, j \in A_{\nesword{v}}, i
  \leq_{\nesword{u}} e\}$ $\cup\, \{(j, i) \mid i \in A_{\nesword{u}},
  j \in A_{\nesword{v}}, e \leq_{\nesword{u}} i, e \neq i\}$
\item $\lambda(a) = \lambda_{\nesword{u}}(a)$ if $a \in
  A_{\nesword{u}}$, else $\lambda(a) = \lambda_{\nesword{v}}(a)$
\item $\leadsto \,=\, \leadsto_{\nesword{u}} \cup \leadsto_{\nesword{v}}$
\end{itemize}
\item 
If $\nesword{u}$ and $\nesword{v}$ have overlapping sets of positions,
then let $\nesword{v}_1$ be an isomorphic copy of $\nesword{v}$ whose
set of positions is disjoint with that of $\nesword{u}$. Then
$\nesword{u}\uparrow_e\nesword{v}$ is defined upto isomorphism as
$\nesword{u}\uparrow_e\nesword{v}_1$.
\end{enumerate}
In the special case that $e$ is the last (under $\leq_{\nesword{u}}$)
position of $\nesword{u}$, we denote
$\nesword{u} \uparrow_e \nesword{v}$ as
$\nesword{u} \cdot \nesword{v}$, and call the latter as
the \emph{concatenation of $\nesword{v}$ with $\nesword{u}$}.
\end{definition}

\begin{lemma}[Composition lemma for nested words]\label{lemma:nested-words-composition-lemma}
For a finite alphabet $\Sigma$, let
$\nesword{u}_i, \nesword{v}_i \in \nestedwords{\Sigma}$, and let $e_i$
be a position in $\nesword{u}_i$ for $i \in \{1, 2\}$. Then the
following hold for each $m \in \mathbb{N}$.
\begin{enumerate}[nosep]
\item If $(\nesword{u}_1, e_1) \lequiv{m} (\nesword{u}_2, e_2)$ and
  $\nesword{v}_1 \lequiv{m} \nesword{v}_2$, then
  $(\nesword{u}_1\uparrow_{e_1}\nesword{v}_1) \lequiv{m}
  (\nesword{u}_2\uparrow_{e_2}\nesword{v}_2)$.
\item 
$\nesword{u}_1 \lequiv{m} \nesword{u}_2$ and
  $\nesword{v}_1 \lequiv{m} \nesword{v}_2$, then
  $\nesword{u}_1\cdot\nesword{v}_1 \lequiv{m} \nesword{u}_2\cdot\nesword{v}_2$.
\end{enumerate}
\end{lemma}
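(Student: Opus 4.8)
The plan is to prove both parts by a standard $\lef$-game argument, combining winning strategies for the duplicator in the component games into one for the glued structures, just as in the composition lemmas for words and for ordered trees (Lemma~\ref{lemma:mso-composition-lemma-for-ordered-trees}). I treat part (1) first and then reduce part (2) to the same construction. For part (1), fix winning strategies $\mathsf{S}_u$ for the $m$-round $\lef$ game on $(\nesword{u}_1, e_1)$ versus $(\nesword{u}_2, e_2)$ (this game starts from the partial isomorphism $e_1 \mapsto e_2$) and $\mathsf{S}_v$ for the $m$-round $\lef$ game on $\nesword{v}_1$ versus $\nesword{v}_2$. Writing $\nesword{w}_i = \nesword{u}_i \uparrow_{e_i} \nesword{v}_i$, the duplicator plays the $m$-round game on $\nesword{w}_1$ versus $\nesword{w}_2$ so that every pebble placed in $\nesword{w}_i$ stays in the $\nesword{u}_i$-part or the $\nesword{v}_i$-part, and (in the $\mso$ case) every monadic set pebbled in $\nesword{w}_i$ is the disjoint union of a subset of the $\nesword{u}_i$-part and a subset of the $\nesword{v}_i$-part. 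A point move of the spoiler in the $\nesword{u}$-part (resp.\ the $\nesword{v}$-part) is answered by consulting $\mathsf{S}_u$ (resp.\ $\mathsf{S}_v$) and copying the response into the corresponding part of the other structure; a set move is split into its two parts, fed as simultaneous set moves to the two sub-games, and the union of the two responses is returned. Each point move is charged to one sub-game and each set move to both, so every sub-game is played for at most $m$ rounds and neither $\mathsf{S}_u$ nor $\mathsf{S}_v$ ever fails.

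It then remains to check that after $m$ rounds the induced map $h$ from the pebbled elements of $\nesword{w}_1$ to those of $\nesword{w}_2$ is a partial isomorphism. By the definition of $\uparrow_{e_i}$, the labelling predicates, the nesting relation $\leadsto$, and (in the $\mso$ case) monadic membership are each the disjoint union of the corresponding data on the two parts, so every such atomic fact is witnessed inside a single part and is preserved by that part's sub-strategy. The only relation crossing the seam is $\leq$. Pebbles in the same part are handled by the relevant sub-strategy, and for a $\nesword{u}_1$-pebble $a$ and a $\nesword{v}_1$-pebble $b$ the definition gives $a \leq b$ iff $a \leq_{\nesword{u}_1} e_1$, and $b \leq a$ iff $e_1 <_{\nesword{u}_1} a$, with the analogous statements in $\nesword{w}_2$ for $h(a), h(b)$ and $e_2$. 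Since $\mathsf{S}_u$ keeps $e_1 \mapsto e_2$ pebbled throughout and preserves $\leq_{\nesword{u}}$ on the pebbled set, the order of $a$ relative to $e_1$ matches that of $h(a)$ relative to $e_2$, so the cross-seam order is preserved. Hence the duplicator wins and $\nesword{w}_1 \lequiv{m} \nesword{w}_2$.

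For part (2) I use that $\nesword{u}_i \cdot \nesword{v}_i = \nesword{u}_i \uparrow_{e_i} \nesword{v}_i$ where $e_i$ is the last position of $\nesword{u}_i$; for this $e_i$ one has $a \leq_{\nesword{u}_i} e_i$ for every $\nesword{u}_i$-position $a$, so the cross-seam order collapses to ``every $\nesword{u}_i$-position precedes every $\nesword{v}_i$-position'' and mentions no particular pebble. The same strategy composition as in part (1) then applies with $\mathsf{S}_u$ taken to be any winning strategy for the plain $m$-round $\lef$ game on $\nesword{u}_1$ versus $\nesword{u}_2$: in the verification above, a $\nesword{u}_1$-pebble always precedes a $\nesword{v}_1$-pebble in $\nesword{u}_1 \cdot \nesword{v}_1$ and its image (a $\nesword{u}_2$-pebble) always precedes the corresponding $\nesword{v}_2$-pebble in $\nesword{u}_2 \cdot \nesword{v}_2$, so $\leq$ across the seam holds automatically. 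This yields $\nesword{u}_1 \cdot \nesword{v}_1 \lequiv{m} \nesword{u}_2 \cdot \nesword{v}_2$.

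The heart of the argument, and the one place needing care, is the seam analysis of $\leq$: in the $\uparrow_e$ operation the entire relative order of the two parts is recorded by the single position $e$, which is exactly why part (1) must assume $(\nesword{u}_1, e_1) \lequiv{m} (\nesword{u}_2, e_2)$ rather than merely $\nesword{u}_1 \lequiv{m} \nesword{u}_2$ (so that $\mathsf{S}_u$ transports this one bit of information), and why part (2), where $e$ is pinned to the maximum and hence carries nothing extra, needs only $m$-equivalence of $\nesword{u}_1$ and $\nesword{u}_2$. The remaining points --- keeping each sub-game within its $m$-round budget and, in the $\mso$ case, splitting and recombining set moves consistently --- are the routine bookkeeping standard in such composition proofs, and the case $m = 0$ is trivial since all structures are $\lequiv{0}$.
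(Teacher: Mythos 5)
Your proof is correct and follows essentially the same route as the paper: compose the duplicator's winning strategies for $(\nesword{u}_1,e_1)$ vs.\ $(\nesword{u}_2,e_2)$ and $\nesword{v}_1$ vs.\ $\nesword{v}_2$, answering point moves componentwise and splitting/recombining set moves. Your seam analysis of $\leq$ (and the observation that for concatenation the pinned last position makes the parameter-free hypothesis suffice in part (2)) simply spells out details the paper leaves as "easy to see."
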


\begin{proof}
We give the proof for $\mc{L} = $MSO. The proof for $\mc{L} = $FO is
similar.  

The winning strategy $S$ for the duplicator in the $m$-round $\mef$
game between $\nesword{u}_1\uparrow_{e_1}\nesword{v}_1$ and
$\nesword{u}_2\uparrow_{e_2}\nesword{v}_2$ is simply the composition
of the winning strategies $S_1$, resp. $S_2$, of the duplicator in the
$m$-round $\mef$ game between $(\nesword{u}_1, e_1)$ and
$(\nesword{u}_2, e_2)$, resp.  $\nesword{v}_1$ and
$\nesword{v}_2$. Formally, $S$ is defined as follows. 
\begin{enumerate}[nosep]
\item Point move: If the spoiler picks an element of $\nesword{u}_1$,
resp. $\nesword{v}_1$, from
$\nesword{u}_1\uparrow_{e_1}\nesword{v}_1$, then the duplicator picks
the element of $\nesword{u}_2$, resp. $\nesword{v}_2$, from
$\nesword{u}_2\uparrow_{e_2}\nesword{v}_2$, that is given by the
strategy $S_1$, resp. $S_2$.  A similar choice of an element from
$\nesword{u}_1\uparrow_{e_1}\nesword{v}_1$ is made by the duplicator
if the spoiler picks an element from
$\nesword{u}_2\uparrow_{e_2}\nesword{v}_2$.
\item Set move: If the spoiler picks a set $Z$ from
$\nesword{u}_1\uparrow_{e_1}\nesword{v}_1$, then let $Z = X \sqcup Y$
where $X$ is a subset of positions of $\nesword{u}_1$ and $Y$ is a
subset of positions of $\nesword{v}_1$. Then the duplicator picks the
set $Z'$ from $\nesword{u}_2\uparrow_{e_2}\nesword{v}_2$ where $Z' =
X' \sqcup Y'$, $X'$ is the subset of positions of $\nesword{u}_2$ that
is chosen by the duplicator in response to $X$ according to strategy
$S_1$, and $Y'$ is the subset of positions of $\nesword{v}_2$ that is
chosen by the duplicator in response to $Y$ according to strategy
$S_2$.  A similar choice of a set from
$\nesword{u}_1\uparrow_{e_1}\nesword{v}_1$ is made by the duplicator
if the spoiler picks a set from
$\nesword{u}_2\uparrow_{e_2}\nesword{v}_2$.  
\end{enumerate}
It is easy to see that $S$ is a winning strategy in the $\mef$ game
between $\nesword{u}_1\uparrow_{e_1}\nesword{v}_1$ and
$\nesword{u}_2\uparrow_{e_2}\nesword{v}_2$.
\end{proof}

\begin{proof}[Proof idea for Theorem~\ref{theorem:words-and-trees-and-nested-words-satisfy-lebsp}]
We first show $\mebsp{\cl{S}}$ holds when $\cl{S}$ is exactly one of
the classes mentioned in the statement of the theorem. That
$\lebsp{\cdot}$ holds for a regular subclass follows, because (i)
$\mebsp{\cdot}$ is preserved under $\mso$ definable subclasses, and
(ii) $\mebsp{\cdot}$ implies $\febsp{\cdot}$.

Consider $\cl{S} = {\unorderedtrees}(\Sigma)$ (the other cases of
trees have been covered by
Proposition~\ref{prop:result-for-partially-ranked-trees}).  There is a
natural map $\mathsf{Str}_1 : \mc{T}_1 \rightarrow \cl{S}$, where
$\mc{T}_1$ is the class of all $(\sigmaint \cup \sigmaleaf)$-trees
that is representation-feasible for $(\sigmarank, \rho)$, $\sigmaint =
\sigmaleaf = \Sigma$, $\sigmarank = \emptyset, \rho$ is the constant
function of value 2, and $\mathsf{Str}_1$ ``forgets'' the ordering
among the children of any node of its input tree. The $\mso$
composition lemma for unordered trees, given by
Lemma~\ref{lemma:mso-composition-lemma-for-unordered-trees}, then
shows that $\mathsf{Str}_1$ is an elementary $\mso$-good tree
representation for $\cl{S}$, whereby using
Theorem~\ref{theorem:good-tree-rep-implies-lebsp-and-f.p.t.-algorithm},
we are done.

Likewise, when $\cl{S}=\nestedwords{\Sigma}$, there is a natural map
$\mathsf{Str}_2 : \mc{T}_2 \rightarrow \cl{S}$, where $\mc{T}_2$ is
the class of all $(\sigmaint \cup \sigmaleaf)$-trees that is
representation-feasible for $(\sigmarank, \rho)$, $\sigmaint =
\sigmaleaf \cup \{\circ\}, \sigmaleaf = \Sigma \cup (\Sigma \times
\Sigma)$, $\sigmarank = \emptyset$, $\rho$ is the constant function 2,
and $\mathsf{Str}_2$ is as described in the example above. Then
$\mathsf{Str}_2$ is an elementary $\mso$-good tree representation for
$\cl{S}$, due to the $\mso$ composition lemma for nested words given
by Lemma~\ref{lemma:nested-words-composition-lemma}.  We are done by
Theorem~\ref{theorem:good-tree-rep-implies-lebsp-and-f.p.t.-algorithm}
again. The non-elementariness of witness functions is due to
Theorem~\ref{theorem:good-tree-rep-implies-lebsp-and-f.p.t.-algorithm}
and the non-elementariness of the index of the $\lequiv{m}$ relation
over words~\cite{frick-grohe}.
\end{proof}

\newcommand{\labelednpartitecographs}{\ensuremath{\mathsf{Labeled}\text{-}\mathsf{n}\text{-}\mathsf{partite}\text{-}\mathsf{cographs}}}

\subsection{$n$-partite cographs}

The class of $n$-partite cographs, introduced in~\cite{shrub-depth},
can be defined upto isomorphism as the range of the map $\mathsf{Str}$
described as follows.  Let $\sigmaleaf = \left[n\right] = \{1, \ldots,
n\}$ and $\sigmaint = \{ f \mid f:\left[n\right] \times \left[n\right]
\rightarrow \{0, 1\}\}, \sigmarank = \emptyset$ and $\rho: \sigmarank
\rightarrow \mathbb{N}_{+}$ be the constant function 2.  Let $\mc{T}$
be the class of all $(\sigmaint \cup \sigmaleaf)$-trees that are
representation-feasible for $(\sigmarank, \rho)$. Consider
$\mathsf{Str}: \mc{T} \rightarrow \mathsf{Graphs}$ be defined as
follows, where $\mathsf{Graphs}$ is the class of all undirected
graphs: For $\tree{t} = (O, \lambda) \in \mc{T}$ where $O = ((A,
\leq), \lesssim)$ is an ordered unlabeled tree and $\lambda$ is the
labeling function, $\str{\tree{t}} = G = (V, E)$ is such that (i) $V$
is exactly the set of leaf nodes of $\tree{t}$ (ii) for $a, b \in V$,
if $c = a \wedge b$ is the greatest common ancestor (under $\leq$) of
$a$ and $b$ in $\tree{t}$, then $\{a, b\} \in E$ iff
$\lambda(c)(\lambda(a), \lambda(b)) = 1$.  We now have the following
result. Below, a \emph{$\Sigma$-labeled} $n$-partite cograph is a pair
$(G, \nu)$ where $G$ is an $n$-partite cograph and $\nu: V \rightarrow
\Sigma$ is a labeling function. Also, ``hereditary'' means ``closed
under substructures''.

\begin{theorem}\label{theorem:n-partite-cographs-satisfy-lebsp}
Given $n \in \mathbb{N}$ and a finite alphabet $\Sigma$, let
$\labelednpartitecographs(\Sigma)$ be the class of all
$\Sigma$-labeled $n$-partite cographs. Let $\cl{S}$ be any hereditary
subclass of this class. Then $\lebsp{\cl{S}}$ holds with a computable
witness function.  Whereby, each of the graph classes below satisfies
$\lebsp{\cdot}$ with a computable witness function. Further, the
classes with bounded parameters as mentioned below have elementary
functions witnessing $\lebsp{\cdot}$.
\begin{enumerate}[nosep]
\item Any hereditary class of $n$-partite cographs, for each $n \in
  \mathbb{N}$.
\item Any hereditary class of graphs of bounded shrub-depth.
\item Any hereditary class of graphs of bounded $\mc{SC}$-depth.
\item Any hereditary class of graphs of bounded tree-depth.
\item Any hereditary class of co-graphs.
\end{enumerate}
\end{theorem}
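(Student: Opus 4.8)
The plan is to prove Theorem~\ref{theorem:n-partite-cographs-satisfy-lebsp} by exhibiting an $\mc{L}$-good tree representation for the class $\labelednpartitecographs(\Sigma)$ and its hereditary subclasses, and then invoking Theorem~\ref{theorem:good-tree-rep-implies-lebsp-and-f.p.t.-algorithm}. First I would set up the representation: take $\sigmaleaf = [n] \times \Sigma$ (a leaf records both the part of the $n$-partition and the $\Sigma$-label of the vertex it represents), $\sigmaint = \{f : [n]\times[n] \to \{0,1\}\}$, $\sigmarank = \emptyset$, and $\rho$ the constant function $2$; let $\mc{T}$ be the class of all $(\sigmaint \cup \sigmaleaf)$-trees representation-feasible for $(\sigmarank, \rho)$. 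The map $\mathsf{Str}$ is the one described just before the theorem, adapted to carry the $\Sigma$-labels along. For a hereditary subclass $\cl{S}$ one restricts $\mc{T}$ to the trees $\tree{t}$ with $\str{\tree{t}} \in \cl{S}$; since $\cl{S}$ is closed under induced substructures and, by the monotonicity clauses (Definition~\ref{definition:L-good-map}(3)), every tree operation allowed on $\mc{T}$ produces a structure embeddable in the original, this restricted $\mc{T}$ remains representation-feasible and the restricted $\mathsf{Str}$ is still surjective onto $\cl{S}$ up to isomorphism. Isomorphism preservation and surjectivity are then immediate from the construction.

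The substantive work is verifying the two remaining conditions of $\mc{L}$-goodness. For monotonicity, I would check the three sub-cases directly from the definition of $\mathsf{Str}$: passing to a rooted subtree $\tree{t}_{\ge a}$ restricts the vertex set to the leaves below $a$, which is an induced subgraph; deleting a child subtree $\tree{t}_{\ge b}$ (allowed since $\sigmarank = \emptyset$) again restricts to an induced subgraph; and the replacement $\tree{t}[\tree{t}_{\ge a} \mapsto \tree{t}_{\ge b}]$ replaces the leaves below $a$ by the (fewer) leaves below $b$, which one checks is induced because the greatest-common-ancestor relation, hence the edge relation, among the surviving leaves is unchanged. For the composition property, the key observation is that the $\lequiv{m}$ class of $\str{\tree{t}}$ for $\tree{t}$ with root labeled $f$ and child subtrees $\tree{t}_1, \ldots, \tree{t}_k$ is determined by the multiset of $\lequiv{m}$ classes of $\str{\tree{t}_i}$ together with $f$ --- this is a Feferman--Vaught style composition lemma for the $n$-partite cograph operation, provable by an $\mathsf{EF}$-game argument in which the duplicator plays componentwise across the child subtrees and uses $f$ to answer adjacency queries spanning different children. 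Since the root operation has arity $2$ here, the binary composition functions $\fsigmai{m}{1}, \fsigmai{m}{2}$ suffice, and one defines them as in Stage II/III of the proof of Lemma~\ref{lemma:core-lemma-for-fpt-over-trees-abstract} by picking representative trees for each class tuple. Effectiveness of $\mathsf{Str}$ (hence computable witness functions) is clear because $\mc{T}$ is recursive and $\str{\tree{t}}$ is computed in polynomial time from $\tree{t}$; moreover $\mathsf{Str}$ is elementary.

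With the $\mc{L}$-good (and elementary) tree representation in hand, parts~(1) and~(2) of Theorem~\ref{theorem:good-tree-rep-implies-lebsp-and-f.p.t.-algorithm} give $\lebsp{\cl{S}}$ with a computable witness function for every hereditary subclass, yielding item~1 directly. For items~2--5 I would recall from~\cite{shrub-depth} the embeddings of the relevant graph classes into $n$-partite cographs: graph classes of bounded shrub-depth, bounded $\mc{SC}$-depth, bounded tree-depth, and cographs are each (up to the addition of trivial labels) hereditary subclasses of $\ncographs$ for a suitable $n$ depending only on the bounding parameter. Hence each inherits $\lebsp{\cdot}$ with a computable witness function. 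For the elementariness claim in the bounded-parameter cases, I would invoke part~(3) of Theorem~\ref{theorem:good-tree-rep-implies-lebsp-and-f.p.t.-algorithm}: since $\mathsf{Str}$ is elementary, it remains to show that over these classes the index of $\lequiv{m}$ is elementary in $m$, which follows from the known result that $\fo$ and $\mso$ coincide and have elementary-size Hintikka sets over graph classes of bounded tree-depth/shrub-depth~\cite{shrub-depth-FO-equals-MSO, tree-depth-FO-equals-MSO}. I expect the main obstacle to be the careful verification of the composition property --- in particular getting the $\mathsf{EF}$-game argument for the $n$-partite cograph join operation to go through cleanly when both a set move and the cross-component adjacency encoded by $f$ must be respected simultaneously; everything else is routine once the representation is correctly set up.
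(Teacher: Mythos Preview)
Your approach is essentially the paper's: exhibit an $\mc{L}$-good tree representation for labeled $n$-partite cographs via the natural cotree map, verify the composition property through an $\mathsf{EF}$-game argument (this is exactly Lemma~\ref{lemma:composition-lemma-for-n-partite-cographs} in the paper), and invoke Theorem~\ref{theorem:good-tree-rep-implies-lebsp-and-f.p.t.-algorithm}. The elementariness argument via part~(3) of that theorem together with~\cite{shrub-depth-FO-equals-MSO} also matches the paper.

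There is, however, a gap in how you handle an \emph{arbitrary} hereditary subclass $\cl{S}$. You propose to restrict $\mc{T}$ to those trees $\tree{t}$ with $\str{\tree{t}} \in \cl{S}$ and then assert that this restricted class is recursive. But an arbitrary hereditary subclass need not be decidable, so the restricted tree class need not be recursive, and hence the restricted $\mathsf{Str}$ need not be effective in the sense required for part~(2) of Theorem~\ref{theorem:good-tree-rep-implies-lebsp-and-f.p.t.-algorithm}. Without effectiveness, that theorem does not yield a \emph{computable} witness function. The paper sidesteps this by establishing $\lebsp{\cdot}$ with a computable witness only for the full class $\labelednpartitecographs(\Sigma)$, and then using the trivial observation that $\lebsp{\cdot}$ (with the same witness function) descends to hereditary subclasses: any bounded $\lequiv{m}$-equivalent substructure of $\mf{A} \in \cl{S}$ furnished by $\lebsp{\labelednpartitecographs(\Sigma)}$ automatically lies in $\cl{S}$ by heredity. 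This is an easy fix to incorporate into your write-up.
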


The proof of Theorem~\ref{theorem:n-partite-cographs-satisfy-lebsp}
again uses a composition lemma for $n$-partite cographs.  Let $\mc{T}$
be the class of all $(\sigmaint \cup \sigmaleaf)$-trees where
$\sigmaleaf = \left[n\right] \times \Sigma$, $\sigmaint = \{f \mid f:
\left[n\right] \times \left[n\right] \rightarrow \{0, 1\}\}$ and
$\sigmarank = \emptyset$. Let $\rho: \sigmaint \rightarrow
\mathbb{N}_+$ be the constant function 2.  Then $\mc{T}$ is
representation-feasible for $(\sigmarank, \rho)$. Further, there is a
natural representation map $\mathsf{Str}: \mc{T} \rightarrow
\labelednpartitecographs(\Sigma)$ exactly of the kind described above
for $n$-partite cographs, that maps a tree in $\mc{T}$ to the labeled
$n$-partite graph that it represents. The composition lemma for
$n$-partite cographs can now be stated as below.

\newcommand{\V}[1]{\ensuremath{\mathsf{V}\text{-}\mathsf{Str}(#1)}}

\begin{lemma}[Composition lemma for $n$-partite cographs]\label{lemma:composition-lemma-for-n-partite-cographs}
For $i \in \{1, 2\}$, let $(G_i, \nu_{i, 1})$ and $(H_i, \nu_{i, 2})$
be graphs in $\labelednpartitecographs(\Sigma)$. Suppose $\tree{t}_i$
and $\tree{s}_i$ are trees of $\mc{T}$ such that $\str{\tree{t}_i} =
(G_i, \nu_{i, 1})$, $\str{\tree{s}_i} = (H_i, \nu_{i, 2})$, and the
labels of $\troot{\tree{t}_i}$ and $\troot{\tree{s}_i}$ are the
same. Let $\tree{z}_i = \tree{t}_i \odot \tree{s}_i$ and
$\str{\tree{z}_i} = (Z_i, \nu_{i})$ for $i \in \{1, 2\}$. For each $m
\in \mathbb{N}$, if $(G_1, \nu_{1, 1}) \lequiv{m} (G_2, \nu_{2, 1})$
and $(H_1, \nu_{1, 2}) \lequiv{m} (H_2, \nu_{2, 2})$, then $(Z_1,
\nu_{1}) \lequiv{m} (Z_2, \nu_{2}) $.
\end{lemma}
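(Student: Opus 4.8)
I would prove Lemma~\ref{lemma:composition-lemma-for-n-partite-cographs} exactly as the other composition lemmas in the paper, namely by exhibiting a winning strategy for the duplicator in the $m$-round $\mc{L}$-$\mathsf{EF}$ game on $(Z_1,\nu_1)$ versus $(Z_2,\nu_2)$ that is the ``composition'' of the two given winning strategies. First I would spell out the structure of $\str{\tree{z}_i}$: since $\tree{z}_i=\tree{t}_i\odot\tree{s}_i$ is formed by merging the children of $\troot{\tree{s}_i}$ into $\troot{\tree{t}_i}$ (the two roots carrying the same label $f_i$), the vertex set of $Z_i$ is the disjoint union of $V(G_i)$ and $V(H_i)$; the labels $\nu_i$ agree with $\nu_{i,1}$ on $V(G_i)$ and with $\nu_{i,2}$ on $V(H_i)$; and the edge relation is as follows. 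For two vertices both in $V(G_i)$ the greatest common ancestor in $\tree{z}_i$ is the same node as in $\tree{t}_i$, so the $G_i$-edges are preserved; likewise for two vertices both in $V(H_i)$ the greatest common ancestor is unchanged (it lies strictly below $\troot{\tree{s}_i}=\troot{\tree{t}_i}$), so the $H_i$-edges are preserved; and for $a\in V(G_i)$, $b\in V(H_i)$ the greatest common ancestor is the merged root, whose label is $f_i$, so $\{a,b\}\in E(Z_i)$ iff $f_i(\ell(a),\ell(b))=1$, where $\ell(\cdot)\in[n]$ denotes the first component of the leaf label. The crucial point is that this cross-edge rule depends only on $f_i$ and on the $[n]$-parts of $a$ and $b$, both of which are among the data that $\mc{L}[m]$-equivalence of the pieces already controls.

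\textbf{Key steps.} (1) Show $f_1=f_2$: the root label of $\tree{t}_i$ is recoverable from $(G_i,\nu_{i,1})$ up to the behaviour it dictates on the $[n]$-parts appearing in $G_i$ --- more carefully, I would just argue that the function $\fsigma{m}$-style composition only ever queries $f_i$ on pairs of $[n]$-values that occur as first components of labels in $G_i$ and $H_i$, and by $\mc{L}[m]$-equivalence (for $m\ge 2$, say, enough to express ``there is a vertex whose label has first component $c$'') the sets of occurring $[n]$-values coincide on the two sides, so the relevant restrictions of $f_1$ and $f_2$ agree; alternatively one observes $\mathsf{Str}$ is insensitive to $f_i$ off those values, and absorbs the discrepancy into an isomorphic re-choice of $\tree{z}_2$. (2) Build the composed strategy $S$: on a point move, if the spoiler plays a vertex of $G_1$ (resp.\ $H_1$) the duplicator replies in $G_2$ (resp.\ $H_2$) by the strategy $S_1$ (resp.\ $S_2$), symmetrically from side $2$; on a set move (for the MSO case) the spoiler's set $Z=X\sqcup Y$ is split into its $G$-part and $H$-part and answered componentwise by $S_1,S_2$. (3) Verify $S$ wins: after $m$ rounds the partial map is a disjoint union of the partial maps produced by $S_1$ and $S_2$, each of which is a partial isomorphism of the respective structures (including the $\nu$-labels, the $\leq$ relations, etc.); the only relations not internal to one side are the cross-edges between a $G$-pebble and an $H$-pebble, and by the structural description in step~(0) such a cross-edge holds iff $f_1$ (resp.\ $f_2$) evaluates to $1$ on the $[n]$-parts of the two pebbles --- but those $[n]$-parts are preserved by $S_1$ and $S_2$ because the $[n]$-component of a vertex's label is part of $\nu_{i,j}$, and $f_1=f_2$ by step~(1), so the cross-edge relation is preserved as well.

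\textbf{Main obstacle.} The delicate point is exactly step~(1)/(3): one must be careful that $\mc{L}[m]$-equivalence genuinely transports the information $S_1$ and $S_2$ need to reproduce cross-edges, i.e.\ that the $[n]$-``part'' of each chosen vertex is visible to the duplicator's component strategies. This is fine because the leaf alphabet is $[n]\times\Sigma$ and $\nu_{i,j}$ records the full label (hence both the $[n]$-part and the $\Sigma$-part), so a partial isomorphism with respect to the labelled structure automatically matches $[n]$-parts; and matching $f_1$ with $f_2$ is a finitary issue resolved by the remark that $\mathsf{Str}$ only cares about $f_i$ on the finitely many $[n]$-values actually occurring, together with the fact (expressible already in low quantifier rank) that these occurring value-sets are the same on both sides. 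Once this is pinned down, the rest is the routine EF-game bookkeeping identical to the proof of Lemma~\ref{lemma:nested-words-composition-lemma}, so I would present the argument at the same level of detail as there, namely: describe $S$, note it is a disjoint composition of $S_1$ and $S_2$, and observe that the only inter-part relation is the cross-edge relation which is determined by $f_1=f_2$ and the preserved $[n]$-parts.
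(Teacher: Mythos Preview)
Your overall strategy---compose the two duplicator strategies componentwise on point and set moves, then check that the resulting partial map respects labels, set membership, edges within each part, and edges across parts---is exactly what the paper does. Your steps (2) and (3) match the paper's argument essentially verbatim.

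The problem is your step (1). You correctly identify that the cross-edge verification needs the root labels $f_1$ and $f_2$ to agree (at least on the $[n]$-pairs that actually occur), but your proposed derivation of this from the $\mc{L}[m]$-equivalence hypotheses cannot work. Take $n=2$, $\Sigma$ a singleton, $G_1=G_2$ a single vertex with $[n]$-part $1$, and $H_1=H_2$ a single vertex with $[n]$-part $2$; let $\tree{t}_1,\tree{s}_1$ have root label the constant-$1$ function and $\tree{t}_2,\tree{s}_2$ have root label the constant-$0$ function. All the equivalence hypotheses hold trivially (the pieces are isomorphic), the within-$i$ root-label condition holds, yet $Z_1$ has an edge between its two vertices and $Z_2$ does not. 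So agreement of $f_1$ and $f_2$ on the relevant values is \emph{not} a consequence of the other hypotheses, and your alternative of ``absorbing the discrepancy into an isomorphic re-choice of $\tree{z}_2$'' does not help either, since $(Z_2,\nu_2)=\str{\tree{z}_2}$ is already fixed by the statement.

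The paper sidesteps this entirely: it reads the hypothesis ``the labels of $\troot{\tree{t}_i}$ and $\troot{\tree{s}_i}$ are the same'' as asserting a \emph{single} common label for all four roots, and in the cross-edge case simply writes that ``the labels of $\troot{\tree{z}_1}$ and $\troot{\tree{z}_2}$ are the same (by assumption)''. This is also the only way the lemma is ever applied (in the proof of Theorem~\ref{theorem:n-partite-cographs-satisfy-lebsp} the root label is a fixed $\sigma$ on both sides). So the right fix is not to attempt a derivation but to read---or, if you prefer, strengthen---the hypothesis to include $f_1=f_2$; once you do that, your argument and the paper's coincide.
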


\begin{proof}

  We prove the lemma for $\mc{L} = \mso$. A similar proof can be done
  for $\mc{L} = \fo$.

  We can assume w.l.o.g. that $\tree{t}_i$ and $\tree{s}_i$ have
  disjoint sets of nodes for $i \in \{1, 2\}$.  Let the set of
  vertices of $\mathsf{Str}(\tree{t}_i)$ and
  $\mathsf{Str}(\tree{s}_i)$ be $\V{\tree{t}_i}$ and $\V{\tree{s}_i}$
  respectively. Then the vertex set $\V{\tree{z}_i}$ of
  $\mathsf{Str}(\tree{z}_i)$ is $\V{\tree{t}_i} \sqcup \V{\tree{s}_i}$
  for $i \in \{1, 2\}$.

Let $\tbf{S}_{\tree{t}}$, resp. $\tbf{S}_{\tree{s}}$, be the strategy
of the duplicator in the $m$-round $\mef$ game between
$\mathsf{Str}(\tree{t}_1)$ and $\mathsf{Str}(\tree{t}_2)$,
resp. between $\mathsf{Str}(\tree{s}_1)$ and $\mathsf{Str}(\tree{s}_2)$. For
the $m$-round $\mef$ game between $\mathsf{Str}({\tree{z}}_1)$ and
$\mathsf{Str}({\tree{z}}_2)$, the duplicator follows the following
strategy, call it $\tbf{R}$.
\begin{itemize}[nosep]
\item Point move: If the spoiler chooses a vertex from
  $\V{\tree{t}_1}$ (resp. $\V{\tree{t}_2}$), then the duplicator
  chooses a vertex from $\V{\tree{t}_2}$ (resp. $\V{\tree{t}_1}$)
  according to $\tbf{S}_{\tree{t}}$. Else, if the spoiler chooses a
  vertex from $\V{\tree{s}_1}$ (resp. $\V{\tree{s}_2}$), then the
  duplicator chooses a vertex from $\V{\tree{s}_2}$
  (resp. $\V{\tree{s}_1}$) according to $\tbf{S}_{\tree{s}}$.

\item Set move: If the spoiler chooses a set, say $U$, from
  $\V{\tree{z}_1}$ (resp. $\V{\tree{z}_2}$), then let $X = U \cap
  \V{\tree{t}_1}$ (resp. $X = U \cap \V{\tree{t}_2}$) and $Y = U \cap
  \V{\tree{s}_1}$ (resp. $Y = U \cap \V{\tree{s}_2}$). Let $X'$ be the
  subset of $\V{\tree{t}_2}$ (resp. $\V{\tree{t}_1}$) that is picked
  according to the strategy $\tbf{S}_{\tree{t}}$ in response to the
  choice of $X$ in $\V{\tree{t}_1}$
  (resp. $\V{\tree{t}_2}$). Likewise, let $Y'$ be the subset of
  $\V{\tree{s}_2}$ (resp. $\V{\tree{s}_1}$) that is picked according
  to $\tbf{S}_{\tree{s}}$ in response to the choice of $Y$ in
  $\V{\tree{s}_1}$ (resp. $\V{\tree{s}_2}$).  Then the set $U'$ picked
  by the duplicator from $\V{\tree{z}_2}$ according to strategy
  $\tbf{R}$ is given by $U' = X' \sqcup Y'$.
\end{itemize}
We now show that $\tbf{R}$ is a winning strategy for the duplicator.

Let at the end of $m$ rounds, the vertices and sets chosen from
$\mathsf{Str}({\tree{z}}_1)$, resp. $\mathsf{Str}({\tree{z}}_2)$, be
$a_1, \ldots, a_p$ and $A_1, \ldots, A_r$, resp.  $b_1, \ldots, b_p$
and $B_1, \ldots, B_r$, where $p + r = m$. Let $A_l^1 = A_l\cap
\V{\tree{t}_1}, A_l^2 = A_l \cap \V{\tree{s}_1}, B_l^1 = B_l \cap
\V{\tree{t}_2}$ and $B_l^2 = B_l \cap \V{\tree{s}_2}$ for $l \in \{1,
\ldots, r\}$.

It is easy to see that the labels of $a_i$ and $b_i$ are the same for
all $i \in \{1, \ldots, p\}$.  Also by the description of $\tbf{R}$
given above it is easy to check for all $i \in \{1, \ldots, p\}$ that
$a_i \in \V{\tree{t}_1}$ iff $b_i \in \V{\tree{t}_2}$ and $a_i \in
\V{\tree{s}_1}$ iff $b_i \in \V{\tree{s}_2}$. Likewise, for all $l \in
\{1, \ldots, r\}$ and $i \in \{1, \ldots, p\}$, we have $a_i \in
A_l^1$ iff $b_i \in B_l^1$ and $a_i \in A_l^2$ iff $b_i \in B_l^2$,
whereby $a_i \in A_l$ iff $b_i \in B_l$.

Consider $a_i, a_j$ for $i \neq j$ and $i, j \in \{1, \ldots, p\}$.
We show below that $a_i, a_j$ are adjacent in $\str{\tree{z}_1}$ iff
$b_i, b_j$ are adjacent in $\str{\tree{z}_2}$. This would show that
$a_i \mapsto b_i$ is a partial isomorphism between $(\str{\tree{z}_1},
A_1, \ldots, A_r)$ and $(\str{\tree{z}_2}, B_1, \ldots, B_r)$
completing the proof. We have the following three cases:

\begin{enumerate}[nosep]
\item Each of $a_i$ and $a_j$ is from $\V{\tree{t}_1}$: Then by the
  description of $\tbf{R}$ above, we have that (i) $b_i$ and $b_j$ are
  both from $\V{\tree{t}_2}$ and (ii) $a_i, a_j$ are adjacent in
  $\mathsf{Str}(\tree{t}_1)$ iff $b_i, b_j$ are adjacent in
  $\mathsf{Str}(\tree{t}_2)$. Observe that $a_i, a_j$ are adjacent in
  $\mathsf{Str}(\tree{t}_1)$ iff $a_i, a_j$ are adjacent in
  $\mathsf{Str}(\tree{z}_1)$. Likewise, $b_i, b_j$ are adjacent in
  $\mathsf{Str}(\tree{t}_2)$ iff $b_i, b_j$ are adjacent in
  $\mathsf{Str}(\tree{z}_2)$. Then $a_i, a_j$ are adjacent in
  $\mathsf{Str}(\tree{z}_1)$ iff $b_i, b_j$ are adjacent in
  $\mathsf{Str}(\tree{z}_2)$.

\item Each of $a_i$ and $a_j$ is from $\V{\tree{s}_1}$: Reasoning
  similarly as in the previous case, we can show that $a_i, a_j$ are
  adjacent in $\mathsf{Str}(\tree{z}_1)$ iff $b_i, b_j$ are adjacent
  in $\mathsf{Str}(\tree{z}_2)$.

\item W.l.o.g.  $a_i \in \V{\tree{t}_1}$ and $a_j \in
  \V{\tree{s}_1}$: Then $b_i \in \V{\tree{t}_2}$ and $b_j \in
  \V{\tree{s}_2}$. Observe now that the greatest common ancestor of
  $a_i$ and $a_j$ in $\tree{z}_1$ is $\troot{\tree{z}_1}$, and the
  greatest common ancestor of $b_i$ and $b_j$ in $\tree{z}_2$ is
  $\troot{\tree{z}_2}$. Since (i) the labels of $\troot{\tree{z}_1}$
  and $\troot{\tree{z}_2}$ are the same (by assumption) and (ii) the
  label of $a_i$ (resp. $a_j$) in $\tree{z}_1$ = label of $a_i$
  (resp. $a_j$) in $\str{\tree{z}_1}$ = label of $b_i$ (resp. $b_j$)
  in $\str{\tree{z}_2}$ = label of $b_i$ (resp. $b_j$) in
  $\tree{z}_2$, it follows by the definition of an $n$-partite cograph
  that $a_i, a_j$ are adjacent in $\mathsf{Str}(\tree{z}_1)$ iff $b_i,
  b_j$ are adjacent in $\mathsf{Str}(\tree{z}_2)$.
\end{enumerate}

\end{proof}

\begin{proof}[Proof idea for
    Theorem~\ref{theorem:n-partite-cographs-satisfy-lebsp}] We first
  show the result for $\cl{S} = \labelednpartitecographs(\Sigma)$. The
  result for the various specific classes mentioned in the statement
  of the theorem follows from the fact that $\lebsp{\cdot}$ is closed
  under hereditary subclasses, and that all of the specific classes
  are hereditary subclasses of $n$-partite
  cographs~\cite{shrub-depth}. As described prior to the statement of
  Lemma~\ref{lemma:composition-lemma-for-n-partite-cographs}, there is
  a natural representation map $\mathsf{Str}: \mc{T} \rightarrow
  \cl{S}$ that is elementary.  Using the composition lemma for
  $n$-partite graphs given by
  Lemma~\ref{lemma:composition-lemma-for-n-partite-cographs}, we can
  see that $\mathsf{Str}$ is $\mc{L}$-good for $\cl{S}$, whereby we
  are done by
  Theorem~\ref{theorem:good-tree-rep-implies-lebsp-and-f.p.t.-algorithm}. That
  the graph classes with the bounded parameters as above have
  elementary witness functions follows again from
  Theorem~\ref{theorem:good-tree-rep-implies-lebsp-and-f.p.t.-algorithm}
  and elementariness of the index of the $\lequiv{m}$ relation over
  these classes (the latter follows from Theorem 3.2
  of~\cite{shrub-depth-FO-equals-MSO}).
\end{proof}

\newcommand{\ndisjointsum}{\ensuremath{n\text{-}\mathsf{disjoint}\text{-}\mathsf{sum}}}
\newcommand{\ncopy}{\ensuremath{n\text{-}\mathsf{copy}}}
\newcommand{\twocopy}{\ensuremath{2\text{-}\mathsf{copy}}}

\subsection{Classes generated using translation schemes}

We look operations on classes of structures, that are
``implementable'' using quantifier-free translation
schemes~\cite{makowsky}. Given a vocabulary $\tau$, let
$\tau_{\text{disj-un}, n}$ be the vocabulary obtained by expanding
$\tau$ with $n$ fresh unary predicates $P_1, \ldots, P_n$.  Given
$\tau$-structures $\mf{A}_1, \ldots, \mf{A}_n$ (assumed disjoint
w.l.o.g.), the \emph{$n$-disjoint sum of $\mf{A}_1, \ldots,
  \mf{A}_n$}, denoted $\bigoplus_{i = 1}^{i = n} \mf{A}_i$, is the
$\tau_{\text{disj-un}, n}$-structure obtained upto isomorphism, by
expanding the disjoint union $\bigsqcup_{i = 1}^{i = n}\mf{A}_i$ with
$P_1, \ldots, P_n$ interpreted respectively as the universe of
$\mf{A}_1, \ldots, \mf{A}_n$. Let $\cl{S}_1, \ldots, \cl{S}_n$ be
given classes of structures. A quantifier-free $(t,
\tau_{\text{disj-un}, n}, \tau, \fo)$-translation scheme $\Xi$ gives
rise to an $n$-ary operation $\mathsf{O}: \cl{S}_1 \times \cdots
\times \cl{S}_n \rightarrow \{\Xi(\bigoplus_{i = 1}^{i = n} \mf{A}_i)
\mid \mf{A}_i \in \cl{S}_i, 1 \leq i \leq n\}$ defined as
$\mathsf{O}_1(\mf{A}_1, \ldots, \mf{A}_n) = \Xi(\bigoplus_{i = 1}^{i =
  n} \mf{A}_i)$.  In this case, we say that $\mathsf{O}$ is
\emph{implementable using $\Xi$}. We say an operation is
\emph{quantifier-free}, if it is of the kind $\mathsf{O}$ just
described.

For a quantifier-free operation $\mathsf{O}$, define the
\emph{dimension} of $\mathsf{O}$ to be the minimum of the dimensions
of the quantifier-free translation schemes that implement
$\mathsf{O}$.  We say $\mathsf{O}$ is ``sum-like'' if its dimension is
one, else we say $\mathsf{O}$ is ``product-like''.  Call $\mathsf{O}$
as \emph{$\lequiv{m}$-preserving} if whenever an input of $\mathsf{O}$
is replaced with an $\mc{L}[m]$-equivalent input, the output of
$\mathsf{O}$ is replaced with an $\mc{L}[m]$-equivalent output. We say
$\mathsf{O}$ is \emph{monotone} if any input of $\mathsf{O}$ is
embeddable in the output of $\mathsf{O}$. The well-studied unary graph
operations like complementation, transpose, and the line-graph
operation, and binary operations like disjoint union and join are all
sum-like, $\lequiv{m}$-preserving and monotone. Likewise, the
well-studied Cartesian, tensor, lexicographic, and strong products are
all product-like, $\lequiv{m}$-preserving and monotone. The central
result of this section is as stated below.

\begin{theorem}\label{theorem:lebsp-closure-under-1-step-operations}
Let $\cl{S}_1, \ldots, \cl{S}_n, \cl{S}$ be classes of structures and
let $\mathsf{O}: \cl{S}_1 \times \cdots \times \cl{S}_n \rightarrow
\cl{S}$ be a surjective $n$-ary quantifier-free operation. Then the
following are true:
\vspace{2pt}\begin{enumerate}[nosep]
  \item In each of the following scenarios, it is the case that if
    $\lebsp{\cl{S}_i}$ holds (with computable/elementary witness
    functions) for each $i \in \{1, \ldots, n\}$, then
    $\lebsp{\cl{S}}$ holds as well (with computable/elementary witness
    functions): (i) $\mathsf{O}$ is sum-like (ii) $\mathsf{O}$ is
    product-like and $\mc{L} = \fo$.
  \item Suppose $\cl{S}_i$ admits an effective $\mc{L}$-good tree
    representation for each $i \in \{1, \ldots, n\}$, and $\mathsf{O}$
    is $\lequiv{m}$ preserving and monotone. Then there exists an effective
    $\mc{L}$-good tree representation $\mathsf{Str}: \mc{T}
    \rightarrow \mc{Z}$ for the class $\mc{Z} = \cl{S} \cup \bigcup_{i
      = 1}^{i = n} \cl{S}_i$.
  \item Let $\mathsf{Str}$ be as given by the previous point. Then
    there is a linear time f.p.t. algorithm for $\mathsf{MC}(\mc{L},
    \cl{S})$ that decides, for every $\mc{L}$ sentence $\varphi$ (the
    parameter), if a given structure $\mf{A}$ in $\cl{S}$ satisfies
    $\varphi$, provided a tree representation of $\mf{A}$ under
    $\mathsf{Str}$ is given.
   \end{enumerate}
\end{theorem}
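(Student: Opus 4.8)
We sketch the argument for the three parts in turn.

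\textbf{Part 1.} The plan is to reduce everything to one quantitative fact about quantifier-free operations, and then shrink each input coordinate separately. The key claim is: there is a computable, in fact linear, function $g = g_{\mathsf{O}}:\mathbb{N}\to\mathbb{N}$ with $g(m)\ge m$ such that, for all admissible tuples, if $\mf{A}_i\lequiv{g(m)}\mf{A}_i'$ for every $i$ then $\mathsf{O}(\mf{A}_1,\dots,\mf{A}_n)\lequiv{m}\mathsf{O}(\mf{A}_1',\dots,\mf{A}_n')$; this holds unconditionally when $\mc{L}=\fo$, and when $\mc{L}=\mso$ and $\mathsf{O}$ is sum-like. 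Since $\mathsf{O}(\bar{\mf{A}}) = \Xi(\bigoplus_{i = 1}^{i = n}\mf{A}_i)$ for a quantifier-free translation scheme $\Xi$ of some dimension $t$ (with $t=1$ in the sum-like case), the claim factors into: (a) the Feferman--Vaught composition for disjoint sums, namely $\mf{A}_i\lequiv{m'}\mf{A}_i'$ for all $i$ implies $\bigoplus_i\mf{A}_i\lequiv{m'}\bigoplus_i\mf{A}_i'$, which for both $\fo$ and $\mso$ one proves by letting the duplicator play the component $\mathsf{EF}$-games independently (the fresh unary predicates $P_i$ are automatically respected, since all moves stay inside components); and (b) the standard property of translation schemes~\cite{makowsky}, $\mf{C}\models\Xi^{\sharp}(\varphi)$ iff $\Xi^{*}(\mf{C})\models\varphi$, together with the observation that a quantifier-free $t$-dimensional scheme sends a rank-$m$ sentence to one of rank at most $g(m)=O(t\cdot m)$ (each quantifier becomes $t$ quantifiers plus quantifier-free guards), so that $\mf{C}\lequiv{g(m)}\mf{C}'$ implies $\Xi(\mf{C})\lequiv{m}\Xi(\mf{C}')$. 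Step (b) is exactly where $\mc{L}=\mso$ with $t\ge 2$ must be excluded: a monadic set over $\Xi(\mf{C})$ pulls back to a $t$-ary relation over $\mf{C}$, which is not $\mso$-expressible. Given the claim, take $\mf{B}\in\cl{S}$; by surjectivity write $\mf{B}=\mathsf{O}(\mf{A}_1,\dots,\mf{A}_n)$ with $\mf{A}_i\in\cl{S}_i$, apply $\lebsp{\cl{S}_i}$ at level $g(m)$ to get $\mf{A}_i'\subseteq\mf{A}_i$ with $\mf{A}_i'\in\cl{S}_i$, $|\mf{A}_i'|\le\lwitfn{\cl{S}_i}(g(m))$ and $\mf{A}_i'\lequiv{g(m)}\mf{A}_i$, and put $\mf{B}'=\mathsf{O}(\mf{A}_1',\dots,\mf{A}_n')$. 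Then $\mf{B}'\in\cl{S}$; $\mf{B}'\lequiv{m}\mf{B}$ by the claim; $|\mf{B}'|\le(\sum_{i=1}^n\lwitfn{\cl{S}_i}(g(m)))^{t}$, as the universe of $\Xi(\cdot)$ consists of $t$-tuples; and $\mf{B}'\hookrightarrow\mf{B}$ because $\bigoplus_i(\cdot)$ and every quantifier-free translation scheme preserve the induced-substructure relation (quantifier-free formulas being preserved and reflected). Hence some induced substructure of $\mf{B}$ isomorphic to $\mf{B}'$ lies in $\cl{S}$ (as $\cl{S}$ is closed under isomorphisms) and witnesses $\lebspcond$. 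Taking $\lwitfn{\cl{S}}(m)=(\sum_{i=1}^n\lwitfn{\cl{S}_i}(g(m)))^{t}$ (monotone, being built from the monotone $\lwitfn{\cl{S}_i}$ and $g$) establishes $\lebsp{\cl{S}}$; it is computable if all $\lwitfn{\cl{S}_i}$ are, and elementary if all are, since $g$ and $x\mapsto x^{t}$ are elementary.

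\textbf{Part 2.} The plan is to build $\mathsf{Str}:\mc{T}\to\mc{Z}$ by gluing the given effective $\mc{L}$-good representations $\mathsf{Str}_i:\mc{T}_i\to\cl{S}_i$ under a fresh $n$-ary root symbol. Relabel so that the tree alphabets of the $\mc{T}_i$ are pairwise disjoint, adjoin one new symbol $\mathsf{o}$ with $\rho(\mathsf{o})=n$, and set $\sigmarank=\{\mathsf{o}\}\cup\bigcup_i\sigmarank^{(i)}$ and analogously for $\sigmaint,\sigmaleaf,\rho$. Let $\mc{T}$ be the class of $(\sigmaint\cup\sigmaleaf)$-trees that either lie in some $\mc{T}_i$ or are rooted at $\mathsf{o}$ with the $i$-th child subtree in $\mc{T}_i$ for each $i$; let $\mathsf{Str}$ coincide with $\mathsf{Str}_i$ on $\mc{T}_i$, and send an $\mathsf{o}$-rooted tree with child subtrees $\tree{t}_1,\dots,\tree{t}_n$ to $\mathsf{O}(\mathsf{Str}_1(\tree{t}_1),\dots,\mathsf{Str}_n(\tree{t}_n))$ (well defined by disjointness of alphabets). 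One then checks: (i) $\mc{T}$ is representation-feasible for $(\sigmarank,\rho)$ --- the labelling and ranking conditions are immediate, and the three closure conditions follow from those of the $\mc{T}_i$ together with $\mathsf{o}\in\sigmarank$ (so no child of an $\mathsf{o}$-node is ever removed) and the fact that any rooted subtree strictly below the $\mathsf{o}$-root again lies in some $\mc{T}_i$; (ii) isomorphism preservation and surjectivity of $\mathsf{Str}$, inherited from the $\mathsf{Str}_i$ and from surjectivity/isomorphism-invariance of $\mathsf{O}$; (iii) the monotonicity clauses of Definition~\ref{definition:L-good-map} --- inside a $\mc{T}_i$-part they are inherited from $\mathsf{Str}_i$ and pushed through the top operation using that quantifier-free operations preserve $\hookrightarrow$, whereas the cases reaching across the $\mathsf{o}$-node (most notably clause~A.1 for a child subtree of the $\mathsf{o}$-root) are exactly where the hypothesis that $\mathsf{O}$ is monotone, i.e.\ every input of $\mathsf{O}$ embeds in its output, is used; and (iv) the composition clause, with $m_0=\max_i m_0^{(i)}$: for $\sigma$ in some $\sigmaint^{(i)}$, a $\sigma$-labelled node is forced by alphabet-disjointness to sit inside a $\mc{T}_i$-part, so the required functions are got by transporting $\mathsf{Str}_i$'s composition functions along the (injective) inclusion of $\lequiv{m}$-classes of $\cl{S}_i$ into $\ldelta{\mc{Z}}{m}$, while for $\sigma=\mathsf{o}$ one sets $f_{\mathsf{o},m}(\delta_1,\dots,\delta_n)$ to be the $\lequiv{m}$-class of $\mathsf{O}(\mf{A}_1,\dots,\mf{A}_n)$ for any representatives $\mf{A}_i\in\cl{S}_i$ of the $\delta_i$ --- which is well defined and correct precisely because $\mathsf{O}$ is $\lequiv{m}$-preserving. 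Effectiveness is clear: $\mc{T}$ is recursive since each $\mc{T}_i$ is and membership reduces to finitely many $\mc{T}_i$-tests, and $\mathsf{Str}(\tree{t})$ is computable by recursing into the $\mathsf{Str}_i$ and then applying $\mathsf{O}$ (a composition of disjoint sum with a translation scheme). This yields the desired effective $\mc{L}$-good tree representation of $\mc{Z}$.

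\textbf{Part 3, and the main obstacle.} Part 3 is then immediate: since $\mathsf{Str}:\mc{T}\to\mc{Z}$ is an effective $\mc{L}$-good tree representation and $\cl{S}\subseteq\mc{Z}$, part~\ref{theorem:good-tree-rep-implies-lebsp-and-f.p.t.-algorithm:point-2} of Theorem~\ref{theorem:good-tree-rep-implies-lebsp-and-f.p.t.-algorithm} applied with $\mc{Z}$ in place of $\cl{S}$ gives a linear time f.p.t.\ algorithm for $\mathsf{MC}(\mc{L},\mc{Z})$ which, on an $\mc{L}$ sentence $\varphi$ and a tree representation under $\mathsf{Str}$ of a structure in $\mc{Z}$, decides satisfaction in time $f(\rank{\varphi})\cdot|\tree{t}|$; restricting inputs to $\cl{S}$ (which lies in $\mc{Z}$) gives the claim, the tree representation being provided by hypothesis. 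The genuinely new content sits in Part 1's rank-cost claim and in the composition clause of Part 2; the rest is routine closure bookkeeping. In Part 1 the crux is the clean split into the disjoint-sum Feferman--Vaught step and the translation-scheme step, and noticing that the latter forces the $\fo$ restriction for product-like operations; in Part 2 the crux is that the composition function for the new root symbol $\mathsf{o}$ exists exactly because $\mathsf{O}$ was assumed $\lequiv{m}$-preserving, and that monotonicity of $\mathsf{Str}$ at $\mathsf{o}$ is exactly monotonicity of $\mathsf{O}$.
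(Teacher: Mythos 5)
Your proposal is correct and follows essentially the same route as the paper: part (1) via the split into the disjoint-sum composition step and the quantifier-free translation-scheme step with the $t\cdot m$ rank blow-up for $\fo$ (scalar restriction for $\mso$), part (2) by gluing the given representations under a fresh ranked root symbol interpreted via $\mathsf{O}$, with $\lequiv{m}$-preservation supplying the root's composition function and monotonicity of $\mathsf{O}$ supplying the monotonicity clause, and part (3) by invoking Theorem~\ref{theorem:good-tree-rep-implies-lebsp-and-f.p.t.-algorithm}. The paper merely packages part (1) as two separate auxiliary results (Lemma~\ref{lemma:lebsp-pres-under-n-disj-sum-and-n-copy} and Proposition~\ref{prop:ebsp-gebsp-and-transductions}) and leaves the verifications in part (2) implicit, which you spell out.
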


Towards the proof of
Theorem~\ref{theorem:lebsp-closure-under-1-step-operations}, we first
present the following two auxiliary results. Below,
$\ndisjointsum(\cl{S}_1, \ldots, \cl{S}_n) = \{ \bigoplus_{i = 1}^{i =
  n} \mf{A}_i \mid \mf{A}_i \in \cl{S}_i, 1 \leq i \leq n \}$. Also,
we say a quantifier-free translation scheme is \emph{scalar} if its
dimension is one.

\begin{lemma}\label{lemma:lebsp-pres-under-n-disj-sum-and-n-copy}
Let $\cl{S}, \cl{S}_1, \ldots, \cl{S}_n$ be classes of structures for
$n \ge 1$.  If $\lebsp{\cl{S}_i}$ is true for each $i \in \{1, \ldots,
n\}$, then so is $\lebsp{\ndisjointsum(\cl{S}_1, \ldots, \cl{S}_n)}$.
Further, if there is a computable/elementary witness function for
$\lebsp{\cl{S}_i}$ for each $i \in \{1, \ldots, n\}$, then there is a
computable/elementary witness function for
$\lebsp{\ndisjointsum(\cl{S}_1, \ldots, \cl{S}_n)}$ as well.
\end{lemma}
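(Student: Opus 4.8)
The plan is to reduce the $n$-ary statement to a single combined witness function by taking $\mf{B}$ componentwise. Fix $m \in \mathbb{N}$ and a structure $\mf{A} = \bigoplus_{i=1}^{n} \mf{A}_i \in \ndisjointsum(\cl{S}_1, \ldots, \cl{S}_n)$, where $\mf{A}_i \in \cl{S}_i$. First I would invoke $\lebsp{\cl{S}_i}$ for each $i$ to obtain $\mf{B}_i \subseteq \mf{A}_i$ with $\mf{B}_i \in \cl{S}_i$, $|\mf{B}_i| \le \lwitfn{\cl{S}_i}(m)$, and $\mf{B}_i \lequiv{m} \mf{A}_i$. Then I would set $\mf{B} = \bigoplus_{i=1}^{n} \mf{B}_i$; since disjoint sum is defined up to isomorphism and each $\mf{B}_i$ sits inside $\mf{A}_i$ with the fresh unary predicates $P_i$ being interpreted identically on the nose, $\mf{B}$ is (isomorphic to) an induced substructure of $\mf{A}$ and lies in $\ndisjointsum(\cl{S}_1, \ldots, \cl{S}_n)$ by definition of that class. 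The size bound is immediate: $|\mf{B}| = \sum_i |\mf{B}_i| \le \sum_i \lwitfn{\cl{S}_i}(m)$, so one takes $\theta_{(\ndisjointsum(\cl{S}_1,\ldots,\cl{S}_n), \mc{L})}(m) = \sum_{i=1}^n \lwitfn{\cl{S}_i}(m)$, which is monotonic, and is computable (resp.\ elementary) if each $\lwitfn{\cl{S}_i}$ is.

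The only real content is condition (iv): $\mf{B} \lequiv{m} \mf{A}$. I would prove this by an $\mc{L}[m]$ Ehrenfeucht--Fra\"iss\'e game argument. The duplicator, playing the $m$-round $\mc{L}$-$\mathsf{EF}$ game between $\mf{A} = \bigoplus_i \mf{A}_i$ and $\mf{B} = \bigoplus_i \mf{B}_i$, maintains $n$ side games, one between $\mf{A}_i$ and $\mf{B}_i$ for each $i$, using the winning strategy $S_i$ guaranteed by $\mf{A}_i \lequiv{m} \mf{B}_i$ (its $\mathsf{EF}$-characterization). A point move (or a monadic set move, in the MSO case) in $\mf{A}$ decomposes according to the partition induced by $P_1, \ldots, P_n$: an element played in the $P_i$-part, or the $P_i$-part of a chosen set, is answered by routing it through $S_i$; since the $P_i$ are part of the vocabulary, the spoiler's choice is already confined to a single block in the point-move case, and a set splits as a disjoint union of its blocks in the set-move case. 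One then checks that at the end the induced map is a partial isomorphism: within a block this is exactly the partial-isomorphism condition delivered by $S_i$; across blocks, no $\tau$-relation holds (the universes are disjoint and $\tau$ has only predicate symbols), and the unary predicates $P_i$ are matched because block membership is preserved. This is the standard Feferman--Vaught composition for disjoint sums, and it is entirely analogous to the proof of Lemma~\ref{lemma:nested-words-composition-lemma} already given in the paper, so I would present it compactly rather than in full detail.

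I do not expect any step to be a genuine obstacle; the lemma is a routine ``composition'' fact. The one point requiring a little care is making sure the argument is uniform in $n$ and handles both $\fo$ and $\mso$ — the MSO set moves are what make the disjoint-sum composition nontrivial, and one must be explicit that a set chosen in $\mf{A}$ is reconstructed as the disjoint union of the responses obtained from the $n$ side strategies. Since $n$ is fixed and finite, iterating the two-summand composition (or, equivalently, running all $n$ side games in parallel) causes no loss, and monotonicity of the resulting witness function is clear from the formula $\theta(m) = \sum_i \lwitfn{\cl{S}_i}(m)$.
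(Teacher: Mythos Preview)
Your proposal is correct and follows essentially the same approach as the paper: take $\mf{B}_i$ componentwise, form $\mf{B} = \bigoplus_i \mf{B}_i$, and use the witness function $\theta(m) = \sum_i \lwitfn{\cl{S}_i}(m)$. The only cosmetic difference is that the paper factors out the two needed facts (that $\bigoplus$ preserves embeddings and $\lequiv{m}$) into a separate lemma (Lemma~\ref{lemma:facts-about-disj-unions}) whose proof is omitted, whereas you inline the standard Feferman--Vaught $\mathsf{EF}$-game argument for the $\lequiv{m}$ part.
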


\begin{prop}\label{prop:ebsp-gebsp-and-transductions}
Let $\cl{S}$ be class of $\tau$-structures, and let $\Xi = (\xi,
(\xi_R)_{R \in \nu})$ be a quantifier-free $(t, \tau, \nu,
\fo)$-translation scheme.  Then the following hold for each $k \in
\mathbb{N}$.
\begin{enumerate}[nosep]
\item If ~$\febsp{\cl{S}}$ is true, then so is
  $\febsp{\Xi(\cl{S})}$.\label{prop:ebsp-pres-under-transductions}
\item If ~$\Xi$ is scalar and $\mebsp{\cl{S}}$ is true, then so
  is $\mebsp{\Xi(\cl{S})}$.
  \label{prop:gebspmso-pres-under-transductions}
\end{enumerate}
In each of the implications above, a computable/elementary witness
function for the antecedent implies a computable/elementary witness
function for the consequent.
\end{prop}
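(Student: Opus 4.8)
Write $\cl{T} = \Xi(\cl{S})$ for the image class (the structures $\Xi^{*}(\mf{A})$ with $\mf{A} \in \cl{S}$ on which $\Xi^{*}$ is defined), and let $t$ be the dimension of $\Xi$; in scenario (2) we have $t = 1$. The plan rests on three bookkeeping facts about a \emph{quantifier-free} translation scheme, which I will record first. \textbf{(a)} $\Xi^{\sharp}$ sends an $\fo(\nu)$ sentence of rank $r$ to an $\fo(\tau)$ sentence of rank at most $t\cdot r$: each first-order quantifier over one $\nu$-element is replaced by a block of $t$ first-order quantifiers relativised by the \emph{quantifier-free} formula $\xi$, while atoms $R(\bar x)$ and $x = y$ translate into the quantifier-free formulas $\xi_{R}$ and conjunctions of equalities, so no further quantifiers appear; and when $t = 1$, $\Xi^{\sharp}$ moreover sends an $\mso(\nu)$ sentence of rank $r$ to an $\mso(\tau)$ sentence of rank at most $r$, since a set of elements of $\Xi^{*}(\mf{A})$ is then (a subset of) a set of elements of $\mf{A}$. \textbf{(b)} The fundamental property: $\mf{A} \models \Xi^{\sharp}(\psi)$ iff $\Xi^{*}(\mf{A}) \models \psi$, whenever $\Xi^{*}$ is defined on $\mf{A}$. \textbf{(c)} $\Xi^{*}$ commutes with induced substructures: if $\mf{B} \subseteq \mf{A}$ then $\mathsf{U}_{\Xi^{*}(\mf{B})} = \mathsf{U}_{\Xi^{*}(\mf{A})} \cap (\mathsf{U}_{\mf{B}})^{t}$, and each relation of $\Xi^{*}(\mf{B})$ is the restriction to that set of the corresponding relation of $\Xi^{*}(\mf{A})$ --- because $\xi$ and the $\xi_{R}$ are quantifier-free, their truth on a tuple of elements of $\mf{B}$ is the same whether computed in $\mf{B}$ or in $\mf{A}$. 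In particular $\Xi^{*}(\mf{B})$ is an induced substructure of $\Xi^{*}(\mf{A})$ and $|\Xi^{*}(\mf{B})| \le |\mf{B}|^{t}$.

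Given $m \in \mathbb{N}$ (the case $m = 0$ being trivial, assume $m \ge 1$), set $m' = t\cdot m$, so $m' = m$ in scenario (2). Let $\mf{C} \in \cl{T}$ be arbitrary and fix $\mf{A} \in \cl{S}$ with $\mf{C} = \Xi^{*}(\mf{A})$. Applying the hypothesis $\lebsp{\cl{S}}$ at quantifier depth $m'$ yields $\mf{B}$ with $\mf{B} \in \cl{S}$, $\mf{B} \subseteq \mf{A}$, $|\mf{B}| \le \lwitfn{\cl{S}}(m')$ and $\mf{B} \lequiv{m'} \mf{A}$. Put $\mf{D} = \Xi^{*}(\mf{B})$. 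Then $\mf{D}$ is defined --- ``$\Xi^{*}$ is defined'' amounts to $\mf{A} \models \exists \bar x\,\xi(\bar x)$, an $\fo$ sentence of rank $t \le m'$, which transfers from $\mf{A}$ to $\mf{B}$ --- hence $\mf{D} \in \cl{T}$; $\mf{D} \subseteq \mf{C}$ by fact (c); and $|\mf{D}| \le |\mf{B}|^{t} \le \lwitfn{\cl{S}}(m')^{t}$. Finally $\mf{D} \lequiv{m} \mf{C}$: for an $\mc{L}(\nu)$ sentence $\psi$ of rank $\le m$, fact (a) makes $\Xi^{\sharp}(\psi)$ an $\mc{L}(\tau)$ sentence of rank $\le m'$ (here the restriction $t = 1$ in scenario (2) is exactly what keeps $\Xi^{\sharp}(\psi)$ inside $\mso$), so by fact (b) and $\mf{A} \lequiv{m'} \mf{B}$ we get $\mf{C} \models \psi \iff \mf{A} \models \Xi^{\sharp}(\psi) \iff \mf{B} \models \Xi^{\sharp}(\psi) \iff \mf{D} \models \psi$. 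Thus $\lebsp{\cl{T}}$ holds with the monotonic witness function $\lwitfn{\cl{T}}(m) = \lwitfn{\cl{S}}(t\cdot m)^{t}$, which is computable, resp.\ elementary, whenever $\lwitfn{\cl{S}}$ is.

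The facts (a)--(c) are routine from the definitions of $\Xi^{*}$ and $\Xi^{\sharp}$ once quantifier-freeness is used, and the rest of the argument is a short chase through the fundamental property. The one genuine obstacle is the $\mso$ half of fact (a): for a general $t$-dimensional scheme a monadic set variable ranging over $\Xi^{*}(\mf{A})$ corresponds to a set of $t$-tuples of $\mf{A}$, which is not a monadic relation, so $\Xi^{\sharp}$ need not map $\mso$ into $\mso$ --- this is precisely why part (2) is restricted to scalar $\Xi$, and it is also what makes the quantifier-rank blow-up vanish there. A secondary point to handle with care is that $\Xi^{*}$ really is defined on the small structure $\mf{B}$, which I will justify (as above) via the definedness condition being a low-rank $\fo$ sentence preserved by $\lequiv{m'}$.
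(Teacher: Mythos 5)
Your proposal is correct and follows essentially the same route as the paper: apply $\reflebsp$ on $\cl{S}$ at rank $t\cdot m$, use quantifier-freeness of $\xi$ and the $\xi_R$ to get $\Xi(\mf{B}) \subseteq \Xi(\mf{A})$, transfer $\lequiv{t\cdot m}$ (resp.\ $\lequiv{m}$ in the scalar MSO case) through the translation scheme, and take $\lwitfn{\Xi(\cl{S})}(m) = (\lwitfn{\cl{S}}(t\cdot m))^t$. The only differences are presentational: you re-derive the rank-transfer and substructure-preservation facts that the paper isolates as Proposition~\ref{prop:relating-transductions-applications-to-structures-and-formulae}, Lemma~\ref{corollary:transferring-m-equivalence-across-transductions} and Lemma~\ref{lemma:qt-free-transductions-preserve-substructure-prop}, and you additionally make explicit the (glossed-over in the paper) point that $\Xi^*$ is defined on the small substructure $\mf{B}$.
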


\begin{proof}[Proof of Theorem~\ref{theorem:lebsp-closure-under-1-step-operations}]
  (1): Follows easily from
  Lemma~\ref{lemma:lebsp-pres-under-n-disj-sum-and-n-copy} and
  Proposition~\ref{prop:ebsp-gebsp-and-transductions}.

  (2): Let $\mathsf{Str}_i: \mc{T}_i \rightarrow \cl{S}_i$ be an
  effective $\mc{L}$-good tree representation for $\cl{S}_i$ for
  $1\leq i \leq n$, where $\mc{T}_i$ is a class of trees over
  $(\sigmaint^i \cup \sigmaleaf^i)$ that is representation feasible
  for $(\sigmarank^i, \rho_i)$.

  Let $O$ be a new label that is not in $(\sigmaint^i \cup
  \sigmaleaf^i)$ for any $i \in \{1, \ldots, n\}$. Define $\sigmaint,
  \sigmaleaf, \sigmarank$ and $\rho: \sigmarank \rightarrow
  \mathbb{N}_+$ as follows:
  \begin{itemize}
    \item $\sigmaint = \{O\} \cup \bigcup_{i = 1}^{i =n} \sigmaint^i$
    \item $\sigmaleaf = \bigcup_{i = 1}^{i =n} \sigmaleaf^i$
    \item $\sigmarank = \{O\} \cup \bigcup_{i = 1}^{i =n}
      \sigmarank^i$
    \item $\rho = \{ (O, n)\} \cup \bigcup_{i = 1}^{i = n} \rho_i$.
  \end{itemize}
  Let $\widehat{\mc{T}}$ be the class of all trees over $(\sigmaint
  \cup \sigmaleaf)$ obtained by taking $\tree{t}_i \in \mc{T}_i$ for
  $1 \leq i \leq n$, and making $\tree{t}_1, \ldots, \tree{t}_n$ as
  child subtrees (and in that order) of a new root node whose label is
  $O$.  Let $\mc{T} = \widehat{\mc{T}} \cup \bigcup_{i = 1}^{i = n}
  \mc{T}_i$. Verify that $\mc{T}$ is indeed representation feasible
  for $(\sigmarank, \rho)$.

  Let $\mathsf{Str}: \mc{T} \rightarrow \mc{Z}$ be such that for
  $\tree{t} \in \mc{T}$, if $\tree{t} \in \mc{T}_i$, then
  $\str{\tree{t}} = \mathsf{Str}_i(\tree{t})$. Else, let $a_1, \ldots,
  a_n$ be the children of the root of $\tree{t}$. Clearly then
  $\tree{t}_{\ge a_i} \in \mc{T}_i$ by construction of $\mc{T}$. Then
  define $\str{\tree{t}} = \mathsf{O}(\mathsf{Str}_1(\tree{t}_{\ge
    a_1}), \ldots, \mathsf{Str}_n(\tree{t}_{\ge a_n}))$.

  Using the fact that $\mathsf{O}$ is $\lequiv{m}$-preserving and
  monotone, and using
  Lemma~\ref{lemma:qt-free-transductions-preserve-substructure-prop}
  that we prove below, it is easy to verify that $\mathsf{Str}$ is
  indeed an effective $\mc{L}$-good representation map for $\cl{Z}$.

  (3): Since $\mathsf{Str}$ is effective and $\mc{L}$-good for
  $\cl{Z}$, by
  Theorem~\ref{theorem:good-tree-rep-implies-lebsp-and-f.p.t.-algorithm},
  there is a linear time f.p.t. algorithm for $\mathsf{MC}(\mc{L},
  \cl{Z})$ that decides, for every $\mc{L}$ sentence $\varphi$, if a
  given structure $\mf{A}$ in $\cl{Z}$ satisfies $\varphi$, provided
  that a tree representation of $\mf{A}$ under $\mathsf{Str}$. Clearly
  the same algorithm is also f.p.t.  for $\mathsf{MC}(\mc{L},
  \cl{S})$.
\end{proof}

The remainder of this section is devoted to proving
Lemma~\ref{lemma:lebsp-pres-under-n-disj-sum-and-n-copy} and
Proposition~\ref{prop:ebsp-gebsp-and-transductions}.

Towards the proof of
Lemma~\ref{lemma:lebsp-pres-under-n-disj-sum-and-n-copy}, we present
the following simple facts about $n$-disjoint sum. We skip the proof.

\begin{lemma}\label{lemma:facts-about-disj-unions}
Let $\mf{A}_i$ and $\mf{B}_i$ be $\tau$-structures for $i \in \{1,
\ldots, n\}$. Let $m \in \mathbb{N}$. Then the following are true.
\begin{enumerate}[nosep]
\item If $\mf{B}_i \hookrightarrow \mf{A}_i$ for $i \in \{1, \ldots,
  n\}$, then $(\bigoplus_{i = 1}^{i = n} \mf{B}_i) \hookrightarrow
  (\bigoplus_{i = 1}^{i = n} \mf{A}_i)$.
\item If $\mf{B}_i \lequiv{m} \mf{A}_i$ for $i \in \{1, \ldots, n\}$,
  then $(\bigoplus_{i = 1}^{i = n} \mf{B}_i) \lequiv{m} (\bigoplus_{i
    = 1}^{i = n} \mf{A}_i)$.
\end{enumerate}
\end{lemma}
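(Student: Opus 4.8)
The plan is to prove both parts by elementary, direct arguments, using the explicit description of $\bigoplus_{i=1}^{i=n} \mf{A}_i$ as the disjoint union $\bigsqcup_{i=1}^{i=n} \mf{A}_i$ expanded with the unary predicates $P_1, \ldots, P_n$ interpreted as the universes of $\mf{A}_1, \ldots, \mf{A}_n$; throughout we assume w.l.o.g. that the $\mf{A}_i$ (and hence also the $\mf{B}_i$) have pairwise disjoint universes. For part~1, fix embeddings $e_i : \mf{B}_i \hookrightarrow \mf{A}_i$ and let $e = \bigcup_{i=1}^{i=n} e_i$ be the induced map on $\bigsqcup_i \mathsf{U}_{\mf{B}_i}$. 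Injectivity of $e$ is immediate from the injectivity of each $e_i$ together with the disjointness of their ranges. For a relation $R \in \tau$ and a tuple $\bar b$ over $\bigsqcup_i \mathsf{U}_{\mf{B}_i}$: if the entries of $\bar b$ do not all lie in a single component, then $R$ holds of neither $\bar b$ nor $e(\bar b)$ (relations of $\tau$ never hold across components of a disjoint union), while if they all lie in $\mf{B}_i$, preservation and reflection follow since $e_i$ is an embedding. Finally, for each $j$, $b \in P_j^{\bigoplus_i \mf{B}_i}$ iff $b \in \mathsf{U}_{\mf{B}_j}$ iff $e(b) = e_j(b) \in \mathsf{U}_{\mf{A}_j}$ iff $e(b) \in P_j^{\bigoplus_i \mf{A}_i}$. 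Hence $e$ is an embedding of $\bigoplus_i \mf{B}_i$ into $\bigoplus_i \mf{A}_i$.

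For part~2 I would argue via $\mc{L}$-$\mathsf{EF}$ games, recalling that $\lequiv{m}$ is characterized by the $m$-round $\lef$ game and that the argument is uniform over $\mc{L} \in \{\fo, \mso\}$. From $\mf{B}_i \lequiv{m} \mf{A}_i$ fix a winning strategy $S_i$ for the duplicator in the $m$-round $\lef$ game between $\mf{B}_i$ and $\mf{A}_i$. In the $m$-round game between $\bigoplus_i \mf{B}_i$ and $\bigoplus_i \mf{A}_i$, the duplicator plays ``component-wise'': on a point move in which the spoiler picks an element of $\mathsf{U}_{\mf{B}_i}$ (resp.\ $\mathsf{U}_{\mf{A}_i}$), respond with the element of $\mathsf{U}_{\mf{A}_i}$ (resp.\ $\mathsf{U}_{\mf{B}_i}$) dictated by $S_i$; on a set move (the MSO case), if the spoiler picks $Z$, write $Z = \bigsqcup_i Z_i$ with $Z_i$ a subset of the $i$th component, and respond with $\bigsqcup_i Z_i'$ where $Z_i'$ is the $S_i$-response to $Z_i$ --- exactly the bookkeeping used in the proof of Lemma~\ref{lemma:nested-words-composition-lemma}. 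After $m$ rounds, the pairs chosen restricted to component $i$, together with the chosen sets restricted to that component, form a partial isomorphism $\mf{B}_i \to \mf{A}_i$ because $S_i$ is winning. Since chosen elements correspond component-wise, the marking predicates $P_j$ are respected; and since $\tau$-relations hold only within components on both sides, the union of these component-wise partial isomorphisms is a partial isomorphism between the expanded disjoint unions. Thus the combined strategy is winning, giving $\bigoplus_i \mf{B}_i \lequiv{m} \bigoplus_i \mf{A}_i$.

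I do not expect a genuine obstacle here: this is the cleanest instance of a Feferman--Vaught-style composition, and both parts reduce to routine verifications. The only point requiring a small amount of care is the decomposition of set moves in the MSO game and the check that the resulting union map is a partial isomorphism across components, and both are handled precisely as in the composition lemmas already established above. Accordingly I would present the proof tersely, or even omit it as the statement itself suggests, pointing to the earlier composition arguments for the details.
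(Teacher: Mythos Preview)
Your proposal is correct and is exactly the routine argument one would expect; the paper itself explicitly skips the proof of this lemma, treating both parts as standard facts about disjoint sums. Your component-wise embedding for part~1 and the component-wise composition of $\lef$ strategies for part~2 (handling set moves by splitting and reassembling, just as in Lemma~\ref{lemma:nested-words-composition-lemma}) are precisely the intended verifications.
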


\begin{proof}[Proof of Lemma~\ref{lemma:lebsp-pres-under-n-disj-sum-and-n-copy}]
Consider a structure $\mf{A} = (\bigoplus_{i = 1}^{i = n} \mf{A}_i)
\in \ndisjointsum(\cl{S}_1, \ldots, \cl{S}_n)$. Let $m \in
\mathbb{N}$. Since $\lebsp{\cl{S}_i}$ is true, there exists $\mf{B}_i$
such that\linebreak $\lebspcond(\cl{S}_i, \mf{A}_i, \mf{B}_i, m,
\lwitfn{\cl{S}_i}(m))$ holds where $\lwitfn{\cl{S}_i}(m)$ is a witness
function for $\lebsp{\cl{S}_i}$.  Then $\mf{B}_i \subseteq \mf{A}_i$
and $\mf{B}_i \lequiv{m} \mf{A}_i$. Then by
Lemma~\ref{lemma:facts-about-disj-unions}, we have that (i)
$\bigoplus_{i = 1}^{i = n} \mf{B}_i \hookrightarrow \bigoplus_{i =
  1}^{i = n} \mf{A}_i$, and (ii) $\bigoplus_{i = 1}^{i = n} \mf{B}_i
\lequiv{m} \bigoplus_{i = 1}^{i = n} \mf{A}_i$. Observe that
$(\bigoplus_{i = 1}^{i = n} \mf{B}_i) \in \ndisjointsum(\cl{S}_1,
\ldots, \cl{S}_n)$, and that $|(\bigoplus_{i = 1}^{i = n} \mf{B}_i)|
\leq \theta(m) = \sum_{i = 0}^{i = n} \lwitfn{\cl{S}_i}(m)$. Taking
$\mf{B}$ to be the substructure of $\mf{A}$ that is isomorphic to
$(\bigoplus_{i = 1}^{i = n} \mf{B}_i)$, we see that \linebreak
$\lebspcond(\ndisjointsum(\cl{S}_1, \ldots, \cl{S}_n), $ $ \mf{A},
\mf{B}, m, \theta)$ is true with witness function $\theta$. Whereby
$\lebsp{\ndisjointsum(\cl{S}_1, \ldots, \cl{S}_n)}$ is true. It is
easy to see that if $\lwitfn{\cl{S}_i}(m)$ is computable/elementary
for each $i \in \{1, \ldots, n\}$, then so is $\theta$.
\end{proof}

We now proceed to proving
Proposition~\ref{prop:ebsp-gebsp-and-transductions}. We use the
following known facts about translation schemes~\cite{makowsky}.  To
present these facts, we recall from Section~\ref{section:background}
that one can associate with a $(t, \tau, \nu, \mc{L})$-translation
scheme $\Xi$, two partial maps: (i) $\Xi^*$ from $\tau$-structures to
$\nu$-structures (ii) $\Xi^\sharp$ from $\mc{L}(\nu)$ formulae to
$\mc{L}(\tau)$ formulae. See~\cite{makowsky} for the definitions of
these.  For the ease of readability, we abuse notation slightly and
use $\Xi$ to denote both $\Xi^*$ and $\Xi^\sharp$. We now have the
following results from literature.

\begin{prop}\label{prop:relating-transductions-applications-to-structures-and-formulae}
Let $\Xi$ be either a $(t, \tau, \nu, \fo)$-translation scheme for
$t \ge 1$, or a $(t, \tau, \nu, \mso)$-translation scheme with $t =
1$.  Then for every $\mc{L}(\nu)$ formula $\varphi(x_1, \ldots,
x_n)$ where $n \ge 0$, for every $\tau$-structure $\mf{A}$ and for
every $n$-tuple $(\bar{a}_1, \ldots, \bar{a}_n)$ from $\Xi(\mf{A})$,
the following holds.
\[
\begin{array}{lrll}
& (\Xi(\mf{A}), \bar{a}_1, \ldots, \bar{a}_n) & \models & \varphi(x_1,
  \ldots, x_n)\\ 
\mbox{iff} & (\mf{A}, \bar{a}_1, \ldots, \bar{a}_n) & \models &
\Xi(\varphi)(\bar{x}_1, \ldots, \bar{x}_n)\\
\end{array}
\]
where $\bar{x}_i = (x_{i, 1}, \ldots, x_{i, t})$ for each $i \in \{1,
\ldots, n\}$.
\end{prop}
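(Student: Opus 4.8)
The statement is the fundamental (or substitution) property of translation schemes, and the plan is to prove it by induction on the structure of $\varphi$, after first recalling the recursive definition of $\Xi^\sharp$ (written $\Xi$) and strengthening the claim enough for the induction to close. Recall that $\Xi$ sends an atomic formula $R(x_{i_1},\ldots,x_{i_k})$ to the formula obtained from $\xi_R$ by instantiating its tuple of parameter variables $\bar x_R$ by $(\bar x_{i_1},\ldots,\bar x_{i_k})$; it sends an equality $x_i=x_j$ to $\bigwedge_{l=1}^{t} x_{i,l}=x_{j,l}$; it commutes with the Boolean connectives; it sends a first-order quantification $\exists y\,\psi$ to $\exists\bar y\,(\xi(\bar y)\wedge\Xi(\psi))$, relativizing the witness to the interpreted universe; and, in the $\mso$ case with $t=1$, it sends a monadic quantification $\exists Y\,\psi$ to $\exists Y\,\big(\forall y\,(Y(y)\to\xi(y))\wedge\Xi(\psi)\big)$. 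Since peeling a set quantifier off a formula exposes a free set variable, I would first generalize the proposition to formulas with free monadic set variables as well (the stated form being the case with no free set variable); this is harmless because, $t$ being $1$ in the $\mso$ case, an element of $\Xi(\mf{A})$ is already an element of $\mf{A}$ and a subset of $\univ{\Xi(\mf{A})}$ is already a subset of $\univ{\mf{A}}$, so set parameters can be passed through unchanged.

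For the base cases I would simply unfold the definition of the structure map $\Xi^*$: by definition $\univ{\Xi(\mf{A})}$ is the set of $t$-tuples $\bar b$ over $\univ{\mf{A}}$ satisfying $\xi$ in $\mf{A}$, and $R^{\Xi(\mf{A})}$ is the set of tuples $(\bar b_1,\ldots,\bar b_k)$ of elements of $\Xi(\mf{A})$ satisfying $\xi_R$ in $\mf{A}$; this is exactly the equivalence for atomic $R$-formulas, and the equivalence for $x_i=x_j$ holds because equality in $\Xi(\mf{A})$ is componentwise equality of $t$-tuples. The Boolean steps are immediate from the induction hypothesis. For $\varphi=\exists y\,\psi$, I would use that the elements over which $y$ ranges in $\Xi(\mf{A})$ are precisely the $t$-tuples $\bar b$ over $\mf{A}$ with $\mf{A}\models\xi(\bar b)$, so that applying the induction hypothesis to $\psi$ rewrites ``there is an element $\bar b$ of $\Xi(\mf{A})$ with $(\Xi(\mf{A}),\bar b,\ldots)\models\psi$'' as ``$\mf{A}\models\exists\bar y\,(\xi(\bar y)\wedge\Xi(\psi))$''; for $\varphi=\exists Y\,\psi$ in the scalar $\mso$ case the identical argument applies with the tuple $\bar b$ replaced by a set $B\subseteq\univ{\mf{A}}$, the conjunct $\forall y\,(Y(y)\to\xi(y))$ expressing exactly that $B\subseteq\univ{\Xi(\mf{A})}$.

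The only subtlety --- more a point of care than a genuine obstacle --- is the scalarity hypothesis in the $\mso$ case: when $t\ge 2$ an element of $\Xi(\mf{A})$ is a $t$-tuple over $\mf{A}$, and a set of such elements is not a monadic object over $\mf{A}$, so there is no way to relativize or carry over a set quantifier; this is precisely why the proposition permits arbitrary $t$ for $\fo$ but only $t=1$ for $\mso$. Everything else is routine bookkeeping with free variables, and since the result is classical one may instead simply invoke the treatment in~\cite{makowsky}.
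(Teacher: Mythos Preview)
Your proof is correct and is the standard inductive argument for the fundamental property of translation schemes. However, the paper does not actually prove this proposition: it presents it as a known result from the literature, introduced with ``We now have the following results from literature'' and a reference to~\cite{makowsky}, so there is no in-paper proof to compare against. Your closing remark that one may ``simply invoke the treatment in~\cite{makowsky}'' is in fact exactly what the paper does.
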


\begin{lemma}\label{corollary:transferring-m-equivalence-across-transductions}
Let $\Xi$ be a quantifier-free $(t, \tau, \nu, \fo)$-translation
scheme.  Let $m, r \in \mathbb{N}$ be such that $r = t \cdot m$.
Suppose $\mf{A}$ and $\mf{B}$ are $\tau$-structures.
\begin{enumerate}[nosep]
\item If $\mf{A} \fequiv{r} \mf{B}$, then $\Xi(\mf{A}) \fequiv{m}
  \Xi(\mf{B})$.
\item If $\mf{A} \mequiv{m} \mf{B}$, then $\Xi(\mf{A}) \mequiv{m}
  \Xi(\mf{B})$, when $\Xi$ is scalar.
\end{enumerate}
\end{lemma}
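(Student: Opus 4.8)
The plan is to deduce Lemma~\ref{corollary:transferring-m-equivalence-across-transductions} from Proposition~\ref{prop:relating-transductions-applications-to-structures-and-formulae} together with the standard characterization of $\lequiv{m}$ via ranks of sentences. The key observation is that for a \emph{quantifier-free} translation scheme $\Xi$, the map $\Xi^\sharp$ on formulas increases quantifier rank in a controlled way: each first-order variable $x$ of a $\nu$-formula is replaced by a $t$-tuple $\bar x = (x_1,\dots,x_t)$ of $\tau$-variables, so a single quantifier $\exists x$ becomes a block $\exists x_1 \cdots \exists x_t$ of $t$ quantifiers, and since the atomic $\nu$-formulas are translated by the quantifier-free formulas $\xi, (\xi_R)$, no new quantifiers are introduced at the atoms. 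Hence if $\varphi$ is an $\fo(\nu)$ sentence of rank $\le m$, then $\Xi(\varphi)$ is an $\fo(\tau)$ sentence of rank $\le t\cdot m = r$. For the scalar case $t=1$, the translation introduces no blow-up in rank at all, and moreover (by the hypothesis of Proposition~\ref{prop:relating-transductions-applications-to-structures-and-formulae}) $\Xi^\sharp$ is defined on all of $\mso(\nu)$, so it sends $\mso(\nu)$ sentences of rank $\le m$ to $\mso(\tau)$ sentences of rank $\le m$.

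With this in hand, part (1) goes as follows. Suppose $\mf{A} \fequiv{r} \mf{B}$. To show $\Xi(\mf{A}) \fequiv{m} \Xi(\mf{B})$, take any $\fo(\nu)$ sentence $\varphi$ of rank $\le m$ and compute:
\[
\Xi(\mf{A}) \models \varphi \iff \mf{A} \models \Xi(\varphi) \iff \mf{B} \models \Xi(\varphi) \iff \Xi(\mf{B}) \models \varphi,
\]
where the outer equivalences are instances of Proposition~\ref{prop:relating-transductions-applications-to-structures-and-formulae} with $n=0$, and the middle equivalence holds because $\Xi(\varphi)$ has rank $\le r$ and $\mf{A} \fequiv{r} \mf{B}$. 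Since $\varphi$ was arbitrary, $\lth{m}{\Xi(\mf{A})} = \lth{m}{\Xi(\mf{B})}$, i.e. $\Xi(\mf{A}) \fequiv{m} \Xi(\mf{B})$. Part (2) is verbatim the same argument with $\mso$ in place of $\fo$ and $t=1$: for a scalar $\Xi$, $\Xi(\varphi)$ has rank $\le m$ whenever $\varphi$ does, so $\mf{A} \mequiv{m} \mf{B}$ already suffices, and Proposition~\ref{prop:relating-transductions-applications-to-structures-and-formulae} applies since it covers scalar $\mso$-translation schemes.

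The only point that needs a little care — and the step I would flag as the main (minor) obstacle — is the rank bookkeeping for $\Xi^\sharp$: one must check against the definition of $\Xi^\sharp$ in~\cite{makowsky} that relativizing quantifiers to the domain formula $\xi(\bar x_0)$ does not secretly add quantifier depth. Because $\Xi$ is quantifier-free, $\xi$ and each $\xi_R$ are quantifier-free, so the relativized quantifier $\exists x (\xi(\bar x) \wedge \cdots)$ expands to $\exists x_1\cdots\exists x_t(\xi(x_1,\dots,x_t)\wedge\cdots)$ and contributes exactly $t$ to the nesting depth along that branch; substituting $\xi_R$ for atoms contributes $0$. An easy structural induction on $\varphi$ then gives $\rank{\Xi(\varphi)} \le t\cdot\rank{\varphi}$ in general, and $\rank{\Xi(\varphi)} \le \rank{\varphi}$ when $t=1$. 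Everything else is a direct chaining of Proposition~\ref{prop:relating-transductions-applications-to-structures-and-formulae} with the definition of $\lequiv{m}$.
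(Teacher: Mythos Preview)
Your argument is correct and is exactly the standard proof of this fact. Note, however, that the paper does not actually supply its own proof of this lemma: it is introduced (together with Proposition~\ref{prop:relating-transductions-applications-to-structures-and-formulae}) under the sentence ``We now have the following results from literature'' and is attributed to~\cite{makowsky}. So there is no paper-proof to compare against; what you have written is precisely the argument one finds in the cited source, namely the rank bound $\rank{\Xi(\varphi)} \le t\cdot\rank{\varphi}$ for quantifier-free $\Xi$ (and $\le \rank{\varphi}$ in the scalar case, including for $\mso$), chained with the fundamental property of translation schemes in Proposition~\ref{prop:relating-transductions-applications-to-structures-and-formulae}.
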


Towards the proof of
Proposition~\ref{prop:ebsp-gebsp-and-transductions}, we first observe
the following result that shows that quantifier-free translation
schemes preserve the substructure relation between any two structures
of $\cl{S}$. 

\begin{lemma}\label{lemma:qt-free-transductions-preserve-substructure-prop}
Let $\cl{S}$ be a given class of finite structures. Let $\Xi = (\xi,
(\xi_R)_{R \in \nu})$ be a quantifier-free $(t, \tau, \nu,
\fo)$-translation scheme. Let $\mf{A}$ and $\mf{B}$ be given
structures from $\cl{S}$.  If $\mf{B} \subseteq \mf{A}$, then
$\Xi(\mf{B}) \subseteq \Xi(\mf{A})$.
\end{lemma}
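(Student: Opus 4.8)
The plan is to reduce the claim to a single standard fact: quantifier-free formulas are \emph{absolute} between a structure and its induced substructures. Writing $\Xi$ for the structure map $\Xi^*$ as in the paper, recall that for any $\tau$-structure $\mf{C}$ the universe of $\Xi(\mf{C})$ is $\{\bar{c} \in \mathsf{U}_{\mf{C}}^{\,t} \mid \mf{C} \models \xi(\bar{c})\}$, and that for each $R \in \nu$ of arity $k = \#R$ the relation $R^{\Xi(\mf{C})}$ consists of those $k$-tuples $(\bar{c}_1, \dots, \bar{c}_k)$ of elements of $\mathsf{U}_{\Xi(\mf{C})}$ with $\mf{C} \models \xi_R(\bar{c}_1, \dots, \bar{c}_k)$. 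Since $\Xi$ is quantifier-free, every formula $\xi$ and $\xi_R$ appearing here is quantifier-free.

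First I would establish, by a one-line induction on formula structure, the auxiliary fact: if $\mf{B} \subseteq \mf{A}$ (an induced substructure, as always in this paper) and $\psi(\bar{y})$ is a quantifier-free $\tau$-formula, then for every tuple $\bar{b}$ over $\mathsf{U}_{\mf{B}}$ we have $\mf{B} \models \psi(\bar{b})$ iff $\mf{A} \models \psi(\bar{b})$. The atomic case is precisely the definition of induced substructure: an atom $P(\cdots)$ evaluated at a tuple from $\mathsf{U}_{\mf{B}}$ holds in $\mf{B}$ iff the tuple lies in $P^{\mf{B}} = P^{\mf{A}} \cap \mathsf{U}_{\mf{B}}^{\#P}$ iff it lies in $P^{\mf{A}}$, while equality atoms are trivially absolute; the Boolean connectives pass through since there are no quantifiers.

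Then I would apply this fact twice. Taking $\psi = \xi$: for $\bar{b} \in \mathsf{U}_{\mf{B}}^{\,t} \subseteq \mathsf{U}_{\mf{A}}^{\,t}$ we get $\mf{B} \models \xi(\bar{b})$ iff $\mf{A} \models \xi(\bar{b})$, so $\mathsf{U}_{\Xi(\mf{B})} \subseteq \mathsf{U}_{\Xi(\mf{A})}$. Taking $\psi = \xi_R$ for each $R \in \nu$: for any $k$-tuple $(\bar{b}_1, \dots, \bar{b}_k)$ of elements of $\mathsf{U}_{\Xi(\mf{B})}$ — which by the previous inclusion are also elements of $\mathsf{U}_{\Xi(\mf{A})}$ — we get $\mf{B} \models \xi_R(\bar{b}_1, \dots, \bar{b}_k)$ iff $\mf{A} \models \xi_R(\bar{b}_1, \dots, \bar{b}_k)$. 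Reading off the definitions, this says exactly that $(\bar{b}_1, \dots, \bar{b}_k) \in R^{\Xi(\mf{B})}$ iff $(\bar{b}_1, \dots, \bar{b}_k) \in R^{\Xi(\mf{A})} \cap (\mathsf{U}_{\Xi(\mf{B})})^{k}$, i.e. $R^{\Xi(\mf{B})} = R^{\Xi(\mf{A})} \cap (\mathsf{U}_{\Xi(\mf{B})})^{k}$. As this holds for every $R \in \nu$, the structure $\Xi(\mf{B})$ is precisely the substructure of $\Xi(\mf{A})$ induced on $\mathsf{U}_{\Xi(\mf{B})}$, which is the assertion $\Xi(\mf{B}) \subseteq \Xi(\mf{A})$.

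There is no serious obstacle here; the one point demanding care is that ``$\subseteq$'' means \emph{induced} substructure, so one must verify that the relations of $\Xi(\mf{B})$ are exactly — not merely contained in — the restrictions of those of $\Xi(\mf{A})$. This is what forces the use of the two-directional absoluteness of quantifier-free formulas, and hence of the hypothesis that $\Xi$ is quantifier-free; with quantifiers one would in general obtain only one inclusion. A minor sanity check is to confirm that the relativization convention used for $\Xi^*$ agrees with that of the cited source~\cite{makowsky}, but the computation above is insensitive to that choice. Note also that the class $\cl{S}$ plays no role: the statement holds for arbitrary finite $\tau$-structures $\mf{A} \supseteq \mf{B}$.
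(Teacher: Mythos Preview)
Your proof is correct and follows essentially the same approach as the paper: both arguments rest on the absoluteness of quantifier-free formulas between a structure and an induced substructure, applied first to $\xi$ for the universe inclusion and then to $\xi_R$ (the paper phrases this via $\Xi(R) = \bigwedge_i \xi(\bar{x}_i) \wedge \xi_R$ and Proposition~\ref{prop:relating-transductions-applications-to-structures-and-formulae}, whereas you work directly from the definition of $\Xi^*$, but this is only a cosmetic difference). Your closing observations---that two-sided absoluteness is needed precisely because ``$\subseteq$'' means induced, and that $\cl{S}$ plays no role---are apt.
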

\begin{proof}
Consider any element of $\Xi(\mf{B})$; it is a $t$-tuple $\bar{b}$ of
$\mf{B}$ such that $(\mf{B}, \bar{b}) \models \xi(\bar{x})$. Since
$\xi(\bar{x})$ is quantifier-free, it is preserved under extensions
over $\cl{S}$. Whereby $(\mf{A}, \bar{b}) \models \xi(\bar{x})$; then
$\bar{b}$ is an element of $\Xi(\mf{A})$. Since $\bar{b}$ is an
arbitrary element of $\Xi(\mf{B})$, we have $\mathsf{U}_{\Xi(\mf{B})}
\subseteq \mathsf{U}_{\Xi(\mf{A})}$. 

Consider a relation symbol $R \in \nu$ of arity say $n$. Let
$\bar{d}_1, \ldots, \bar{d}_n$ be elements of $\Xi(\mf{B})$. Then we
have the following. Below $\bar{x}_i = (x_{i, 1}, \ldots, x_{i, t})$
for each $i \in \{1, \ldots, n\}$.
\[
\begin{array}{lllll}
& (\Xi(\mf{B}), \bar{d}_1, \ldots, \bar{d}_n) & \models & R(x_1,
  \ldots, x_n) & \\

\mbox{iff} & (\mf{B}, \bar{d}_1, \ldots, \bar{d}_n) & \models &
\Xi(R)(\bar{x}_1, \ldots, \bar{x}_n) & \mbox{(by
  Proposition~\ref{prop:relating-transductions-applications-to-structures-and-formulae})}\\

\mbox{iff} & (\mf{B}, \bar{d}_1, \ldots, \bar{d}_n) & \models &
\bigwedge_{i = 1}^{i = n}\xi(\bar{x}_i) ~\wedge~ \xi_R(\bar{x}_1,
\ldots, \bar{x}_n) & \mbox{(by defn. of $\Xi(R)$; see~\cite{makowsky})} \\
\end{array}
\]
Now since (i) each of $\xi$ and $\xi_R$ is quantifier-free, (ii) a
finite conjunction of quantifier-free formulae is a
quantifier-free formula, and (iii) a quantifier-free formula is
preserved under substructures as well as preserved under extensions
over any class, we have that
\[
\begin{array}{lllll}
& (\mf{B}, \bar{d}_1, \ldots, \bar{d}_n) & \models & \bigwedge_{i =
    1}^{i = n}\xi(\bar{x}_i) ~\wedge~ \xi_R(\bar{x}_1, \ldots,
  \bar{x}_n) &  \\

\mbox{iff} & (\mf{A}, \bar{d}_1, \ldots, \bar{d}_n) & \models &
\bigwedge_{i = 1}^{i = n}\xi(\bar{x}_i) ~\wedge~ \xi_R(\bar{x}_1,
\ldots, \bar{x}_n) & \\

\mbox{iff} & (\mf{A}, \bar{d}_1, \ldots, \bar{d}_n) & \models &
\Xi(R)(\bar{x}_1, \ldots, \bar{x}_n) & \mbox{(by definition of
  $\Xi(R)$)} \\

\mbox{iff} & (\Xi(\mf{A}), \bar{d}_1, \ldots, \bar{d}_n) & \models & R(x_1,
  \ldots, x_n) & \mbox{(by
  Proposition~\ref{prop:relating-transductions-applications-to-structures-and-formulae})}\\
\end{array}
\]
Since $R$ is an arbitrary relation symbol of $\nu$, we have that
$\Xi(\mf{B}) \subseteq \Xi(\mf{A})$.
\end{proof}

\begin{proof}[Proof of Proposition~\ref{prop:ebsp-gebsp-and-transductions}]

We show the proof for part 1. The proof for part 2 is similar.

Consider a structure $\Xi(\mf{A}) \in \Xi(\cl{S})$ for some structure
$\mf{A} \in \cl{S}$. Let $m \in \mathbb{N}$. Since $\febsp{\cl{S}}$ is
true, there exists a witness function $\fwitfn{\cl{S}}:\mathbb{N}
\rightarrow \mathbb{N}$ and a structure $\mf{B}$ such that if $r = t
\cdot m$, then $\febspcond(\cl{S}, \mf{A}, \mf{B}, r,
\fwitfn{\cl{S}})$ is true. That is, (i) $\mf{B} \in \cl{S}$, (ii)
$\mf{B} \subseteq \mf{A}$ (iii) $|\mf{B}| \leq \fwitfn{\cl{S}}(r)$ and
(iv) $\mf{B} \fequiv{r} \mf{A}$.

We now show that there exists a function
$\fwitfn{\Xi(\cl{S})}:\mathbb{N} \rightarrow \mathbb{N}$ such
that\linebreak $\febspcond(\Xi(\cl{S}), \Xi(\mf{A}), $ $\Xi(\mf{B}),
m, \fwitfn{\Xi(\cl{S})})$ is true. This would show \linebreak
$\febsp{\Xi(\cl{S})}$ is true.
\begin{enumerate}
\item $\Xi(\mf{B}) \in \Xi(\cl{S})$: Obvious from the definition of
  $\Xi(\cl{S})$ and the fact that $\mf{B} \in \cl{S}$.

\item $\Xi(\mf{B}) \subseteq \Xi(\mf{A})$: Follows from
  Lemma~\ref{lemma:qt-free-transductions-preserve-substructure-prop}.

\item $\Xi(\mf{B}) \fequiv{m} \Xi(\mf{A})$: Since $\mf{B} \fequiv{r}
  \mf{A}$, it follows from
  Lemma~\ref{corollary:transferring-m-equivalence-across-transductions},
  that $\Xi(\mf{B}) \fequiv{m} \Xi(\mf{A})$.

\item The existence of a function $\fwitfn{\Xi(\cl{S})}: \mathbb{N}
  \rightarrow \mathbb{N}$ such that $|\Xi(\mf{B})| \leq
  \fwitfn{\Xi(\cl{S})}(m)$: Define $\fwitfn{\Xi(\cl{S})}:\mathbb{N}
  \rightarrow \mathbb{N}$ as $\fwitfn{\Xi(\cl{S})}(m) =
  (\fwitfn{\cl{S}}(t \cdot m))^t$. Since $|\mf{B}| \leq
  \fwitfn{\cl{S}}(t \cdot m)$, we have that $|\Xi(\mf{B})| \leq
  \fwitfn{\Xi(\cl{S})}(m)$.
\end{enumerate}
It is clear that if $\fwitfn{\cl{S}}$ is computable/elementary, then
so is $\fwitfn{\Xi(\cl{S})}$.
\end{proof}

\begin{discussion}
  Theorems~\ref{theorem:words-and-trees-and-nested-words-satisfy-lebsp},
  ~\ref{theorem:n-partite-cographs-satisfy-lebsp},
  ~\ref{theorem:lebsp-closure-under-1-step-operations} and
  ~\ref{theorem:good-tree-rep-implies-lebsp-and-f.p.t.-algorithm}
  jointly show that the various posets and graph classes described in
  this section admit linear time f.p.t. algorithms for
  $\mathsf{MC}(\mc{L}, \cdot)$, provided an $\mc{L}$-good tree
  representation of the input structure is given.  In the case of
  words, the various kinds of trees, nested words, the class of
  cographs, we can indeed even construct the $\mc{L}$-good tree
  representation in quadratic time from a standard presentation of
  structures in these classes (this is easy to see for the first three
  kinds of classes; for the case of cographs,
  see~\cite{cograph-1981-paper}). Whereby, these classes admit
  quadratic time f.p.t. algorithms for $\mathsf{MC}(\mc{L}, \cdot)$.
  The quadratic time is because we have assumed our
  tree-representations to be poset trees, which are required to be
  transitive.  The graph theoretic directed trees underlying the poset
  trees (which are the Hasse diagrams of the poset trees) are actually
  constructible in linear time for each of the cases of words, the
  various kinds of trees, nested words and the class of cographs. One
  can see that the techniques that we use to get linear time
  f.p.t. algorithms for the aforesaid classes, given $\mc{L}$-good
  tree representations for structures in these classes, can be adapted
  to work even when the structures in these classes are represented
  using the graph theoretic (Hasse diagram) tree representations just
  mentioned. This indeed then enables getting linear time
  f.p.t. algorithms $\mathsf{MC}(\mc{L}, \cdot)$ for the case of
  words, various kinds of trees, nested words, and cographs, thereby
  matching known f.p.t. results concerning these
  classes~\cite{frick-grohe,alur-madhu,shrub-depth-FO-equals-MSO}. Going
  further, to the best of our knowledge, the f.p.t. results for
  $n$-partite cographs and those for classes generated using trees of
  quantifier-free operations, that are entailed by Theorems
  ~\ref{theorem:n-partite-cographs-satisfy-lebsp},
  ~\ref{theorem:lebsp-closure-under-1-step-operations} and
  ~\ref{theorem:good-tree-rep-implies-lebsp-and-f.p.t.-algorithm}, are
  new.  Our proofs can then be seen as giving a different and unified
  technique to show existing f.p.t. results, in addition to giving new
  results. We mention however that if the dependence on the parameter
  in our f.p.t. algorithms is also considered, then our results (which
  give only computable parameter dependence) are weaker than those
  in~\cite{shrub-depth-FO-equals-MSO} which show that for classes of
  bounded tree-depth/$\mc{SC}$-depth/shrub-depth, there are linear
  time f.p.t. algorithms for $\mc{L}$ model checking, that have
  elementary parameter dependence.
\end{discussion}

\newcommand{\scale}[1]{\langle #1 \rangle}
\newcommand{\fracwitfn}[1]{\ensuremath{\theta_{(#1, \mc{L})}}}

\vspace{-7pt}\section{Logical fractals}\label{section:logical-fractal}

We define a strengthening of the notion of $\reflebsp$ -- instead of
asserting ``logical self-similarity'' just at ``small scales'', we
assert the same ``for all scales'' for a suitable notion of scale. To
present the formal definition, we say a function $f: \mathbb{N}_+
\rightarrow \mathbb{N}_+$ is a \emph{scale function} if it is strictly
increasing. The \emph{$i^{\text{th}}$ scale}, denoted $\scale{i}_{f}$,
is defined as the interval $[1, f(1)] = \{ j \mid 1 \leq j \leq
f(i)\}$ if $i = 1$, and $\left[f(i-1) + 1, f(i)\right] = \{ j \mid
f(i-1) + 1 \leq j \leq f(i)\}$ if $i > 1$.

\begin{definition}[Logical fractal]\label{definition:logical-fractal}
  Given a class $\mc{S}$ of structures and a logic $\mc{L}$ that is
  $\fo$ or $\mso$, we say $\cl{S}$ is an \emph{$\mc{L}$-fractal}, if
  there exists a function $\fracwitfn{\cl{S}}: \mathbb{N}_+^2
  \rightarrow \mathbb{N}_+$ such that (i) $\fracwitfn{\cl{S}}(m)$ is a
  scale function for all $m \in \mathbb{N}$, and (ii) for each
  structure $\mf{A}$ of $\cl{S}$ and each $m \in \mathbb{N}$, if $f$
  is the function $\fracwitfn{\cl{S}}(m)$ and $|\mf{A}| \in
  \scale{i}_{f}$ for some $i \in \mathbb{N}$, then for all $j < i$,
  there exists a substructure $\mf{B}$ of $\mf{A}$ in $\cl{S}$, such
  that $|\mf{B}| \in \scale{j}_{f}$ and $\mf{B} \lequiv{m} \mf{A}$. We
  say $\fracwitfn{\cl{S}}$ is a \emph{witness} to the $\mc{L}$-fractal
  property of $\cl{S}$.
\end{definition}

Towards the central result of this section, we first show the
following result.

\begin{lemma}\label{lemma:logical-fractal-and-tree-reps}
Let $\cl{S}$ be a class of structures that admits an $\mc{L}$-good
tree representation $\mathsf{Str}: \mc{T} \rightarrow \cl{S}$. Then
there exists a strictly increasing computable function $\eta:
\mathbb{N} \rightarrow \mathbb{N}$ such that for each $m \in
\mathbb{N}$ and for each tree $\tree{t} \in \mc{T}$ of size $>
\eta(m)$, there exists a proper subtree $\tree{s}$ of $\tree{t}$ in
$\mc{T}$ such that (i) $\str{\tree{s}} \hookrightarrow
\str{\tree{t}}$, (ii) $\str{\tree{s}} \lequiv{m} \str{\tree{t}}$, and
(iii) $|\tree{t}| - |\tree{s}| \leq \eta(m)$.
\end{lemma}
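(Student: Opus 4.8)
The plan is to prove the statement directly: if $\tree{t} \in \mc{T}$ is large, it has a small rooted subtree that can be replaced, \emph{inside} $\tree{t}$, by a properly smaller but $\lequiv{m}$-equivalent tree, so only a bounded chunk is excised. Throughout I would fix $m_1 = \max\{m_0, m\}$ (so Lemmas~\ref{lemma:helper-height-reduction} and~\ref{lemma:helper-degree-reduction} apply with $m_1$), write $p = \Lambda_{\cl{S}, \mc{L}}(m_1)$, $k = \max\{\rho(\sigma) \mid \sigma \in \sigmaint\}$, and $L = \eta_1(m_1)^{\eta_2(m_1)+1}$ with $\eta_1, \eta_2$ from Lemma~\ref{lemma:abstract-tree-lemma}, and set $M(m) = 1 + k(p+1)L$. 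All of these are computable in $m$, so $\eta(m) := m + \sum_{j=0}^{m} M(j)$ is strictly increasing and computable, and $\eta(m) \ge M(m) > L$; I claim this $\eta$ works.

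Suppose $|\tree{t}| > \eta(m)$. Let $D$ be the set of nodes $a$ with $|\tree{t}_{\ge a}| > L$; it contains $\troot{\tree{t}}$ and is closed under taking parents, hence is a subtree of $\tree{t}$. Choose $a \in D$ of maximum depth. Then $a$ is internal (a leaf node has subtree of size $1 \le L$) and no child of $a$ lies in $D$, so every child subtree of $a$ has size $\le L$. I split into three cases. If $a$ is ranked, or non-ranked with at most $k(p+1)$ children, then $|\tree{t}_{\ge a}| \le 1 + k(p+1)L = M(m)$; applying both parts of Lemma~\ref{lemma:abstract-tree-lemma} to $\tree{t}_{\ge a}$ (with parameter $m_1$) yields a subtree $\tree{r}$ of $\tree{t}_{\ge a}$ with the same root, of size $\le L$, with $\str{\tree{r}} \hookrightarrow \str{\tree{t}_{\ge a}}$ and $\str{\tree{r}} \lequiv{m_1} \str{\tree{t}_{\ge a}}$. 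If $a$ is non-ranked with more than $k(p+1)$ children, then (since $\rho(\sigma) \ge 2$ here) a pigeonhole among the first $p+1$ of the ``block-prefix'' $\lequiv{m_1}$-classes, exactly as in the degree-reduction case of Lemma~\ref{lemma:abstract-tree-lemma}, yields $\tree{r}$ obtained from $\tree{t}_{\ge a}$ by deleting a contiguous block of at most $p(\rho(\sigma)-1) \le p(k-1)$ child subtrees, each of size $\le L$; Lemma~\ref{lemma:helper-degree-reduction} (the deleted block has size a multiple of $\rho(\sigma)-1$) gives $\str{\tree{r}} \lequiv{m_1} \str{\tree{t}_{\ge a}}$, and iterating the monotonicity property~\ref{A.2} gives $\str{\tree{r}} \hookrightarrow \str{\tree{t}_{\ge a}}$. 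Finally, if $a = \troot{\tree{t}}$ (that is, $D = \{\troot{\tree{t}}\}$) the same two subcases apply with $\tree{t}$ in place of $\tree{t}_{\ge a}$ and $\tree{s} := \tree{r}$.

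In all cases set $\tree{s} = \tree{t}[\tree{t}_{\ge a} \mapsto \tree{r}]$ (or $\tree{s} = \tree{r}$ in the last case). Since $\tree{r}$ is obtained from $\tree{t}_{\ge a}$ only by removals respecting $\sigmarank$ and replacements with rooted subtrees, performing the same operations inside $\tree{t}$ keeps all intermediate trees in $\mc{T}$ — the argument already invoked in the proof of Lemma~\ref{lemma:small-tree-given-input-equivalence-classes} — so $\tree{s} \in \mc{T}$; chaining the corresponding instances of monotonicity properties~\ref{A.2} and~\ref{A.3} gives $\str{\tree{s}} \hookrightarrow \str{\tree{t}}$. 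Since $\str{\tree{r}} \lequiv{m_1} \str{\tree{t}_{\ge a}}$, Lemma~\ref{lemma:helper-height-reduction} applied to the replacement $\tree{t}_{\ge a} \mapsto \tree{r}$ gives $\str{\tree{s}} \lequiv{m_1} \str{\tree{t}}$, hence $\str{\tree{s}} \lequiv{m} \str{\tree{t}}$. Moreover $\tree{r}$ has size strictly below $|\tree{t}_{\ge a}|$ (as $|\tree{t}_{\ge a}| > L$), so $\tree{s}$ is a proper subtree, and $|\tree{t}| - |\tree{s}| = |\tree{t}_{\ge a}| - |\tree{r}| < M(m) \le \eta(m)$, as required.

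The heart of the matter — and the reason the naive ``replace $\tree{t}_{\ge a}$ by $\tree{t}_{\ge b}$ along a long root-to-leaf path'' argument of the height reduction does not suffice — is controlling the \emph{size} of the excised piece: a large tree may have no long path yet a node of unbounded degree, and a node near the root may carry an enormous subtree. Taking $a$ to be the deepest node of the ``heavy'' set $D$ is what forces $\tree{t}_{\ge a}$ to be small unless $a$ has many children, and in that remaining case the children are individually small, so the block deletion of the degree-reduction argument removes only $O(p(k-1)L)$ nodes. The rest — intermediate $\mc{T}$-membership, the block sizes being multiples of $\rho(\sigma)-1$, and the harmless passage between $m$ and $m_1$ — is routine and follows the pattern of the proof of Lemma~\ref{lemma:abstract-tree-lemma}.
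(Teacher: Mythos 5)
Your proposal is correct and takes essentially the paper's approach: locate a node at which a single bounded excision suffices, and realize that excision either as one block-deletion step of the degree-reduction argument (justified by Lemma~\ref{lemma:helper-degree-reduction}) or as a replacement of a bounded-size rooted subtree by an $\lequiv{m_1}$-equivalent smaller one (justified by Lemmas~\ref{lemma:abstract-tree-lemma} and~\ref{lemma:helper-height-reduction}), with membership in $\mc{T}$ and the embedding obtained from representation-feasibility and monotonicity exactly as in the closure argument of Lemma~\ref{lemma:small-tree-given-input-equivalence-classes}. Your choice of excision site (the deepest node whose rooted subtree exceeds the size threshold $L$, split by whether that node has boundedly many children) differs only cosmetically from the paper's (a lowest node of large degree with short child subtrees, else a bounded-degree subtree of height $\eta_2(m)+1$ on which a height-reduction step is performed), and both yield a computable, strictly increasing bound $\eta$.
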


\newcommand{\boundedred}{\ensuremath{\mathsf{Bounded}\text{-}\mathsf{reduction}}}
\begin{proof}

  The proof is very much along the lines of the proof of
  Lemma~\ref{lemma:abstract-tree-lemma}. 

  Let $\mathsf{Str}: \mc{T} \rightarrow \cl{S}$ be an $\mc{L}$-good
  tree representation for $\cl{S}$, where $\mc{T}$ is a class of
  $(\sigmaleaf \cup \sigmaint)$-trees that is representation feasible
  for $(\sigmarank, \rho)$.  Let $m_0 \in \mathbb{N}$ be a witness to
  the composition property of $\mathsf{Str}$, as mentioned in
  Definition~\ref{definition:L-good-map}.  Let $\eta_1, \eta_2:
  \mathbb{N} \rightarrow \mathbb{N}$ be defined as follows: For $l \in
  \mathbb{N}$, $\eta_1(l) = \text{max}\{\rho(\sigma) \mid \sigma \in
  \sigmaint \} \times \Lambda_{\cl{S}, \mc{L}}(\text{max}\{l, m_0\})$
  and $\eta_2(l) = 1 + \Lambda_{\cl{S}, \mc{L}}(\text{max}\{l,
  m_0\})$. For $l \in \mathbb{N}$, define $\eta(l) = \Lambda_{\cl{S},
    \mc{L}}(\text{max}\{l, m_0\}) \cdot \eta_1(l)^{(\eta_2(l) + 1)} +
  \eta_1(l)^{(\eta_2(l) + 2)}$. Clearly $\eta$ is strictly increasing
  and computable.

  Let $b$ be a node of $\tree{t}$ with the properties mentioned below.
  \begin{itemize}[nosep]
    \item $b$ is a ``closest-to-a-leaf'' node of $\tree{t}$ whose
      degree $> \eta_1(m)$. In other words, every node in
      $\tree{t}_{\ge b}$, other than $b$, has degree $\leq \eta_1(m)$,
      while $b$ has degree $> \eta_1(m)$.
    \item
      for each child $c$ of $b$ in $\tree{t}$, the subtree
      $\tree{t}_{\ge c}$ has height $\leq \eta_2(m)$.
  \end{itemize}

  \vspace{2pt}We have the following two cases. Let $m_1 = \text{max}(m_0, m)$.

\vspace{3pt}1. The node $b$ exists: We then perform a ``degree
reduction'' just as in
Lemma~\ref{lemma:abstract-tree-lemma}(\ref{lemma:abstract-tree-lemma-degree-reduction}).

Let $\sigma$ be the label of $\troot{\tree{t}_{\ge b}}$. Since degree
of $b$ is $> \eta_1(m)$, we have $\sigma \in \sigmaint \setminus
\sigmarank$. Let $\tree{z} = \tree{t}_{\ge b}$ and let $a_1, \ldots,
a_n$ be the (ascending) sequence of children of $b$ in $\tree{t}$. For
$d = \rho(\sigma)$, let $n = r + q \cdot (d - 1)$ for $1 \leq r < d$
and $q > 1$.
  
For $k \in I = \{r + j \cdot (d - 1) \mid 0 \leq j \leq q\}$, let
$\tree{x}_{1, k}$, resp. $\tree{y}_{k+1, n}$, be the subtree of
$\tree{z}$ obtained from $\tree{z}$ by deleting the subtrees rooted at
$a_{k+1}, \ldots, a_n$, resp.  deleting the subtrees rooted at $a_1,
a_2, \ldots, a_k$. Then $\tree{z} = \tree{x}_{1, n} = \tree{x}_{1, k}
\odot \tree{y}_{k+1, n}$ for all $k \in I$.  Define $g: I \rightarrow
\ldelta{\cl{S}}{m_1}$ such that $g(k)$ is the $\lequiv{m_1}$ class of
$\str{\tree{x}_{1, k}}$ for $k \in I$.

Since $n > \eta_1(m)$, there exist $i, j \in I$ such that $i < j$, $j
- i \leq \Lambda_{\cl{S}, \mc{L}}(m_1)$ and $g(i) = g(j)$,
i.e. $\str{\tree{x}_{1, i}} \lequiv{m_1} \str{\tree{x}_{1, j}}$.  Let
$\tree{z}_1$ be the subtree of $\tree{z}$ obtained by deleting the
subtrees of $\tree{z}$ that are rooted at $a_{i+1}, \ldots, a_j$. By a
similar reasoning as in the proof of
Lemma~\ref{lemma:abstract-tree-lemma}(\ref{lemma:abstract-tree-lemma-degree-reduction}),
we see that if $\tree{s} = \tree{t} \left[ \tree{z} \mapsto \tree{z}_1
  \right]$, then $\tree{s}$ is a proper subtree of $\tree{t}$ in
$\mc{T}$ such that $\str{\tree{s}} \hookrightarrow \str{\tree{t}}$ and
$\str{\tree{s}} \lequiv{m_1} \str{\tree{t}}$, whereby $\str{\tree{s}}
\lequiv{m} \str{\tree{t}}$ (since $m_1 \ge m$). Finally, since for
each $l \in \{i + 1, \ldots, j\}$, the subtree of $\tree{z}$ rooted at
$a_l$ has degree $\leq \eta_1(m)$ and height $\leq \eta_2(m)$, we have
that $|\tree{t}| - |\tree{s}|$ is at most $(j - i) \cdot
\eta_1(m)^{(\eta_2(m)+1)} \leq \eta(m)$.

\vspace{3pt}2.  The node $b$ does not exist: Then there exists some node $c$ of
$\tree{t}$ such that $\tree{t}_{\ge c}$ has degree $\leq \eta_1(m)$
and height $\eta_2(m) + 1$. This can be seen as follows. Either there
is no node in $\tree{t}$ of degree $> \eta_1(m)$ in which case the
size of $\tree{t}$ being $> \eta(m)$ (as assumed in the statement of
the lemma) itself implies the existence of node $c$ as
aformentioned. Else there is a node in $\tree{t}$ of degree $>
\eta_1(m)$, whereby there is a closest-to-a-leaf such node, call it
$d$.  Then some child $d_1$ of $d$ must be such that $\tree{t}_{\ge
  d_1}$ has height $> \eta_2(m)$ (for otherwise $d$ can be taken to be
node $b$ which we have assumed does not exist). Then the
aforementioned node $c$ can be found in $\tree{t}_{\ge d_1}$.

We now perform a ``height reduction'' as in
Lemma~\ref{lemma:abstract-tree-lemma}(\ref{lemma:abstract-tree-lemma-height-reduction}).
Let $A$ be the set of nodes appearing on a path of length $\eta_2(m) +
1$ from $c$ to some leaf of $\tree{t}$.  Consider the function $h : A
\rightarrow \ldelta{\cl{S}}{m_1}$ such that for each $a \in A$, $h(a)
= \delta$ where $\delta$ is the $\lequiv{m_1}$ class of
$\str{\tree{t}_{\ge a}}$. Since $|A| > \eta_2(m)$, there exist
distinct nodes $a, e \in A$ such that $a$ is an ancestor of $e$ in
$\tree{t}$ and $h(a) = h(e)$.  Let $\tree{s} =
\tree{t}\left[\tree{t}_{\ge a} \mapsto \tree{t}_{\ge e} \right]$. By a
similar reasoning as in the proof of
Lemma~\ref{lemma:abstract-tree-lemma}(\ref{lemma:abstract-tree-lemma-height-reduction}),
we can see that (i) $\tree{s}$ is a subtree of $\tree{t}$ in $\mc{T}$,
(ii) $\str{\tree{s}} \hookrightarrow \str{\tree{t}}$ and (iii)
$\str{\tree{s}} \lequiv{m_1} \str{\tree{t}}$, whereby $\str{\tree{s}}
\lequiv{m} \str{\tree{t}}$ (since $m_1 \ge m$). Since $a$ is a
descendent of $c$, and since $\tree{t}_{\ge c}$ has height $\eta_2(m)
+ 1$ and degree $\leq \eta_1(m)$, the height and degree of
$\tree{t}_{\ge a}$ are resp. at most $\eta_2(m) + 1$ and $\eta_1(m)$,
whereby the size of $\tree{t}_{\ge a}$ is $\leq
\eta_1(m)^{(\eta_2(m)+2)} \leq \eta(m)$. Clearly then $|\tree{t}| -
|\tree{s}| \leq \eta(m)$.
\end{proof}

Call a representation map $\mathsf{Str}: \mc{T} \rightarrow \cl{S}$ as
\emph{$\mc{L}$-great} if (i) it is $\mc{L}$-good for $\cl{S}$, and
(ii) there is a strictly increasing function $\beta: \mathbb{N}
\rightarrow \mathbb{N}$ such that for every $\tree{t}, \tree{s} \in
\mc{T}$, if $|(|\tree{t}| - |\tree{s}|)| \leq n$, then
$|(|\str{\tree{t}}| - |\str{\tree{s}}|)| \leq \beta(n)$. In such a
case, we say $\cl{S}$ \emph{admits} an $\mc{L}$-great tree
representation. The central result of this section can now be stated
as below.

\begin{prop}\label{prop:greatness-and-fractals}
If $\cl{S}$ admits an $\mc{L}$-great tree representation, then
$\cl{S}$ is an $\mc{L}$-fractal.
\end{prop}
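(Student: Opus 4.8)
The plan is to iterate the tree-reduction of Lemma~\ref{lemma:logical-fractal-and-tree-reps} and read a scale function off the two parameters controlling that reduction: the additive tree-size decrement $\eta(m)$, and the function $\beta$ from the definition of $\mc{L}$-greatness. Fix an $\mc{L}$-great representation $\mathsf{Str}:\mc{T}\to\cl{S}$ with witness $\beta:\mathbb{N}\to\mathbb{N}$ strictly increasing; since $\mc{L}$-greatness implies $\mc{L}$-goodness, Lemma~\ref{lemma:logical-fractal-and-tree-reps} applies and yields a strictly increasing computable $\eta:\mathbb{N}\to\mathbb{N}$. Because $\mc{T}$ consists of trees over the finite alphabet $\sigmaint\cup\sigmaleaf$, for each $m$ there are only finitely many trees of $\mc{T}$ up to isomorphism of size at most $\eta(m)$, so
\[
\gamma(m)\;=\;\max\{\,|\str{\tree{z}}| \;\mid\; \tree{z}\in\mc{T},\ |\tree{z}|\le\eta(m)\,\}
\]
is a well-defined natural number. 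I then set $c(m)=\gamma(m)+\beta(\eta(m))+2$ and define $\fracwitfn{\cl{S}}(m)$ to be the function $i\mapsto c(m)\cdot i$, which is strictly increasing, hence a scale function, for every $m$.

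Next, fix $\mf{A}\in\cl{S}$ and $m$. By surjectivity of $\mathsf{Str}$ we may assume $\mf{A}=\str{\tree{t}_0}$ for some $\tree{t}_0\in\mc{T}$ (working up to isomorphism is harmless, since membership in $\cl{S}$, the relation $\lequiv{m}$, and the substructure relation are all isomorphism-invariant). Iterate Lemma~\ref{lemma:logical-fractal-and-tree-reps}: while $|\tree{t}_k|>\eta(m)$, let $\tree{t}_{k+1}$ be the proper subtree of $\tree{t}_k$ it provides. The tree sizes strictly decrease, so the process halts at some $\tree{t}_K$ with $|\tree{t}_K|\le\eta(m)$, and hence $|\str{\tree{t}_K}|\le\gamma(m)$. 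Writing $\mf{A}_k=\str{\tree{t}_k}$, the lemma gives $\mf{A}_{k+1}\hookrightarrow\mf{A}_k$, $\mf{A}_{k+1}\lequiv{m}\mf{A}_k$, and $|\tree{t}_k|-|\tree{t}_{k+1}|\le\eta(m)$; using $\mc{L}$-greatness and monotonicity of $\beta$, this yields $0\le|\mf{A}_k|-|\mf{A}_{k+1}|\le\beta(\eta(m))$. Composing embeddings and using transitivity of $\lequiv{m}$, every $\mf{A}_k$ satisfies $\mf{A}_k\lequiv{m}\mf{A}$ and is isomorphic to a substructure of $\mf{A}$ lying in $\cl{S}$.

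It remains to verify that, with $f=\fracwitfn{\cl{S}}(m)$ and $\scale{j}=\scale{j}_f$, the non-increasing sequence $|\mf{A}_0|\ge|\mf{A}_1|\ge\cdots\ge|\mf{A}_K|$ meets $\scale{j}$ for every $j$ with $1\le j<i$, where $|\mf{A}|\in\scale{i}$. For $j=1$ this is immediate: $|\mf{A}_K|\le\gamma(m)<c(m)=f(1)$, so $\mf{A}_K$ already lands in $\scale{1}$. For $j\ge2$, write $\scale{j}=[a,b]$ with $a=c(m)(j-1)+1$, $b=c(m)j$; then $|\mf{A}_K|\le\gamma(m)<a$, and $b=c(m)j\le c(m)(i-1)=f(i-1)<|\mf{A}|=|\mf{A}_0|$ since $|\mf{A}|\in\scale{i}$, and $b-a=c(m)-1\ge\beta(\eta(m))$. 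A one-line pigeonhole finishes it: take the largest $k$ with $|\mf{A}_k|>b$ (it exists as $|\mf{A}_0|>b$, and $k<K$ as $|\mf{A}_K|\le b$); then $|\mf{A}_{k+1}|\le b$ by maximality, while $|\mf{A}_{k+1}|\ge|\mf{A}_k|-\beta(\eta(m))>b-\beta(\eta(m))\ge a$, so $|\mf{A}_{k+1}|\in\scale{j}$. Taking the substructure of $\mf{A}$ isomorphic to the relevant $\mf{A}_{k+1}$ (or to $\mf{A}_K$ when $j=1$) supplies the required $\mf{B}$, so $\cl{S}$ is an $\mc{L}$-fractal. The one step that actually carries the content is the choice of the scale function: $\mc{L}$-greatness, rather than mere goodness, is precisely what bounds the per-step drop in structure size by $\beta(\eta(m))$, and combined with the finiteness of the alphabet — which bounds the terminal structure by $\gamma(m)$ — this forces $f$ to increase by more than $\beta(\eta(m))$ per step and to start above $\gamma(m)$; everything else is bookkeeping along the lines of the proof of Lemma~\ref{lemma:abstract-tree-lemma}.
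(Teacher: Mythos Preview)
Your proof is correct and follows essentially the same approach as the paper's: iterate Lemma~\ref{lemma:logical-fractal-and-tree-reps}, use $\mc{L}$-greatness to bound the per-step drop in $|\str{\cdot}|$ by $\beta(\eta(m))$, and choose the scale function as an arithmetic progression whose step exceeds this drop and whose first term exceeds $\max\{|\str{\tree{z}}|:|\tree{z}|\le\eta(m)\}$. The only cosmetic differences are that the paper takes $\fracwitfn{\cl{S}}(m)(n)=f(m)+(n-1)\beta(\eta(m))$ (your $\gamma(m)$ is their $f(m)$) and reduces to the single step $i\to i-1$ by induction, whereas you run the whole iteration at once and invoke a pigeonhole for each $j$; the content is the same.
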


\begin{proof}
  Let $\mathsf{Str}: \mc{T} \rightarrow \cl{S}$ be an $\mc{L}$-great
  representation for $\cl{S}$.  Then $\mathsf{Str}$ is $\mc{L}$-good,
  whereby by Lemma~\ref{lemma:logical-fractal-and-tree-reps}, there
  exists a function $\eta$ satisfying the properties mentioned in the
  lemma. For $m \in \mathbb{N}$, define $f(m) =
  \text{max}\{|\str{\tree{t}}| \mid |\tree{t}| \leq \eta(m)\}$. Since
  $\mathsf{Str}$ is $\mc{L}$-great, there exists a function $\beta:
  \mathbb{N} \rightarrow \mathbb{N}$ satisfying the properties
  mentioned in the definition of $\mc{L}$-greatness. Then define the
  function $\lwitfn{\cl{S}}:\mathbb{N}_+^2 \rightarrow \mathbb{N}_+$
  as $\lwitfn{\cl{S}}(m)(n) = f(m) + (n - 1) \cdot \beta(\eta(m))$.
  It is easily seen that $\lwitfn{\cl{S}}(m)$ is a scale
  function. Consider $\mf{A} \in \cl{S}$ and $m \in \mathbb{N}$, and
  suppose that $|\mf{A}| \in \scale{i}_g$ where $g$ is the function
  $\lwitfn{\cl{S}}(m)$ and $i > 1$. To show that for $j < i$, there
  exists a substructure $\mf{B}$ of $\mf{A}$ in $\cl{S}$ such that
  $|\mf{B}| \in \scale{j}_g$ and $\mf{B} \lequiv{m} \mf{A}$, we
  observe that it suffices to show the same simply for $j = i - 1$.
  Let $\tree{t} \in \mc{T}$ be such that $\str{\tree{t}} = \mf{A}$. By
  Lemma~\ref{lemma:logical-fractal-and-tree-reps}, there exists a
  subtree $\tree{s}$ of $\tree{t}$ in $\mc{T}$ such that
  $\str{\tree{s}} \hookrightarrow \str{\tree{t}}$, $\str{\tree{s}}
  \lequiv{m} \str{\tree{t}}$ and $|\tree{t}| - |\tree{s}| \leq
  \eta(m)$. Since $\mathsf{Str}$ is $\mc{L}$-great, it follows that
  $|\str{\tree{t}}| - |\str{\tree{s}}| \leq \beta(\eta(m))$. Whereby,
  either $|\str{\tree{s}}| \in \scale{i-1}_g$ or $|\str{\tree{s}}| \in
  \scale{i}_g$. If the former holds, then taking $\mf{B} =
  \str{\tree{s}}$, we are done. If the latter holds, then we apply
  Lemma~\ref{lemma:logical-fractal-and-tree-reps} recursively to
  $\tree{s}$ till eventually we get a subtree $\tree{x}$ of $\tree{t}$
  in $\mc{T}$ such that $\str{\tree{x}} \hookrightarrow
  \str{\tree{t}}$, $\str{\tree{x}} \lequiv{m} \str{\tree{t}}$ and
  $|\str{\tree{x}}| \in \scale{i-1}_g$. Then taking $\mc{B} =
  \str{\tree{x}}$, we are done.
\end{proof}

Indeed, the diverse spectrum of posets and graphs, including those
constructed using several quantifier-free operations, as seen in
Section~\ref{section:classes-satisfying-lebsp}, admit $\mc{L}$-great
tree representations, whereby they are all logical fractals. The
logical fractal property thus appears to be a natural property that
arises in a variety of interesting settings of computer science.

\vspace{-5pt}\section{Conclusion}\label{section:conclusion}

We presented a natural finitary analogue of the well-studied {\dls}
property from classical model theory, denoted $\reflebsp$, and showed
that this property is enjoyed by various classes of interest in
computer science, whereby all these classes can be seen to admit a
natural finitary version of the {\dlsfull}. The aforesaid classes
further admit linear time f.p.t. algorithms for $\fo$ and $\mso$ model
checking, when the structures in the classes are presented using their
natural tree representations. Finally, the aforesaid classes possess a
fractal like property, one based on logic. These observations open up
several interesting directions for future work. We mention below two
such directions that we find challenging:
\begin{enumerate}[nosep]
\item
Under what conditions on a class of structures, is it the case that
the index of the $\lequiv{m}$ relation over the class is an elementary
function of $m$? Investigating this question for classes that admit
elementary $\mc{L}$-good tree representations might yield insights for
getting linear time f.p.t. algorithms for $\fo$ and $\mso$ model
checking over these classes, that have elementary parameter
dependence.
\item
The $\reflebsp$ classes (resp. logical fractals) we have identified
are structurally defined. This motivates asking the converse, and
hence the following: Is there a structural characterization of
$\reflebsp$ (resp. of logical fractals)? We believe that an answer to
this question, even under reasonable assumptions, would yield new
classes that are well-behaved from both the logical and the
algorithmic perspectives.
\end{enumerate}

\vspace{5pt}

{{\tbf{Acknowledgments:}}} I express my deepest gratitude to Bharat 
Adsul for various insightful discussions and critical feedback that have 
helped in preparing this article.

\vspace{-5pt}
\bibliography{refs}
\appendix
\section{Proof of composition lemma for ordered trees}\label{section:appendix:composition-lemma-for-partially-ranked-trees}

We prove the composition lemma for ordered trees as given by
Lemma~\ref{lemma:mso-composition-lemma-for-ordered-trees}.

\begin{proof}[Proof of Lemma~\ref{lemma:mso-composition-lemma-for-ordered-trees}]

We present the result for $\mc{L} = \mso$. A similar proof can be done
for $\mc{L} = \fo$.  Without loss of generality, we assume
$\tree{t}_i$ and $\tree{s}_i$ have disjoint sets of nodes for $i \in
\{1. 2\}$. We show the result for part
(\ref{lemma:mso-composition-lemma-for-ordered-trees:part-1}) of
Lemma~\ref{lemma:mso-composition-lemma-for-ordered-trees}. The other
parts can be proved similarly. Let $\tree{z}_i = ({\tree{t}}_i
\cdot^{\rightarrow}_{a_i} {\tree{s}}_i)$ for $i \in \{1, 2\}$.

Let $\beta_1$ be the winning strategy of the duplicator in the $m$
round $\mef$ game between $(\tree{t}_1, a_1)$ and $(\tree{t}_2,
a_2)$. Let $\beta_2$ be the winning strategy of the duplicator in the
$m$ round $\mef$ game between $\tree{s}_1$ and $\tree{s}_2$. Observe
that since $m \ge 2$, we can assume that $\beta_2$ is such that if in
any round, the spoiler picks $\troot{\tree{s}_1}$
(resp. $\troot{\tree{s}_2}$), then $\beta_2$ will require the
duplicator to pick $\troot{\tree{s}_2}$
(resp. $\troot{\tree{s}_1}$). We use this observation later on.  The
strategy $\alpha$ of the duplicator in the $m$-round $\mef$ game
between $(\tree{z}_1, a_1)$ and $(\tree{z}_2, a_2)$ is defined as
follows:
\begin{enumerate}[nosep]
\item Point move: (i) If the spoiler picks an element of $\tree{t}_1$
  (resp. $\tree{t}_2$), the duplicator picks the element of
  $\tree{t}_2$ (resp. $\tree{t}_1$) given by $\beta_1$.  (ii) If the
  spoiler picks an element of $\tree{s}_1$ (resp. $\tree{s}_2$), the
  duplicator picks the element of $\tree{s}_2$ (resp. $\tree{s}_1$)
  given by $\beta_2$.
\item Set move: If the spoiler picks a set $X$ from $\tree{z}_1$, then
  let $X = Y_1 \sqcup Y_2$ where $Y_1$ is a set of elements of
  $\tree{t}_1$ and $Y_2$ is a set of elements of $\tree{s}_1$. Let
  $Y_1'$ and $Y_2'$ be the sets of elements of $\tree{t}_2$ and
  $\tree{s}_2$ respectively, chosen according to strategies $\beta_1$
  and $\beta_2$. Then in the game between $(\tree{z}_1, a_1)$ and
  $(\tree{z}_2, a_2)$, the duplicator responds with the set $X' = Y_1'
  \cup Y_2'$. A similar choice of set is made by the duplicator from
  $\tree{z}_1$ when the spoiler chooses a set from $\tree{z}_2$.
\end{enumerate}

We now show that the strategy $\alpha$ is winning for the duplicator
in the $m$-round $\mef$ game between $(\tree{z}_1, a_1)$ and 
$(\tree{z}_2, a_2)$.

Let at the end of $m$ rounds, the vertices and sets chosen from
$\tree{z}_1$, resp. $\tree{z}_2$, be $e_1, \ldots, e_p$ and $E_1,
\ldots, E_r$, resp. $f_1, \ldots, f_p$ and $F_1, \ldots, F_r$, where
$p + r = m$. For $l \in \{1, \ldots, r\}$, let $E_l^t$, resp. $E_l^s$
be the intersection of $E_l$ with the nodes of $\tree{t}_1$,
resp. nodes of $\tree{s}_1$, and likewise, let $F_l^t$, resp. $F_l^s$
be the intersection of $F_l$ with the nodes of $\tree{t}_2$,
resp. nodes of $\tree{s}_2$.

Firstly, it is straightforward to verify that the labels of $e_i$ and
$f_i$ are the same for all $i \in \{1, \ldots, p\}$, and that for
$l \in \{1, \ldots, r\}$, $e_i$ is in $E_l^s$, resp. $E_l^t$, iff
$f_i$ is in $F_l^s$, resp. $F_l^t$, whereby $e_i \in E_l$ iff $f_i \in
F_l$.  For $1 \leq i, j \leq p$, if $e_i$ and $e_j$ both belong to
$\tree{t}_1$ or both belong to $\tree{s}_1$, then it is clear from the
strategy $\alpha$ described above, that $f_i$ and $f_j$ both belong
resp. to $\tree{t}_2$ or both belong to $\tree{s}_2$. It is easy to
verify from the description of $\alpha$ that for every binary relation
(namely, the ancestor-descendent-order $\leq$, and the
ordering-on-the-children-order $\lesssim$), the pair $(e_i, e_j)$ is
in the binary relation in $\tree{z}_1$ iff $(f_i, f_j)$ is in that
binary relation in $\tree{z}_2$.  Consider the case when without loss
of generality, $e_1 \in \tree{t}_1$ and $e_2 \in \tree{s}_1$. Then
$f_1 \in \tree{t}_2$ and $f_2 \in \tree{s}_2$. We have the following
cases. Assume that the ordered tree underlying $\tree{z}_i$ is
$((A_i, \leq_i), \lesssim_i)$ for $i \in \{1, 2\}$.
\begin{enumerate}[nosep]
\item $e_1 \lesssim_1 a_1$ and $e_2 = \troot{\tree{s}_1}$: Then we see
  that $f_1 \lesssim_2 a_2$ and $f_2 = \troot{\tree{s}_2$}. Observe that
  $f_2$ must be $\troot{\tree{s}_2}$ by the property of $\beta_2$
  stated at the outset. Whereby $e_1 \lesssim_1 e_2$ and $f_1
  \lesssim_2 f_2$. Likewise $e_1 \not\leq_1 e_2$, $e_2 \not\leq_1 e_1$ and
  $f_1 \not\leq_2 f_2$, $f_2 \not\leq_2 f_1$.
\item $e_1 \lesssim_1 a_1$ and $e_2 \neq \troot{\tree{s}_1}$: Then we
  see that $f_1 \lesssim_2 a_2$ and $f_2 \neq \troot{\tree{s}_2$}
  (again by the property of $\beta_2$ stated at the outset). Whereby
  $e_1 \not\lesssim_1 e_2$, $e_2 \not\lesssim_1 e_1$ and $f_1
  \not\lesssim_2 f_2$, $f_2 \not\lesssim_2 f_1$. Likewise, $e_1
  \not\leq_1 e_2$, $e_2 \not\leq_1 e_1$ and $f_1 \not\leq_2 f_2$, $f_2
  \not\leq_2 f_1$.
\item $a_1 \lesssim_1 e_1$, $a_1 \neq e_1$ and $e_2 =
  \troot{\tree{s}_1}$: Then we see that $a_2 \lesssim_2 f_1$, $a_2
  \neq f_1$ and $f_2 = \troot{\tree{s}_2$}. Observe that $f_2$ must be
  $\troot{\tree{s}_2}$ by the property of $\beta_2$ stated at the
  outset. Whereby $e_2 \lesssim_1 e_1$ and $f_2 \lesssim_2
  f_1$. Likewise $e_1 \not\leq_1 e_2$, $e_2 \not\leq_1 e_1$ and $f_1
  \not\leq_2 f_2$, $f_2 \not\leq_2 f_1$.
\item $a_1 \lesssim_1 e_1$, $a_1 \neq e_1$ and $e_2 \neq
  \troot{\tree{s}_1}$: Then we see that $a_2 \lesssim_2 f_1$, $a_2
  \neq f_1$ and $f_2 \neq \troot{\tree{s}_2$} (again by the property
  of $\beta_2$ stated at the outset). Whereby $e_1 \not\lesssim_1
  e_2$, $e_2 \not\lesssim_1 e_1$ and $f_1 \not\lesssim_2 f_2$, $f_2
  \not\lesssim_2 f_1$. Likewise, $e_1 \not\leq_1 e_2$, $e_2 \not\leq_1
  e_1$ and $f_1 \not\leq_2 f_2$, $f_2 \not\leq_2 f_1$.
\item $e_1 \neq a_1, e_1 \leq_1 a_1$: Then $f_1 \neq a_1, f_1 \leq_2
  a_2$. Whereby $e_1 \leq_1 e_2$ and $f_1 \leq_2 f_2$. This is because
  $e_1 \leq_1 c_1$ and $f_1 \leq_2 c_2$ where $c_1$ and $c_2$ are
  resp. the parents of $a_1$ and $a_2$ in $\tree{z}_1$ and
  $\tree{z}_2$. Also $e_1 \not\lesssim_1 e_2$, $e_2 \not\lesssim_1
  e_1$ and $f_1 \not\lesssim_2 f_2$, $f_2 \not\lesssim_2 f_1$.
\item $e_1$ and $e_2$ are not related by $\leq_1$ or $\lesssim_1$: Then
  $f_1$ and $f_2$ are also not related by $\leq_2$ or $\lesssim_2$.
\end{enumerate}
In all cases, we have that the pair $(e_i, e_j)$ is in $\leq_1$
(resp. $\lesssim_1$) iff $(f_i, f_j)$ is in $\leq_2$
(resp. $\lesssim_2$).
\end{proof}

\end{document}